\documentclass[11pt,a4paper]{article} %,twocolumn

\usepackage[margin=27mm]{geometry}

\pdfoutput=1

\usepackage{amssymb,amsmath,amsfonts,amsthm, amscd, mathrsfs, helvet, mathtools}
\usepackage{bm}
\usepackage{setspace}
\usepackage{caption}
\usepackage{graphicx,verbatim}
\usepackage{enumitem}
\usepackage[usenames, dvipsnames]{color}
\usepackage[mathcal]{euscript}
\usepackage[utf8]{inputenc}
\usepackage[T1]{fontenc}

\usepackage{hyperref}
\hypersetup{
    colorlinks=true, %set true if you want colored links
    linktoc=all,     %set to all if you want both sections and subsections linked
    linkcolor=myPurple,  %choose some color if you want links to stand out
    citecolor=myGreen,
    menucolor=black,
    linktocpage=true,
}

\usepackage{tikz}    
\usetikzlibrary{arrows.meta}
%\usetikzlibrary{calc} 
\usetikzlibrary{cd}
%\usetikzlibrary{matrix,calc}
\usetikzlibrary{decorations.pathreplacing,positioning}
\usetikzlibrary{decorations.markings,shapes.geometric,shapes.misc}

\usepackage{pgfplots}
\usepackage{tikz-3dplot}
\tdplotsetmaincoords{60}{115}
\pgfplotsset{compat=newest}

\theoremstyle{plain}
\newtheorem{theorem}{Theorem}[section]
\newtheorem*{theorem*}{Theorem}
\newtheorem{proposition}[theorem]{Proposition}
\newtheorem*{proposition*}{Proposition}
\newtheorem{definition}[theorem]{Definition}
\newtheorem{lemma}[theorem]{Lemma}
\newtheorem{corollary}[theorem]{Corollary}
\newtheorem{def-lem}[theorem]{Definition/Lemma}
\newtheorem{def-prop}[theorem]{Definition/Proposition}

\theoremstyle{remark}

\newenvironment{remark}
  {\pushQED{\qed}\remarkx}
  {\popQED\endremarkx}

\newcounter{assumption}

\newenvironment{assumption}
{
%  \par\vspace{\baselineskip}
  \noindent
  \refstepcounter{assumption}%
  \textbf{Assumption \theassumption\ }%
  \begin{itshape}%
%  \par\vspace{\baselineskip}%
%  \noindent\ignorespaces
}%
{% end code
  \end{itshape}%
%  \par\vspace{\baselineskip}%
%  \noindent\ignorespacesafterend
}

\definecolor{myPurple}{rgb}{.4,0.2,0.8}

\definecolor{brightRed}{rgb}{1,0,0}

\definecolor{myBlue}{rgb}{0,0,1}

\definecolor{myGreen}{rgb}{0,0.7,0}

\newcommand{\nord}[1]{\mathopen{:}#1\mathclose{:}}

\DeclareMathOperator{\id}{id}
\DeclareMathOperator{\boplus}{\bar\oplus}

\DeclareMathOperator{\End}{End}

\DeclareMathOperator{\Hom}{Hom}
\DeclareMathOperator{\Map}{Map}

\DeclareMathOperator{\wgt}{wt}
\DeclareMathOperator{\wgtb}{\overline{wt}}

\DeclareMathOperator{\ext}{ext}
\DeclareMathOperator{\Ind}{Ind}
\DeclareMathOperator{\im}{im}

\DeclareMathOperator{\supp}{supp}

\DeclareMathOperator{\Sym}{Sym}
\DeclareMathOperator{\bSym}{\overline{Sym}}
\DeclareMathOperator{\Conf}{Conf}

\newcommand{\longhookrightarrow}{\lhook\joinrel\relbar\joinrel\rightarrow}

\def\tp{{\scalebox{.7}{$\scriptstyle +$}}}
\def\tm{{\scalebox{.7}{$\scriptstyle -$}}}
\def\tmo{{\scalebox{.7}{$\scriptstyle -1$}}}

\def\cent{{\bf 1}}
\def\kent{{\bf 1}}

\def\t{\mathsf t}

\def\a{\mathsf{a}}
\def\fl{\mathfrak{a}}

\def\b{\mathsf{b}}

\def\conf{\Omega}
\def\d{\mathsf{d}}

\def\f{\mathfrak{f}}
\def\g{\mathfrak{g}}

\def\x{\mathsf{x}}

\def\R{\mathcal R}

\def\hg{\widehat{\fl}}
\def\hbg{\widehat{\bar{\fl}}}

\def\G{\mathcal{G}}

\def\vac{|0\rangle}

\def\mcF{\mathcal F}
\def\U{\mathcal U}

\def\CE{{\rm CE}}
\def\bCE{\overline{\rm CE}}
\def\chain{\bullet}
\def\Ber{\mathfrak B}

\def\Dz{{\widehat{\Sigma}}}

\def\Alg{\mathsf{Alg}}
\def\Top{\mathsf{Top}}

\def\Vec{\mathsf{Vec}}
\def\L{\mathcal{L}}

\def\uLocLie{\mathsf{uLocLie}}
\def\dgLie{\mathsf{dgLie}}
\def\udgLie{\mathsf{udgLie}}
\def\udgVec{\mathsf{udgVec}}
\def\uLie{\mathsf{uLie}}
\def\dgVec{\mathsf{dgVec}}
\def\PFac{\mathsf{PFac}}

\def\dg{\text{\scriptsize \rm DG}}

\newcommand{\SimTo}{\xrightarrow{\raisebox{-0.3 em}{\smash{\ensuremath{\text{\tiny$\cong$}}}}}}
\newcommand{\qSimTo}{\xrightarrow{\raisebox{-0.3 em}{\smash{\ensuremath{\text{\tiny$\simeq$}}}}}}

\newcommand{\ms}[1]{{\mathsf #1}}

\newcommand{\wt}[1]{\widetilde #1}

\newcommand{\ul}[1]{{\underline{#1}}}%{[#1]}

\def\A{\mathcal{A}}

\def\CC{\mathbb{C}}
\def\CP{\mathbb{C}P^1} %{\mathbb{P}^1}%{\hat{\mathbb{C}}}
\def\RR{\mathbb{R}}

\def\VV{\mathbb{V}}
\def\hVV{\mathbb{F}}%{\mathbb{V}^f}
\def\ZZ{\mathbb{Z}}

\def\O{\mathcal{O}}

\def\R{\mathcal{R}}

\def\L{\mathcal{L}}

\def\hV{\mathcal F}%\mathcal{V}^f}
\def\X{\mathsf{X}}
\def\Y{\mathsf{Y}}

\def\ii{\mathsf{i}}

%\tikzset{
%  ctrlpoint/.style={%
%    draw=gray,
%    circle,
%    inner sep=0,
%    minimum width=1ex,
%  }
%}

\title{%
Full universal enveloping vertex algebras from factorisation
}

\author{%
Beno\^{\i}t Vicedo\vspace{4mm}\\
{\small Department of Mathematics, University of York,}\\
{\small Heslington, York YO10 5GH, United Kingdom}\vspace{2mm}\\
{\small \begin{tabular}{ll}
\texttt{benoit.vicedo@gmail.com}\\
\vspace{2mm}
\end{tabular}
}
}

\numberwithin{equation}{section}

\begin{document} 

\maketitle

\begin{abstract}
\noindent We give a systematic construction of the symmetries, or observables in the vacuum sector, of a full conformal field theory on an arbitrary real two-dimensional conformal manifold $\Sigma$. Specifically, we construct a prefactorisation algebra on $\Sigma$ which locally encodes the full (non-chiral) version $\hVV^{\fl,\alpha} = \VV^{\fl,\alpha} \otimes \bar \VV^{\fl,\alpha}$ of a universal enveloping vertex algebra $\VV^{\fl,\alpha}$, where $\fl$ is a finite-dimensional vector space labelling the set of fields and $\alpha$ is a $2$-cocycle controlling the central extension of their Lie brackets. Our construction provides a unified treatment of the three canonical examples of (full) universal enveloping vertex algebras \--- Kac-Moody, Virasoro and $\beta\gamma$ system \--- using the notion of unital local Lie algebra. By using the coordinate-invariant nature of prefactorisation algebras we derive an analogue of Huang's change of variable formula for full vertex algebras. We give a careful treatment of (both Euclidean and Lorentzian) reality conditions in this formalism which allows us, in the Kac-Moody and Virasoro cases, to construct a Hermitian sesquilinear form on these full vertex algebras by using the factorisation product to the global observables on $S^2$. We also give an explicit derivation of Borcherds type identities and a construction of the operator formalism for $\hVV^{\fl,\alpha}$.
\end{abstract}

\setcounter{tocdepth}{2}
\hypersetup{linkcolor=myPurple}
\tableofcontents

%\newpage

\section{Introduction} \label{sec: intro}

Two-dimensional conformal field theory \cite{Belavin:1984vu} has been extensively studied by physicists and mathematicians alike over the past four decades. Its original applications in physics ranged from the study of critical phenomena in two-dimensional statistical systems to string theory, see for instance \cite{CFTbook}, but it has also found far-reaching applications in mathematics, from the monstrous moonshine \cite{Borcherds:1983sq} to the geometric Langlands correspondence \cite{FrLanglands, Frbook}.

An important distinction to be made is between \emph{chiral} and \emph{full} two-dimensional conformal field theories. Indeed, in both the Euclidean and Lorentzian settings, the infinite dimensional symmetry algebra of a two-dimensional conformal field theory involves a direct sum of two copies of the Virasoro algebra, see e.g. \cite{Schottenloher}.
The collection of fields which transform trivially with respect to either of these two copies generates the so-called \emph{chiral} and \emph{anti-chiral} sectors of the conformal field theory, respectively.
By contrast, the \emph{full} conformal field theory deals with all of the fields, transforming under both copies of the Virasoro algebra.

\medskip
The main purpose of this paper is to initiate the construction and study of full conformal field theories within the modern framework of prefactorisation algebras \cite{CGBook1, CGBook2} recently introduced by Costello and Gwilliam. We focus on describing the full/non-chiral versions of a broad class of vertex algebras, known as universal enveloping vertex algebras \cite{Primc}, which are defined as Verma modules over infinite-dimensional Lie algebras, including centrally extended loop algebras, the Virasoro algebra and the infinite-dimensional Weyl algebra. These full vertex algebras encode the vacuum sectors, or symmetry algebras, of many rational and logarithmic conformal field theories, including the Wess-Zumino-Witten model \cite{WessZumino, WittenWZ1, WittenWZ2, Novikov} and its more recent generalisations associated to non-semisimple Lie algebras \cite{Babichenko:2012uq, Quella:2020uhk}, minimal models and the $\beta\gamma$ system; see \cite{CFTbook} for an extensive review.

The prefactorisation algebra perspective that we follow has several important advantages. It provides a coordinate-invariant description of (full) vertex algebras; this will, in particular, allow us to derive a generalisation of Huang's change of variable formula \cite{Huang} for full vertex algebras which leads to a simple formulation of the operator formalism in full conformal field theories. It also provides an elegant geometric formulation of (full) vertex algebras allowing us to work on an arbitrary two-dimensional conformal manifold; for instance, by paying close attention to reality conditions we will construct a canonical Hermitian sesquilinear form on the full vertex algebra using the factorisation product to the global observables on $S^2$.

\medskip

In the remainder of the introduction we begin in \S\ref{sec: approaches to CFT} by providing a general overview of different mathematical approaches to conformal field theory and their interconnections. In \S\ref{sec: orientation double} we explain the main idea for realising the notion of full vertex algebras in the framework of prefactorisation algebras, based on an observation from \cite{FuchsSchweigert}. In \S\ref{sec: class of VOAs} we describe the class of vertex algebras whose non-chiral versions we will formulate using prefactorisation algebras, and in \S\ref{sec: paper outline} we give an outline the content of the paper to help guide the reader.

\subsection{Approaches to conformal field theory} \label{sec: approaches to CFT}

There are various mathematical formulations of both chiral and full two-dimensional conformal field theory,
each of which is
based on a different axiomatic framework for quantum field theory. These include the G\r{a}rding--Wightman axioms \cite{Wightman, PCTbook} or the Osterwalder--Schrader axioms \cite{OS1, OS2} related by analytic continuation, the Haag--Kastler axioms \cite{HaagKastler} or the closely related Brunetti--Fredenhagen--Verch axioms \cite{BFV} in the locally covariant case, and the Atiyah--Segal--Witten axioms \cite{SegalCFT, AtiyahTQFT, WittenTQFT} for functorial quantum field theory.

\subsubsection{(Full) vertex operator algebras} \label{sec: full VOA intro}

A vertex (operator) algebra is a vector space $V$ equipped with a state-field correspondence, also known as a field map or a
vertex operator map, $Y : V \otimes V \to V(\!(\zeta)\!)$, valued in formal Laurent series in the formal variable $\zeta$ with coefficients in $V$, satisfying certain axioms \cite{Borcherds:1983sq, FHL, Huang, KacVA, FBbook, LepowskyLi}.
The vertex operator map already encodes an `algebraic' version of the Wightman fields \cite{PCTbook} for a chiral two-dimensional conformal field theory, see \cite[\S1]{KacVA}. If, moreover, $V$ is \emph{unitary} in the sense of \cite{DongLin}, then under certain extra mild assumptions one can use the vertex operator map to associate \cite{CCHW, Raymond:2022riz} to each state $a \in V$ an operator-valued distribution on $S^1$, i.e. a Wightman field on $S^1$, acting on a dense open subset of the Hilbert space completion $\mathcal H_V$ of $V$ with respect to its positive-definite sesquilinear form. In the very recent work \cite{Carpi:2024vpp} a rigorous connection is also established in the broader non-unitary context and without the need for any additional technical assumptions.

Vertex operator algebras only encode \emph{chiral} two-dimensional conformal field theories, but a number of closely related extensions describing full two-dimensional conformal field theories have also been developed: a non-chiral version of vertex algebras was first introduced in \cite{KapustinOrlov} and further studied in \cite{Rosellen} under the name \emph{OPE-algebras}, the notion of a \emph{full field algebra} was defined in \cite{Huang:2005gz, Kong:2006wa}, that of a \emph{full vertex algebra} in \cite{Moriwaki:2020dlj, Moriwaki:2020cxf} and that of a \emph{non-chiral vertex operator algebra} in \cite{Singh:2023mom}. All of these non-chiral versions of vertex (operator) algebras are closely related and we refer the reader to the original articles for the various comparisons.
A notion of unitarity was also recently introduced in \cite{Adamo:2024etu} to relate the algebraic approach of  \cite{Moriwaki:2020dlj, Moriwaki:2020cxf} with an analytic one based on the Osterwalder--Schrader axioms.

\subsubsection{Conformal nets}

A conformal net \cite{FrohlichGabbiani, FredenhagenJorss, KawaigashiLongo1},
by constrast, encodes the local observables of a chiral two-dimensional conformal field theory as a conformal Haag--Kastler net \cite{HaagKastler} on $S^1$, i.e. as a functor from the category of open intervals in $S^1$ to that of von Neumann subalgebras of $B(\mathcal H)$ for a fixed Hilbert space $\mathcal H$.
There is also a `coordinate-free' formulation of conformal nets \cite{BDH} built instead on the axiomatic
framework of Brunetti--Fredenhagen--Verch \cite{BFV}, see also \cite{FewsterVerch, BDRY, Rejzner, BSW}, which rather than working on a fixed spacetime ($S^1$ in this case) treats all contractible compact $1$-dimensional manifolds on an equal footing.

Conformal nets on $S^1$ similarly only encode \emph{chiral} two-dimensional conformal field theories, whereas full two-dimensional conformal field theories are encoded instead in terms of conformal Haag-Kastler nets on two-dimensional Minkowski space $\RR^{1,1}$ or the Einstein cylinder $\mathcal E$.
The relationship between a full two-dimensional conformal field theory and its chiral and anti-chiral sectors was studied in the conformal nets setting in \cite{Rehren1, KawaigashiLongo2, BKL},
see also \cite{Rehren2} for a review.
This was also recently studied from a purely categorical perspective in \cite{BGS} within a general operadic reformulation \cite{BSW} of the Brunetti--Fredenhagen--Verch framework.

\subsubsection{Segal's functorial approach}

In the chiral setting, unitary vertex operator algebras and conformal nets are intimately related \cite{Capri:2015fga, Gui}, see also \cite{Carpi:2023onx} in the supersymmetric case.
A more geometric perspective on this relationship was also obtained in \cite{Tener1, Tener2} based on ideas from \cite{Three-tierCFT}, using Segal’s functorial definition of a chiral conformal field theory \cite{SegalCFT} to interpolate between the two.
Although a similar comparison in the non-chiral case does not yet exist, note that Segal's axioms \cite{SegalCFT} can encode both chiral and full two-dimensional conformal field theories.

\subsubsection{Factorisation algebras}

In the present work we will, instead, be working in the modern formulation of quantum field theory based on factorisation algebras due to Costello--Gwilliam \cite{CGBook1, CGBook2}.
It is known to be closely related, at least on a heuristic level \cite[\S1.4]{CGBook1}, to all of the frameworks mentioned above. The precise relationship between factorisation algebras in the Lorentzian setting and the Brunetti--Fredenhagen--Verch framework was clarified in the recent series of works \cite{Gwilliam:2017ses, Benini:2019ujs, Benini:2022mlb, Gwilliam:2022vja}.
In the setting of chiral two-dimensional conformal field theories, conformal nets were also shown in \cite{Henriques:2016ajg} to be a particular instance of factorisation algebras on $S^1$.

The notion of (pre)factorisation algebra \`a la Costello--Gwilliam encapsulates, in a simple and general axiomatic framework, the algebra of observables in general quantum field theories \cite{CGBook1, CGBook2}. Its development was originally inspired by the theory of factorisation algebras \`a la Beilinson--Drinfel'd \cite{BDbook}, see also \cite{Francis:2012} and the book \cite{FBbook} for a nice review, which provides a sheaf-theoretic coordinate-free formulation of vertex (operator) algebras, and hence of chiral two-dimensional conformal field theory.
Indeed, the axioms of a vertex algebra also admit a very natural and elegant reformulation in terms of \emph{holomorphic} prefactorisation algebras \`a la Costello--Gwilliam on $\CC$, see \cite[\S5]{CGBook1} and \cite{Brugmann1}. There is also a close connection with the theory of geometric vertex (operator) algebra first introduced in \cite{Huang}, see \cite{Brugmann2}.
Note that the relation between the two different notions of factorisation algebras of Beilinson--Drinfel'd and Costello--Gwilliam remains to be elucidated, see however \cite{HennionKapranov} for the comparison in the \emph{locally constant} setting corresponding to topological field theories \cite{LurieHA}. From now on, we shall always consider the notion of (pre)factorisation algebra in the sense of Costello--Gwilliam and we will refer to these simply as (pre)factorisation algebras.

Even though the origins of factorisation algebras are deeply rooted in the theory of vertex algebras, hence in chiral two-dimensional conformal field theory, their scope of applicability is much broader, extending to general quantum field theories in arbitrary space-time dimensions. This is very reminiscent of the axiomatic framework for quantum field theory in curved space-time of Hollands--Wald \cite{Hollands:2009bke, Hollands:2014eia, Hollands:2023txn}, see also \cite{HollandsOlbermann, HollandsHollands}, in which the \emph{operator product expansion} takes centre stage, exactly as in a vertex (operator) algebra $V$ where it is given by the state-field correspondence $Y : V \otimes V \to V(\!(\zeta)\!)$. Indeed, factorisation algebras can encode operator product expansions of local observables in a general quantum field theory \cite[\S10]{CGBook2}.

\subsection{The orientation double construction} \label{sec: orientation double}

Let $\Sigma$ be a real two-dimensional conformal manifold. The decomposition of a full conformal field theory on $\Sigma$ into its chiral and anti-chiral sectors is often motivated from a dynamical point of view by the fact that the space of solutions of the classical field theory factorises into left and right moving solutions in the Lorentzian setting, or holomorphic and anti-holomorphic solutions in the Euclidean setting. We will instead adopt the more geometric perspective, as first advocated in \cite{FuchsSchweigert}, which relates the chiral and anti-chiral sectors of a full conformal field theory to the two possible choices of orientations one can locally make on $\Sigma$. For concreteness we will work with a Euclidean conformal manifold $\Sigma$, in which case the choice an orientation corresponds to a choice of complex structure. However, we will also deal with the Lorentzian case in \S\ref{sec: Fourier modes} by working in the Hamiltonian formalism and restricting fields to a Cauchy surface.

The orientation bundle $\text{Or}(\Sigma) \rightarrow \Sigma$ is the principal $\ZZ_2$-bundle whose fibre over any $p \in \Sigma$ consists of two points corresponding to the two possible orientations at $p$. The \emph{double} of $\Sigma$, denoted $\pi : \Dz \to \Sigma$, is defined as the quotient of $\text{Or}(\Sigma)$ by the relation which identifies the pair of points above any $p \in \partial \Sigma$.
By construction, $\Dz$ is a 2-dimensional conformal manifold without boundary which is naturally oriented and thus carries a canonical complex structure. Moreover, $\Dz$ comes naturally equipped with an orientation reversing involution $\tau_E : \Dz \to \Dz$, such that $\pi \tau_E = \pi$, induced by the involution $\text{Or}(\Sigma) \to \text{Or}(\Sigma)$ exchanging the pair of points in each $\ZZ_2$ fibre. In particular, the fixed point set of $\tau_E$ corresponds to the boundary $\partial \Sigma \hookrightarrow \Dz$.

Following \cite{FuchsSchweigert}, we can describe a full conformal field theory on $\Sigma$ using a \emph{chiral} conformal field theory on the oriented cover $\Dz$. For instance, the double of the $2$-sphere $\Sigma = S^2$ is given by the disjoint union $\Dz = \Sigma_+ \sqcup \Sigma_-$ of two copies of $\Sigma$ equipped with opposite orientations and the chiral and anti-chiral sectors of the full conformal field theory then correspond to $\Sigma_+$ and $\Sigma_-$, respectively. We will later specialise to this particular case. The subscript `$E$' on the orientation reversing involution $\tau_E : \Dz \to \Dz$ refers to the fact that it then describes reality conditions in $E$uclidean signature. In this case we shall also consider a different orientation reversing involution $\tau_L : \Dz \to \Dz$ which acts individually on each of the two spheres $\Sigma_\pm$, by reflection through the equatorial plane, rather than exchanging them as $\tau_E$ does:
\begin{equation*}
\begin{tikzpicture}
  \shade[ball color = lightgray, opacity = 0.5] (0,0,0) circle (1);
  \tdplotsetrotatedcoords{0}{0}{120};
 
  \draw (0,0) circle (1) node[black, below left=7.5mm and 5.5mm]{\tiny $\Sigma_\tp$};
  \draw[tdplot_rotated_coords, fill] (0,0,1) node[above=.7mm]{\tiny $o'_\tp$} circle (0.5pt);
  \draw[tdplot_rotated_coords, fill, black!60] (0,0,-1) node[black, below=.7mm]{\tiny $o_\tp$} circle (0.5pt);

  \draw[thick, blue, -stealth] (-.5,.3) -- (-.18, .21);
  \draw[thick, red, -stealth] (-.5,.3) -- (-.37,.6);
  \draw[thick] (-.5,.3) node[below left=-1.5mm and -.7mm]{\tiny $p_\tp$} circle (0.02);

\def\x{3.4}
  \shade[ball color = lightgray, opacity = 0.5] (\x,0,0) circle (1);
  \tdplotsetrotatedcoords{0}{0}{120};

  \draw (\x,0) circle (1) node[black, below right=7.5mm and 5.5mm]{\tiny $\Sigma_\tm$};
  \draw[tdplot_rotated_coords, fill] (\x,-.3,1) node[above=.7mm]{\tiny $o'_\tm$} circle (0.5pt);
  \draw[tdplot_rotated_coords, fill, black!60] (\x,-.3,-1) node[black, below=.7mm]{\tiny $o_\tm$} circle (0.5pt);

  \draw[thick, blue, -stealth] (\x-.5,.3) -- (\x-.18, .21);
  \draw[thick, red, -stealth] (\x-.5,.3) -- (\x-.63,0);
  \draw[thick] (\x-.5,.3) node[above left=-1.5mm and -.7mm]{\tiny $p_\tm$} circle (0.02);

  \draw [thick, cyan, stealth-stealth] (1.15,0) -- node[above]{$\tau_E$} (0.35*\x + 1.05,0);
  \draw [thick, cyan, stealth-stealth] (-1.2,-.5) to [out=110,in=-110] node[left]{$\tau_L$} (-1.2,.5);
  \draw [thick, cyan, stealth-stealth] (\x+1.2,-.5) to [out=70,in=-70] node[right]{$\tau_L$} (\x+1.2,.5);
\end{tikzpicture}
\end{equation*}
We use the subscript `$L$' here since this will correspond to reality conditions in the $L$orentzian setting (also in the Euclidean setting after Wick rotation from the Lorentzian setting), which send the chiral and anti-chiral fields to themselves rather than exchanging them.

\subsection{Universal enveloping vertex algebras} \label{sec: class of VOAs}

A general class of vertex algebras can be constructed using a vertex-algebraic analogue of the universal enveloping algebra construction for Lie algebras, starting from the simpler notion of a vertex Lie algebra, also known as a Lie conformal algebra, introduced in \cite{Primc}; see also \cite{KacVA, FBbook}. Specifically, a \emph{vertex Lie algebra} $\mathscr L$ is determined by a vector space of fields $\a(z) = \sum_{n \in \ZZ} \a_{(n)} z^{-n-1}$ labelled by $\a \in \fl$ for some finite-dimensional vector space $\fl$, which together with all of their derivatives $\partial_z^k \a(z)$ for $k \in \ZZ_{\geq 1}$ is closed under taking the singular part of their operator product expansions. Then the associated \emph{universal enveloping vertex algebra} \cite{Primc} is the vertex algebra $\VV(\mathscr L)$ that is freely generated by the fields of the vertex Lie algebra $\mathscr L$ under taking derivatives and normal ordered products, modulo the relations encoded in $\mathscr L$. More precisely, a general element of the vector space $\VV(\mathscr L)$ is given by a linear combination of states of the form $\a^r_{(-m_r)} \ldots \a^1_{(-m_1)} \vac$ where $\a^i \in \fl$ and $m_i \in \ZZ_{\geq 1}$ for $i \in \{ 1,\ldots, r\}$, with associated fields
\begin{equation} \label{Y map intro}
Y \big( \a^r_{(-m_r)} \ldots \a^1_{(-m_1)} \vac, z \big) = \nord{\frac{1}{(m_r -1)!} \partial_z^{m_r - 1} \a^r(z) \ldots \frac{1}{(m_1 -1)!} \partial_z^{m_1 - 1} \a^1(z)} .
\end{equation}
The fact that this formula endows $\VV(\mathscr L)$ with the structure of a vertex algebra is an immediate application of the recontruction theorem, see for instance \cite[Theorem 4.5]{KacVA} or \cite[\S2.3.11]{FBbook}. The proof of the latter, however, relies on various technical results including Dong's lemma. We will see that the formula \eqref{Y map intro} (and in particular its counterpart for full vertex algebras) has a very simple and elegant geometric origin from the prefactorisation algebra perspective on (full) vertex algebras. Therefore, rather than invoke the reconstruction theorem (for which there does not appear to be a full vertex algebra analogue in the literature) to construct (full) vertex algebras from holomorphic prefactorisation algebras, as is done for instance in \cite[\S5]{CGBook1} in the chiral setting, we will show directly that the formula \eqref{Y map intro} emerges very naturally from the structure of the prefactorisation algebra itself.

\subsubsection{Main examples} \label{sec: vertex Lie examples}

We will consider the following three canonical examples of vertex Lie algebras:

\paragraph{Kac-Moody:} For any finite-dimensional complex Lie algebra $\g$ equipped with a symmetric invariant bilinear form $\kappa = \langle \cdot, \cdot \rangle : \g \otimes \g \to \CC$, the Kac-Moody vertex Lie algebra $\mathscr L$ is generated by fields $\X(z)$ for each $\X \in \g$ with operator product expansions given, for any $\X, \Y \in \g$, by
\begin{equation*}
\X(z) \Y(w) \, \sim \, \frac{[\X, \Y](w)}{z-w} + \frac{\langle \X, \Y \rangle}{(z-w)^2}.
\end{equation*}
When $\g$ is reductive the associated universal enveloping vertex algebra $\VV(\mathscr L)$ is known as the affine (Kac-Moody) vertex algebra $\VV_\kappa(\g)$, see for instance \cite{LepowskyLi, KacVA, FBbook}.
This includes the Heisenberg vertex algebra as a special case for the trivial Lie algebra $\CC$ with invariant bilinear form given by multiplication.
Note, however, that the Lie algebra $\g$ need not be reductive. For instance, one may take $\g$ to be a Takiff algebra \cite{Takiff}, i.e. $\g = \mathcal T^n \f \coloneqq \f[t]/ t^{n+1} \f[t]$ for some reductive Lie algebra $\f$ and $n \in \ZZ_{\geq 0}$. This can be equipped with a non-degenerate symmetric invariant bilinear form \cite{Quella:2020uhk} and the associated universal enveloping vertex algebra $\VV(\mathscr L)$ gives rise to examples of (logarithmic) chiral conformal field theories first studied in \cite{Babichenko:2012uq}, see also \cite{Rasmussen:2017eus, Rasmussen:2019zfu}, and related to generalised WZW models in \cite{Quella:2020uhk}.

\paragraph{Virasoro:} The Virasoro vertex Lie algebra $\mathscr L$ is generated by one field, often called $T(z)$, whose operator product expansion with itself depends on the central charge $c \in \RR$ and reads
\begin{equation*}
T(z) T(w) \, \sim \, \frac{\partial_w T(w)}{z-w} + \frac{2 T(w)}{(z-w)^2} + \frac{\frac 12 c}{(z-w)^4}.
\end{equation*}
The associated universal enveloping vertex algebra $\VV(\mathscr L)$ is the Virasoro vertex algebra $\text{Vir}_c$.

\paragraph{$\bm \beta \bm \gamma$ system:} The Weyl vertex Lie algebra $\mathscr L$ is generated by two free bosonic ghost fields $\beta(z)$ and $\gamma(z)$ with operator product expansion
\begin{equation*}
\beta(z) \gamma(w) \, \sim \, \frac{1}{z-w}.
\end{equation*}
The associated universal enveloping vertex algebra $\VV(\mathscr L)$ is the $\beta\gamma$ vertex algebra $\VV_{\beta\gamma}$, named after the $\beta\gamma$ bosonic ghost system in the physics literature.

\subsubsection{Analytic Langlands correspondence}

One important motivation for studying the affine vertex algebra $\VV_\kappa(\g)$ comes from the fact that at the critical level $\kappa = \kappa_c$ it plays a central role in the study of the geometric Langlands correspondence, as formulated by Beilinson and Drinfel'd in their seminal work \cite{BDgeometric} on the quantisation of Hitchin's integrable system. For an extensive review on the subject we refer the reader to \cite{FrLanglands, Frbook} and references therein.
Much more recently, an analytic version of the geometric Langlands correspondence was formulated and studied by Etingof, Frenkel and Kazhdan in the series of works \cite{Frenkel:2018dic, Etingof:2019pni, Etingof:2021eub, Etingof:2021eeu, Etingof:2023drx}, implementing and extending earlier ideas of Teschner \cite{Teschner:2017djr}.
The gauge theoretic interpretation of the geometric Langlands correspondence as electric-magnetic duality in twisted $\mathcal N=4$ super Yang-Mills theory in four dimensions \cite{KapustinWitten} has also been recently extended in \cite{Gaiotto:2021tsq} to the analytic version. It is expected, see for instance \cite{Teschner:2017djr, Frenkel:2018dic}, that the role of the vertex algebra $\VV_{\kappa_c}(\g)$ in the geometric Langlands correspondence should be replaced by that of the full affine vertex algebra $\hVV_{\kappa_c}(\g) = \VV_{\kappa_c}(\g) \otimes \overline{\VV}_{\kappa_c}(\g)$ at critical level $\kappa = \kappa_c$ in the analytic version.

\subsection{Outline of the paper} \label{sec: paper outline}

In \S\ref{sec: prefac alg} we construct a local Lie algebra $\L^{\Dz}_\alpha$ over the complex curve $\Dz$, which is an analogue of the notion of vertex Lie algebra in the present prefactorisation algebra setting. Since we are interested in describing a full vertex algebra on $\Sigma$, to forget about the orientation introduced by passing from $\Sigma$ to the orientation double $\Dz$, see \S\ref{sec: orientation double}, the main object of interest will be the pushforward $\L^\Sigma_\alpha$ of the local Lie algebra along the projection $\pi : \Dz \to \Sigma$.
We focus on the three examples of vertex Lie algebras described in \S\ref{sec: vertex Lie examples}. Importantly, in all of these the identity operator appears as one term in the operator product expansions. Correspondingly, the local Lie algebra $\L^\Sigma_\alpha$ will be centrally extended and to keep track of its crucial central extension $\kent$ we will work with the notion of \emph{unital} local Lie algebras. The counterpart of the universal enveloping vertex algebra construction is the prefactorisation envelope $\U\L^\Sigma_\alpha$ which we adapt from the case of local Lie algebras in \cite[Definition 3.6.1]{CGBook1} to the unital case. The technical details are relegated to Appendix \ref{sec: twisted PF env}. This allows us to deal with (full) vertex algebras over $\CC$ rather than over the base ring $\CC[s \kent]$, as in \cite[Definition 5.5.2]{CGBook1}. Our construction thus unifies three key examples of holomorphic prefactorisation algebras in a single framework: the Kac-Moody case \cite[\S5.5]{CGBook1}, the Virasoro case \cite{Williams-Vir} and the $\beta\gamma$ system case \cite[\S5.4]{CGBook1}.

In \S\ref{sec: vertex alg} we consider the vector space $\hVV^{\fl,\alpha} = \VV^{\fl,\alpha} \otimes \bar \VV^{\fl,\alpha}$ where $\VV^{\fl,\alpha}$ denotes the vector space underlying the universal enveloping vertex algebra $\VV(\mathscr L)$ and $\bar\VV^{\fl,\alpha}$ its anti-chiral analogue. We realise this vector space in the limit $\hV^{\fl,\alpha}_p$ of the prefactorisation algebra $\U\L^\Sigma_\alpha$ over any point $p \in \Sigma^\circ$ in the interior $\Sigma^\circ$ of $\Sigma$. In Proposition \ref{prop: vertex operator general} we show that the state-field correspondence at a point $p \in \Sigma^\circ$ is determined by the factorisation product $\hV^{\fl,\alpha}_q \to \U\L^\Sigma_\alpha(Y)$ for some annulus $q \in Y \subset \Sigma^\circ$ around $p$.
In particular, the formula for the state field correspondence of an arbitrary state in $\hVV^{\fl, \alpha}$, i.e. the analogue of \eqref{Y map intro}, has a very elegant geometric origin from the prefactorisation algebra perspective.
We derive a number of properties satisfied by the modes of the vertex operators of arbitrary states in $\hVV^{\fl, \alpha}$, including `Borcherds type' identities when one of the states involved is (anti-)chiral, i.e. belongs to $\VV^{\fl,\alpha}$ or $\bar \VV^{\fl,\alpha}$. In \S\ref{sec: chiral states} we introduce the (anti-)conformal states of $\hVV^{\fl,\alpha}$ and use these to derive a non-chiral version of Huang's ``change of variable'' formula for the state-field correspondence of $\hVV^{\fl,\alpha}$, see Theorem \ref{thm: change of coordinate} and Corollary \ref{cor: Huang's formula}. In \S\ref{sec: invariant bilinear} we specialse to $\Sigma = S^2$ and use the factorisation product to $\U\L^\Sigma_\alpha(S^2)$ to define an invariant bilinear form on $\hVV^{\fl,\alpha}$. We then use this to prove in Theorem \ref{thm: F is a full VOA} that $\hVV^{\fl,\alpha}$ satisfies the axioms of a full vertex algebra in the sense of \cite{Moriwaki:2020dlj, Moriwaki:2020cxf}. Finally, in \S\ref{sec: unitarity} we discuss reality conditions on $\hVV^{\fl,\alpha}$ and define a Hermitian sesquilinear form on $\hVV^{\fl, \alpha}$.

In \S\ref{sec: Fourier modes} we make use of the framework developed in \S\ref{sec: vertex alg} to describe the operator formalism for $\hVV^{\fl,\alpha}$, cf. \cite[\S6]{CFTbook}. In particular, we use the change of variable formula applied to the conformal transformation from the plane to the cylinder in order to describe Fourier series decompositions of quantum operators on $S^1$ associated to generic states in $\hVV^{\fl,\alpha}$. In \S\ref{sec: Borcherds Fourier} we also derive `Borcherds type' identities for the Fourier modes of states in $\hVV^{\fl,\alpha}$, again in the case when one of the states involved is (anti-)chiral. Finally, in \S\ref{sec: reality conditions} we discuss reality conditions, define a natural notion of Hermitian conjugate for quantum operators on $S^1$ and show that it corresponds to the adjoint with respect to the Hermitian sesquilinear from on $\hVV^{\fl, \alpha}$.

\paragraph{Acknowledgements.}
I would like to thank Simen Bruinsma and Alex Schenkel for valuable discussions in relation to various aspects of this work, and also Simen Bruinsma for detailed feedback on a first draft of the paper. I am also grateful to the anonymous referees for their valuable comments and suggestions, which greatly helped to broaden the scope of this paper. I gratefully acknowledge the support of the Leverhulme Trust through a Leverhulme Research Project Grant (RPG-2021-154).

\section{Prefactorisation algebra \texorpdfstring{$\U\L^\Sigma_\alpha$}{Ug}} \label{sec: prefac alg}

Throughout this section we let $\Sigma$ be an arbitrary connected $2$-dimensional conformal manifold but which could be non-orientable and possibly with boundary.

Let $\Vec_\CC$ denote the symmetric monoidal category of complex vector spaces with the tensor product over $\CC$ serving as the monoidal product. We shall also need the symmetric monoidal category $\dgVec_\CC$ of \dg{} vector spaces over $\CC$, also equipped with the tensor product over $\CC$ as the monoidal product. Let $\dgLie_\CC$ denote the symmetric monoidal category of \dg{} Lie algebras over $\CC$ with the direct sum of \dg{} Lie algebras serving as monoidal product.

\subsection{Unital local Lie algebra \texorpdfstring{$\L^\Sigma_\alpha$}{gk}} \label{sec: twisted dgla}

Let $\Omega^\chain_c$ denote the cosheaf of compactly supported sections of the de Rham complex on the complex manifold $\Dz$.
Let $\Omega^{0, \chain}_c$ be the cosheaf of compactly supported sections of the Dolbeault complex on $\Dz$, equipped with the Dolbeault differential $\bar\partial$, namely the anti-holomorphic part of the de Rham differential $\d = \partial + \bar \partial$. These are cosheaves of commutative \dg{} algebras.

\subsubsection{General setup} \label{sec: general setup}

Let $L$ be a holomorphic vector bundle over $\Dz$. We will make a number of assumptions about $L$, motivated in part by the list of examples considered in \S\ref{sec: main examples} below, which will allow us to define a unital local Lie algebra $\L^{\Sigma}_\alpha$ on $\Sigma$ that will form the basis for the rest of the paper.

\medskip
\begin{assumption}
The compactly supported sections of the $L$-valued Dolbeault complex $U \mapsto \Omega^{0, \chain}_c(U, L)$ define a precosheaf of \dg{} Lie algebras on $\Dz$, whose Lie bracket we denote by
\begin{equation} \label{Omega L bracket}
[\cdot, \cdot] : \Omega_c^{0, \chain}(U, L) \otimes \Omega_c^{0, \chain}(U, L) \longrightarrow \Omega_c^{0, \chain}(U, L).
\end{equation}
\end{assumption}

\medskip
\begin{assumption}
The holomorphic vector bundle $L$ is locally holomorphically trivial over any coordinate chart on $\Dz$, that is for any open subset $U \subset \Dz$ equipped with a local holomorphic coordinate $\xi : U \to \CC$ we have $L|_U \cong U \times \fl_\xi$ for some finite-dimensional vector space $\fl_\xi$ with the Dolbeault operator $\bar\partial_L$ of $L$ given locally by $\bar\partial$.
\end{assumption}

\medskip

Although the vector space $\fl_\xi$ will typically depend on the choice of coordinate $\xi$, hence the use of the subscript $\xi$ in the notation, there is a canonical isomomorphism $\fl_\eta \SimTo \fl_\xi$ between the vectors spaces associated with different local holomorphic coordinates $\xi, \eta : U \to \CC$ determined by the the change of coordinate from $\eta$ to $\xi$. We then have the isomorphism of \dg{} vector spaces
\begin{equation} \label{Omega L iso}
\Omega^{0, \chain}_c(U, L) \cong \fl_\xi \otimes \Omega^{0, \chain}_c(U).
\end{equation}

It will be useful to also introduce a canonical copy $\fl$ of the finite-dimensional vector spaces $\fl_\xi$ with a corresponding isomorphism $(\cdot)_\xi : \fl \SimTo \fl_\xi$. However, it is important to note at this stage that the induced map
$(\cdot)_{\eta \to \xi} \coloneqq (\cdot)_\xi \circ ( (\cdot)_\eta )^{-1} : \fl_\eta \SimTo \fl_\xi$
will generally be different from the canonical change of coordinate isomomorphism $\fl_\eta \SimTo \fl_\xi$ described above.

\medskip
\begin{assumption}
For each $U \in \Dz$, the \dg{} Lie algebra $\Omega_c^{0,\chain}(U, L)$ is equipped with a $2$-cocycle
\begin{equation} \label{cocycle alpha}
\alpha : \Omega_c^{0, \chain}(U, L) \otimes \Omega_c^{0, \chain}(U, L) \longrightarrow \CC.
\end{equation}
\end{assumption}
\medskip

We use this $2$-cocycle to introduce the \emph{twisted} precosheaf of \dg{} Lie algebras $\L^\Dz_\alpha$ as a central extension of the precosheaf of \dg{} Lie algebras $U \mapsto \Omega^{0, \chain}_c(U, L)$ following \cite{CGBook1}. Specifically, for any open $U \subset \Dz$ we define the \dg{} vector space
\begin{equation*}
\L^\Dz_\alpha(U) \coloneqq \Omega_c^{0, \chain}(U, L) \oplus \CC \kent,
\end{equation*}
where $\kent$ is a formal variable of cohomological degree $1$ so that $\CC \kent$ is a $1$-dimensional complex \dg{} vector space concentrated in degree $1$, isomorphic to the inverse suspension $\CC[-1]$ of $\CC$.

We have a morphism of \dg{} vector spaces $\ext_{U, V} : \L^\Dz_\alpha(U) \to \L^\Dz_\alpha(V)$ defined as the extension by zero on the first summand and as the identity on $\CC \kent$.
We define the $\CC$-linear operation
\begin{subequations} \label{Lie bracket gXkappa}
\begin{equation} \label{Lie bracket gXkappa a}
[\cdot, \cdot]_\alpha : \L^\Dz_\alpha(U) \otimes \L^\Dz_\alpha(U) \longrightarrow \L^\Dz_\alpha(U)
\end{equation}
by declaring $\CC \kent$ to be central and for every $a, b \in \Omega^{0,\chain}_c(U, L)$ setting
\begin{align} \label{Lie bracket gXkappa b}
[a, b]_\alpha &\coloneqq [a, b] + \alpha(a, b) \, \kent.
\end{align}
\end{subequations}

It is straightforward to show that the above structure makes $\L^\Dz_\alpha$ into a precosheaf of \dg{} Lie algebras on $\Dz$. In fact, since the central extension will play a key role in what follows, it will be important to keep track of it by introducing the notion of a unital \dg{} Lie algebra\footnote{I thank Alex Schenkel for this suggestion which makes some of the later constructions more elegant.}.
A unital \dg{} Lie algebra can be described as a \dg{} Lie algebra $L$ equipped with a map $\eta \in \Map_{\dgVec_\CC}(\CC, L)$ of degree $1$ whose image is both central and closed in $L$.
In other words, a \dg{} Lie algebra $L$ is unital if it has a central cocycle in degree $1$, namely $\eta(1)$. We will refer to the degree $1$ map $\eta : \CC \to L$ as the \emph{unit} of the unital \dg{} Lie algebra. See \S\ref{sec: udgLie} for more details.
Finally, a unital local Lie algebra is a precosheaf of unital \dg{} Lie algebras with the property that sections over disjoint opens commute. Let $\uLocLie_\CC(D)$ denote the category of unital local Lie algebras on a manifold $D$, see \S\ref{sec: uLocLie} for more details. It is straightforward to check by direct computation that $\L^\Dz_\alpha$ is a unital local Lie algebra on $\Dz$.

In order to get rid of the dependence on the orientation which was introduced by going from $\Sigma$ to its oriented cover $\Dz$, from now on we will only consider the pushforward
\begin{equation} \label{pushforward}
\L^\Sigma_\alpha \coloneqq \pi_\ast \L^\Dz_\alpha \in \uLocLie_\CC(\Sigma)
\end{equation}
of the unital local Lie algebra $\L^\Dz_\alpha$ on $\Dz$ introduced above, along the projection $\pi : \Dz \to \Sigma$. In other words, the sections of \eqref{pushforward} over any open $U \subset \Sigma$ are given by
\begin{equation}
\L^\Sigma_\alpha(U) = \L^\Dz_\alpha\big( \pi^{-1}(U) \big).
\end{equation}
Note that on a general $2$-dimensional conformal manifold $\Sigma$, with possibly non-empty boundary $\partial \Sigma$ and interior region $\Sigma^\circ \coloneqq \Sigma \setminus \partial \Sigma$, there are two types of connected open subsets $U \subset \Sigma$ one may consider:
\begin{itemize}
  \item[$(i)$] $U \subset \Sigma^\circ$ in which case $\pi^{-1}(U) \subset \Dz$ consists of two connected components,
  \item[$(ii)$] $U \subset \Sigma$ with $U \cap \partial \Sigma \neq \emptyset$ for which $\pi^{-1}(U) \subset \Dz$ has a single connected component.
\end{itemize}
Throughout this paper we will only consider bulk fields in the twisted prefactorisation envelope of $\L^\Sigma_\alpha$, to be introduced in \S\ref{sec: twisted PFA envelope}, namely observables living on open subsets $U \subset \Sigma^\circ$ of type $(i)$. We leave the study of boundary fields associated with open subsets $U \subset \Sigma$ of type $(ii)$, and their interplay with bulk fields, for future work.

\subsubsection{Main examples} \label{sec: main examples}

The three primary examples of the above general setup which we will consider throughout this paper correspond to the three canonical examples of vertex Lie algebras discussed in \S\ref{sec: vertex Lie examples}. Upon taking their twisted prefactorisation envelopes in \S\ref{sec: twisted PFE} below, these will lead to the full versions of the Kac-Moody, Virasoro and $\beta\gamma$ vertex algebras, respectively. We now describe the data underlying the unital local Lie algebra $\L^\Sigma_\alpha$ in each of these cases.

\paragraph{Kac-Moody:} Let $\g$ be any finite-dimensional complex Lie algebra and consider the trivial vector bundle
\begin{equation} \label{vector bundle KM}
L = \Dz \times \g.
\end{equation}
This clearly satisfies the assumption of local triviality over any open $U \subset \Dz$ equipped with a holomorphic coordinate $\xi : U \to \CC$, where $\fl_\xi = \g$ is actually independent of the coordinate $\xi$ in this case. The Lie bracket \eqref{Omega L bracket} on $\Omega^{0,\chain}_c(U,L) \cong \fl_\xi \otimes \Omega^{0,\chain}_c(U)$ is given by
\begin{subequations} \label{bracket and cocycle KM}
\begin{equation} \label{Lie bracket KM}
[\X \otimes \omega, \Y \otimes \eta] \coloneqq [\X, \Y] \otimes \omega \wedge \eta
\end{equation}
for $\X, \Y \in \g$ and $\omega, \eta \in \Omega^{0, \chain}_c(U)$.
%Using the canonical isomorphism $\Omega_c^{0, \chain}(U, L) \cong \g \otimes \Omega_c^{0, \chain}(U)$ we can represent elements of $\Omega_c^{0, \chain}(U, L)$ as finite sums of the form $\sum_i \X_i \otimes \omega_i$ with $\X_i \in \g$ and $\omega_i \in \Omega^{0,\chain}_c(U)$.

We also define $\fl = \g$ and the isomorphism $(\cdot)_\xi : \fl \SimTo \fl_\xi$ is the identity.

We fix a symmetric invariant bilinear form $\kappa : \g \otimes \g \to \CC$ on $\g$. Using this we can define a $2$-cocycle \eqref{cocycle alpha} on $\Omega_c^{0, \chain}(U, L)$ by the following formula, see \cite[\S5.5.1]{CGBook1},
\begin{align} \label{cocycle KM}
\alpha(\X \otimes \omega, \Y \otimes \eta) \coloneqq - \frac{\kappa(\X, \Y)}{2 \pi \ii} \int_U \partial \omega \wedge \eta
\end{align}
\end{subequations}
for any $\X, \Y \in \g$ and $\omega, \eta \in \Omega^{0, \chain}_c(U)$.
The integral over $U$ in \eqref{cocycle KM} is trivially zero on degree grounds if $|\omega| + |\eta| \neq 1$. Indeed, if $\omega \in \Omega^{0, r}_c(U)$ and $\eta \in \Omega^{0, s}_c(U)$ then $\partial \omega \wedge \eta \in \Omega^{1, r+s}_c(U)$, which can be integrated over $U$ only if $r+s = 1$.

The unital local Lie algebra $\L^\Dz_\alpha$ in this case corresponds to the affine Kac-Moody example considered in \cite[\S5.5]{CGBook1}.

\paragraph{Virasoro:} Consider the holomorphic tangent bundle
\begin{equation} \label{Virasoro L def}
L = T^{1,0} \Dz.
\end{equation}
Over any open $U \subset \Dz$ equipped with a holomorphic coordinate $\xi : U \to \CC$ we have the local trivialisation $L|_U \cong U \times \fl_\xi$ with $\fl_\xi = \text{span}_\CC \{ \partial_\xi \}$. The induced isomorphism \eqref{Omega L iso} then allows us to represent an element of $\Omega^{0,\chain}_c(U,L)$ in the form $\partial_\xi \otimes u \in \fl_\xi \otimes \Omega^{0,\chain}_c(U)$ with $u \in \Omega^{0,\chain}_c(U)$ and the Lie bracket \eqref{Omega L bracket} on $\Omega^{0,\chain}_c(U,L)$ explicitly reads
\begin{subequations} \label{bracket and cocycle Virasoro}
\begin{equation} \label{Lie bracket Virasoro}
[\partial_\xi \otimes u, \partial_\xi \otimes v] = \partial_\xi \otimes (u \wedge \partial_\xi v - v \wedge \partial_\xi u)
\end{equation}
for any $u, v \in \Omega^{0, \chain}_c(U)$. One can view $\partial_\xi \otimes u$ as a compactly supported Dolbeault form valued vector field on $U$, which would be more conventionally denoted $u \partial_\xi$. The Lie bracket \eqref{Lie bracket Virasoro} is then simply the commutator of vector fields combined with the wedge product on forms.

We let $\fl = \text{span}_\CC \{ \Omega \}$ and the isomorphism $(\cdot)_\xi : \fl \SimTo \fl_\xi$ is given by $\Omega \mapsto - \partial_\xi$.

Following \cite{Williams-Vir}, a formula for the $2$-cocycle \eqref{cocycle alpha} on $\Omega^{0,\chain}_c(U, L)$ is obtained by generalising the formula for the Gelfand-Fuchs $2$-cocycle on the Lie algebra of vector fields on $S^1$. First, we suppose that $U \subset \Dz$ is covered by a local coordinate chart $\xi : U \to \CC$. A pair of elements in $\Omega_c^{0, \chain}(U, L)$ can then be represented as $\partial_\xi \otimes u$ and $\partial_\xi \otimes v$, whose coefficients $u, v \in \Omega^{0,\chain}_c(U)$ transform as components of $(1,0)$ vector fields on $U$. We then define the $2$-cocycle \cite[\S2.2.2]{Williams-Vir}
\begin{align} \label{cocycle Virasoro}
\alpha_\xi( \partial_\xi \otimes u, \partial_\xi \otimes v ) &\coloneqq \frac{c}{24 \pi \ii} \int_U \partial (\partial_\xi u) \wedge \partial_\xi v,
\end{align}
\end{subequations}
where $c \in \RR$ is fixed. As in the previous example, the integral over $U$ in \eqref{cocycle Virasoro} is trivially zero on degree grounds if $|u| + |v| \neq 1$.
The subscript $\xi$ on the left hand side of \eqref{cocycle Virasoro} is used to indicate that this $2$-cocycle explicitly depends on the chosen coordinate $\xi$. However, it depends on the coordinate $\xi$ only up to a $2$-coboundary. Indeed, given any other local coordinate $\eta : U \to \CC$ on the open $U \subset \Dz$ and introducing the $1$-cochain
\begin{equation} \label{coboundary def}
\beta_{\eta,\xi} : \Omega^{0,\chain}_c(U, L) \longrightarrow \CC, \qquad
\partial_\xi \otimes u \longmapsto \frac{c}{24 \pi \ii} \int_U (S \eta)(\xi) \d \xi \wedge u,
\end{equation}
where $(S \eta)(\xi) = \frac{\partial_\xi^3 \eta}{\partial_\xi \eta} - \frac 32 \big( \frac{\partial_\xi^2 \eta}{\partial_\xi \eta} \big)^2$ denotes the Schwarzian derivative, it is straightforward to show that
\begin{equation*}
\alpha_\eta( \partial_\eta \otimes \tilde u, \partial_\eta \otimes \tilde v ) = \alpha_\xi( \partial_\xi \otimes u, \partial_\xi \otimes v ) + \beta_{\eta, \xi}\big( [\partial_\xi \otimes u, \partial_\xi \otimes v] \big)
\end{equation*}
for any pair of vector fields $\partial_\eta \otimes \tilde u = \partial_\xi \otimes u$ and $\partial_\eta \otimes \tilde v = \partial_\xi \otimes v$ in $\Omega^{0,\chain}_c(U, L)$. In other words, $\alpha_\eta = \alpha_\xi + \delta \beta_{\eta, \xi}$ so that we have an isomorphism of unital local Lie algebras
\begin{equation} \label{Phi iso change alpha}
\Phi_{\eta \to \xi} : \L^\Dz_{\alpha_\eta}(U) \overset{\cong}\longrightarrow \L^\Dz_{\alpha_\xi}(U) , \qquad
a + x \kent \longmapsto a + \big( x - \beta_{\eta,\xi}(a) \big) \kent.
\end{equation}
The $2$-cocycle \eqref{cocycle Virasoro} thus leads to a well-defined and coordinate independent central extension \eqref{Lie bracket gXkappa}. We refer to \cite[\S5.2]{Williams-Vir} for further details.

The resulting unital local Lie algebra $\L^{\Dz}_\alpha$ coincides with the local Lie algebra underlying the Virasoro factorisation algebra in \cite{Williams-Vir}.

\paragraph{$\bm \beta \bm \gamma$ system:} Consider the direct sum of the trivial $1$-dimensional line bundle on $\Dz$ with the holomorphic cotangent bundle, namely
\begin{equation} \label{beta gamma L def}
L = (\Dz \times \CC) \oplus T^{\ast 1,0} \Dz.
\end{equation}
Over any open $U \subset \Dz$ equipped with a holomorphic coordinate $\xi : U \to \CC$ we have the local trivialisation $L|_U \cong U \times \fl_\xi$ where $\fl_\xi = \text{span}_\CC \{ 1, \d \xi \}$. The Lie bracket \eqref{Omega L bracket} on $\Omega^{0,\chain}_c(U,L)$ is taken to be trivial.
The isomorphism \eqref{Omega L iso} here implies $\Omega_c^{0, \chain}(U, L) \cong \Omega_c^{0, \chain}(U) \oplus \Omega_c^{1, \chain}(U)$.

We let $\fl = \text{span}_\CC \{ \beta, \gamma \}$ and the isomorphism $(\cdot)_\xi : \fl \SimTo \fl_\xi$ is given by $\beta \mapsto 1$, $\gamma \mapsto \d \xi$.

A $2$-cocycle \eqref{cocycle alpha} on $\Omega^{0,\chain}_c(U, L)$ is given by
\begin{align} \label{cocycle beta gamma}
\alpha(\omega, \eta) \coloneqq \frac{1}{2 \pi \ii} \int_U \omega \wedge \eta
\end{align}
for any $\omega, \eta \in \Omega^{0, \chain}_c(U) \oplus \Omega^{1, \chain}_c(U)$, where once again the integral is understood to vanish unless $\omega \wedge \eta \in \Omega^{1,1}_c(U)$ so that if $\omega \in \Omega^{0, r}_c(U)$, say, then we should have $\eta \in \Omega^{1, s}_c(U)$ with $r + s = 1$.

This corresponds to the $\beta\gamma$ system example discussed in \cite[\S5.4]{CGBook1}.

\subsection{Cohomology of \texorpdfstring{$\L^\Sigma_\alpha$}{gk}} \label{sec: coh of gD}

In this section we compute the cohomology of the \dg{} vector space $\L^\Sigma_\alpha(U)$ for connected open subsets $U \subset \Sigma^\circ$ in the interior $\Sigma^\circ$ of $\Sigma$, equipped with a local holomorphic coordinate, see the last part of Theorem \ref{thm: cohomology gSigma} below. Although this result is well known, see e.g. \cite[\S9]{LueckingRubel} or \cite[Theorem 5.4.7]{CGBook1}, we shall give the details of the proof in the present context since the full statement of Theorem \ref{thm: cohomology gSigma} will be crucial for us in \S\ref{sec: vertex alg}.

Given any $U \subset \CC$ or $U \subset \Dz$ we denote by $\overline{U}$ its closure.
The following result is standard.

\begin{lemma} \label{lem: Cauchy-Pompeiu}
Let $U \subset \CC$ be a bounded open subset with $C^1$ boundary.
\begin{itemize}
  \item[$(a)$] For any $f \in \Omega^0(\overline{U})$ we have the \emph{Cauchy-Pompeiu formula}
\begin{equation*}
\frac{1}{2 \pi \ii} \int_{\partial U} \frac{f \d\mu}{\mu-\lambda} + \frac{1}{2 \pi \ii} \int_U \frac{\d\mu \wedge \bar \partial f}{\mu-\lambda} = \left\{ 
\begin{array}{ll}
f(\lambda) & \textup{if}\; \lambda \in U,\\
0 & \textup{if}\; \lambda \in \CC \setminus \overline{U}.
\end{array}
\right.
\end{equation*}
  \item[$(b)$] For any $\omega \in \Omega^{0,1}_{c}(U)$, the $\bar \partial$-problem $\bar \partial f = \omega$ for $f \in \Omega^{0,0}(\CC)$ on $\CC$ is solved by
\begin{equation*}
\pushQED{\qed}
f(\lambda) = \omega^\downarrow(\lambda) \coloneqq \frac{1}{2 \pi \ii} \int_U \frac{\d\mu \wedge \omega}{\mu-\lambda}. \qedhere
\popQED
\end{equation*}
\end{itemize}
\end{lemma}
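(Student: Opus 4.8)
The plan is to deduce both parts from Stokes' theorem, obtaining part (b) as a consequence of part (a).

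\emph{Part (a).} Fix $\lambda$ and work with the smooth $1$-form $\eta_\lambda \coloneqq \frac{f\,\d\mu}{\mu-\lambda}$ on $U\setminus\{\lambda\}$. A one-line computation using $\d\mu\wedge\d\mu=0$, the holomorphy of $(\mu-\lambda)^{-1}$ in $\mu$, and $\partial f\wedge\d\mu=0$ gives $\d\eta_\lambda=-\frac{\d\mu\wedge\bar\partial f}{\mu-\lambda}$. When $\lambda\notin\overline U$ the form $\eta_\lambda$ is smooth on all of $\overline U$, and Stokes' theorem applied to $U$ immediately yields the stated identity with value $0$. When $\lambda\in U$ I would instead apply Stokes on $U_\epsilon\coloneqq U\setminus\overline{B_\epsilon(\lambda)}$ for small $\epsilon>0$, whose boundary consists of $\partial U$ together with $\partial B_\epsilon(\lambda)$ traversed clockwise, and then let $\epsilon\to0$: the bulk integral converges to $-\int_U\frac{\d\mu\wedge\bar\partial f}{\mu-\lambda}$ since $|\mu-\lambda|^{-1}$ is locally integrable on $\CC$, while parametrising $\mu=\lambda+\epsilon e^{\ii\theta}$ gives $\frac{\d\mu}{\mu-\lambda}=\ii\,\d\theta$, whence $\int_{\partial B_\epsilon(\lambda)}\eta_\lambda\to 2\pi\ii\,f(\lambda)$ by continuity of $f$. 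Rearranging and dividing by $2\pi\ii$ gives the Cauchy--Pompeiu formula.

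\emph{Part (b).} I would first extend $\omega$ by zero to a compactly supported smooth $(0,1)$-form on $\CC$ and write $\omega=g\,\d\bar\mu$ with $g\in C^\infty_c(\CC)$, so that $\omega^\downarrow(\lambda)=\frac{1}{2\pi\ii}\int_\CC\frac{g(\mu)}{\mu-\lambda}\,\d\mu\wedge\d\bar\mu$. Translating the integration variable by $\mu=\lambda+\zeta$ moves all the $\lambda$-dependence into the smooth compactly supported factor $g(\lambda+\zeta)$; since the kernel $|\zeta|^{-1}$ is locally integrable one may differentiate under the integral sign (this also shows $\omega^\downarrow\in\Omega^{0,0}(\CC)$), obtaining, after undoing the translation, $\frac{\partial\omega^\downarrow}{\partial\bar\lambda}(\lambda)=\frac{1}{2\pi\ii}\int_\CC\frac{1}{\mu-\lambda}\,\frac{\partial g}{\partial\bar\mu}(\mu)\,\d\mu\wedge\d\bar\mu$. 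Choosing any large ball $U'\supset\supp g$ and applying part (a) to $g$ on $U'$ — where the boundary term vanishes because $g\equiv0$ near $\partial U'$, and $\d\mu\wedge\bar\partial g=\frac{\partial g}{\partial\bar\mu}\,\d\mu\wedge\d\bar\mu$ — identifies the right-hand side as $g(\lambda)$. Hence $\bar\partial\omega^\downarrow=g\,\d\bar\lambda=\omega$, as required.

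Since the paper already notes that this result is standard, there is no genuine obstacle: the only points requiring care are the orientation bookkeeping in the application of Stokes' theorem on the punctured domain $U_\epsilon$ — getting the sign of the inner-circle contribution right — and the routine justification, via dominated convergence against the locally integrable kernels $|\mu-\lambda|^{-1}$ and $|\zeta|^{-1}$, of both the $\epsilon\to0$ limit in part (a) and the differentiation under the integral sign in part (b).
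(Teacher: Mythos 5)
The paper does not actually prove this lemma — it is quoted as standard and stated with no argument — so there is nothing to compare against; your write-up simply supplies the classical proof, and it is correct. Part (a) is the usual Stokes-theorem argument on the punctured domain $U\setminus\overline{B_\epsilon(\lambda)}$: your computation $\d\bigl(\tfrac{f\,\d\mu}{\mu-\lambda}\bigr)=-\tfrac{\d\mu\wedge\bar\partial f}{\mu-\lambda}$ is right, the clockwise orientation of the inner circle and the limit $\int_{\partial B_\epsilon(\lambda)}\tfrac{f\,\d\mu}{\mu-\lambda}\to 2\pi\ii f(\lambda)$ are handled correctly, and local integrability of $|\mu-\lambda|^{-1}$ justifies the $\epsilon\to 0$ limit of the bulk term. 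Part (b) — extending $\omega=g\,\d\bar\mu$ by zero, translating so that the $\lambda$-dependence sits in the smooth compactly supported factor, differentiating under the integral, and then invoking part (a) on a large ball where the boundary term dies — is the standard deduction and also checks out, including the smoothness of $\omega^\downarrow$. No gaps.
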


We denote the complement of any subset $S \subset \CP \coloneqq \CC \cup \{ \infty \}$ of the Riemann sphere by $S^{\rm c} \coloneqq \CP \setminus S$. For any $S \subset \CC$, let $\O_\infty(S^{\rm c})$ denote the algebra of germs of holomorphic functions on $S^{\rm c}$ vanishing at $\infty$.
Explicitly, an element of $\O_\infty(S^{\rm c})$ is the equivalence class $[f]$ of a holomorphic function $f \in \O_{\CP}(\Delta_f)$ defined on an open neighbourhood $\Delta_f \supset S^{\rm c} \ni \infty$ and such that $f(\infty) = 0$, where two such functions $f$ and $g$ are considered equivalent if $f|_V = g|_V$ for some open neighbourhood $V \subset \Delta_f \cap \Delta_g$ of $S^{\rm c}$.

\begin{lemma} \label{lem: omega check germ}
Let $W \subset \CC$ be a bounded open subset. For any $\omega \in \Omega^{0,1}_c(W)$ we have:
\begin{itemize}
  \item[$(i)$] $\omega^\downarrow \in \Omega^{0,0}(\CC)$ defines a germ $[\omega^\downarrow] \in \O_\infty( W^{\rm c} )$,
  \item[$(ii)$] $[\omega^\downarrow] = 0$ in $\O_\infty( W^{\rm c} )$ if and only if $\omega$ is $\bar\partial$-exact.
\end{itemize}
\begin{proof}
Firstly, since $\omega$ has compact support the integral entering the definition of $\omega^\downarrow(\lambda)$ is well defined, in particular the integrand is locally integrable near the singularity at $\lambda$.
And by Lemma \ref{lem: Cauchy-Pompeiu}$(b)$, this function provides a solution to the $\bar\partial$-problem for $\omega$.
We then also deduce from the support property of $\omega$ that $\omega^\downarrow$ is holomorphic on an open neighbourhood $\Delta_{\omega^\downarrow}$ of $W^{\rm c}$. Moreover, since for every $\mu \in \supp \, \omega \subset W$ we have $|\mu - \lambda| \geq \text{dist}(\lambda, \supp \, \omega)$, it follows that
\begin{equation*}
\big| \omega^\downarrow(\lambda) \big| \leq \frac{C \sup_{\mu \in U} | \omega(\mu) |}{\text{dist}(\lambda, \supp \, \omega)}
\end{equation*}
for some $C > 0$, and hence $\omega^\downarrow(\lambda) \to 0$ as $\lambda \to \infty$. The statement $(i)$ now follows.

If $[ \omega^\downarrow ] = 0$ then $\omega^\downarrow$ vanishes on an open neighbourhood of $W^{\rm c}$ and so it has compact support in the bounded open $W \subset \CC$, i.e. $\omega^\downarrow \in \Omega^{0,0}_c(W)$. The `only if' part of $(ii)$ now follows as $\omega^\downarrow$ solves the $\bar\partial$-problem for $\omega$.

Conversely, if $\omega = \bar\partial \eta$ for some $\eta \in \Omega^{0,0}_c(W)$ then it is immediate from the Cauchy-Pompeiu formula that $\omega^\downarrow = \eta$. The `if' part of $(ii)$ now follows since $\omega^\downarrow \in \Omega^{0,0}_c(W)$.
\end{proof}
\end{lemma}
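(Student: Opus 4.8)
The plan is to deduce both $(i)$ and $(ii)$ by unpacking the two halves of Lemma~\ref{lem: Cauchy-Pompeiu}, together with elementary estimates on the Cauchy-type integral defining $\omega^\downarrow$.

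For $(i)$, I would first note that since $\supp\,\omega$ is compact and contained in the bounded open $W$, the integrand $\tfrac{\d\mu \wedge \omega}{\mu - \lambda}$ has only a locally integrable singularity at $\mu = \lambda$, so $\omega^\downarrow(\lambda)$ is well defined for every $\lambda \in \CC$; the substitution $\mu \mapsto \mu + \lambda$ moves this singularity off $\lambda$ and, the support being compact, permits differentiation under the integral sign to any order, giving $\omega^\downarrow \in \Omega^{0,0}(\CC)$, while Lemma~\ref{lem: Cauchy-Pompeiu}$(b)$ yields $\bar\partial \omega^\downarrow = \omega$. In particular $\omega^\downarrow$ is holomorphic on $\CC \setminus \supp\,\omega$, which is an open neighbourhood of $W^{\rm c} \cap \CC$. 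It remains to treat the point at infinity: from $|\mu - \lambda| \geq \text{dist}(\lambda, \supp\,\omega)$ for $\mu \in \supp\,\omega$ one obtains a bound of the form $|\omega^\downarrow(\lambda)| \leq C/\text{dist}(\lambda, \supp\,\omega)$, whence $\omega^\downarrow(\lambda) \to 0$ as $\lambda \to \infty$ and $\omega^\downarrow$ extends holomorphically across $\infty$ with value $0$. This exhibits a well-defined germ $[\omega^\downarrow] \in \O_\infty(W^{\rm c})$.

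For $(ii)$ I would argue the two implications separately. If $[\omega^\downarrow] = 0$, then $\omega^\downarrow$ vanishes on an open neighbourhood of $W^{\rm c}$, hence on a neighbourhood of $\CC \setminus W$; being smooth on $\CC$ and supported in the bounded set $W$, it lies in $\Omega^{0,0}_c(W)$, and since $\bar\partial \omega^\downarrow = \omega$ this presents $\omega$ as $\bar\partial$-exact with compactly supported primitive. Conversely, if $\omega = \bar\partial\eta$ with $\eta \in \Omega^{0,0}_c(W)$, I would invoke Lemma~\ref{lem: Cauchy-Pompeiu}$(a)$ on a large ball $U$ containing $\overline{\supp\,\eta}$: the boundary term vanishes because $\eta|_{\partial U} = 0$, leaving $\eta(\lambda) = \tfrac{1}{2\pi\ii}\int_U \tfrac{\d\mu \wedge \bar\partial\eta}{\mu - \lambda} = \omega^\downarrow(\lambda)$; thus $\omega^\downarrow = \eta \in \Omega^{0,0}_c(W)$, which has trivial germ along $W^{\rm c}$, so $[\omega^\downarrow] = 0$.

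The main obstacle I anticipate is the regularity statement in $(i)$ — that $\omega^\downarrow$ is genuinely a smooth function on all of $\CC$ rather than merely continuous near $\supp\,\omega$ — which is exactly where the translation trick $\mu \mapsto \mu + \lambda$ is needed; the remaining arguments are short bookkeeping with supports and the two Cauchy-Pompeiu formulas, which I would keep brief.
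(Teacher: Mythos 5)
Your proposal is correct and follows essentially the same route as the paper's proof: Lemma \ref{lem: Cauchy-Pompeiu}$(b)$ for the $\bar\partial$-problem, the support property for holomorphicity near $W^{\rm c}$, the distance estimate for vanishing at infinity, and the same two implications for $(ii)$. Your added detail (the translation trick for smoothness of $\omega^\downarrow$ and the explicit use of Lemma \ref{lem: Cauchy-Pompeiu}$(a)$ on a large disc for the converse) only fills in steps the paper treats as immediate.
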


Given any bounded open subset $W \subset \CC$, let us fix a basis of the space of germs $\O_\infty(W^c)$ and for each basis element $[f] \in \O_\infty(W^c)$ we pick a representative $f \in \O_{\CP}(\Delta_f)$ defined on an open neighbourhood $\Delta_f$ of $W^c$ and we choose a smooth bump function $\rho_f \in \Omega_c^{0,0}(W)$ which is equal to $1$ on some open neighbourhood of $(\Delta_f)^c \subset W$.
Since $f|_{\Delta_f \cap W} \in \Omega^{0,0}(\Delta_f \cap W)$ is defined on $\supp \bar\partial \rho_f \subset \Delta_f \cap W$ we can set
\begin{equation} \label{up arrow def}
\lceil f \rceil \coloneqq - f \, \bar\partial \rho_f \in \Omega^{0,1}_c(W)
\end{equation}
for every basis element $[f] \in \O_\infty(W^c)$ and then extend by linearity to all of $\O_\infty(W^c)$.

\begin{lemma} \label{lem: up arrow indep choices}
Let $W \subset \CC$ be a bounded open subset. For any germ $[f] \in \O_\infty(W^c)$,
the $1$-form $\lceil f \rceil \in \Omega^{0,1}_c(W)$ is independent of the choices made, up to $\bar\partial$-exact terms.
In particular, for any representative $f \in \O_{\CP}(\Delta_f)$ of $[f]$ and any smooth bump function $\rho \in \Omega^{0,0}_c(W)$ equal to $1$ on some open neighbourhood of $(\Delta_f)^c \subset W$ we have that $\lceil f \rceil + f \, \bar\partial \rho$ is $\bar\partial$-exact.
\begin{proof}
Let $[f] \in \O_\infty(W^c)$. In terms of the fixed basis of $\O_\infty(W^c)$, we can write it as a finite linear combination $[f] = \sum_i \alpha_i [f_i]$ of basis elements $[f_i] \in \O_\infty(W^c)$. Then $f|_V = \sum_i \alpha_i f_i|_V$ for some neighbourhood $V \subset \Delta_f \cap \bigcap_i \Delta_{f_i}$ of $W^c$. Let $\rho' \in \Omega^{0,0}_c(W)$ be a smooth bump function equal to $1$ on some neighbourhood of $V^c \subset W$. By definition $\lceil f \rceil = - \sum_i \alpha_i f_i \, \bar\partial \rho_{f_i}$ so that
\begin{equation*}
\lceil f \rceil + f\, \bar\partial \rho = - \sum_i \alpha_i f_i\, \bar\partial (\rho_{f_i} - \rho') - \bigg( \sum_i \alpha_i f_i - f \bigg) \bar\partial \rho' + f \, \bar\partial (\rho - \rho').
\end{equation*}
The second term on the right hand side is equal to zero since $\supp \bar\partial \rho' \subset V \cap W$ and $\sum_i \alpha_i f_i - f$ vanishes on $V$. On the other hand, the first term on the right hand side is $\bar\partial$-exact because $\rho_{f_i} - \rho'$ vanishes on $(\Delta_{f_i})^c \subset V^c$ so that $f_i$ is holomorphic on the support of $\rho_{f_i} - \rho'$. Likewise, the last term is also $\bar\partial$-exact, as required.
\end{proof}
\end{lemma}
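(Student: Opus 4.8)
The plan is to prove the displayed ``in particular'' assertion directly --- that $\lceil f \rceil + f\,\bar\partial\rho$ is $\bar\partial$-exact for any admissible pair consisting of a representative $f \in \O_{\CP}(\Delta_f)$ of the germ and a bump function $\rho \in \Omega^{0,0}_c(W)$ equal to $1$ on an open neighbourhood of $(\Delta_f)^{\rm c}$ --- since the general independence of $\lceil f \rceil$ from all choices follows at once: any two admissible pairs each differ from the fixed-basis form $\lceil f\rceil$ by a $\bar\partial$-exact term, hence from each other. First I would expand the germ in the fixed basis of $\O_\infty(W^{\rm c})$, writing $[f] = \sum_i \alpha_i [f_i]$ with basis elements $[f_i]$, so that the honest functions satisfy $f = \sum_i \alpha_i f_i$ on some open neighbourhood $V \subset \Delta_f \cap \bigcap_i \Delta_{f_i}$ of $W^{\rm c}$, and recall that by definition $\lceil f \rceil = -\sum_i \alpha_i f_i\,\bar\partial\rho_{f_i}$.

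The key device is to introduce a single auxiliary bump function $\rho' \in \Omega^{0,0}_c(W)$ equal to $1$ on a neighbourhood of $V^{\rm c}$ and to compare all the other bump functions to it. Adding and subtracting the corresponding terms I would arrive at the identity
\begin{equation*}
\lceil f \rceil + f \, \bar\partial \rho = - \sum_i \alpha_i f_i \, \bar\partial(\rho_{f_i} - \rho') - \Big( \sum_i \alpha_i f_i - f \Big) \bar\partial \rho' + f \, \bar\partial (\rho - \rho'),
\end{equation*}
and then dispose of the three terms on the right separately. The middle term vanishes because $\supp \bar\partial\rho' \subset V$ whereas $\sum_i \alpha_i f_i - f$ vanishes identically on $V$. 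For the first term, the inclusion $V \subset \Delta_{f_i}$ gives $(\Delta_{f_i})^{\rm c} \subset V^{\rm c}$, so both $\rho_{f_i}$ and $\rho'$ are identically $1$ near $(\Delta_{f_i})^{\rm c}$; hence $\rho_{f_i} - \rho'$ is supported inside $\Delta_{f_i}\cap W$, where $f_i$ is holomorphic, and therefore $f_i\,\bar\partial(\rho_{f_i}-\rho') = \bar\partial\big(f_i(\rho_{f_i}-\rho')\big)$ with $f_i(\rho_{f_i}-\rho') \in \Omega^{0,0}_c(W)$. The third term is handled identically, with $f$ and $\Delta_f$ in place of $f_i$ and $\Delta_{f_i}$. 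This shows the right-hand side, and hence $\lceil f \rceil + f\,\bar\partial\rho$, is $\bar\partial$-exact.

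The only delicate point is the bookkeeping of supports: one has to be careful that $\supp\bar\partial\rho'$ really sits inside $V$ and that each difference $\rho_{f_i}-\rho'$ (and $\rho - \rho'$) is supported where the relevant holomorphic function is defined, which is precisely where the hypothesis that the bump functions equal $1$ on an open \emph{neighbourhood} of the relevant complement --- rather than merely on the complement itself --- is used. Everything else is a direct manipulation of compactly supported Dolbeault $(0,1)$-forms.
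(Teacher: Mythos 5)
Your proposal is correct and follows essentially the same route as the paper: the same expansion of $[f]$ in the fixed basis, the same auxiliary bump function $\rho'$ adapted to a common neighbourhood $V$ of $W^{\rm c}$, the same three-term identity, and the same support arguments showing the middle term vanishes and the other two are $\bar\partial$-exact. The extra detail you supply (writing $f_i\,\bar\partial(\rho_{f_i}-\rho') = \bar\partial\big(f_i(\rho_{f_i}-\rho')\big)$ with compact support in $W$) only makes explicit what the paper leaves implicit.
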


\begin{lemma} \label{lem: germ representative}
Let $W \subset \CC$ be a bounded open subset.
For $[f] \in \O_\infty(W^{\rm c})$ we have $\big[ \lceil f \rceil^\downarrow \big] = [f]$.
Moreover, for $\omega \in \Omega^{0,1}_c(W)$ we have $\omega - \lceil \omega^\downarrow \rceil = \bar\partial \big( ( \omega - \lceil \omega^\downarrow \rceil )^\downarrow \big)$.
\begin{proof}
For any $f \in \O_{\CP}(\Delta_f)$ tending to zero at infinity, we have
\begin{equation*}
- ( f \, \bar\partial \rho_f)^\downarrow = \big( f \bar\partial \big(1 - \rho_f \big) \big)^\downarrow = \big( \bar\partial \big( (1 - \rho_f) f \big) \big)^\downarrow =  ( 1 - \rho_f ) f,
\end{equation*}
where in the second step we used the fact that $\bar\partial f = 0$ on $\supp (1- \rho_f) \subset \Delta_f$. In order to see the last step, consider the Cauchy-Pompeiu formula applied to the function $(1-\rho_f) f \in \Omega^0(\overline{D_R})$ with $D_R \supset W$ an open disc of radius $R > 0$, namely
\begin{equation*}
\frac{1}{2 \pi \ii} \int_{\partial D_R} \frac{f \d\mu}{\mu-\lambda} + \frac{1}{2 \pi \ii} \int_W \frac{\d\mu \wedge \bar \partial \big( (1-\rho_f) f \big)}{\mu-\lambda} = \big( 1-\rho_f(\lambda) \big) f(\lambda) \quad \textup{if}\; \lambda \in D_R,
\end{equation*}
where in the first integral on the left hand side we used the fact that $(1-\rho_f) f = f$ on $\partial D_R$ and the domain of the second integral was restricted from $D_R$ to $W$ using the fact that the integrand has support in $W$. The first integral is bounded by $\text{sup}_{\mu \in \partial D_R} |f(\mu)|$ which tends to zero as $R \to \infty$. And in this limit, the right hand side of the Cauchy-Pompieu formula tends to $(1 - \rho_f(\lambda)) f(\lambda)$ for all $\lambda \in \CC$.
Finally, we note that $(1- \rho_f) f = f$ on the open neighbourhood $(\supp \rho_f)^{\rm c}$ of $W^{\rm c}$ from which the first claim now follows.

For the `moreover' part, it follows using Lemma \ref{lem: omega check germ}$(i)$ and the above that $\omega^\downarrow$ and $\lceil \omega^\downarrow \rceil^\downarrow$ define the same germ in $\O_\infty(W^c)$. The desired result now follows by using Lemma \ref{lem: omega check germ}$(ii)$, see in particular its proof.
\end{proof}
\end{lemma}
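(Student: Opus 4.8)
The plan is to prove the two assertions separately, using only the solution operator $(\cdot)^\downarrow$ from Lemma~\ref{lem: Cauchy-Pompeiu}$(b)$, its germ-theoretic behaviour from Lemma~\ref{lem: omega check germ}, and the Cauchy-Pompeiu formula of Lemma~\ref{lem: Cauchy-Pompeiu}$(a)$.

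For the first claim I would start from the definition $\lceil f \rceil = - f\,\bar\partial\rho_f$ in \eqref{up arrow def} and rewrite it as $f\,\bar\partial(1-\rho_f) = \bar\partial\big((1-\rho_f)f\big)$, which is valid because $\bar\partial f = 0$ on $\supp(1-\rho_f)\subset\Delta_f$. It then remains to show that applying $(\cdot)^\downarrow$ to $\bar\partial\big((1-\rho_f)f\big)$ recovers $(1-\rho_f)f$ itself. Since $(1-\rho_f)f$ is not compactly supported, Lemma~\ref{lem: Cauchy-Pompeiu}$(b)$ does not apply directly; instead I would apply the Cauchy-Pompeiu formula of Lemma~\ref{lem: Cauchy-Pompeiu}$(a)$ to $(1-\rho_f)f$ on a large disc $D_R\supset W$, shrink the area integral back to $W$ where the integrand $\bar\partial\big((1-\rho_f)f\big)$ is supported, observe that the contour integral over $\partial D_R$ is bounded by $\sup_{\partial D_R}|f|$ and hence vanishes as $R\to\infty$ since $f(\infty)=0$, and let $R\to\infty$. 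This gives $\lceil f\rceil^\downarrow = (1-\rho_f)f$ on all of $\CC$; since $(1-\rho_f)f$ coincides with $f$ on the open neighbourhood $(\supp\rho_f)^{\rm c}$ of $W^{\rm c}$, the two functions determine the same germ, that is $\big[\lceil f\rceil^\downarrow\big] = [f]$.

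For the ``moreover'' part I would combine the first claim with Lemma~\ref{lem: omega check germ}. By Lemma~\ref{lem: omega check germ}$(i)$, $\omega^\downarrow$ determines a germ $[\omega^\downarrow]\in\O_\infty(W^{\rm c})$, and the first claim applied to this germ gives $\big[\lceil\omega^\downarrow\rceil^\downarrow\big] = [\omega^\downarrow]$. By $\CC$-linearity of $(\cdot)^\downarrow$, the function $\eta \coloneqq (\omega-\lceil\omega^\downarrow\rceil)^\downarrow = \omega^\downarrow - \lceil\omega^\downarrow\rceil^\downarrow$ then has vanishing germ in $\O_\infty(W^{\rm c})$, so it is compactly supported in $W$, and being $(\cdot)^\downarrow$ applied to the compactly supported form $\omega-\lceil\omega^\downarrow\rceil$ it solves the corresponding $\bar\partial$-problem, i.e. $\bar\partial\eta = \omega-\lceil\omega^\downarrow\rceil$. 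This is exactly the claimed identity, and it is precisely the argument already appearing in the proof of Lemma~\ref{lem: omega check germ}$(ii)$.

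The step I expect to be the main obstacle is the limiting argument inside the first claim: one has to justify that the $(\cdot)^\downarrow$ operator commutes with the disc truncation and the passage $R\to\infty$, controlling the boundary term on $\partial D_R$ uniformly enough that the Cauchy-Pompeiu identity persists in the limit. The remaining manipulations are routine bookkeeping with germs and the linearity of $(\cdot)^\downarrow$.
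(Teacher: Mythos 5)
Your proposal is correct and follows essentially the same route as the paper: the same rewriting $\lceil f\rceil = \bar\partial\big((1-\rho_f)f\big)$ up to sign, the same Cauchy--Pompeiu argument on a large disc $D_R$ with the boundary term vanishing as $R\to\infty$ because $f(\infty)=0$, and the same germ-plus-Lemma~\ref{lem: omega check germ} reasoning (including the compact-support observation from the proof of part $(ii)$) for the `moreover' statement.
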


Let $U \subset \Sigma^\circ$ be a connected open subset of the interior $\Sigma^\circ \subset \Sigma$, i.e. not intersecting the boundary $\partial \Sigma$ of $\Sigma$, with a bounded holomorphic coordinate on each connected component of $\pi^{-1}(U) \subset \Dz$. That is, $\pi^{-1}(U) = \sqcup_{i=1}^m W_i$ with $m \in \{ 1, 2\}$, where each $W_i \subset \Dz$ is a connected open subset equipped with a holomorphic coordinate $\xi_i : W_i \to \CC$ such that $\xi_i(W_i) \subset \CC$ is bounded. We will denote this set of holomorphic coordinates collectively as $\bm\xi : \pi^{-1}(U) \to \CC$. Consider the \dg{} vector space concentrated in degree $0$ given by
\begin{equation*}
\L_\O^{\bm\xi} ( \pi^{-1}(U) ) \coloneqq \bigoplus_{i=1}^m \fl \otimes \O_\infty\big( \xi_i(W_i)^{\rm c} \big) \oplus \CC.
\end{equation*}
We define a morphism in the category of pointed vector spaces $\udgVec_\CC$, see \S\ref{sec: udgLie}, as
\begin{subequations} \label{iph def}
\begin{align}
i_U : \L_\O^{\bm\xi} ( \pi^{-1}(U) )[-1] &\longrightarrow \L^\Sigma_\alpha(U)\\
s^{-1} \big(\a_i \otimes [f_i] \big)_{i=1}^m &\longmapsto \sum_{i=1}^m (\a_i)_{\xi_i} \otimes \xi_i^\ast \lceil f_i \rceil \notag\\
s^{-1} a &\longmapsto a \cent \notag
\end{align}
for any $\a_i \in \fl$, $a \in \CC$ and $[f_i] \in \O_\infty\big( \xi_i(W_i)^{\rm c} \big)$.
We also define another morphism in $\udgVec_\CC$, going in the other direction, as
\begin{align}
p_U : \L^\Sigma_\alpha(U) &\longrightarrow \L_\O^{\bm\xi} ( \pi^{-1}(U) )[-1]\\
\sum_{i=1}^m (\a_i)_{\xi_i} \otimes \omega_i &\longmapsto s^{-1} \Big( \a_i \otimes \big[ \big( (\xi_i^{-1} )^\ast \omega_i \big)^\downarrow \big] \Big)_{i=1}^m \notag\\
\sum_{i=1}^m (\b_i)_{\xi_i} \otimes \eta_i &\longmapsto 0 \notag\\
a \cent &\longmapsto s^{-1} a \notag
\end{align}
for any $\a_i, \b_i \in \fl$, $a \in \CC$, $\eta_i \in \Omega^{0,0}_c(W_i)$ and $\omega_i \in \Omega^{0,1}_c(W_i)$.
Finally, we define a degree $-1$ linear map
\begin{align}
h_U : \L^\Sigma_\alpha(U) &\longrightarrow \L^\Sigma_\alpha(U)\\
\sum_{i=1}^m (\a_i)_{\xi_i} \otimes \omega_i &\longmapsto \sum_{i=1}^m (\a_i)_{\xi_i} \otimes \xi_i^\ast \Big( (\xi_i^{-1})^\ast \omega_i - \big\lceil \big( (\xi_i^{-1})^\ast \omega_i \big)^\downarrow \big\rceil \Big)^\downarrow \notag\\
\sum_{i=1}^m (\b_i)_{\xi_i} \otimes \eta_i &\longmapsto 0 \notag\\
a \cent &\longmapsto 0 \notag
\end{align}
\end{subequations}
for any $\a_i, \b_i \in \fl$, $a \in \CC$, $\eta_i \in \Omega^{0,0}_c(W_i)$ and $\omega_i \in \Omega^{0,1}_c(W_i)$.

\begin{theorem} \label{thm: cohomology gSigma}
For any connected open subset $U \subset \Sigma^\circ$ equipped with holomorphic coordinates $\bm\xi : \pi^{-1}(U) \to \CC$ as above, we have a strong deformation retract
\begin{equation*}
\begin{tikzcd}
\L_\O^{\bm\xi}( \pi^{-1}(U) )[-1] \arrow[r, "i_U", shift left=1mm] & \L^\Sigma_\alpha(U) \arrow[l, "p_U", shift left=1mm] \arrow["h_U"', out=-15,in=15,distance=10mm]
\end{tikzcd}
\end{equation*}
in $\udgVec_\CC$. In other words, the collection of linear maps in \eqref{iph def} define a \emph{deformation retract} in the sense that $p_U i_U = \id$ and $i_U p_U - \id = - (\bar\partial h_U + h_U \bar\partial)$, and this deformation retract is \emph{strong} in the sense that $p_U h_U = h_U i_U = h_U h_U = 0$.

In particular, $H^0\big( \L^\Sigma_\alpha(U) \big) = 0$ and $H^1\big( \L^\Sigma_\alpha(U) \big) \cong \L_\O^{\bm\xi}( \pi^{-1}(U) )$.
\begin{proof}
Let $\a_i, \b_i \in \fl$, $a \in \CC$, $\eta_i \in \Omega^{0,0}_c(W_i)$, $\omega_i \in \Omega^{0,1}_c(W_i)$ and $[f_i] \in \O_\infty\big( \xi(W_i)^{\rm c} \big)$. Then
\begin{equation*}
p_U i_U\Big( s^{-1} \big( (\a_i \otimes [f_i])_{i=1}^m, a \big) \Big) = s^{-1} \bigg( \Big( \a_i \otimes \big[ \lceil f_i \rceil^\downarrow \big] \Big)_{i=1}^m, a \bigg),
\end{equation*}
so using Lemma \ref{lem: germ representative} we deduce that $p_U i_U = \id$.
Next, we have
\begin{align*}
&(i_U p_U - \id)\bigg( \sum_{i=1}^m \big( (\a_i)_{\xi_i} \otimes \omega_i + (\b_i)_{\xi_i} \otimes \eta_i \big) + a \cent \bigg)\\
&\qquad\qquad = - \sum_{i=1}^m (\a_i)_{\xi_i} \otimes \Big( \omega_i - \xi_i^\ast \big\lceil \big( (\xi_i^{-1} )^\ast \omega_i \big)^\downarrow \big\rceil \Big) - \sum_{i=1}^m (\b_i)_{\xi_i} \otimes \eta_i\\
&\qquad\qquad = - \bar\partial \bigg( h_U \bigg( \sum_{i=1}^m \big( (\a_i)_{\xi_i} \otimes \omega_i + (\b_i)_{\xi_i} \otimes \eta_i \big) + a \cent \bigg) \bigg) - \sum_{i=1}^m h_U\big( (\b_i)_{\xi_i} \otimes \bar\partial \eta_i \big),
\end{align*}
where the first sum in last step has been rewritten using the `moreover' part of Lemma \ref{lem: germ representative} and the definition of $h_U$. To rewrite the second sum, we note that by Lemma \ref{lem: omega check germ}$(ii)$, see in particular its proof, we have $( (\xi_i^{-1})^\ast \bar\partial \eta_i)^\downarrow = (\xi_i^{-1})^\ast \eta_i \in \Omega^{0,0}_c(\CC)$ which defines the trivial germ in $\O_\infty\big( \xi_i(W_i)^{\rm c} \big)$ and hence $\lceil ( (\xi_i^{-1})^\ast \bar\partial \eta_i )^\downarrow \rceil = 0$. Therefore by definition of $h_U$ we have
\begin{equation*}
\sum_{i=1}^m h_U\big( (\b_i)_{\xi_i} \otimes \bar\partial \eta_i \big) = h_U\bigg( \sum_{i=1}^m (\b_i)_{\xi_i} \otimes \bar\partial \eta_i \bigg) = \sum_{i=1}^m (\b_i)_{\xi_i} \otimes \eta_i.
\end{equation*}
We have shown $i_U p_U - \id = - (\bar\partial h_U + h_U \bar\partial)$ so that we have a deformation retract.

It remains to show that this deformation retract is strong. But it is immediate on degree grounds that $p_U h_U = 0$ and $h_U h_U = 0$. On the other hand, we have
\begin{align*}
h_U i_U\Big( s^{-1} \big( (\a_i \otimes [f_i])_{i=1}^m, a \big) \Big)
&= \sum_{i=1}^m (\a_i)_{\xi_i} \otimes \xi_i^\ast \Big( \lceil f_i \rceil - \big\lceil \lceil f_i \rceil^\downarrow \big\rceil \Big)^\downarrow.
\end{align*}
Now by Lemma \ref{lem: germ representative} we know that $[f_i] = \big[ \lceil f_i \rceil^\downarrow \big]$ and hence the right hand side vanishes, so we have a strong deformation retract.
\end{proof}
\end{theorem}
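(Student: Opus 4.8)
The goal is to verify that the triple $(i_U, p_U, h_U)$ is a strong deformation retract from $\L^\Sigma_\alpha(U)$ onto $\L_\O^{\bm\xi}(\pi^{-1}(U))[-1]$. Since everything decomposes as a direct sum over the connected components $W_i$ of $\pi^{-1}(U)$, plus the central summand $\CC\kent$, and the central summand is handled trivially by all three maps, I would reduce immediately to the case of a single bounded open $W \subset \CC$ with its coordinate, so that I only need to verify the three algebraic identities $p_U i_U = \id$, $i_U p_U - \id = -(\bar\partial h_U + h_U \bar\partial)$, and the three vanishing conditions $p_U h_U = h_U i_U = h_U h_U = 0$.

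\medskip
\noindent\textbf{The retract identities.} The identity $p_U i_U = \id$ is essentially the statement $\big[\lceil f\rceil^\downarrow\big] = [f]$, which is exactly the first claim of Lemma~\ref{lem: germ representative}; on the central summand it is the identity on $\CC$. For $i_U p_U - \id = -(\bar\partial h_U + h_U \bar\partial)$ I would split a general element of $\L^\Sigma_\alpha(U)$ into its degree-$0$ part $\sum_i (\a_i)_{\xi_i}\otimes\eta_i$ with $\eta_i \in \Omega^{0,0}_c(W_i)$, its degree-$1$ part $\sum_i (\a_i)_{\xi_i}\otimes\omega_i$ with $\omega_i \in \Omega^{0,1}_c(W_i)$, and the central part $a\kent$. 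On the degree-$1$ part, $(i_U p_U - \id)$ produces $-\sum_i (\a_i)_{\xi_i}\otimes\big(\omega_i - \xi_i^\ast\lceil((\xi_i^{-1})^\ast\omega_i)^\downarrow\rceil\big)$, which is precisely $-\bar\partial h_U$ applied to that element by the `moreover' part of Lemma~\ref{lem: germ representative} (the term $\big(\omega-\lceil\omega^\downarrow\rceil\big) = \bar\partial\big((\omega-\lceil\omega^\downarrow\rceil)^\downarrow\big)$ after transporting via $\xi_i$). On the degree-$0$ part, $p_U$ kills it and $h_U$ kills it, so the left side is $-\eta_i$-type terms, while $-h_U\bar\partial$ of it must recover exactly $\sum_i(\a_i)_{\xi_i}\otimes\eta_i$: here I use Lemma~\ref{lem: omega check germ}$(ii)$ to see that $\bar\partial\eta_i$ is $\bar\partial$-exact, hence $\lceil(\cdots)^\downarrow\rceil = 0$, so $h_U(\bar\partial$-term$)$ reduces via the Cauchy--Pompeiu formula to $\eta_i$ itself. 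The central part is annihilated on both sides. The three strong-retract vanishings are then immediate: $p_U h_U = 0$ and $h_U h_U = 0$ on degree grounds (the image of $h_U$ lands in degree $0$ where $p_U$ and $h_U$ both vanish), and $h_U i_U = 0$ because the image of $i_U$ is $\sum_i(\a_i)_{\xi_i}\otimes\xi_i^\ast\big(\lceil f_i\rceil - \lceil\lceil f_i\rceil^\downarrow\rceil\big)^\downarrow$, which vanishes since $[f_i] = \big[\lceil f_i\rceil^\downarrow\big]$ by Lemma~\ref{lem: germ representative} forces $\lceil f_i\rceil$ and $\lceil\lceil f_i\rceil^\downarrow\rceil$ to agree up to a $\bar\partial$-exact term with trivial $\downarrow$, by Lemma~\ref{lem: up arrow indep choices}.

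\medskip
\noindent\textbf{Cohomology corollary and the main obstacle.} Once the strong deformation retract is established, the final assertions $H^0\big(\L^\Sigma_\alpha(U)\big) = 0$ and $H^1\big(\L^\Sigma_\alpha(U)\big)\cong\L_\O^{\bm\xi}(\pi^{-1}(U))$ follow formally: a deformation retract induces a quasi-isomorphism, and $\L_\O^{\bm\xi}(\pi^{-1}(U))[-1]$ is concentrated in degree $1$, so its cohomology is that complex in degree $1$ and zero elsewhere. I expect the only genuinely delicate point to be the bookkeeping in the degree-$0$ identity $-h_U\bar\partial(\text{degree-}0) = \id$, i.e. confirming that $\big((\xi_i^{-1})^\ast\bar\partial\eta_i\big)^\downarrow = (\xi_i^{-1})^\ast\eta_i$ exactly (no holomorphic ambiguity), which rests on the compact support of $\eta_i$ together with the Cauchy--Pompeiu formula on a large disc and the vanishing of the boundary integral in the limit — exactly the argument already carried out in the proof of Lemma~\ref{lem: germ representative}. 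Everything else is a direct, if somewhat notation-heavy, substitution using Lemmas~\ref{lem: omega check germ}, \ref{lem: up arrow indep choices} and \ref{lem: germ representative}.
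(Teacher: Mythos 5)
Your overall route is the same as the paper's: verify the five identities directly from Lemmas \ref{lem: omega check germ}, \ref{lem: up arrow indep choices} and \ref{lem: germ representative}, and your treatment of $p_U i_U = \id$, of the homotopy identity in both degrees (including the key point that $\big((\xi_i^{-1})^\ast\bar\partial\eta_i\big)^\downarrow = (\xi_i^{-1})^\ast\eta_i$ exactly, so $h_U\bar\partial$ returns the degree-zero part), and of $p_U h_U = h_U h_U = 0$ agrees with the paper's proof. The one step whose justification does not work as written is $h_U i_U = 0$. You argue that $[f_i] = \big[\lceil f_i\rceil^\downarrow\big]$ forces $\lceil f_i\rceil$ and $\big\lceil \lceil f_i\rceil^\downarrow\big\rceil$ to agree ``up to a $\bar\partial$-exact term with trivial $\downarrow$'', citing Lemma \ref{lem: up arrow indep choices}. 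But a compactly supported $\bar\partial$-exact $(0,1)$-form does not have trivial $\downarrow$: by the Cauchy--Pompeiu formula (as in the proof of Lemma \ref{lem: omega check germ}) one has $(\bar\partial g)^\downarrow = g$ for $g \in \Omega^{0,0}_c(W)$, which vanishes only if $g=0$ --- indeed this is exactly the mechanism you use two lines earlier to show $h_U$ of the $\bar\partial$ of the degree-zero part reproduces that part. So if you only knew the two forms agreed up to $\bar\partial$-exact terms, $h_U i_U$ would come out as the pullback of some compactly supported function, not $0$, and strongness would not follow; at best you would get that its germ is trivial, which is weaker than the required identity.

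The fix, and what the paper actually uses, is that no ``up to exact terms'' comparison is needed here: $\lceil \cdot \rceil$ is defined in \eqref{up arrow def} as a single linear map on $\O_\infty(W^{\rm c})$ (fixed basis, fixed representatives and bump functions, extended linearly), so it depends only on the germ. Since $[f_i] = \big[\lceil f_i\rceil^\downarrow\big]$ by Lemma \ref{lem: germ representative}, one has the exact equality $\lceil f_i\rceil = \big\lceil \lceil f_i\rceil^\downarrow\big\rceil$ in $\Omega^{0,1}_c\big(\xi_i(W_i)\big)$, so the argument of $(\cdot)^\downarrow$ in the expression for $h_U i_U$ is literally zero and $h_U i_U = 0$ on the nose. (A small slip in the same sentence: the displayed expression is $h_U i_U$ applied to the element, not ``the image of $i_U$''.) With this replacement your proof coincides with the paper's; the reduction to a single bounded $W \subset \CC$ and the formal deduction of the cohomology statement are fine.
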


\subsection{Twisted prefactorisation envelope of \texorpdfstring{$\L^\Sigma_\alpha$}{gk}} \label{sec: twisted PFE}

We now introduce the prefactorisation algebra $\U\L^\Sigma_\alpha$, referred to as the twisted prefactorisation envelope of the unital local Lie algebra $\L^\Sigma_\alpha$, by adapting the general construction in \cite{CGBook1}.
To formulate the notion of prefactorisation algebra most succinctly it is convenient to use the framework of \emph{multicategories}, which are also known as \emph{coloured operads}. These consist of classes of objects and arrows, just like an ordinary category, but where the arrows describe $n$-ary operations with an arbitrary number $n \in \ZZ_{\geq 0}$ of inputs but still just a single output. We refer the reader to \cite[\S 2]{Leinster} or \cite[\S A.3.3]{CGBook1} for an introduction to multicategories.

\subsubsection{Prefactorisation algebras}

Let $\Top(\Sigma)$ denote the category whose objects consist of $\Sigma$ itself and all open subsets which are homeomorphic to either $\CC$ or $\CC \setminus \{ 0 \}$, and whose morphisms are given by inclusions, i.e. for all $U, V \in \Top(\Sigma)$ the set of morphisms $\Hom_{\Top(\Sigma)}(U, V)$ contains a single morphism $U \to V$ if $U \subset V$ and is empty otherwise. Note that we do not restricting attention to open subsets homeomorphic to $\CC$, as is done for instance in the locally constant setting of \cite{AyalaFrancis, LurieHA} which describes topological quantum field theories. The reason for also allowing open subsets homeomorphic to $\CC \setminus \{ 0 \}$ will become clear later in \S\ref{sec: VA products} below, see in particular Proposition \ref{prop: vertex operator general} where the state-field correspondence will be constructed as a factorisation product from a disc to an annulus. Likewise, the reason for including $\Sigma$ itself as an open subset in $\Top(\Sigma)$ will become clear in \S\ref{sec: invariant bilinear} when we introduce the notion of invariant bilinear form. Note that our choice of category $\Top(\Sigma)$ also differs from the one used in \cite{CGBook1, CGBook2} whose objects consist of all open subsets of $\Sigma$. However, unlike \cite{CGBook1, CGBook2}, where the focus is on general perturbative quantum field theories, here we restrict to conformal field theories, for which our choice of category $\Top(\Sigma)$ is particularly well-adapted.

We define an associated multicategory $\Top(\Sigma)^\sqcup$ as follows. It has the same set of objects as $\Top(\Sigma)$ and for any finite collection of open subsets $U_i$ with $i \in \{1, \ldots, n\}$ and $V$ in $\Top(\Sigma)^\sqcup$, with $n \in \ZZ_{\geq 0}$, its set of $n$-operations is
\begin{equation*}
P^{\Top(\Sigma)^\sqcup}_n\big( (U_i)_{i=1}^n, V \big) \coloneqq
\left\{ \begin{array}{ll}
\Hom_{\Top(\Sigma)}(\sqcup_{i=1}^n U_i, V) &\textup{if} \; U_i \cap U_j = \emptyset \; \textup{for all} \; i \neq j,\\
\emptyset & \textup{otherwise}.
\end{array}
\right.
\end{equation*}
In other words, it contains the single element $\sqcup_{i=1}^n U_i \to V$ if all the $U_i$ are pairwise disjoint and contained within $V$, and is empty otherwise. By convention, for any open $V \subset \Sigma$ there is a unique $0$-operation $\bm\emptyset \to V$ from the empty collection of disjoint open subsets to $V$. Since the disjoint union is symmetric we have $\sqcup_{i=1}^n U_i = \sqcup_{i=1}^n U_{\sigma(i)}$ for any $\sigma \in S_n$ and the corresponding action of the symmetric group on the $n$-operations
\begin{equation*}
\sigma^\ast : P^{\Top(\Sigma)^\sqcup}_n\big( (U_i)_{i=1}^n, V \big) \overset{\cong}\longrightarrow P^{\Top(\Sigma)^\sqcup}_n\big( (U_{\sigma(i)})_{i=1}^n, V \big)
\end{equation*}
is simply the identity.

Given any symmetric monoidal category $\mathsf C$ with monoidal product denoted $\otimes$, we denote the associated multicategory by $\mathsf C^\otimes$. A $\mathsf C^\otimes$-valued \emph{prefactorisation algebra} $\mcF$ on $\Sigma$, is an object in the category $\Alg_{\Top(\Sigma)^\sqcup}(\mathsf C^\otimes)$ of $\Top(\Sigma)^\sqcup$-algebras in $\mathsf C^\otimes$, i.e. it is a multifunctor
\begin{equation*}
\mcF : \Top(\Sigma)^\sqcup \longrightarrow \mathsf C^\otimes.
\end{equation*}
More explicitly, this is an assignment of:
\begin{itemize}
  \item an object $\mcF(U) \in \mathsf C$ to each open subset $U \subset \Sigma$ in $\Top(\Sigma)$, and
  \item a morphism $m^\mcF_{(U_i), V} : \bigotimes_{i=1}^n \mcF(U_i) \to \mcF(V)$ in $\mathsf C$, called a \emph{factorisation product}, to each inclusion of $n \in \ZZ_{\geq 0}$ disjoint open sets $\sqcup_{i=1}^n U_i \subset V$ in $\Top(\Sigma)$, such that the diagram
\begin{equation} \label{PFA commutativity}
\begin{tikzcd}
{\displaystyle \bigotimes_{i=1}^n \bigotimes_{j=1}^{m_i} \mcF(U_{ij})} \arrow[r, "\bigotimes_{i=1}^n m^\mcF_{(U_{ij}), V_i}"] \arrow[rd, "m^\mcF_{(U_{ij}), W}"', shorten <= -1mm] &[10ex] {\displaystyle \bigotimes_{i=1}^n \mcF(V_i)} \arrow[d, "m^\mcF_{(V_i), W}"]\\
& \mcF(W)
\end{tikzcd}
\end{equation}
in $\mathsf C$ commutes, for any inclusion of disjoint open subsets $\sqcup_{j=1}^{m_i} U_{ij} \subset V_i$ for $i =1,\ldots, n$ and $\sqcup_{i=1}^n V_i \subset W$ in $\Top(\Sigma)^\sqcup$. In particular, the unique $0$-operation $\bm\emptyset \to V$ is assigned a morphism $u \to \mcF(V)$ from the identity object $u$ of $\mathsf C$, i.e. $\mcF(V)$ is \emph{pointed}.
\end{itemize}
When $\mcF$ is clear from the context we denote the factorisation products simply as $m_{(U_i), V}$. We also use the abbreviation $m_{U, V} \coloneqq m_{(U), V}$ for any inclusion of open subsets $U \subset V$ in $\Top(\Sigma)$. Note that a prefactorisation algebra $\mcF$ is, in particular, a precosheaf on $\Sigma$ restricted to $\Top(\Sigma)$ with extension morphisms $m_{U, V}$ for every inclusion of open subsets $U \subset V$ in $\Top(\Sigma)$.

A \emph{morphism of prefactorisation algebras} $\phi : \mcF \to \G$ is a natural transformation
\begin{equation*}
\begin{tikzcd}
\Top(\Sigma)^\sqcup \arrow[r, bend left=40, "\mcF", ""{name=U, below}]
\arrow[r, bend right=35, "\mathcal G"{below}, ""{name=D}]
& \mathsf C^\otimes
\arrow[from=U, to=D, "\phi"]
\end{tikzcd}
\end{equation*}
of multifunctors $\mcF, \G : \Top(\Sigma)^\sqcup \to \mathsf C^\otimes$. Let $\PFac(\Sigma, \mathsf C^\otimes) \coloneqq \Alg_{\Top(\Sigma)^\sqcup}(\mathsf C^\otimes)$ denote the category of prefactorisation algebras on $\Sigma$ valued in the multicategory $\mathsf C^\otimes$. Given any morphisms of prefactorisation algebras $\phi : \mcF \to \G$ and $\psi : \G \to \mathcal H$ with $\mcF, \G, \mathcal H \in \PFac(\Sigma, \mathsf C^\otimes)$ we denote by $\psi \circ \phi : \mcF \to \mathcal H$ their composition, i.e. the vertical composition of these natural transformations of multifunctors. The horizontal composition of natural transformations will be denoted by concatenation.

\subsubsection{Twisted prefactorisation envelope of \texorpdfstring{$\L^\Sigma_\alpha$}{gk}} \label{sec: twisted PFA envelope}

Any local Lie algebra on $\Sigma$ defines, in a canonical way, a prefactorisation algebra on $\Sigma$ valued in $\dgLie_\CC^{\oplus}$. Similarly, any unital local Lie algebra on $\Sigma$ defines a prefactorisation algebra on $\Sigma$ valued in $\udgLie_\CC^{\boplus}$, see Proposition \ref{prop: constructing prefac 1} for details. In \cite[Definition 3.6.4]{CGBook1}, the notion of twisted prefactorisation envelope is defined by applying the $C$hevalley-$E$ilenberg functor $\CE_\chain$ for Lie algebra homology to such a unital \dg{} Lie algebra (or more generally to a \dg{} Lie algebra with a $1$-dimensional central extension in any degree $-k \in \ZZ$). This produces a prefactorisation algebra over the base ring $\CC[s \kent]$ and ultimately in the holomorphic setting of \cite[\S 5.5]{CGBook1} to vertex algebra structures over the base ring $\CC[s \kent]$. In order to obtain vertex algebras over $\CC$, we introduce a variant $\bCE_\chain$ of the homological $C$hevalley-$E$ilenberg functor in \S\ref{sec: CE for udgLie}, in which we additionally quotient by the ideal generated by $s \kent - 1$, see Proposition \ref{prop: bCE functor}. Applying this functor to the unital local Lie algebra $\L^\Sigma_\alpha$ viewed as a prefactorisation algebra on $\Sigma$ valued in $\udgLie_\CC^{\boplus}$, and taking $0^{\rm th}$ cohomology to obtain a prefactorisation algebra valued in vector spaces, leads to
\begin{equation*}
\U \L^\Sigma_\alpha \coloneqq H^0 \, \bCE_\chain \, \L^\Sigma_\alpha \in \PFac(\Sigma, \Vec_\CC^{\otimes}).
\end{equation*}
By a slight abuse of terminology, in this paper we will still refer to this construction as taking the \emph{twisted prefactorisation envelope} of the unital local Lie algebra $\L^\Sigma_\alpha$. We refer the reader to \S\ref{sec: twisted PF env} for the full details of this construction, leading to the definition \eqref{PFA envelope}.

Since $\L^\Sigma_\alpha(U)$ is concentrated in degrees $0$ and $1$ for any open subset $U$ in $\Top(\Sigma)$, it follows that for every $i > 0$ we have $\CE_i\big(\L^\Sigma_\alpha(U)\big) = 0$ and hence also $\bCE_i\big(\L^\Sigma_\alpha(U)\big) = 0$. In particular, any $\mathcal A \in \bCE_0(\L^\Sigma_\alpha(U))$ is closed and we will denote by $[\mathcal A]_U \in \U\L^\Sigma_\alpha(U)$ its $0^{\rm th}$ cohomology class.

The factorisation products $m_{(U_i), V}$ of $\U \L^\Sigma_\alpha$ can be described explicitly as follows. For any inclusion $U \subset V$ of open subsets in $\Top(\Sigma)$, the factorisation product $m_{U, V}$ is induced by the extension morphism $\L^\Sigma_\alpha(U) \to \L^\Sigma_\alpha(V)$ of the precosheaf $\L^\Sigma_\alpha$. For any inclusion $U \sqcup V \subset W$ in $\Top(\Sigma)^\sqcup$, the factorisation product $m_{(U,V), W}$ is given by the composition
\begin{align} \label{Fac prod explicit}
&\U\L^\Sigma_\alpha(U) \otimes \U\L^\Sigma_\alpha(V) \overset{\cong}\longrightarrow H^0\Big( \bCE_\chain\big( \L^\Sigma_\alpha(U) \big) \otimes \bCE_\chain\big( \L^\Sigma_\alpha(V) \big) \Big) \notag\\
&\qquad\qquad \overset{\cong}\longrightarrow H^0\Big( \bCE_\chain\big( \L^\Sigma_\alpha(U) \boplus \L^\Sigma_\alpha(V) \big)\Big) \longrightarrow \U\L^\Sigma_\alpha(W),
\end{align}
where we have first applied the canonical isomorphism given by $[\mathcal A]_U \otimes [\mathcal B]_V \mapsto [\mathcal A \otimes \mathcal B]$ for any $\mathcal A \in \bCE_0(\L^\Sigma_\alpha(U))$ and $\mathcal B \in \bCE_0(\L^\Sigma_\alpha(V))$ where we denote by $[\mathcal A \otimes \mathcal B]$ the $0^{\rm th}$ cohomology class of $\mathcal A \otimes \mathcal B \in \bCE_0(\L^\Sigma_\alpha(U)) \otimes \bCE_0(\L^\Sigma_\alpha(V))$.
The second isomorphism in \eqref{Fac prod explicit} is induced by the $\Sym$-product $\mathcal A \otimes \mathcal B \to \mathcal A \, \mathcal B$. The final morphism in \eqref{Fac prod explicit} is induced by the factorisation product $\L^\Sigma_\alpha(U) \boplus \L^\Sigma_\alpha(V) \to \L^\Sigma_\alpha(W)$ of the unital local Lie algebra $\L^\Sigma_\alpha$ regarded as a prefactorisation algebra valued in $\udgLie_\CC^{\boplus}$, which is given explicitly by extending by zero elements of $\L^\Sigma_\alpha(U)$ and $\L^\Sigma_\alpha(V)$ to $\L^\Sigma_\alpha(W)$ then taking their sum in $\L^\Sigma_\alpha(W)$ and identifying their central extensions.
In summary, the above factorisation product is given by
\begin{center}
\raisebox{2mm}{
\begin{minipage}[b]{0.6\linewidth}
\begin{align*}
m_{(U,V), W} : \U\L^\Sigma_\alpha(U) \otimes \U\L^\Sigma_\alpha(V) &\longrightarrow \U\L^\Sigma_\alpha(W), \qquad\\
[\mathcal A]_U \otimes [\mathcal B]_V &\longmapsto [\mathcal A \, \mathcal B]_W.
\end{align*}
\end{minipage}
\quad }
\begin{minipage}[b]{0.2\linewidth}
\begin{tikzpicture}
\def\R{1}
  \fill[blue!90,
        opacity      = 0.3]
        (.05*\R,1.2*\R) circle (.8*\R);
  \draw[thick, blue] (.05*\R,1.2*\R) circle (.8*\R) node[above right=5.8mm and 3.5mm, blue]{\tiny $W$};
  \fill[red!90,
       opacity      = 0.4] (-.2*\R,1.4*\R) ellipse (.23cm and .26cm);
  \draw[thick, red] (-.2*\R,1.4*\R) ellipse (.23cm and .26cm) node[above right=1mm and 1mm, red]{\tiny $U$};
  \fill[green!90!black,
       opacity      = 0.4] (.3*\R,.9*\R) ellipse (.29cm and .32cm);
  \draw[thick, green!90!black] (.3*\R,.9*\R) ellipse (.29cm and .32cm) node[above right=1.2mm and 1mm, green!90!black]{\tiny $V$};
\end{tikzpicture}
\end{minipage}
\end{center}
The factorisation product $m_{(U_i), V} : \bigotimes_{i=1}^n \U\L^\Sigma_\alpha(U_i) \to \U\L^\Sigma_\alpha(V)$ for any inclusion $\sqcup_{i=1}^n U_i \subset V$ in $\Top(\Sigma)^\sqcup$ with $n \in \ZZ_{\geq 3}$ is obtained recursively from the above factorisation product.

\subsection{Reality conditions} \label{sec: g Sigma reality}

To describe reality conditions on the prefactorisation algebra $\U\L^\Sigma_\alpha$, which will be needed later in \S\ref{sec: unitarity}, we will show in Proposition \ref{prop: equiv PFac} below that $\U\L^\Sigma_\alpha$ is equivariant under a certain action of the group $\ZZ_2 = \langle \t \rangle$ in which $\t$ acts by an anti-linear isomorphism of prefactorisation algebras. We begin by constructing a $\ZZ_2$-action on the unital local Lie algebra $\L^\Sigma_\alpha$ after making some further assumptions on the holomorphic vector bundle $L$ from \S\ref{sec: general setup}. We will show in \S\ref{sec: reality examples} that all these assumptions hold in each of the main examples from \S\ref{sec: main examples}.

In this section we take $\tau : \Dz \to \Dz$ to be any orientation reversing involution of $\Dz$ such that there is an involution  $\tau : \Sigma \to \Sigma$ of $\Sigma$ with the property that $\pi \tau = \tau \pi$. The main examples are the Euclidean involution $\tau_E : \Dz \to \Dz$, introduced in \S\ref{sec: orientation double}, together with the identity involution $\id : \Sigma \to \Sigma$ or the Lorentzian involution $\tau_L : \Dz \to \Dz$ also introduced in \S\ref{sec: orientation double}, in the case $\Sigma = S^2$ so that $\Dz = \Sigma_+ \sqcup \Sigma_-$, together with the orientation reversing involution $\tau : \Sigma \to \Sigma$ defined in the same way as $\tau_L$ on each copy $\Sigma_\pm$.

\subsubsection{\texorpdfstring{$\ZZ_2$}{Z2}-equivariance of \texorpdfstring{$\U\L^\Sigma_\alpha$}{UL}} \label{sec: equivariance UL}

Recall from \S\ref{sec: general setup} that we are assuming the anti-holomorphic vector bundle $L$ on $\Dz$ to be such that $\Omega^{0,\chain}_c(U,L)$ is a \dg{} Lie algebra for every open $U \subset \Dz$ with Lie bracket denoted by \eqref{Omega L bracket}. We will make two further assumptions on $L$ below.

\medskip
\begin{assumption}
The orientation reversing involution $\tau : \Dz \to \Dz$ lifts to an anti-linear involutive holomorphic vector bundle automorphism
\begin{equation} \label{anti-linear auto L}
\begin{tikzcd}
L \arrow[r, "\widetilde\tau"] \arrow[d] & L \arrow[d]\\
\Dz \arrow[r, "\tau"'] & \Dz
\end{tikzcd}
\end{equation}
over $\tau : \Dz \to \Dz$ such that the induced isomorphism
\begin{equation} \label{tau involution}
\hat\tau : \Omega^{0,\chain}_c(U,L) \overset{\cong}\longrightarrow \Omega^{0,\chain}_c\big( \tau(U), L \big), \qquad
\sigma \otimes \omega \longmapsto \widetilde\tau \sigma \otimes \overline{\tau^\ast \omega}
\end{equation}
for any open subset $U \subset \Dz$, with $\sigma \in \Gamma_c(U,L)$ a smooth compactly supported section of $L$ and $\omega \in \Omega^{0,\chain}_c(U)$, is an anti-linear morphism of \dg{} Lie algebras.
\end{assumption}

\medskip
Note that since $\tau$ reverses the orientation we have $\tau^\ast I = - I \tau^\ast$ where $I : \Omega^1_c \to \Omega^1_c$ denotes the complex structure of $\Dz$, and hence we obtain an anti-linear isomorphism of cosheaves $\Omega^{0, \chain}_c \SimTo \tau^\ast \Omega^{0, \chain}_c$, $\omega \mapsto \overline{\tau^\ast \omega}$. The fact that \eqref{tau involution} commutes with the differential $\bar\partial$ follows because $\widetilde{\tau}$ is a holomorphic vector bundle automorphism so that $\bar\partial_L \widetilde{\tau} = \widetilde{\tau} \bar\partial_L$ and for every $\omega \in \Omega^{0, \chain}_c(U)$ with $U \subset \Dz$ we have
\begin{equation} \label{tau bardel commute}
\overline{\tau^\ast \bar\partial \omega} = \overline{\tfrac 12 \tau^\ast \d\omega + \tfrac 12 \ii \tau^\ast I (\d\omega)}
= \tfrac 12 \d\overline{\tau^\ast \omega} + \tfrac 12 \ii I \big(\d \overline{\tau^\ast \omega} \big)
= \bar\partial (\overline{\tau^\ast \omega}),
\end{equation}
where in the third equality we have used the fact that $\tau^\ast I = - I \tau^\ast$.

Recall from \S\ref{sec: general setup} that we are also assuming $L$ to be locally trivial over coordinate charts on $\Dz$, i.e. for every open subset $U \subset \Dz$ equipped with a holomorphic coordinate $\xi : U \to \CC$ we have $L|_U \cong U \times \fl_\xi$. Similarly, on the image open $\tau(U) \subset \Dz$ we have the holomorphic coordinate $\hat\tau \xi \coloneqq \overline{\xi \circ \tau} : \tau(U) \to \CC$ with the induced local trivialisation $L|_{\tau(U)} \cong \tau(U) \times \fl_{\hat\tau\xi}$. Since the anti-linear vector bundle automorphism $\widetilde\tau : L \to L$ acts fibrewise, it induces an anti-linear map $\tau : \fl_\xi \to \fl_{\hat\tau\xi}$ and under the isomorphism \eqref{Omega L iso} we can represent the isomorphism \eqref{tau involution} locally by
\begin{equation} \label{reality Omega fl}
\hat\tau : \fl_\xi \otimes \Omega^{0, \chain}_c(U) \overset{\cong}\longrightarrow \fl_{\hat\tau\xi} \otimes \Omega^{0, \chain}_c\big( \tau(U) \big), \qquad
\a \otimes \omega \longmapsto \tau \a \otimes \overline{\tau^\ast \omega}.
\end{equation}
Using the isomorphisms $(\cdot)_\xi : \fl \SimTo \fl_\xi$ and $(\cdot)_{\hat\tau\xi} : \fl \SimTo \fl_{\hat\tau\xi}$ with the canonical copy $\fl$ introduced in \S\ref{sec: general setup}, we obtain an induced anti-linear involution $\tau : \fl \to \fl$.

Recall finally from \S\ref{sec: general setup} that we are also assuming $L$ to be such that the \dg{} Lie algebra $\Omega^{0,\chain}_c(U,L)$ is equipped with a $2$-cocycle \eqref{cocycle alpha}. 

\medskip
\begin{assumption}
The $2$-cocycle \eqref{cocycle alpha} satisfies the reality condition
\begin{equation} \label{reality cocycle}
\overline{\alpha(a,b)} = \alpha(\hat\tau a, \hat\tau b)
\end{equation}
for all $a, b \in \Omega^{0,\chain}_c(U,L)$.
\end{assumption}

\medskip
It then follows that the anti-linear morphism of \dg{} Lie algebras \eqref{tau involution} extends to an anti-linear isomorphism of unital local Lie algebras
\begin{equation*}
\hat\tau : \L^\Dz_\alpha \overset{\cong}\longrightarrow \tau^\ast \L^\Dz_\alpha,
\end{equation*}
defined by sending $\kent \mapsto \kent$, which satisfies $(\tau^\ast \hat\tau) \circ \hat\tau = \id$. Note here that the pullback $\tau^\ast \L^\Sigma_\alpha$ of the precosheaf of unital $\dg{}$ Lie algebras $\L^\Sigma_\alpha$ also satisfies the local Lie algebra condition \eqref{LocLie condition}, so that $\tau^\ast \L^\Sigma_\alpha \in \uLocLie_\CC(\Sigma)$. Since $\pi \tau = \tau \pi$ we have an induced anti-linear isomorphism
\begin{equation} \label{equiv LocLie}
\hat\tau : \L^\Sigma_\alpha \overset{\cong}\longrightarrow \tau^\ast \L^\Sigma_\alpha
\end{equation}
also satisfying $(\tau^\ast \hat\tau) \circ \hat\tau = \id$. We will say that $\L^\Sigma_\alpha \in \uLocLie_\CC(\Sigma)$ is $\ZZ_2$-equivariant.

Recall from \S\ref{sec: twisted PFE} that we denote the composition of morphisms in $\PFac(\Sigma, \mathsf C^\otimes)$ by $\circ$.

\begin{proposition} \label{prop: equiv PFac}
We have an anti-linear isomorphism of prefactorisation algebras
\begin{equation*}
\hat \tau : \U\L^\Sigma_\alpha \overset{\cong}\longrightarrow \tau^\ast \U\L^\Sigma_\alpha
\end{equation*}
satisfying $(\tau^\ast \hat\tau) \circ \hat\tau = \id_{\U\L^\Sigma_\alpha}$, i.e. the prefactorisation algebra $\U\L^\Sigma_\alpha$ is $\ZZ_2$-equivariant.
\begin{proof}
Firstly, the anti-linear isomorphism \eqref{equiv LocLie} of unital local Lie algebras on $\Sigma$ induces, by Proposition \ref{prop: constructing prefac 1}, an anti-linear isomorphism of $\PFac(\Sigma, \udgLie_\CC^{\boplus})$, i.e. a natural isomorphism of multifunctors $\Top(\Sigma)^\sqcup \to \udgLie_\CC^{\boplus}$. Abusing notation slightly, we denote it in the same way, as $\hat \tau : \L^\Sigma_\alpha \SimTo \tau^\ast \L^\Sigma_\alpha$, but where now we view $\L^\Sigma_\alpha$ as a prefactorisation algebra on $\Sigma$ valued in $\udgLie_\CC^{\boplus}$ and $\tau^\ast \L^\Sigma_\alpha = \L^\Sigma_\alpha \, \tau$ as its pre-composition with (the identity natural transformation of) the induced multifunctor $\tau : \Top(\Sigma)^\sqcup \to \Top(\Sigma)^\sqcup$. The property $(\tau^\ast \hat\tau) \circ \hat\tau = \id$ of the morphism \eqref{equiv LocLie} of unital local Lie algebras on $\Sigma$ then turns into the property $(\tau^\ast \hat\tau) \circ \hat\tau = \id_{\L^\Sigma_\alpha}$ for the natural isomorphism $\hat \tau : \L^\Sigma_\alpha \SimTo \tau^\ast \L^\Sigma_\alpha$ of multifunctors $\Top(\Sigma)^\sqcup \to \udgLie^{\bar\oplus}_\CC$.

The horizontal post-composition of this natural isomorphism with (the identity natural transformation of)
the composite functor $H^0 \, \bCE_\chain$ yields a natural isomorphism
\begin{equation} \label{CE circ ralpha}
H^0 \, \bCE_\chain \, \hat \tau : \U\L^\Sigma_\alpha \overset{\cong}\longrightarrow \U\L^\Sigma_\alpha \circ \tau
\end{equation}
of multifunctors $\Top(\Sigma)^\sqcup \to \Vec_\CC^{\otimes}$. Then by the interchange law for the vertical and horizontal composition of natural transformations we have
\begin{equation} \label{interchange law CE r}
\big( H^0 \, \bCE_\chain \, (\tau^\ast \hat \tau) \big) \circ (H^0 \, \bCE_\chain \, \hat \tau)
= H^0 \, \bCE_\chain \, \big( (\tau^\ast \hat \tau) \circ \hat \tau \big) = H^0 \, \bCE_\chain \, \id_{\L^\Sigma_\alpha} = \id_{\U\L^\Sigma_\alpha}.
\end{equation}
If by a slight abuse of notation we denote the morphism in \eqref{CE circ ralpha} also by $\hat \tau$, so that the first morphism on the left hand side of \eqref{interchange law CE r} is the pre-composition of $\hat\tau$ with the identity natural transformation of the multifunctor $\tau$, i.e. $\tau^\ast \hat\tau = \hat\tau \, \tau$, then the relation \eqref{interchange law CE r} is then equivalent to the desired identity $(\tau^\ast \hat\tau) \circ \hat\tau = \id_{\U\L^\Sigma_\alpha}$.
\end{proof}
\end{proposition}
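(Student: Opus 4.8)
The plan is to deduce the $\ZZ_2$-equivariance of $\U\L^\Sigma_\alpha$ by transporting the equivariance \eqref{equiv LocLie} of the unital local Lie algebra $\L^\Sigma_\alpha$ through the functor $H^0\,\bCE_\chain$ that defines the twisted prefactorisation envelope. Beyond purely formal manipulations with natural transformations of multifunctors, the only inputs needed are the two structural facts from \S\ref{sec: twisted PF env}: that a unital local Lie algebra on $\Sigma$ yields, functorially in the unital local Lie algebra, a prefactorisation algebra valued in $\udgLie_\CC^{\boplus}$ (Proposition \ref{prop: constructing prefac 1}), and that $H^0\,\bCE_\chain$ is a functor from prefactorisation algebras valued in $\udgLie_\CC^{\boplus}$ to those valued in $\Vec_\CC^{\otimes}$ (Proposition \ref{prop: bCE functor}). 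Since $\bCE_\chain$ is built from symmetric powers together with the bracket and unit, it sends anti-linear morphisms to anti-linear morphisms, and it is compatible with the prescription $\kent\mapsto\kent$ and the quotient by the ideal generated by $s\kent - 1$ precisely because $\hat\tau$ fixes $\kent$.

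First I would record that the homeomorphism $\tau:\Sigma\to\Sigma$ induces a multifunctor $\tau:\Top(\Sigma)^\sqcup\to\Top(\Sigma)^\sqcup$, since it preserves the homeomorphism type of the objects of $\Top(\Sigma)$ --- orientation playing no role there --- together with inclusions and pairwise disjointness, and that pulling a prefactorisation algebra back along $\tau$ is precisely pre-composition with this multifunctor, i.e.\ $\tau^\ast(-)=(-)\circ\tau$. Regarding $\L^\Sigma_\alpha$ as a prefactorisation algebra $\Top(\Sigma)^\sqcup\to\udgLie_\CC^{\boplus}$ via Proposition \ref{prop: constructing prefac 1}, the anti-linear isomorphism of unital local Lie algebras \eqref{equiv LocLie} becomes a natural isomorphism of multifunctors $\hat\tau:\L^\Sigma_\alpha\SimTo\tau^\ast\L^\Sigma_\alpha$, and the relation $(\tau^\ast\hat\tau)\circ\hat\tau=\id$ carries over verbatim, where $\tau^\ast\hat\tau=\hat\tau\,\tau$ denotes the whiskering of $\hat\tau$ by the multifunctor $\tau$.

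Next I would whisker this natural isomorphism horizontally with the identity natural transformation of $H^0\,\bCE_\chain$. This yields a natural isomorphism $H^0\,\bCE_\chain\,\hat\tau:\U\L^\Sigma_\alpha\SimTo(\U\L^\Sigma_\alpha)\circ\tau=\tau^\ast\U\L^\Sigma_\alpha$ of multifunctors $\Top(\Sigma)^\sqcup\to\Vec_\CC^{\otimes}$, i.e.\ an isomorphism of prefactorisation algebras, which is anti-linear because $\bCE_\chain\,\hat\tau$ is and passing to cohomology preserves this. To obtain the involutivity I would use the interchange law for vertical and horizontal composition of natural transformations: applying $H^0\,\bCE_\chain$ to $(\tau^\ast\hat\tau)\circ\hat\tau=\id_{\L^\Sigma_\alpha}$ and using that whiskering is compatible with functorial post-composition gives $(\tau^\ast(H^0\,\bCE_\chain\,\hat\tau))\circ(H^0\,\bCE_\chain\,\hat\tau)=H^0\,\bCE_\chain\,\id_{\L^\Sigma_\alpha}=\id_{\U\L^\Sigma_\alpha}$, which, after renaming $H^0\,\bCE_\chain\,\hat\tau$ to $\hat\tau$, is exactly the asserted relation $(\tau^\ast\hat\tau)\circ\hat\tau=\id_{\U\L^\Sigma_\alpha}$.

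I do not anticipate any genuine difficulty; the point requiring the most care is the $2$-categorical bookkeeping --- that $H^0\,\bCE_\chain$ really is a functor of multicategories, so that whiskering a natural transformation of multifunctors by it again produces a natural transformation of multifunctors, and that this whiskering commutes with pre-composition by $\tau$, i.e.\ $H^0\,\bCE_\chain\,(\hat\tau\,\tau)=(H^0\,\bCE_\chain\,\hat\tau)\,\tau$. All of this is formal once the construction of the twisted prefactorisation envelope from \S\ref{sec: twisted PF env} is in hand; the one substantive check is that the quotient defining $\bCE_\chain$ is stable under $\hat\tau$, which is immediate from $\hat\tau(\kent)=\kent$.
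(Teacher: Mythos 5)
Your proposal is correct and follows essentially the same route as the paper's proof: it transports the anti-linear isomorphism \eqref{equiv LocLie} through Proposition \ref{prop: constructing prefac 1} to a natural isomorphism of multifunctors, whiskers with $H^0\,\bCE_\chain$, and uses the interchange law for horizontal and vertical composition to obtain $(\tau^\ast\hat\tau)\circ\hat\tau=\id_{\U\L^\Sigma_\alpha}$. The additional observation that the quotient defining $\bCE_\chain$ is preserved because $\hat\tau(\kent)=\kent$ is a sensible extra check, consistent with the paper's construction.
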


\subsubsection{Examples} \label{sec: reality examples}

We now show how to construct, for each of the three main examples of holomorphic vector bundles $L$ over $\Dz$ from \S\ref{sec: main examples}, the vector bundle automorphism $\widetilde\tau : L \to L$ in \eqref{anti-linear auto L} satisfying the further conditions assumed in the general discussion of \S\ref{sec: equivariance UL}.

\paragraph{Kac-Moody:} We fix an anti-linear involution $\tau : \g \SimTo \g$. This lifts the orientation reversing involution $\tau : \Dz \to \Dz$ to an anti-linear involutive automorphism of the trivial bundle \eqref{vector bundle KM}. The induced isomorphism \eqref{reality Omega fl} is clearly a morphism of \dg{} Lie algebra since
\begin{align*}
\hat \tau\big( [\X \otimes \omega, \Y \otimes \eta] \big) &= \tau [\X, \Y] \otimes \overline{\tau^\ast (\omega \wedge \eta)} = [\tau \X, \tau \Y] \otimes \overline{\tau^\ast \omega} \wedge \overline{\tau^\ast \eta}\\
&= \big[ \tau \X \otimes \overline{\tau^\ast \omega}, \tau \Y \otimes \overline{\tau^\ast \eta} \big] = \big[ \hat \tau (\X \otimes \omega), \hat \tau (\Y \otimes \eta) \big].
\end{align*}
In the second equality we used the fact that $\tau : \g \SimTo \g$ is an (anti-linear) automorphism of $\g$.

In order to show that the $2$-cocycle in \eqref{cocycle KM} satisfies \eqref{reality cocycle}, we assume that the bilinear form $\kappa : \g \otimes \g \to \CC$ is such that
\begin{equation} \label{tau on kappa}
\kappa(\tau \X, \tau \Y) = \overline{\kappa(\X, \Y)}
\end{equation}
for every $\X, \Y \in \g$.
Note that, $\overline{\tau^\ast \partial \omega} = \partial (\overline{\tau^\ast \omega})$ for every $\omega \in \Omega^{0, \chain}_c(U)$ by the same computation as in \eqref{tau bardel commute} and so for any $\omega, \eta \in \Omega^{0, \chain}_c(U)$ we then have
\begin{align} \label{conjugate central extension}
- \frac{1}{2 \pi \ii} \overline{\int_{\Dz} \partial \omega \wedge \eta}
= - \frac{1}{2 \pi \ii} \int_{\Dz} \tau^\ast \big( \overline{\tau^\ast \partial \omega} \wedge \overline{\tau^\ast \eta} \big)
= \frac{1}{2 \pi \ii} \int_{\Dz} \partial \big( \overline{\tau^\ast \omega} \big) \wedge \overline{\tau^\ast \eta}
\end{align}
where in the second equality we have also used the fact that $\tau : \Dz \to \Dz$ is orientation reversing. Note here that we have taken the integration over all of $\Dz$ rather than just $U$ by implicitly using the extension morphism $\Omega^{0, \chain}_c(U) \to \Omega^{0, \chain}_c(\Dz)$. It now follows from combining \eqref{tau on kappa} and \eqref{conjugate central extension} that the $2$-cocycle in \eqref{cocycle KM} satisfies \eqref{reality cocycle}.

We observe for later that the bilinear form $\langle \cdot, \cdot \rangle : \g \otimes \g \to \CC$, $\langle \X, \Y \rangle \coloneqq - \kappa(\tau \X, \Y)$ is a Hermitian sesquilinear form on $\g$, i.e. it is anti-linear in the first argument, linear in the second and $\langle \X, \Y \rangle = \overline{\langle \Y, \X \rangle}$. It is non-degenerate if $\kappa$ is.
In particular, $\langle \X, \X \rangle \in \RR$ for all $\X \in \g$.

\paragraph{Virasoro:} Since $\tau : \Dz \to \Dz$ is orientation reversing, its differential defines a vector bundle isomorphism $\d \tau : T^{1,0} \Dz \SimTo T^{0,1} \Dz$. Postcomposing this with complex conjugation we obtain the desired anti-linear involutive vector bundle automorphism $\widetilde \tau \coloneq \overline{\d \tau (\cdot)} : L \to L$ in \eqref{anti-linear auto L} of the holomorphic tangent bundle $L = T^{1,0} \Dz$ from \eqref{Virasoro L def}.

Recalling that $\fl_\xi = \text{span}_\CC \{ \partial_\xi \}$, and likewise $\fl_{\hat\tau\xi} = \text{span}_\CC \{ \partial_{\hat\tau\xi} \}$, the induced anti-linear map $\tau : \fl_\xi \to \fl_{\hat\tau\xi}$ along the fibres is given explicitly by $\partial_\xi \mapsto \partial_{\hat\tau \xi}$.
The isomorphism \eqref{reality Omega fl} then sends $\partial_\xi \otimes u$ for any $u \in \Omega^{0,\chain}_c(U)$ to $\partial_{\hat\tau\xi} \otimes \overline{\tau^\ast u}$. To see that this is a morphism of \dg{} Lie algebras, with respect to the Lie bracket \eqref{Lie bracket Virasoro}, note that for $u, v \in \Omega^{0, \chain}_c(U)$ we have
\begin{align*}
\hat\tau \big( [\partial_\xi \otimes u, \partial_\xi \otimes v] \big) &= \hat\tau \big( \partial_\xi \otimes (u \partial_\xi v - v \partial_\xi u) \big) = \partial_{\hat\tau\xi} \otimes \big( \overline{\tau^\ast u} \, \partial_{\hat\tau\xi} \overline{\tau^\ast v} - \overline{\tau^\ast v} \, \partial_{\hat\tau\xi} \overline{\tau^\ast u} \big)\\
&= \big[ \partial_{\hat\tau\xi} \otimes \overline{\tau^\ast u}, \partial_{\hat\tau \xi} \otimes \overline{\tau^\ast v} \big]
= \big[ \hat\tau(\partial_\xi \otimes u), \hat\tau(\partial_\xi \otimes v) \big].
\end{align*}
The anti-linear involution $\tau : \fl \to \fl$ induced on the canonical copy $\fl = \text{span}_\CC \{ \Omega \}$ is simply given by complex conjugation $x\, \Omega \mapsto \bar x \,\Omega$.

To see that the $2$-cocycle \eqref{cocycle Virasoro} satisfies the condition \eqref{reality cocycle}, we use the identity \eqref{conjugate central extension} applied to $\omega = \partial_\xi u$ and $\eta = \partial_\xi v$, and noting that $\overline{\tau^\ast \partial_\xi u} = \partial_{\hat\tau\xi} \overline{\tau^\ast u}$ and $\overline{\tau^\ast \partial_\xi v} = \partial_{\hat\tau\xi} \overline{\tau^\ast v}$ we find
\begin{align*}
\overline{\alpha_\xi( \partial_\xi \otimes u, \partial_\xi \otimes v )} &= \alpha_{\hat\tau\xi}\big( \partial_{\hat\tau\xi} \otimes \overline{\tau^\ast u}, \partial_{\hat\tau\xi} \otimes \overline{\tau^\ast v} \big) = \alpha_{\hat\tau\xi}\big( \hat\tau (\partial_\xi \otimes u), \hat\tau(\partial_\xi \otimes v) \big).
\end{align*}
The fact that $c \in \RR$ was used in the first equality.
We then recall from \S\ref{sec: main examples} that the $2$-cocycle \eqref{cocycle Virasoro} depends on the coordinate $\xi$ only up to a $2$-coboundary, explicitly $\alpha_{\hat\tau\xi} = \alpha_\xi + \delta \beta_{\hat\tau\xi, \xi}$. The $2$-cocycle property \eqref{reality cocycle} is therefore satisfied in this example only up to a $2$-coboundary.

\paragraph{$\bm \beta \bm \gamma$ system:} The pullback by the orientation reversing map $\tau : \Dz \to \Dz$ defines a vector bundle isomorphism $\tau^\ast : T^{\ast 1,0} \Dz \SimTo T^{\ast 0,1} \Dz$ so that its postcomposition with complex conjugation yields an anti-linear involutive vector bundle automorphism $\widetilde \tau \coloneq \overline{\tau^\ast (\cdot)} : T^{\ast 1,0} \Dz \to T^{\ast 1,0} \Dz$ over $\tau : \Dz \to \Dz$ of the holomorphic cotangent bundle $T^{\ast 1,0} \Dz$. Notice that this vector bundle automorphism naturally covers $\tau^{-1} : \Dz \to \Dz$, however since $\tau$ is an involution we have $\tau^{-1} = \tau$ so that it also defines a vector bundle automorphism over $\tau : \Dz \to \Dz$. Combining this with the canonical anti-linear isomorphism of the trivial vector bundle $\Dz \times \CC \to \Dz \times \CC$ given by $\tau$ on the base $\Dz$ and complex conjugation on the fibre $\CC$, we obtain the desired anti-linear involutive vector bundle automorphism $\widetilde \tau : L \to L$ of the holomorphic vector bundle \eqref{beta gamma L def}.

Recalling that in the present case we have $\fl_\xi = \text{span}_\CC \{ 1, \d \xi \}$ and $\fl_{\hat\tau\xi} = \text{span}_\CC \{ 1, \d (\hat\tau\xi) \}$, the induced anti-linear involution $\tau : \fl_\xi \to \fl_{\hat\tau\xi}$ is given on the basis elements by $1 \mapsto 1$ and $\d \xi \mapsto \d (\hat\tau \xi)$. The induced anti-linear involution $\tau : \fl \to \fl$ on $\fl = \text{span}_\CC \{ \beta, \gamma \}$ is again just given by complex conjugation $x\,\beta + y \, \gamma \mapsto \bar x \, \beta + \bar y \, \gamma$.

Since the Lie bracket is trivial the only thing to check is that the $2$-cocycle \eqref{cocycle beta gamma} satisfies the condition \eqref{reality cocycle}. But under the canonical isomorphism $\Omega^{0,\chain}_c(U,L) \cong \Omega^{0,\chain}_c(U) \oplus \Omega^{1,\chain}_c(U)$, the isomorphism \eqref{reality Omega fl} acts simply as $\omega \mapsto \overline{\tau^\ast \omega}$ on any $\omega \in \Omega^{0,\chain}_c(U) \oplus \Omega^{1,\chain}_c(U)$. It then follows, exactly as in \eqref{conjugate central extension}, that
\begin{align*}
\overline{\alpha(\omega, \eta)} = - \frac{1}{2 \pi \ii} \overline{\int_{\Dz} \omega \wedge \eta}
= \frac{1}{2 \pi \ii} \int_{\Dz} \overline{\tau^\ast \omega} \wedge \overline{\tau^\ast \eta} = \alpha(\hat\tau \omega, \hat\tau\eta)
\end{align*}
for any $\omega, \eta \in \Omega^{0,\chain}_c(U) \oplus \Omega^{1,\chain}_c(U)$, where as before we implicitly use the extension morphism $\Omega^{0, \chain}_c(U) \to \Omega^{0, \chain}_c(\Dz)$, as required.

\section{Full vertex algebra \texorpdfstring{$\hV^{\fl,\alpha}_p$}{Fg}} \label{sec: vertex alg}

As in \S\ref{sec: prefac alg}, we will keep working with an arbitrary connected $2$-dimensional conformal manifold $\Sigma$ which could be non-orientable and also with boundary. However, throughout this paper we will not treat boundary conditions and therefore ignore points on the boundary $\partial \Sigma$ by only working locally arounds points in the interior $\Sigma^\circ \subset \Sigma$. Later, towards the end of \S\ref{sec: chiral states}, we will specialise to the case of the $2$-sphere $\Sigma = S^2$.

\subsection{Underlying vector space} \label{sec: Vec structure}

As mentioned in \S\ref{sec: class of VOAs}, we are interested in full vertex algebras whose underlying vector spaces are built as induced modules of certain infinite-dimensional Lie algebras. In what follows we will focus on the three main examples of infinite-dimensional Lie algebras given by centrally extended loop algebras which includes affine Kac-Moody algebras, the Virasoro algebra and the $\beta\gamma$ system. These will closely correspond to the three examples of unital local Lie algebra $\L^\Sigma_\alpha$ given in \S\ref{sec: main examples}. It will be convenient to describe all these examples uniformly as follows.

Recall the finite-dimensional vector space $\fl$ introduced in \S\ref{sec: general setup}, and consider the associated infinite-dimensional vector spaces
\begin{equation} \label{inf dim algebras}
\hg \coloneqq \fl \otimes \CC[t, t^{-1}] \oplus \CC {\ms k}, \qquad \hbg \coloneqq \fl \otimes \CC[\bar t, \bar t^{-1}] \oplus \CC \bar{\ms k}
\end{equation}
where $t$ and $\bar t$ are independent formal variables referred to as loop parameters. These vector spaces come equipped with Lie algebra structures $[\cdot, \cdot] : \hg \otimes \hg \to \hg$ and $[\cdot, \cdot] : \hbg \otimes \hbg \to \hbg$, respectively, with respect to which $\ms k$ and $\bar{\ms k}$ are central. We will describe these explicitly below in each of the three main examples in terms of the loop generators
\begin{equation} \label{loop generators}
\a_{(n)} \coloneqq \a \otimes t^n \in \hg, \qquad \bar\a_{(n)} \coloneqq \a \otimes \bar t^n \in \hbg,
\end{equation}
for $\a \in \fl$ and $n \in \ZZ$. In all cases, the Lie algebras \eqref{inf dim algebras} admit a decomposition into a direct sum $\hg = \hg_+ \oplus \hg_-$ and $\hbg = \hbg_+ \oplus \hbg_-$ of Lie subalgebras
\begin{alignat*}{2}
\hg_+ &\coloneqq \fl \otimes \CC[t] \oplus \CC {\ms k}, &\qquad \hg_- &\coloneqq \fl \otimes t^{-1} \CC[t^{-1}]\\
\hbg_+ &\coloneqq \fl \otimes \CC[\bar t] \oplus \CC \bar{\ms k}, &\qquad \hbg_- &\coloneqq \fl \otimes \bar t^{-1} \CC[\bar t^{-1}].
\end{alignat*}
%\hg_+ \coloneqq \fl \otimes \CC[t] \oplus \CC {\ms k}$ and $\hbg_+ \coloneqq \fl \otimes \CC[\bar t] \oplus \CC \bar{\ms k}$.
%We also introduce the complementary Lie subalgebras $\hg_- \coloneqq \fl \otimes t^{-1} \CC[t^{-1}]$ and $\hbg_- \coloneqq \fl \otimes \bar t^{-1} \CC[\bar t^{-1}]$.

Let $\CC \vac$ be the $1$-dimensional module over the direct sum Lie algebra $\hg_+ \oplus \hbg_+$ on which $\fl \otimes \CC[t]$ and $\fl \otimes \CC[\bar t]$ act trivially and the central elements ${\ms k}$ and $\bar{\ms k}$ both act by multiplication by $1$. Define the \emph{full affine vertex algebra} $\hVV^{\fl,\alpha}$ as the induced module over the direct sum Lie algebra $\hg \oplus \hbg$, namely
\begin{equation*}
\hVV^{\fl,\alpha} \coloneqq \Ind_{\hg_+ \oplus \hbg_+}^{\hg \oplus \hbg} \CC \vac.
\end{equation*}
Note that we have included the $2$-cocycle $\alpha$ from \eqref{cocycle alpha} in the notation. This is a slight abuse of notation since $\alpha$ is not directly used in the definition but it will be closely related to the central extension of the Lie algebra structures on \eqref{inf dim algebras} in each of the examples described below. We therefore use the subscript $\alpha$ in the notation to emphasise this choice of central extension.
In each of the three cases we will have an isomorphism of vector spaces $\hVV^{\fl,\alpha} \cong U\big( \hg_- \oplus \hbg_- \big) \vac$ so that a general element of $\hVV^{\fl,\alpha}$ is given by a linear combination of expressions of the form
\begin{equation} \label{gen state Vkx}
\a^r_{(-m_r)} \ldots \a^1_{(-m_1)} \bar \b^{\bar r}_{(-n_{\bar r})} \ldots \bar \b^1_{(-n_1)} \vac
\end{equation}
for any $\a^i \in \fl$, $m_i \in \ZZ_{\geq 1}$ with $i \in \{ 1, \ldots, r \}$ and $\b^j \in \fl$, $n_j \in \ZZ_{\geq 1}$ with $j \in \{ 1, \ldots, \bar r \}$ where $r, \bar r \in \ZZ_{\geq 0}$. The full affine vertex algebra $\hVV^{\fl,\alpha}$ is canonically isomorphic to the tensor product of two copies of the usual affine vertex algebra, namely we have an isomorphism of complex vector spaces $\hVV^{\fl,\alpha} \cong \VV^{\fl,\alpha} \otimes \bar{\VV}^{\fl,\alpha}$ where
\begin{equation*}
\VV^{\fl,\alpha} \coloneqq \Ind_{\hg_+}^{\hg} \CC \vac, \qquad
\bar{\VV}^{\fl,\alpha} \coloneqq \Ind_{\hbg_+}^{\hbg} \CC \vac
\end{equation*}
are the affine vertex algebra and its `anti-chiral' copy.

We now describe the Lie algebra structures on \eqref{inf dim algebras} in all three cases of interest, where the finite-dimensional vector space $\fl$ was defined in each case in \S\ref{sec: general setup}.

\paragraph{Kac-Moody:}
We can endow \eqref{inf dim algebras} for $\fl = \g$ with Lie brackets described in terms of the loop generators \eqref{loop generators} by
\begin{subequations} \label{KM algebra vertex modes def}
\begin{align}
\label{KM algebra vertex modes def a} \big[ \X_{(m)}, \Y_{(n)} \big] &= [\X, \Y]_{(m+n)} + m\, \kappa(\X, \Y) \delta_{m+n, 0} \, {\ms k}, \\
\label{KM algebra vertex modes def b} \big[ \bar \X_{(m)}, \bar \Y_{(n)} \big] &= \overline{[\X, \Y]}_{(m+n)} + m\, \kappa(\X, \Y) \delta_{m+n, 0} \, \bar{\ms k}.
\end{align}
\end{subequations}
for every $\X, \Y \in \g$ and $m,n \in \ZZ$. In this case $\hg$ and $\hbg$ are two copies of the centrally extended loop algebra associated with $\g$. Since the case when $\g$ is a simple Lie algebra corresponds to a pair of untwisted affine Kac-Moody algebras, by a slight abuse of terminology we will keep referring to the general case with $\g$ arbitrary as the `Kac-Moody' case.

\paragraph{Virasoro:}
We endow \eqref{inf dim algebras} for $\fl = \text{span}_\CC \{ \Omega \}$ with the Lie brackets given in terms of the loop generators \eqref{loop generators} of the fixed basis element $\Omega$ by
\begin{subequations} \label{Virasoro algebra vertex modes def}
\begin{align}
\label{Virasoro algebra vertex modes def a} \big[ \Omega_{(m)}, \Omega_{(n)} \big] &= (m-n) \Omega_{(m+n-1)} + \frac{m(m-1)(m-2)}{12} c \, \delta_{m+n, 2} \, {\ms k}, \\
\label{Virasoro algebra vertex modes def b} \big[ \bar\Omega_{(m)}, \bar\Omega_{(n)} \big] &= (m-n) \bar\Omega_{(m+n-1)} + \frac{m(m-1)(m-2)}{12} c\, \delta_{m+n, 2} \, \bar{\ms k}.
\end{align}
\end{subequations}
In this case $\hg$ and $\hbg$ are two copies of the Virasoro algebra, where the usual generators $L_{(n)}$ and $\bar L_{(n)}$ satisfying the more familiar looking Virasoro relations are given by a simple shift in the indices, namely $L_{(n)} = \Omega_{(n+1)}$ and $\bar L_{(n)} = \bar \Omega_{(n+1)}$.

\paragraph{$\bm \beta \bm \gamma$ system:}
We endow \eqref{inf dim algebras} for $\fl = \text{span}_\CC \{ \beta, \gamma \}$ with the Lie brackets given in terms of the loop generators \eqref{loop generators} of the fixed basis elements $\beta$ and $\gamma$ by
\begin{subequations} \label{beta gamma algebra vertex modes def}
\begin{align}
\label{beta gamma algebra vertex modes def a} \big[ \beta_{(m)}, \gamma_{(n)} \big] &= \delta_{m+n, -1} \, {\ms k}, \\
\label{beta gamma algebra vertex modes def b} \big[ \bar\beta_{(m)}, \bar\gamma_{(n)} \big] &= \delta_{m+n, -1} \, \bar{\ms k}.
\end{align}
\end{subequations}
In this case $\hg$ and $\hbg$ are two copies of the infinite-dimensional Weyl algebra, also know as the $\beta\gamma$ system, whose generators are often denoted by $a_{(n)}$ and $a^\ast_{(n)}$ with the more standard Lie algebra relations obtained from \eqref{beta gamma algebra vertex modes def} by a simple shift of indices 
\begin{equation} \label{a a star def}
a_{(n)} = \beta_{(n)}, \quad a^\ast_{(n)} = \gamma_{(n-1)}, \quad \bar a_{(n)} = \bar \beta_{(n)}, \quad \bar a^\ast_{(n)} = \bar \gamma_{(n-1)}.
\end{equation}

\subsubsection{Geometric realisation in \texorpdfstring{$\U\L^\Sigma_\alpha(U)$}{UL}}

Let $U \subset \Sigma^\circ$ be a connected open subset of the interior $\Sigma^\circ \subset \Sigma$. Its preimage under the projection $\pi : \Dz \to \Sigma$ is a disjoint union of two copies of $U$ which we denote $\pi^{-1}(U) = U_+ \sqcup U_-$. Suppose that $U_+$ is equipped with a holomorphic coordinate $\xi : U_+ \to \CC$ such that $\xi(U_+) \subset \CC$ is bounded. Since $U_-$ comes equipped with the opposite complex structure to $U_+$, the same map $\xi$ viewed as a function on $U_-$ defines an anti-holomorphic coordinate $\xi : U_- \to \CC$ on $U_- \subset \pi^{-1}(U)$. In other words, the complex conjugate map $\bar \xi : U_- \to \CC$ defines a holomorphic coordinate on $U_-$. We then refer to $\xi : U \to \CC$ as a \emph{local (holomorphic) coordinate} on $U \subset \Sigma^\circ$.

Given any $p \in U \subset \Sigma^\circ$, we let $p_\pm \in U_\pm$ denote its preimages under $\pi : \Dz \to \Sigma$. We refer to $\xi(p) \coloneqq \xi(p_+)$, i.e. the value of $\xi : U_+ \to \CC$ at $p_+ \in U_+$, as its \emph{holomorphic coordinate}. Similarly, we refer to $\bar\xi(p) \coloneqq \bar\xi(p_-)$, i.e. the value of $\bar\xi : U_- \to \CC$ at $p_- \in U_-$, as its \emph{anti-holomorphic coordinate}. We also define \emph{shifted local (anti-)holomorphic coordinates} as

\begin{equation} \label{shifted coords def}
\xi_p \coloneqq \xi - \xi(p) : U_+ \longrightarrow \CC, \qquad
\bar \xi_p \coloneqq \bar \xi - \bar \xi(p) : U_- \longrightarrow \CC
\end{equation}
so that any $q \in U \subset \Sigma$ has holomorphic coordinate $\xi_p(q) = \xi(q) - \xi(p)$ and anti-holomorphic coordinate $\bar \xi_p(q) = \bar \xi(q) - \bar \xi(p)$.

In view of giving a geometric description of the vector space $\hVV^{\fl,\alpha}$ using the prefactorisation algebra $\U\L^\Sigma_\alpha$ it is useful to first consider the level $1$ subspace $(\hg_- \oplus \hbg_-) \vac \subset \hVV^{\fl,\alpha}$.
We have an injective linear map
\begin{align} \label{level 1 injection}
(\hg_- \oplus \hbg_-) \vac &\longhookrightarrow \fl \otimes \O_\infty\big( \xi_p(U_+)^{\rm c} \big) \oplus \fl \otimes \O_\infty \big( \bar \xi_p(U_-)^{\rm c} \big),\\
\a_{(-m)} \vac + \bar \b_{(-n)} \vac &\longmapsto \big( \a \otimes [\lambda^{-m}], \b \otimes [\bar\lambda^{-n}] \big) \notag
\end{align}
for $\a, \b \in \fl$ and $m, n \in \ZZ_{\geq 1}$, where $[\lambda^{-m}] \in \O_\infty\big( \xi_p(U_+)^c \big)$ denotes the germ of the holomorphic function $\xi_p(U_+)^c \subset \CP \setminus \{ 0 \} \to \CC$, $\lambda \mapsto \lambda^{-m}$ and similarly $[\bar\lambda^{-n}] \in \O_\infty\big( \xi_p(U_-)^c \big)$ denotes the germ of the holomorphic function $\bar \xi_p(U_-)^c \subset \CP \setminus \{ 0 \} \to \CC$, $\bar\lambda \mapsto \bar\lambda^{-n}$.

Recall the $(0,1)$-form $\lceil f \rceil \in \Omega^{0,1}_c(W)$ on a bounded open subset $W \subset \CC$ associated with a given germ $[f] \in \O_\infty(W^c)$ defined in \S\ref{sec: coh of gD}. In the case of the germ $[\lambda^{-m}] \in \O_\infty\big( \xi_p(U_+)^c \big)$ we can pick the representative $f : \Delta_f \coloneqq \CP \setminus \{ 0 \} \to \CC$, $\lambda \mapsto \lambda^{-m}$. By Lemma \ref{lem: up arrow indep choices}, the $(0,1)$-form $\lceil \lambda^{-m} \rceil \in \Omega^{0,1}_c\big( \xi_p(U_+) \big)$ is then given, up to $\bar\partial$-exact terms, by $- f\, \bar\partial \rho$ where $\rho \in \Omega^{0,0}_c\big(\xi_p(U_+)\big)$ is a smooth bump function equal to $1$ on some neighbourhood of $\{ 0\} = (\Delta_f)^c \subset \xi_p(U_+)$. The pullback of $f|_{\xi_p(U_+) \setminus \{0\}}$ along $\xi_p : U_+ \to \xi_p(U_+)$ is $\xi_p^\ast f : U_+ \setminus \{ p_+\} \to \CC$, $q \mapsto \xi_p(q)^{-m}$, i.e. the function $\xi_p^{-m} : U_+ \setminus \{ p_+ \} \to \CC$. Then $\xi_p^\ast \lceil \lambda^{-m} \rceil \in \Omega^{0,1}_c(U_+)$, which for brevity we denote by $\lceil \xi_p^{-m} \rceil$, is given by Lemma \ref{lem: up arrow indep choices} up to $\bar\partial$-exact terms by $\xi_p^{-m} \bar\partial (\xi_p^\ast \rho)$ where $\xi_p^\ast \rho \in \Omega^{0,0}_c(U_+)$ is equal to $1$ in a neighbourhood of $p_+ \in U_+$. Similarly, we let $\lceil \bar\xi_p^{-n} \rceil$ stand for $\bar\xi_p^\ast \lceil \bar\lambda^{-n} \rceil \in \Omega^{0,1}_c(U_-)$ which is given again by Lemma \ref{lem: up arrow indep choices} up to $\bar\partial$-exact terms by $\bar\xi_p^{-n} \bar\partial (\bar\xi_p^\ast \rho')$ for some smooth bump function $\rho' \in \Omega^{0,0}_c\big( \bar\xi_p(U_-) \big)$ equal to $1$ in some neighbourhood of $0 \in \bar\xi_p(U_-)$.

Combining \eqref{level 1 injection} with the isomorphism from Theorem \ref{thm: cohomology gSigma} defined using the local holomorphic coordinate on $\pi^{-1}(U) = U_+ \sqcup U_-$ given by \eqref{shifted coords def} we obtain an injective linear map
\begin{align} \label{hg- to H1}
\big( \hg_- \oplus \hbg_- \big) \vac &\longhookrightarrow H^1\big( \L^\Sigma_\alpha(U) \big) = H^0\big( \L^\Sigma_\alpha(U)[1] \big),\\
\a_{(-m)} \vac + \bar \b_{(-n)} \vac &\longmapsto \big[ \a_{\xi_p} \otimes \lceil \xi_p^{-m} \rceil + \b_{\bar\xi_p} \otimes \lceil \bar \xi_p^{-n} \rceil \big]_U. \notag
\end{align}
Note, in particular, that although $\lceil \xi_p^{-m} \rceil$ and $\lceil \bar\xi_p^{-n} \rceil$ were described above only up to $\bar\partial$-exact terms, this ambiguity drops out from taking the cohomology class $[\cdot]_U$ in \eqref{hg- to H1}.

If we regard $H^0\big( \L^\Sigma_\alpha(U)[1] \big)$ as a subspace of $\U\L^\Sigma_\alpha(U)$ then \eqref{hg- to H1} gives an embedding of the subspace $(\hg_- \oplus \hbg_-) \vac \subset \hVV^{\fl,\alpha}$ into $\U\L^\Sigma_\alpha(U)$ for any neighbourhood $U \subset \Sigma^\circ$ of $p$ equipped with the local coordinate $\xi$.
A general element of $\hVV^{\fl,\alpha}$ involves products of elements from $\hg_-$ and $\hbg_-$ in some order, and we can realise such an ordering geometrically in $\U\L^\Sigma_\alpha(U)$ through the choice of supports of the smooth bump functions entering the definitions of $\lceil \xi_p^{-m} \rceil \in \Omega^{0,1}_c(U_+)$ and $\lceil \bar\xi_p^{-n} \rceil \in \Omega^{0,1}_c(U_-)$. To describe this ordering explicitly we introduce the important notion of \emph{nested} open subsets that will be used repeatedly throughout the rest of the paper.

\begin{definition}
Given two open subsets $V, W \subset U_\pm$, we say that $V$ is \emph{nested} in $W$ if $V$ is relatively compact in $W$, i.e. if $\overline{V} \subset W$ and $\overline{V}$ is compact, and we write this as $V \Subset W$.
\end{definition}
Let $\Omega^{0,0}_c(W)^1_V \subset \Omega^{0,0}_c(W)$ denote the subset consisting of elements equal to $1$ on $V$.
Given any nested neighbourhoods $p_+ \in V \Subset W \subset U_+$ we pick a $\rho^V_W \in \Omega^{0,0}_c(W)^1_V$ and define
\begin{equation} \label{up arrow def U V}
\lceil \xi^n_p \rceil^V_W \coloneqq - \xi^n_p \bar\partial \rho^V_W \in \Omega^{0,1}_c(W)
\end{equation}
for all $n \in \ZZ$. It follows from Lemma \ref{lem: up arrow indep choices} that $\lceil \xi^n_p \rceil^V_W$ differs by $\bar\partial$-exact terms from $\lceil \xi^n_p \rceil$ defined above \eqref{hg- to H1}. We similarly define $\lceil \bar\xi^n_p \rceil^V_W$ for nested neighbourhoods $p_- \in V \Subset W \subset U_-$.

\begin{lemma} \label{lem: residue integral}
Let $U \subset \Sigma^\circ$ be a neighbourhood of $p \in \Sigma^\circ$ equipped with a holomorphic coordinate $\xi : U \to \CC$.
For any $n \in \ZZ$ we have
$\frac{1}{2 \pi \ii} \int_{U_+} \lceil \xi_p^n \rceil \wedge \d\xi_p = \delta_{n,-1}$ and
$\frac{1}{2 \pi \ii} \int_{U_-} \lceil \bar\xi_p^n \rceil \wedge \d\bar\xi_p = \delta_{n,-1}$.
\begin{proof}
Consider the integral over $U_+$. Let $p_+ \in V \Subset W \subset U_+$ be nested neighbourhoods and pick a smooth bump function $\rho^V_W \in \Omega^{0,0}_c(W)^1_V$. Since $\lceil \xi_p^n \rceil - \lceil \xi_p^n \rceil^V_W$ is $\bar\partial$-exact by Lemma \ref{lem: up arrow indep choices}, we can replace $\lceil \xi_p^n \rceil$ in the integrand by $\lceil \xi_p^n \rceil^V_W$. The given integral over $U_+$ thus evaluates to
\begin{center}
\vspace{-3mm}
\begin{minipage}[b]{0.75\linewidth}
\begin{align*}
&\frac{1}{2 \pi \ii} \int_{U_+} \lceil \xi_p^n \rceil \wedge \d\xi_p = \frac{1}{2 \pi \ii} \int_{U_+} \xi_p^n \d\xi_p \wedge \bar \partial \rho^V_W
= \frac{1}{2 \pi \ii} \int_{W \setminus V} \xi_p^n \d \xi_p \wedge \bar \partial \rho^V_W\\
&\qquad\qquad = - \frac{1}{2 \pi \ii} \int_{W \setminus V} \d \big( \rho^V_W \xi_p^n \d \xi_p \big)
= \frac{1}{2 \pi \ii} \int_{\partial V} \xi_p^n \d \xi_p = \delta_{n,-1},
\end{align*}
\end{minipage}
\begin{minipage}[b]{0.15\linewidth}
\centering
\raisebox{2mm}{
\begin{tikzpicture}
\def\R{1}
  \fill[blue!90,
        opacity      = 0.3,
        even odd rule]
        (.05*\R,1.2*\R) circle[radius=.8*\R] circle[radius=.3*\R];
  \draw[thick, blue] (.05*\R,1.2*\R) circle (.8*\R);
  \draw[thick, blue] (.05*\R,1.2*\R) circle (.3*\R);
  \filldraw[thick] (.05*\R,1.2*\R) node[below=-.7mm]{\tiny $p$} circle (0.02*\R) node[above right=1.9mm and -2.5mm, blue]{\tiny $W \!\setminus\! V$};
\end{tikzpicture}}
\end{minipage}
\end{center}
where the second equality follows from the support property of $\bar \partial \rho^V_W$. The third equality uses the fact that $\xi_p^n \d\xi_p$ is closed on $W \setminus V$ and in the fourth equality we used Stokes's theorem together with the support properties of $\rho^V_W$. The last integral is evaluated using the residue theorem, noting that $\partial V$ is oriented counterclockwise.

The evaluation of the integral over $U_-$ is completely analogous.
\end{proof}
\end{lemma}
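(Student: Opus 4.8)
The plan is to reduce each of the two integrals to a single application of Stokes's theorem followed by the residue theorem; I will spell out the integral over $U_+$, the one over $U_-$ being identical after replacing $\xi_p$ by the holomorphic coordinate $\bar\xi_p$ on $U_-$ (recall that $U_-$ carries the opposite complex structure, so $\bar\xi$ is holomorphic there). First I would use Lemma~\ref{lem: up arrow indep choices} to trade the representative $\lceil\xi_p^n\rceil$ for the explicit compactly supported $1$-form $\lceil\xi_p^n\rceil^V_W=-\xi_p^n\,\bar\partial\rho^V_W$ of \eqref{up arrow def U V}, attached to nested coordinate discs $p_+\in V\Subset W\subset U_+$ and a bump function $\rho^V_W\in\Omega^{0,0}_c(W)^1_V$. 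This substitution changes the integrand only by a $\bar\partial$-exact term $\bar\partial\eta\wedge\d\xi_p$ with $\eta\in\Omega^{0,0}_c(U_+)$, and since $\partial\eta\wedge\d\xi_p$ is a $(2,0)$-form on the complex curve $U_+$ and hence vanishes, we have $\bar\partial\eta\wedge\d\xi_p=\d(\eta\,\d\xi_p)$, which integrates to zero by Stokes and compact support.

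Next I would localise: the integrand $\xi_p^n\,\d\xi_p\wedge\bar\partial\rho^V_W$ is supported on $\supp\bar\partial\rho^V_W\subset W\setminus V$. There $\xi_p^n$ is holomorphic, so $\d(\xi_p^n\,\d\xi_p)=0$, and for the same bidegree reason as above $\partial\rho^V_W\wedge\d\xi_p=0$; hence $\xi_p^n\,\d\xi_p\wedge\bar\partial\rho^V_W=-\d\bigl(\rho^V_W\,\xi_p^n\,\d\xi_p\bigr)$ on $W\setminus V$. Applying Stokes's theorem on the annular region $W\setminus V$, whose oriented boundary is $\partial W$ (counterclockwise) together with $\partial V$ (clockwise), and using that $\rho^V_W$ vanishes near $\partial W$ while $\rho^V_W\equiv 1$ on $\partial V$, every boundary contribution drops except the one over $\partial V$, leaving $\frac{1}{2\pi\ii}\int_{\partial V}\xi_p^n\,\d\xi_p$ with $\partial V$ oriented counterclockwise.

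Finally I would evaluate this period. Since $\xi_p=\xi-\xi(p)$ vanishes to first order at $p_+$, for every $n\neq-1$ the form $\xi_p^n\,\d\xi_p$ admits the single-valued primitive $\tfrac{1}{n+1}\xi_p^{n+1}$ on $U_+\setminus\{p_+\}$, so its integral over the cycle $\partial V$ vanishes, while for $n=-1$ one gets the standard value $\frac{1}{2\pi\ii}\oint_{\partial V}\d\xi_p/\xi_p=1$ as $\partial V$ winds once around $p_+$; thus the integral equals $\delta_{n,-1}$, as claimed. I do not expect any genuine obstacle here: the only care needed is over the sign bookkeeping --- the two reorderings of wedge products of $1$-forms and the clockwise induced orientation of $\partial V$ as a boundary component of $W\setminus V$ --- and the check that the $\bar\partial$-exact ambiguity in $\lceil\xi_p^n\rceil$ really does drop out of the integral.
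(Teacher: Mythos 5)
Your proof is correct and follows essentially the same route as the paper: replace $\lceil\xi_p^n\rceil$ by the explicit representative $-\xi_p^n\,\bar\partial\rho^V_W$ via Lemma \ref{lem: up arrow indep choices}, localise to $W\setminus V$, rewrite the integrand as $-\d(\rho^V_W\,\xi_p^n\,\d\xi_p)$, apply Stokes, and evaluate the resulting counterclockwise contour integral over $\partial V$ to get $\delta_{n,-1}$. Your extra remarks (why the $\bar\partial$-exact ambiguity integrates to zero, and the explicit boundary-orientation bookkeeping) are correct elaborations of steps the paper leaves implicit.
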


We are now in a position to describe the geometric realisation of the vector space $\hVV^{\fl,\alpha}$ in $\U\L^\Sigma_\alpha(U)$ around a point $p \in U$ in any open subset $U \subset \Sigma^\circ$ of $\Top(\Sigma)$ equipped with a local coordinate $\xi : U \to \CC$ such that $\xi(U) \subset \CC$ is bounded. Specifically, we define a linear map
\begin{subequations} \label{cal Y map defined}
\begin{equation} \label{cal Y map defined a}
(\cdot)^{\xi_p}_U : \hVV^{\fl,\alpha} \longrightarrow \U\L^\Sigma_\alpha(U), \qquad
A \longmapsto A^{\xi_p}_U
\end{equation}
as follows. It sends the state of the form \eqref{gen state Vkx} to
\begin{equation} \label{cal Y map defined b}
\Bigg[ \prod_{i=1}^r s\Big( \a^i_{\xi_p} \otimes \lceil \xi_p^{-m_i} \rceil^{U_{i-1}}_{U_i} \Big) \prod_{j=1}^{\bar r} s\Big( \b^j_{\bar\xi_p} \otimes \lceil \bar \xi_p^{-n_j} \rceil^{V_{j-1}}_{V_j} \Big) \Bigg]_U,
\end{equation}
\end{subequations}
where $p_+ \in U_0 \Subset \ldots \Subset U_r \subset U_+$ and $p_- \in V_0 \Subset \ldots \Subset V_{\bar r} \subset U_-$ are choices of nested sequences of open subsets. In particular, the vacuum $\vac$ is sent to the class $[1]_U$.

\begin{proposition} \label{prop: Y map def}
The linear map \eqref{cal Y map defined} is well defined. In particular, we have
\begin{subequations} \label{commutator check}
\begin{align}
\label{commutator check a} \ldots \big( \a_{(m)} \b_{(n)} - \b_{(n)} \a_{(m)} \big) \ldots \vac^{\xi_p}_U &= \ldots \big[ \a_{(m)}, \b_{(n)} \big] \ldots \vac^{\xi_p}_U,\\
\label{commutator check b} \ldots \big( \a_{(m)} \bar \b_{(n)} - \bar \b_{(n)} \a_{(m)} \big) \ldots \vac^{\xi_p}_U &= 0,\\
\label{commutator check c} \ldots \big( \bar \a_{(m)} \bar \b_{(n)} - \bar \b_{(n)} \bar \a_{(m)} \big) \ldots \vac^{\xi_p}_U &= \ldots \big[ \bar \a_{(m)}, \bar \b_{(n)} \big] \ldots \vac^{\xi_p}_U
\end{align}
for all $\a, \b \in \fl$ and $m, n \in \ZZ$, where on the right hand sides of \eqref{commutator check a} and \eqref{commutator check c} we use the Lie brackets on \eqref{inf dim algebras} and in each equation the ellipses on either side of each term represent the same sum of product of elements from $\hg \oplus \hbg$. We also have, for any $k \in \ZZ_{\geq 0}$,
\begin{equation}
\label{commutator check d} \ldots \a_{(k)} \vac^{\xi_p}_U = 0, \qquad
\ldots \bar \a_{(k)} \vac^{\xi_p}_U = 0.
\end{equation}
\end{subequations}
\begin{proof}
Focusing on the chiral part of the state, first note that the cohomology class in \eqref{cal Y map defined b} is independent of the choice of smooth bump functions $\rho^{U_{i-1}}_{U_i} \in \Omega^{0,0}_c(U_i)^1_{U_{i-1}}$ and of the nested sequences of open subsets $p_+ \in U_0 \Subset \ldots \Subset U_r \subset U_+$. Indeed, Lemma \ref{lem: up arrow indep choices} allows one to modify the smooth bump functions $\rho^{U_{i-1}}_{U_i}$ one by one while maintaining disjoint supports 
\begin{equation} \label{disjoint supports}
\supp \big( \bar\partial \rho^{U_{i-1}}_{U_i} \big) \cap \supp \big( \bar\partial \rho^{U_{j-1}}_{U_j} \big) = \emptyset,
\end{equation}
for all $i \neq j$, until we arrive at a desired set of new smooth bump functions $\rho^{U'_{i-1}}_{U'_i} \in \Omega^{0,0}_c(U'_i)^1_{U'_{i-1}}$ associated with another choice of nested sequence of open subsets $p_+ \in U'_0 \Subset \ldots \Subset U'_r \subset U_+$ in $\Top(\Sigma)$. Keeping the supports disjoint as in \eqref{disjoint supports} ensures that there is never any contribution from the differential $\d_{[\cdot,\cdot]}$, as defined in \S\ref{sec: udgLie}, when working up to $\d_\CE$-exact terms, so that we can effectively work up to $\bar\partial$-exact terms. The same applies to the anti-chiral part.

Next, we must check that the cohomology class $A^{\xi_p}_U \in \U\L^\Sigma_\alpha(U)$ as given in \eqref{cal Y map defined b} does not dependent on the way in which the input state $A \in \hVV^{\fl,\alpha}$ is written. It is enough to check the identities \eqref{commutator check}. We will focus on \eqref{commutator check a} and the first identity in \eqref{commutator check d}, the proof of the other identities being completely analogous.

Choosing a nested sequence of open subsets $p_+ \in U_0 \Subset U_1 \Subset U_2 \Subset U_3 \subset U_+$ we can write the left hand side of \eqref{commutator check a} as
\begin{align} \label{commutator check 2}
&\Big[ \ldots s\big( \a_{\xi_p} \otimes \xi_p^m \bar\partial \rho^{U_1}_{U_2} \big) s\big( \b_{\xi_p} \otimes \xi_p^n \bar\partial \rho^{U_0}_{U_1} \big) \ldots \Big]_U - \Big[ \ldots s\big( \b_{\xi_p} \otimes \xi_p^n \bar\partial \rho^{U_2}_{U_3} \big) s\big( \a_{\xi_p} \otimes \xi_p^m \bar\partial \rho^{U_1}_{U_2} \big) \ldots \Big]_U \notag\\
&\qquad\qquad\qquad = \Big[ \ldots s\Big( \b_{\xi_p} \otimes \bar\partial \Big( \xi_p^n \big( \rho^{U_0}_{U_1} - \rho^{U_2}_{U_3} \big) \Big) \Big) s\big( \a_{\xi_p} \otimes \xi_p^m \bar\partial \rho^{U_1}_{U_2} \big) \ldots \Big]_U,
\end{align}
where in the first term on the left hand side
we chose to use the nested sequence of open subsets $p_+ \in U_0 \Subset U_1 \Subset U_2 \subset U_+$ in $\Top(\Sigma)$ and in the second term we used $p_+ \in U_1 \Subset U_2 \Subset U_3 \subset U_+$ instead.
In each of the three terms in \eqref{commutator check 2}, the ellipses on either side represent the same sum of products of degree $0$ elements from $\L^\Sigma_\alpha(U)[1]$, whose support is disjoint from $U_3 \setminus U_0$.
In the second line of \eqref{commutator check 2} we have used the fact that $\xi_p^n$ is holomorphic on the support of $\rho^{U_0}_{U_1} - \rho^{U_2}_{U_3}$ since the latter vanishes in a neighbourhood of $p$.

By definition of the differential $\d_\CE$ in the Chevalley-Eilenberg complex, see \S\ref{sec: udgLie}, we find
\begin{align*}
&\d_\CE\Big( \ldots s\big( \b_{\xi_p} \otimes \xi_p^n \big( \rho^{U_0}_{U_1} - \rho^{U_2}_{U_3} \big) \big) s\big( \a_{\xi_p} \otimes \xi_p^m \bar\partial \rho^{U_1}_{U_2} \big) \ldots \Big)\\
&\qquad = \ldots  \bigg( \!\! - s\Big( \b_{\xi_p} \otimes \bar\partial\Big( \xi_p^n \big( \rho^{U_0}_{U_1} - \rho^{U_2}_{U_3} \big) \Big) \Big) s\big( \a_{\xi_p} \otimes \xi_p^m \bar\partial \rho^{U_1}_{U_2} \big)\\
&\qquad\qquad\qquad\qquad - s \Big[ \b_{\xi_p} \otimes \xi_p^n \big( \rho^{U_0}_{U_1} - \rho^{U_2}_{U_3} \big), \a_{\xi_p} \otimes \xi_p^m \bar\partial \rho^{U_1}_{U_2} \Big]_\alpha \bigg) \ldots
\end{align*}
where the minus sign in the last line comes from the definition of the differential $\d_{[\cdot, \cdot]_\alpha}$ in \S\ref{sec: udgLie} and using the fact that the first argument in the Lie bracket has degree $0$ in $\L^\Sigma_\alpha(U)$. Note also that the action of $\d_\CE$ on the ellipses terms vanishes since these consist of sums of products of degree $0$ elements from $\L^\Sigma_\alpha(U)[1]$, so they are killed by $\bar\partial$, and whose support is disjoint from $U_3 \setminus U_0$, so they also do not contribute to the action of $\d_{[\cdot,\cdot]_\alpha}$.
It now follows that the cohomology class on the right hand side of \eqref{commutator check 2} can be rewritten as
\begin{align} \label{commutator check 3}
&\Big[ \ldots s\big( \a_{\xi_p} \otimes \xi_p^m \bar\partial \rho^{U_1}_{U_2} \big) s\big( \b_{\xi_p} \otimes \xi_p^n \bar\partial \rho^{U_0}_{U_1} \big) \ldots \Big]_U - \Big[ \ldots s\big( \b_{\xi_p} \otimes \xi_p^n \bar\partial \rho^{U_2}_{U_3} \big) s\big( \a_{\xi_p} \otimes \xi_p^m \bar\partial \rho^{U_1}_{U_2} \big) \ldots \Big]_U \notag\\
&\qquad\qquad\qquad = - \Big[ \ldots s \Big[ \b_{\xi_p} \otimes \xi_p^n \big( \rho^{U_0}_{U_1} - \rho^{U_2}_{U_3} \big), \a_{\xi_p} \otimes \xi_p^m \bar\partial \rho^{U_1}_{U_2} \Big]_\alpha \ldots \Big]_U.
\end{align}

In order to complete the proof, it remains to explicitly evaluate the Lie bracket in $\L^\Sigma_\alpha(U)$ on the second line of the right hand side, using the formulae in \S\ref{sec: main examples}, and relate it to the Lie bracket in $\hg$ on the right hand side of \eqref{commutator check a}. We proceed by separately considering the three main examples of \S\ref{sec: main examples}, namely the Kac-Moody, Virasoro and $\beta\gamma$ system cases.

\paragraph{Kac-Moody:} Recall from \S\ref{sec: general setup} that the map $(\cdot)_{\xi_p} : \fl \SimTo \fl_{\xi_p}$ in this case is the identity so that with $\a = \X \in \g$ and $\b = \Y \in \g$ we find
\begin{align*}
&- s \Big[ \Y \otimes \xi_p^n \big( \rho^{U_0}_{U_1} - \rho^{U_2}_{U_3} \big), \X \otimes \xi_p^m \bar\partial \rho^{U_1}_{U_2} \Big]_\alpha\\
&\qquad\qquad\qquad = - s\big( [\X, \Y] \otimes \xi_p^{m+n} \bar\partial \rho^{U_1}_{U_2} \big) - \frac{\kappa(\X, \Y)}{2 \pi \ii} \int_{U_+} \partial( \xi_p^n) \wedge \xi_p^m \bar\partial \rho^{U_1}_{U_2}\\
&\qquad\qquad\qquad = s\big( [\X, \Y] \otimes \lceil \xi_p^{m+n} \rceil^{U_1}_{U_2} \big) + m \, \kappa(\X, \Y) \delta_{m+n,0},
\end{align*}
where in the first step we used the explicit form of the Lie bracket on $\L^\Sigma_\alpha(U)$ given by \eqref{bracket and cocycle KM} and the fact that
\begin{equation} \label{rho support identity}
(\rho^{U_0}_{U_1} - \rho^{U_2}_{U_3}) \bar\partial \rho^{U_1}_{U_2} = - \bar\partial \rho^{U_1}_{U_2}.
\end{equation}
In the last term we also used the fact that $\partial \rho^{U_0}_{U_1}$ and $\partial \rho^{U_2}_{U_3}$ have disjoint supports with $\bar \partial \rho^{U_1}_{U_2}$. The integral can be evaluated using Lemma \ref{lem: residue integral} to give $2 \pi \ii \, n \, \delta_{m+n,0}$. For the first term on the last line we recall the minus sign in the definition \eqref{up arrow def U V}. Using \eqref{KM algebra vertex modes def a} and the definition of the map \eqref{cal Y map defined}, the resulting expression on the right hand side of \eqref{commutator check 3} is therefore precisely the right hand side of \eqref{commutator check a} in the case at hand.

\paragraph{Virasoro:} Recall from \S\ref{sec: main examples} that in this case the $2$-cocycle \eqref{cocycle Virasoro} explicitly depends on the coordinate used. Since we are working in the local coordinate $\xi_p$ we therefore use the $2$-cocycle $\alpha_{\xi_p}$ relative to this coordinate, i.e. we work in the unital local Lie algebra $\L^\Sigma_{\alpha_{\xi_p}}(U)$.

Here $(\cdot)_{\xi_p} : \fl \SimTo \fl_{\xi_p}$ is given by $\Omega \mapsto -\partial_{\xi_p}$ so that with $\a = \b = \Omega \in \fl$ we find
\begin{align*}
&- s \Big[ \partial_{\xi_p} \otimes \xi_p^n \big( \rho^{U_0}_{U_1} - \rho^{U_2}_{U_3} \big), \partial_{\xi_p} \otimes \xi_p^m \bar\partial \rho^{U_1}_{U_2} \Big]_{\alpha_{\xi_p}}\\
&\;= s\Big( \partial_{\xi_p} \otimes (m-n) \xi_p^{m+n-1} \bar\partial \rho^{U_1}_{U_2} \Big) - \bar\partial \Big( s \big( \partial_{\xi_p} \otimes \xi_p^{m+n} \partial_{\xi_p} \rho^{U_1}_{U_2} \big) \Big)\\
&\qquad\qquad\qquad\qquad\qquad\qquad\qquad\qquad\qquad + \frac{c}{24 \pi \ii} \int_{U_+} \partial( n \xi_p^{n-1}) \wedge m \xi_p^{m-1} \bar\partial \rho^{U_1}_{U_2}\\
&\;= (m-n) s\big( \! - \partial_{\xi_p} \otimes \lceil \xi_p^{m+n-1} \rceil^{U_1}_{U_2} \big) + \frac{m (m-1)(m-2)}{12} c \, \delta_{m+n,2} - \bar\partial \Big( s \big( \partial_{\xi_p} \otimes \xi_p^{m+n} \partial_{\xi_p} \rho^{U_1}_{U_2} \big) \Big),
\end{align*}
where in the first step we used again the identity \eqref{rho support identity}. Here, in the term coming from the Lie bracket \eqref{Lie bracket Virasoro} we note that the term involving $\partial_{\xi_p} (\rho^{U_0}_{U_1} - \rho^{U_2}_{U_3})$ vanishes using the fact that it disjoint support with $\bar \partial \rho^{U_1}_{U_2}$. The evaluation of the term coming from the $2$-cocycle \eqref{cocycle Virasoro} is similar to the above Kac-Moody case but for the terms involving $\partial_{\xi_p} (\rho^{U_0}_{U_1} - \rho^{U_2}_{U_3})$ we used the additional fact that this has disjoint support with $\bar \partial \rho^{U_1}_{U_2}$ and for the term involving $\bar \partial (\partial_{\xi_p} \rho^{U_1}_{U_2})$ we note that this vanishes using Stokes's theorem, cf. the proof of Lemma \ref{lem: residue integral}. In the second step the integral is then evaluated using Lemma \ref{lem: residue integral}. The exact term on the last line drops out in the cohomology class \eqref{commutator check 3} since $\partial_{\xi_p} \rho^{U_1}_{U_2}$ has disjoint support with the remaining terms in the ellipses. Comparing the above with \eqref{Virasoro algebra vertex modes def a} using the definition of the map \eqref{cal Y map defined}, the resulting expression on the right hand side of \eqref{commutator check 3} also agrees with the right hand side of \eqref{commutator check a} in the present case.

\paragraph{$\beta\gamma$ system:} Recall that here the map $(\cdot)_{\xi_p} : \fl \SimTo \fl_{\xi_p}$ sends $\beta \mapsto 1$ and $\gamma \mapsto \d \xi_p$ so that with $\a = \beta$ and $\b = \gamma$ in $\fl$ we find
\begin{align*}
- s \Big[ \d \xi_p \otimes \xi_p^n \big( \rho^{U_0}_{U_1} - \rho^{U_2}_{U_3} \big), 1 \otimes \xi_p^m \bar\partial \rho^{U_1}_{U_2} \Big]_\alpha &=  \frac{1}{2 \pi \ii} \int_{U_+} \xi_p^{m+n} \d \xi_p \wedge \bar\partial \rho^{U_1}_{U_2} = \delta_{m+n,-1}.
\end{align*}
where in the first step we used the $2$-cocycle \eqref{cocycle beta gamma} and again the identity \eqref{rho support identity}.
Comparing this with \eqref{beta gamma algebra vertex modes def a} using the definition of the map \eqref{cal Y map defined}, the resulting expression on the right hand side of \eqref{commutator check 3} again agrees with the right hand side of \eqref{commutator check a}, as required.

\medskip

Finally, in all three cases the first identity in \eqref{commutator check d} similarly follows using the fact that $\lceil \xi^k_p \rceil^V_W$ is $\bar\partial$-exact when $k \in \ZZ_{\geq 0}$.
\end{proof}
\end{proposition}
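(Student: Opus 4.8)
The plan is to separate ``well defined'' into two statements and prove each in turn. First, that the class $A^{\xi_p}_U$ defined by \eqref{cal Y map defined b} is independent of the auxiliary data, namely of the bump functions $\rho^{U_{i-1}}_{U_i}$, $\rho^{V_{j-1}}_{V_j}$ and of the nested sequences $p_+ \in U_0 \Subset \cdots \Subset U_r \subset U_+$ and $p_- \in V_0 \Subset \cdots \Subset V_{\bar r} \subset U_-$. Second, that the resulting assignment on the spanning expressions \eqref{gen state Vkx} respects the relations among them, so that it descends to a linear map on $\hVV^{\fl,\alpha}$; since $\hVV^{\fl,\alpha}$ is an induced module this reduces to the identities \eqref{commutator check}, where \eqref{commutator check a}--\eqref{commutator check c} encode reorderings of the mode generators and \eqref{commutator check d} encodes the fact that the vacuum is annihilated by the nonnegative modes.

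For the first statement, I would set up once and for all the observation that the bump functions can always be chosen so that the supports of the $\bar\partial\rho$'s are pairwise disjoint, as in \eqref{disjoint supports}. Under this hypothesis the Chevalley-Eilenberg differential $\d_\CE$ acting on the element of $\bCE_\chain\big(\L^\Sigma_\alpha(U)\big)$ underlying \eqref{cal Y map defined b} receives no contribution from its bracket part $\d_{[\cdot,\cdot]_\alpha}$, so that one may work modulo $\bar\partial$-exact terms, factor by factor. By Lemma \ref{lem: up arrow indep choices} a change of bump function alters a single factor $\lceil \xi_p^{-m_i}\rceil^{U_{i-1}}_{U_i}$ by a $\bar\partial$-exact form, and any two choices of nested sequence are connected by finitely many such moves; hence the $0^{\rm th}$ cohomology class is unaffected. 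This disjoint-support bookkeeping is the mechanism that will be reused throughout.

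For the second statement, by the symmetry of the construction under interchange of the chiral and anti-chiral factors (which live on the disjoint components $U_\pm$, and over which the symmetric group acts trivially on the Dolbeault side) it suffices to treat \eqref{commutator check a} and the first relation in \eqref{commutator check d}; the remaining cases are proved identically. For \eqref{commutator check a} I would fix a nested chain $p_+ \in U_0 \Subset U_1 \Subset U_2 \Subset U_3 \subset U_+$, represent the two orderings using the subchains $(U_0,U_1,U_2)$ and $(U_1,U_2,U_3)$, and combine the two terms into a single cohomology class \eqref{commutator check 2}, using that $\xi_p^n$ is holomorphic on the support of $\rho^{U_0}_{U_1} - \rho^{U_2}_{U_3}$. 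Computing $\d_\CE$ of the evident primitive then shows, as in \eqref{commutator check 3}, that the reordering difference equals minus the class of a single bracket $\big[\b_{\xi_p}\otimes \xi_p^n(\rho^{U_0}_{U_1}-\rho^{U_2}_{U_3}),\, \a_{\xi_p}\otimes \xi_p^m\bar\partial\rho^{U_1}_{U_2}\big]_\alpha$ in $\L^\Sigma_\alpha(U)$. It then remains to evaluate this bracket from the explicit formulae of \S\ref{sec: main examples}, collapsing $\rho^{U_0}_{U_1}-\rho^{U_2}_{U_3}$ against $\bar\partial\rho^{U_1}_{U_2}$ by the support identity \eqref{rho support identity} and evaluating the integral coming from the $2$-cocycle by Lemma \ref{lem: residue integral}, and to match the outcome with the Lie bracket of $\hg$; this I would carry out case by case for Kac-Moody, Virasoro and $\beta\gamma$, comparing with \eqref{KM algebra vertex modes def}, \eqref{Virasoro algebra vertex modes def} and \eqref{beta gamma algebra vertex modes def} respectively. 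The first relation in \eqref{commutator check d} is then immediate: for $k \geq 0$ the function $\xi_p^k$ is holomorphic everywhere, so $\lceil \xi_p^k\rceil^V_W = -\xi_p^k\bar\partial\rho^V_W$ is $\bar\partial$-exact and the corresponding factor of \eqref{cal Y map defined b} is $\d_\CE$-exact, hence the class vanishes.

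The step I expect to cause the most trouble is the Virasoro case of the bracket computation: the $2$-cocycle $\alpha_{\xi_p}$ genuinely depends on the coordinate, the commutator of Dolbeault-form-valued vector fields contributes an extra term, and an additional $\bar\partial$-exact piece of the shape $\bar\partial\big(s(\partial_{\xi_p}\otimes \xi_p^{m+n}\partial_{\xi_p}\rho^{U_1}_{U_2})\big)$ appears, which has to be argued away by noting that $\partial_{\xi_p}\rho^{U_1}_{U_2}$ has support disjoint from the remaining factors. More generally, the recurring difficulty throughout is to keep all the disjoint-support conditions under control so that $\d_{[\cdot,\cdot]_\alpha}$ never produces spurious contributions; by comparison the Kac-Moody and $\beta\gamma$ computations are short.
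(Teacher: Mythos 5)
Your proposal is correct and follows essentially the same route as the paper's proof: independence of the bump functions and nested sequences via Lemma \ref{lem: up arrow indep choices} under the disjoint-support condition \eqref{disjoint supports}, reduction to \eqref{commutator check a} and \eqref{commutator check d}, the four-fold nested chain and the $\d_\CE$ computation leading to \eqref{commutator check 3}, and the case-by-case evaluation of the bracket using \eqref{rho support identity} and Lemma \ref{lem: residue integral}, including the treatment of the extra $\bar\partial$-exact term in the Virasoro case. No gaps to report.
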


\begin{proposition} \label{prop: Y map injective}
The linear map \eqref{cal Y map defined} is injective and for every inclusion of open subsets $p \in U \subset V \subset \Sigma^\circ$ in $\Top(\Sigma)$ covered by the coordinate $\xi$, we have the commutative diagram
\begin{equation} \label{m circ Y comm diag}
\begin{tikzcd}[column sep=2mm]
& \hVV^{\fl,\alpha} \arrow[ld, "(\cdot)^{\xi_p}_U"', hook'] \arrow[rd, "(\cdot)^{\xi_p}_V", hook]\\
\U\L^\Sigma_\alpha(U) \arrow[rr, "m_{U, V}"'] & & \U\L^\Sigma_\alpha(V)
\end{tikzcd}
\end{equation}
\begin{proof}
To begin with, we note that the Poincar\'e-Birkhoff-Witt theorem yields an isomorphism $\text{PBW} : \hVV^{\fl, \alpha} \SimTo \Sym(\hg_- \oplus \hbg_-) \vac$ which is given by the inverse of the total symmetrisation map $\Sym(\hg_- \oplus \hbg_-) \vac \SimTo U(\hg_- \oplus \hbg_-) \vac$, or more explicitly by writing a given element in $\hVV^{\fl, \alpha}$ as a linear combination of totally symmetrised monomials acting on $\vac$ which then maps canonically to $\Sym(\hg_- \oplus \hbg_-) \vac$.
Applying the functor \eqref{bSym functor def} to the strong deformation retract in $\udgVec_\CC$ from Theorem \ref{thm: cohomology gSigma}, where here ${\bm\xi} = (\xi, \bar\xi)$, we obtain a strong deformation retract
\begin{equation} \label{SDR Sym}
\begin{tikzcd}
\Sym\Big( \fl \otimes \O_\infty\big( \xi(U_+)^{\rm c} \big) \oplus \fl \otimes \O_\infty\big( \bar \xi(U_-)^{\rm c} \big) \Big) \arrow[r, "I_U", shift left=1mm] & \bSym_\chain \big( \L^\Sigma_\kappa(U) \big) \arrow[l, "P_U", shift left=1mm] \arrow["H_U"', out=-15,in=15,distance=10mm]
\end{tikzcd}
\end{equation}
in $\dgVec_\CC$, where the left hand side is a \dg{} vector space concentrated in degree $0$. Here $P_U$ and $I_U$ are morphisms of $\dgVec_\CC$ whose components in $\Sym$-degree $n \in \ZZ_{\geq 0}$ are given explicitly by $P_U^n \coloneqq p_U[1]^{\otimes n}$ and $I_U^n \coloneqq i_U[1]^{\otimes n}$, respectively. An explicit expression for the homotopy $H_U$ can be found, for instance, in \cite[(4.11b)]{Bruinsma:2023kip}.
In order to perturb the differential on the right hand side of \eqref{SDR Sym} by $\d_{[\cdot, \cdot]}$ we apply the homological perturbation lemma \cite{Crainic:2004bxw, LodayVallette}, noting that this perturbation is small since it lowers by $1$ the $\Sym$-degree which is bounded below by $0$. We obtain the perturbed strong deformation retract
\begin{equation*}
\begin{tikzcd}
\Sym\Big( \fl \otimes \O_\infty\big( \xi(U_+)^{\rm c} \big) \oplus \fl \otimes \O_\infty\big( \bar \xi(U_-)^{\rm c} \big) \Big) \arrow[r, "I_U", shift left=1mm] & \bCE_\chain \big( \L^\Sigma_\kappa(U) \big) \arrow[l, "\widehat{P}_U", shift left=1mm] \arrow["\widehat{H}_U"', out=-15,in=15,distance=10mm]
\end{tikzcd}
\end{equation*}
where $\widehat{P}_U \coloneqq \sum_{j \geq 0} P_U (\d_{[\cdot, \cdot]} H_U)^j$ and the expression for the homotopy $\widehat{H}_U$ will not be needed. Notice that neither $I_U$ nor the differential on the left hand side are perturbed, for the same reason as in the proof of \cite[Proposition 4.4]{Bruinsma:2023kip}. Since $\widehat{P}_U$ is a quasi-isomorphism, applying the $0^{\rm th}$-cohomology functor we obtain an isomorphism
\begin{equation} \label{Ug(U) isomorphism}
H^0 \widehat{P}_U : \U\L^\Sigma_\kappa(U) \overset{\cong}\longrightarrow \Sym\Big( \fl \otimes \O_\infty\big( \xi(U_+)^{\rm c} \big) \oplus \fl \otimes \O_\infty\big( \bar \xi(U_-)^{\rm c} \big) \Big).
\end{equation}
We will show that we have a commutative diagram
\begin{equation} \label{diagram for injective}
\begin{tikzcd}
\hVV^{\fl,\alpha} \arrow[r, "(\cdot)^{\xi_p}_U"] \arrow[d, "\cong", "\text{PBW}"'] & \U\L^\Sigma_\kappa(U) \arrow[d, "\cong"', "H^0 \widehat{P}_U"]\\
\Sym(\hg_- \oplus \hbg_-) \vac \arrow[r, "\iota"', hook] & \Sym\Big( \fl \otimes \O_\infty\big( \xi(U_+)^{\rm c} \big) \oplus \fl \otimes \O_\infty\big( \bar \xi(U_-)^{\rm c} \big) \Big)
\end{tikzcd}
\end{equation}
in $\Vec_\CC$, where the bottom morphism is obtained by applying the $\Sym$-functor to the injection \eqref{level 1 injection}. Since this bottom morphism is injective, see Remark \ref{rem: Sym injections} below, it will then follow that the linear map \eqref{cal Y map defined} is also injective. By the Poincar\'e-Birkhoff-Witt theorem any state in $\hVV^{\fl, \alpha}$ can be written as a linear combination of totally symmetrised monomials acting on $\vac$, so it suffices to consider a single such monomial, say a totally symmetrised version of \eqref{gen state Vkx}, i.e.
\begin{equation} \label{symmetrised state}
\frac{1}{r! \bar r!} \sum_{\sigma \in S_r} \sum_{\bar \sigma \in S_{\bar r}} \a^{\sigma(r)}_{(-m_{\sigma(r)})} \ldots \a^{\sigma(1)}_{(-m_{\sigma(1)})} \bar \b^{\bar\sigma(\bar r)}_{(-n_{\bar\sigma(\bar r)})} \ldots \bar \b^{\bar\sigma(1)}_{(-n_{\bar\sigma(1)})} \vac \in \hVV^{\fl, \alpha}.
\end{equation}
We can make a further important simplification, similar to \cite[\S4.2]{Bruinsma:2023kip}. Since we are focusing on totally symmetrised monomials, it is sufficient to consider ones built out of a single pair of elements in $\hg_-$ and $ \hbg_-$. Namely, we may focus on states in $\hVV^{\fl, \alpha}$ of the form
\begin{equation} \label{special state}
A \coloneqq \mathfrak{A}^r \mathfrak{B}^{\bar r} \vac, \qquad
\mathfrak{A} = \sum_{i=1}^r t_i \a^i_{(-m_i)} \in \hg_-, \quad
\mathfrak{B} = \sum_{j=1}^{\bar r} \bar t_j \bar \b^j_{(-n_j)} \in \hbg_-,
\end{equation}
where $t_i, \bar t_j \in \CC$, $\a^i, \b^j \in \fl$ and $m_i, n_j \in \ZZ_{\geq 1}$ for each $i \in \{1, \ldots, r\}$ and $j \in \{ 1, \ldots, \bar r\}$ with $r, \bar r \in \ZZ_{\geq 0}$. 
Indeed, the more general state \eqref{symmetrised state} is recovered by polarization, i.e. it reads
\begin{equation*}
\frac{1}{r! \bar r!} \frac{\partial^r}{\partial t_1 \cdots \partial t_r} \bigg|_{t_1, \ldots, t_r = 0} \frac{\partial^{\bar r}}{\partial \bar t_1 \cdots \partial \bar t_{\bar r}} \bigg|_{\bar t_1, \ldots, \bar t_{\bar r} = 0} \bigg( \sum_{i=1}^r t_i \a^i_{(-m_i)} \bigg)^r \bigg( \sum_{j=1}^{\bar r} \bar t_j \bar \b^j_{(-n_j)} \bigg)^{\bar r} \vac.
\end{equation*}
Applying the isomorphism on the left hand side of \eqref{diagram for injective} to the special state \eqref{special state} then gives $\text{PBW}(A) = \mathfrak{A}^r \mathfrak{B}^{\bar r} \vac \in \Sym(\hg_- \oplus \hbg_-) \vac$. Subsequently applying the injection at the bottom of the diagram \eqref{diagram for injective} gives an element in the symmetric algebra on the bottom right corner, given by realising $\mathfrak A$ and $\mathfrak B$ in terms of germs of holomorphic functions on $\xi(U_+)^c$ and $\bar \xi(U_-)^c$, respectively, i.e. in terms of the notation after \eqref{level 1 injection} we have
\begin{equation} \label{bottom left of diagram}
\iota \big( \text{PBW}(A) \big) = \bigg( \sum_{i=1}^r t_i \a^i \otimes \lambda^{-m_i} \bigg)^r \bigg( \sum_{j=1}^{\bar r} \bar t_j \bar \b^j \otimes \mu^{-n_j} \bigg)^{\bar r}.
\end{equation}
Going instead along the top right of the diagram \eqref{diagram for injective}, applying the first map \eqref{cal Y map defined} we find
\begin{equation*}
A^{\xi_p}_U = \Bigg[ s\bigg( \sum_{i=1}^r t_i \a^i_{\xi_p} \otimes \lceil \xi_p^{-m_i} \rceil^{U_{i-1}}_{U_i} \bigg)^r s\bigg( \sum_{i=1}^{\bar r} \bar t_j \b^j_{\bar\xi_p} \otimes \lceil \bar \xi_p^{-n_j} \rceil^{V_{j-1}}_{V_j} \bigg)^{\bar r} \Bigg]_U,
\end{equation*}
Finally, one checks that upon applying the isomorphism $H^0 \widehat{P}_U$ to the above, we obtain exactly the same result as in \eqref{bottom left of diagram}. For this one needs to consider the explicit form of $H_U$ given in \cite[(4.11b)]{Bruinsma:2023kip}. One can make a further simplification by choosing the smooth bump functions $\rho_f \in \Omega^{0,0}_c(U_+)$ in \eqref{up arrow def}, entering the definition of the strong deformation retract in Theorem \ref{thm: cohomology gSigma}, for every $f \in \O_\infty(\CP \setminus \{0\})$ to coincide and to be equal to $1$ on $U_r$, and likewise for the smooth bump functions $\rho_g \in \Omega^{0,0}_c(U_-)$. We then observe that only the $j=0$ term in the sum defining $\widehat{P}_U$ contributes, as a consequence of the very special form of the state $A$ in \eqref{special state}.

The commutativity of the diagram \eqref{m circ Y comm diag} is immediate from the explicit description of the morphisms $(\cdot)^{\xi_p}_U$, $(\cdot)^{\xi_p}_V$ and of the factorisation products $m_{U, V}$ described in \S\ref{sec: twisted PFE}.
\end{proof}
\end{proposition}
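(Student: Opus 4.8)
\section*{Proof plan for Proposition~\ref{prop: Y map injective}}

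The plan is to reduce the injectivity of $(\cdot)^{\xi_p}_U$ to that of a manifestly injective map between symmetric algebras, by combining the Poincar\'e--Birkhoff--Witt theorem with the strong deformation retract of Theorem~\ref{thm: cohomology gSigma} and the homological perturbation lemma. First I would use PBW to identify $\hVV^{\fl,\alpha} \cong \Sym(\hg_- \oplus \hbg_-)\vac$ via the inverse of the total symmetrisation map. Next, applying the functor~\eqref{bSym functor def} to the strong deformation retract between $\L_\O^{\bm\xi}(\pi^{-1}(U))[-1]$ and $\L^\Sigma_\alpha(U)$ of Theorem~\ref{thm: cohomology gSigma} (with $\bm\xi = (\xi,\bar\xi)$) yields a strong deformation retract in $\dgVec_\CC$ between the degree-zero $\dg{}$ vector space $\Sym\big(\fl\otimes\O_\infty(\xi(U_+)^{\rm c})\oplus\fl\otimes\O_\infty(\bar\xi(U_-)^{\rm c})\big)$ and $\bSym_\chain(\L^\Sigma_\alpha(U))$, with section and retraction given degreewise by tensor powers of $i_U[1]$ and $p_U[1]$ and an explicit homotopy $H_U$ (as in \cite[(4.11b)]{Bruinsma:2023kip}).

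I would then perturb the differential on $\bSym_\chain(\L^\Sigma_\alpha(U))$ by the bracket contribution $\d_{[\cdot,\cdot]}$, turning the right-hand side into the twisted Chevalley--Eilenberg complex $\bCE_\chain(\L^\Sigma_\alpha(U))$. Since $\d_{[\cdot,\cdot]}$ lowers the $\Sym$-degree by one and this degree is bounded below by $0$, the perturbation is small and the homological perturbation lemma produces a perturbed strong deformation retract whose retraction is $\widehat P_U = \sum_{j\ge0} P_U(\d_{[\cdot,\cdot]} H_U)^j$, while neither the section $I_U$ nor the (degree-zero) left-hand differential get perturbed, exactly as in the chiral case. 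Taking $0^{\rm th}$ cohomology of the quasi-isomorphism $\widehat P_U$ gives an isomorphism $H^0\widehat P_U : \U\L^\Sigma_\alpha(U) \SimTo \Sym\big(\fl\otimes\O_\infty(\xi(U_+)^{\rm c})\oplus\fl\otimes\O_\infty(\bar\xi(U_-)^{\rm c})\big)$. The crux of the proof is then to show that the square built from $(\cdot)^{\xi_p}_U$, $H^0\widehat P_U$, the PBW isomorphism, and the map $\iota$ obtained by applying $\Sym$ to the level-one injection~\eqref{level 1 injection} commutes; since $\iota$ is injective (Remark~\ref{rem: Sym injections}), injectivity of $(\cdot)^{\xi_p}_U$ will follow.

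To verify commutativity of this square I would first use PBW to reduce to totally symmetrised monomials, and then by polarisation further to the very special states $A = \mathfrak A^r\mathfrak B^{\bar r}\vac$ with $\mathfrak A\in\hg_-$ and $\mathfrak B\in\hbg_-$ each a single fixed combination of loop generators, following the simplification of \cite[\S4.2]{Bruinsma:2023kip}. For such $A$ one computes $A^{\xi_p}_U$ directly from~\eqref{cal Y map defined} and $\iota(\text{PBW}(A))$ directly from~\eqref{bottom left of diagram}, then evaluates $H^0\widehat P_U$ on $A^{\xi_p}_U$ using the explicit form of $H_U$; choosing all the bump functions $\rho_f,\rho_g$ from~\eqref{up arrow def} to coincide and to equal $1$ on the outermost nested open set forces only the $j=0$ term of $\widehat P_U$ to contribute, thanks to the product structure of $A$, and one reads off equality with~\eqref{bottom left of diagram}. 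The triangle~\eqref{m circ Y comm diag} is then immediate: by~\S\ref{sec: twisted PFE} the factorisation product $m_{U,V}$ is induced by the extension-by-zero morphism $\L^\Sigma_\alpha(U)\to\L^\Sigma_\alpha(V)$, and since both $(\cdot)^{\xi_p}_U$ and $(\cdot)^{\xi_p}_V$ are computed from the same representatives~\eqref{up arrow def U V} the diagram commutes on the nose. The main obstacle I anticipate is the bookkeeping in the perturbation step --- confirming that $\d_{[\cdot,\cdot]}$ really is a small perturbation in this non-chiral setting, that $I_U$ and the target-side differential survive unperturbed, and, most delicately, that the clever choice of bump functions collapses the a priori infinite sum $\sum_j P_U(\d_{[\cdot,\cdot]}H_U)^j$ to its $j=0$ term on the special states $A$, which is precisely what renders $H^0\widehat P_U$ computable on the image of $(\cdot)^{\xi_p}_U$.
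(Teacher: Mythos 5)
Your proposal follows the paper's own proof essentially step for step: PBW identification with the symmetric algebra, applying the $\bSym_\chain$ functor to the strong deformation retract of Theorem \ref{thm: cohomology gSigma}, perturbing by $\d_{[\cdot,\cdot]}$ via the homological perturbation lemma to obtain $\widehat P_U$ and the isomorphism \eqref{Ug(U) isomorphism}, verifying commutativity of the square \eqref{diagram for injective} by polarisation-reduction to the special states \eqref{special state} with the bump-function choice collapsing $\widehat P_U$ to its $j=0$ term, and deducing the triangle \eqref{m circ Y comm diag} from the extension-morphism description of $m_{U,V}$. The approach and the key reductions are the same as in the paper, so there is nothing to add.
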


\begin{remark} \label{rem: Sym injections}
In general, the symmetric algebra functor $\Sym : \mathsf{Mod}_R \to \mathsf{g}^{\geq 0}\mathsf{CAlg}_R$ from the category $\mathsf{Mod}_R$ of modules over a commutative ring $R$ to the category $\mathsf{g}^{\geq 0}\mathsf{CAlg}_R$ of positively graded commutative algebras over $R$ is right exact and hence preserves surjections. However, it is not left exact and so it does \emph{not}, in general, preserve injections. In the present setting we are working over $R = \CC$ and in this case the $\Sym$ functor \emph{does} preserve injections. Indeed,  every injective linear map $i : V \hookrightarrow W$ between complex vector spaces has a left inverse, i.e. a linear map $r : W \to V$ such that $r i = \id_V$ and so applying the functor $\Sym$ we deduce that $\Sym(i) : \Sym V \to \Sym W$ is also injective.
\end{remark}

\subsubsection{Local copies of \texorpdfstring{$\hVV^{\fl,\alpha}$}{F}}

Let $\Top(\Sigma)_p$ be the category of neighbourhoods $U \subset \Sigma^\circ$ of $p \in \Sigma^\circ$, 
with inclusions $U \subseteq V$ as morphisms $U \to V$.
It is a subcategory of the multicategory $\Top(\Sigma)^{\sqcup}$ from \S\ref{sec: twisted PFE}, so that we may restrict the multifunctor $\U\L^\Sigma_\alpha : \Top(\Sigma)^{\sqcup} \to \Vec_\CC^{\otimes}$ to this subcategory to form a diagram $(\U\L^\Sigma_\alpha)_p : \Top(\Sigma)_p \to \Vec_\CC$ of shape $\Top(\Sigma)_p$.
Since the category $\Vec_\CC$ is complete and $\Top(\Sigma)_p$ is small, we can form the limit $\lim \, (\U\L^\Sigma_\alpha)_p$ in $\Vec_\CC$, which for any open subset $U$ in $\Top(\Sigma)_p$ comes with a canonical linear map $m_{p, U} : \lim \, (\U\L^\Sigma_\alpha)_p \to \U\L^\Sigma_\alpha(U)$. We also have such a map for any open subset $U \subset \Sigma$ in $\Top(\Sigma)$ containing $p$ by post-composing the latter with a factorisation product.

By the universal property of limits there is a unique injective linear map
\begin{equation} \label{Y at p}
(\cdot)^\xi_p : \hVV^{\fl,\alpha} \longhookrightarrow \lim \, (\U\L^\Sigma_\alpha)_p, \qquad
A \longmapsto A^\xi_p
\end{equation}
for $p \in \Sigma^\circ$ such that the following diagram
\begin{equation*}
\begin{tikzcd}[column sep=2mm]
& \hVV^{\fl,\alpha} \arrow[ldd, "(\cdot)^{\xi_p}_U"', hook', bend right=35] \arrow[rdd, "(\cdot)^{\xi_p}_V", hook, bend left=35] \arrow[d, "\exists! (\cdot)^\xi_p", hook, dashed] &\\
& \lim \, (\U\L^\Sigma_\alpha)_p \arrow[ld, "m_{p,U}"']
\arrow[rd, "m_{p, V}"] &\\
\U\L^\Sigma_\alpha(U) \arrow[rr, "m_{U, V}"'] & & \U\L^\Sigma_\alpha(V)
\end{tikzcd}
\end{equation*}
is commutative, for every inclusion of open neighbourhoods $p \in U \subset V \subset \Sigma^\circ$.

\begin{proposition} \label{prop: indep loc coord}
The image of the linear map \eqref{Y at p} does not depend on the local coordinate $\xi$ used on a neighbourhood of $p \in \Sigma^\circ$. We denote this image as $\hV^{\fl,\alpha}_p \in \Vec_\CC$.
\begin{proof}
Let $\xi$ and $\eta$ be two local coordinates on a neighbourhood $U$ of $p$. It suffices to show that for any $A \in \hVV^{\fl,\alpha}$, the vector $A^{\xi_p}_U$ also lies in the image of the linear map \eqref{Y at p} defined relative to the local coordinate $\eta$. And by linearity it is enough to consider states of the form \eqref{gen state Vkx}. Its image $A^{\xi_p}_U$ is given by \eqref{cal Y map defined b}. Writing $\xi = \varrho^{\eta \to \xi} \circ \eta$ for some holomorphic function $\varrho^{\eta \to \xi} : \eta(U) \to \xi(U)$, so that also $\bar \xi = \bar \varrho^{\eta \to \xi} \circ \bar \eta$, we have expansions (see \eqref{hat varrho def} later for the exact expressions)
\begin{align*}
\xi_p^{-m_i} = \sum_{k \geq 0} \alpha_{k, m_i} \, \eta_p^{-m_i + k}, \qquad
\bar\xi_p^{-n_j} = \sum_{\ell \geq 0} \overline{\alpha_{\ell, n_j}} \, \bar \eta_p^{-n_j + \ell}
\end{align*}
for all $i \in \{1, \ldots, r \}$, $j \in \{1, \ldots, \bar r \}$. Here the coefficients $\alpha_{k,n}$ are complex numbers for every $k \in \ZZ_{\geq 0}$, $n \in \ZZ_{\geq 1}$ whose exact expressions will not be needed for the argument below, but for instance we have $\alpha_{0, n} = (\varrho^{\eta \to \xi})'(\eta(p))^{-n}$.
Substituting these expansions into \eqref{cal Y map defined b} we obtain the formal expression
\begin{align} \label{A rewrite as B}
A^{\xi_p}_U &= \bigg( \sum_{k_r \geq 0} \alpha_{k_r, m_r} \, \a^r_{(-m_r+k_r)} \bigg) \ldots \bigg( \sum_{k_1 \geq 0} \alpha_{k_1, m_1} \, \a^1_{(-m_1+k_1)} \bigg) \notag\\
&\qquad\qquad \times \bigg( \sum_{\ell_{\bar r} \geq 0} \overline{\alpha_{\ell_{\bar r}, n_{\bar r}}} \, \bar \b^{\bar r}_{(-n_{\bar r}+\ell_{\bar r})} \bigg) \ldots \bigg( \sum_{\ell_1 \geq 0} \overline{\alpha_{\ell_1, n_1}} \, \bar \b^1_{(-n_1 + \ell_1)} \bigg) \vac^{\eta_p}_U,
\end{align}
It remains to observe that the right hand side is of the form $B^{\eta_p}_U$ for some well defined state $B \in \hVV^{\fl,\alpha}$ since only finitely many terms contribute from each of the infinite sums. Indeed, this follows from repeatedly applying the identities \eqref{commutator check} from Proposition \ref{prop: Y map def}.
\end{proof}
\end{proposition}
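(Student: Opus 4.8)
The plan is to establish the two inclusions $\im (\cdot)^\xi_p \subseteq \im (\cdot)^\eta_p$ coming from two local coordinates $\xi,\eta$ on a neighbourhood of $p$; by the symmetry $\xi \leftrightarrow \eta$ it then suffices to prove one of them. Fix a neighbourhood $U \subset \Sigma^\circ$ of $p$ carrying both coordinates, write $\xi = \varrho^{\eta \to \xi} \circ \eta$ for the transition biholomorphism, and shrink $U$ if necessary so that this transition map is well behaved. Since $(\cdot)^\xi_p$ is linear with image a subspace, and since, by the Poincar\'e--Birkhoff--Witt theorem, the states of the form \eqref{gen state Vkx} span $\hVV^{\fl,\alpha}$, it is enough to exhibit, for each such state $A$, a state $B \in \hVV^{\fl,\alpha}$ depending only on $A$, $\xi$, $\eta$, $p$ with $A^{\xi_p}_U = B^{\eta_p}_U$ for all sufficiently small $U$. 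Since the extension morphisms satisfy $m_{U',U}\circ(\cdot)^{\xi_p}_{U'} = (\cdot)^{\xi_p}_U$, cf. \eqref{m circ Y comm diag}, and since an element of $\lim\,(\U\L^\Sigma_\alpha)_p$ is determined by its images under all the cone maps $m_{p,U}$, such an identity promotes to $A^\xi_p = B^\eta_p$, which gives the desired inclusion.

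To build $B$, I would perform the coordinate change directly in the explicit formula \eqref{cal Y map defined b} for $A^{\xi_p}_U$. On a small punctured neighbourhood of $p$ the function $\xi_p^{-m_i}$ --- together with the holomorphic, non-vanishing Jacobian factor relating $(\cdot)_{\xi_p}$ to $(\cdot)_{\eta_p}$, which must be tracked in the Virasoro and $\beta\gamma$ cases --- expands as a convergent power series $\sum_{k \ge 0} \alpha_{k,m_i}\,\eta_p^{\,-m_i+k}$ whose leading coefficient is nonzero and whose coefficients depend only on the germ of $\varrho^{\eta \to \xi}$ at $\eta(p)$ (their precise form being immaterial here), and similarly $\bar\xi_p^{\,-n_j} = \sum_{\ell \ge 0} \overline{\alpha_{\ell,n_j}}\,\bar\eta_p^{\,-n_j+\ell}$. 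Choosing the nested sequences in \eqref{cal Y map defined b} small enough that all the bump functions are supported inside the region of convergence --- legitimate since the resulting cohomology class is independent of those choices (Proposition \ref{prop: Y map def}) --- and substituting, each chiral factor $s\big(\a^i_{\xi_p}\otimes\lceil\xi_p^{-m_i}\rceil^{U_{i-1}}_{U_i}\big)$ becomes the uniformly convergent sum $\sum_{k_i \ge 0} \alpha_{k_i,m_i}\, s\big(\a^i_{\eta_p}\otimes\lceil\eta_p^{\,-m_i+k_i}\rceil^{U_{i-1}}_{U_i}\big)$ of honest smooth forms (and likewise for the anti-chiral factors), so expanding the product yields an \emph{a priori} infinite sum of products of the building blocks of \eqref{cal Y map defined b}.

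The step I expect to be the main obstacle is showing that this infinite sum collapses, in $H^0\bCE_\chain\big(\L^\Sigma_\alpha(U)\big)$, to the finite sub-sum indexed by $0 \le k_i \le m_i-1$ and $0 \le \ell_j \le n_j-1$. The key observation is that for $k \ge 0$ one has, cf. \eqref{up arrow def U V}, $\lceil \eta_p^{\,k}\rceil^V_W = -\eta_p^{\,k}\,\bar\partial\rho^V_W = -\bar\partial\big(\eta_p^{\,k}\rho^V_W\big)$ with $\eta_p^{\,k}\rho^V_W \in \Omega^{0,0}_c(W)$, so every offending factor is of the form $s(\a^i_{\eta_p}\otimes\bar\partial\phi)$ for a compactly supported smooth function $\phi$ whose $\bar\partial$ is supported inside some $\supp\bar\partial\rho^{U_{i-1}}_{U_i}$. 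Since $\d_\CE$ is a graded derivation whose linear part is $\bar\partial$, since the remaining factors in the product are $\bar\partial$-closed (being $(0,1)$-forms on a curve), and --- crucially --- since the nested structure forces the bump functions to have pairwise disjoint supports, cf. \eqref{disjoint supports}, so that the $\d_{[\cdot,\cdot]_\alpha}$-part of $\d_\CE$ never couples the offending factor to anything else, any such term equals $\pm\d_\CE$ of a Chevalley--Eilenberg cochain and hence vanishes in $H^0$. (Alternatively, one may keep the full sums and normal-order using the commutation identities \eqref{commutator check} of Proposition \ref{prop: Y map def}, the non-negative loop modes killing $\vac$; the truncation viewpoint simply makes termination manifest.) What remains is the finite sum, over $0 \le k_i \le m_i-1$ and $0 \le \ell_j \le n_j-1$, of $\big(\prod_i\alpha_{k_i,m_i}\big)\big(\prod_j\overline{\alpha_{\ell_j,n_j}}\big)$ times $(\cdot)^{\eta_p}_U$ applied to the monomial \eqref{gen state Vkx} with each $m_i$ replaced by $m_i - k_i \ge 1$ and each $n_j$ by $n_j-\ell_j \ge 1$; by \eqref{cal Y map defined} this is exactly $B^{\eta_p}_U$ for the corresponding state $B \in \hVV^{\fl,\alpha}$. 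This proves $\im(\cdot)^\xi_p \subseteq \im(\cdot)^\eta_p$; the symmetric argument gives equality, and one calls the common image $\hV^{\fl,\alpha}_p$.

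Besides the cohomological bookkeeping just sketched, the remaining verifications are routine and I would only mention them: that the Jacobian factors in the Virasoro and $\beta\gamma$ cases really are holomorphic and non-vanishing at $p$, so the expansions have the stated shape; uniform convergence of the power series on the compact supports of the bump functions, so the term-by-term manipulations are genuine identities of smooth forms rather than merely formal ones; and the signs in $\d_\CE$ coming from the degree-$1$ shift, recalled in \S\ref{sec: udgLie}.
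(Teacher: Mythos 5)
Your overall strategy (reduce to monomial states, expand the transition map $\varrho^{\eta\to\xi}$ inside the explicit formula \eqref{cal Y map defined b}, then argue that only finitely many terms of the resulting infinite sums survive, and promote the identity from $\U\L^\Sigma_\alpha(U)$ to the limit) is the same as the paper's. But your key collapse step is wrong. You claim that every term containing a factor $s\big(\a^i_{\eta_p}\otimes\lceil\eta_p^{\,-m_i+k_i}\rceil^{U_{i-1}}_{U_i}\big)$ with $-m_i+k_i\ge 0$ is $\d_\CE$-exact because the bump functions have disjoint supports. The disjointness \eqref{disjoint supports} concerns the $(0,1)$-forms $\bar\partial\rho^{U_{j-1}}_{U_j}$ among themselves; once you pass to the primitive $\phi=\eta_p^{\,-m_i+k_i}\rho^{U_{i-1}}_{U_i}$, its support is essentially all of $\supp\rho^{U_{i-1}}_{U_i}$ (it equals $\eta_p^{\,-m_i+k_i}$ on $U_{i-1}$), hence it meets $\supp\bar\partial\rho^{U_{j-1}}_{U_j}$ for every inner index $j<i$. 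Consequently $\d_\CE$ of your proposed degree $-1$ cochain contains, besides the offending factor times the rest, the bracket terms $s\big[\a^i_{\eta_p}\otimes\phi,\,\a^j_{\eta_p}\otimes\lceil\eta_p^{\,-m_j+k_j}\rceil^{U_{j-1}}_{U_j}\big]_\alpha$ — exactly the terms computed in the proof of Proposition \ref{prop: Y map def} — and these do not vanish. A concrete counterexample: in the Kac-Moody case the term with $k_1=0$, $k_2=1$ in \eqref{A rewrite as B} for $A=\X_{(-1)}\Y_{(-1)}\vac$ is $\X_{(0)}\Y_{(-1)}\vac^{\eta_p}_U=[\X,\Y]_{(-1)}\vac^{\eta_p}_U$ by \eqref{commutator check a} and \eqref{commutator check d}, which is nonzero whenever $[\X,\Y]\neq 0$ by the injectivity of Proposition \ref{prop: Y map injective} (and $k_2=2$ contributes the central term $\kappa(\X,\Y)\vac^{\eta_p}_U$). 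So the infinite sum does \emph{not} collapse to the sub-sum $0\le k_i\le m_i-1$, and the state $B$ is not the naive truncation; it carries commutator and central corrections. Your exactness argument is valid only for the innermost slot, adjacent to the vacuum, which is precisely the content of \eqref{commutator check d}.

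The parenthetical alternative you mention — keep the full sums and normal-order using the identities \eqref{commutator check}, non-negative modes eventually annihilating $\vac$ — is in fact the paper's argument and is correct; but it is not equivalent to your truncation, since commuting a non-negative mode to the vacuum generates exactly the extra terms above. Replacing your collapse step by this normal-ordering argument (which shows that each infinite sum contributes only finitely many terms, and identifies the resulting finite linear combination as $B^{\eta_p}_U$ for a well-defined $B\in\hVV^{\fl,\alpha}$) repairs the proof; the remaining parts of your write-up — the reduction to monomials, the convergent expansion of $\xi_p^{-m_i}$ in $\eta_p$ (including the Jacobian bookkeeping in the Virasoro and $\beta\gamma$ cases), and the passage from $A^{\xi_p}_U=B^{\eta_p}_U$ to $A^{\xi}_p=B^{\eta}_p$ in the limit — are fine and essentially as in the paper.
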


For any point $p \in \Sigma^\circ$, we refer to $\hV^{\fl,\alpha}_p$ as the `\emph{local copy of $\hVV^{\fl,\alpha}$ attached to $p$}', and for any $A \in \hVV^{\fl,\alpha}$ we will refer to $A^\xi_p \in \hV^{\fl,\alpha}_p$ as the `\emph{state $A$ prepared at $p$ in the local coordinate $\xi$}'. Explicitly, the state \eqref{gen state Vkx} in $\hVV^{\fl,\alpha}$ prepared at $p$ in the local coordinate $\xi$ will be denoted
\begin{equation*}
\a^r_{(-m_r)} \ldots \a^1_{(-m_1)} \bar \b^{\bar r}_{(-n_{\bar r})} \ldots \bar \b^1_{(-n_1)} \vac^\xi_p.
\end{equation*}
For any neighbourhood $U \subset \Sigma^\circ$ of $p$ with local coordinate $\xi : U \to \CC$, we refer to $A^{\xi_p}_U \in \U\L^\Sigma_\alpha(U)$ as the `\emph{state $A$ prepared at $p$ in the local chart $(U, \xi)$}'.

Let $\xi$ and $\eta$ be two local holomorphic coordinates in the neighbourhood of a point $p \in \Sigma^\circ$.
Given a state $A \in \hVV^{\fl,\alpha}$, it can be prepared at $p$ in either the local coordinate $\xi$ to give the element $A^\xi_p \in \hV^{\fl,\alpha}_p$, or in the local coordinate $\eta$ to give the element $A^\eta_p \in \hV^{\fl,\alpha}_p$. These will, in general, describe different elements of $\hV^{\fl,\alpha}_p$ so we obtain a linear isomorphism
\begin{equation} \label{change of preparation}
(\cdot)^{\eta \to \xi}_p \coloneqq (\cdot)_p^\xi \circ \big( (\cdot)_p^\eta \big)^{-1} : \hV^{\fl,\alpha}_p \longrightarrow \hV^{\fl,\alpha}_p, \qquad A^\eta_p \longmapsto A^\xi_p
\end{equation}
which takes as input any element of $\hV^{\fl,\alpha}_p = \im \, (\cdot)_p^\eta$, identifies the state $A \in \hVV^{\fl,\alpha}$ corresponding to it in the local coordinate $\eta$ near $p$, which is unique by the injectivity of \eqref{Y at p}, and then prepares the same state $A$ at the same point $p$ but in the different local coordinate $\xi$.
%We note here that the map \eqref{change of preparation} plays a similar role at the level of the vector space $\hV^{\fl,\alpha}_p$ as the map $(\cdot)_{\eta \to \xi} : \fl_\eta \SimTo \fl_\xi$ defined in \S\ref{sec: general setup} does at the level of the holomorphic vector bundle $L$.
We note here that the isomorphism \eqref{change of preparation} plays a similar role for the local copy $\hV^{\fl,\alpha}_p$ of $\hVV^{\fl,\alpha}$ as the map $(\cdot)_{\eta \to \xi} : \fl_\eta \SimTo \fl_\xi$ defined in \S\ref{sec: general setup} does for the holomorphic vector bundle $L$.

\subsubsection{Translation operators}

In the vertex algebra setting it is customary to introduce the \emph{infinitesimal translation operator} $D$ acting as an endomorphism of the vertex algebra. In the present non-chiral setting we have two such endomorphisms
\begin{subequations} \label{D bar D def}
\begin{equation}
D, \bar D : \hVV^{\fl,\alpha} \longrightarrow \hVV^{\fl,\alpha}
\end{equation}
defined as follows. For any $A \in \hVV^{\fl,\alpha}$ of the form \eqref{gen state Vkx} we set
\begin{align}
 \label{D bar D def a}
D A &\coloneqq \sum_{i = 1}^r m_i \, \a^r_{(-m_r)} \ldots \a^i_{(-m_i-1)} \ldots \a^1_{(-m_1)} \bar\b^{\bar r}_{(-n_{\bar r})} \ldots \bar\b^1_{(-n_1)} \vac,\\
 \label{D bar D def b}
\bar D A &\coloneqq \sum_{j = 1}^{\bar r} n_j \, \a^r_{(-m_r)} \ldots \a^1_{(-m_1)} \bar\b^{\bar r}_{(-n_{\bar r})} \ldots \bar\b^j_{(-n_j-1)} \ldots \bar\b^1_{(-n_1)} \vac.
\end{align}
\end{subequations}
In particular, $D \vac = \bar D \vac = 0$. We call a state $A \in \hVV^{\fl,\alpha}$ \emph{chiral} if $\bar D A = 0$ and \emph{anti-chiral} if $D A = 0$. We let $\VV^{\fl,\alpha}$ denote the subspace of chiral states since it is isomorphic to the affine vertex algebra and, correspondingly, we let $\bar{\VV}^{\fl,\alpha}$ denote the subspace of anti-chiral states.

\begin{proposition} \label{prop: Y deriv}
For any $A \in \hVV^{\fl,\alpha}$ and any local coordinate $\xi$ in the neighbourhood of a point $p \in \Sigma^\circ$, we have the relations
\begin{equation*}
(D A)^\xi_p = \partial_{\xi(p)} A^\xi_p, \qquad
(\bar D A)^\xi_p = \partial_{\bar \xi(p)} A^\xi_p
\end{equation*}
in $\hV^{\fl,\alpha}_p$. In particular, if $A$ is (anti-)chiral then $A^\xi_p$ depends (anti-)holomorphically on $p$.
\begin{proof}
It is enough to prove the relations
\begin{equation*}
(D A)^{\xi_p}_U = \partial_{\xi(p)} A^{\xi_p}_U, \qquad
(\bar D A)^{\xi_p}_U = \partial_{\bar \xi(p)} A^{\xi_p}_U
\end{equation*}
in $\U\L^\Sigma_\alpha(U)$ for an open neighbourhood $U \subset \Sigma^\circ$ of $p$.
This is immediate from the definition of the operators $D, \bar D$ in \eqref{D bar D def} and of the map $(\cdot)^{\xi_p}_U$ in \eqref{cal Y map defined}, noting that for any $m \in \ZZ$ we have $\partial_{\xi(p)} \xi_p^{-m} = m \, \xi_p^{-m-1}$ and $\partial_{\bar \xi(p)} \bar \xi_p^{-m} = m \, \bar \xi_p^{-m-1}$.
\end{proof}
\end{proposition}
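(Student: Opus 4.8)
The plan is to reduce the statement about $\hV^{\fl,\alpha}_p$ to a statement about $\U\L^\Sigma_\alpha(U)$ for a fixed open neighbourhood $U\subset\Sigma^\circ$ of $p$, since by construction of the limit and of the embedding \eqref{Y at p}, it suffices to check the two relations $(DA)^{\xi_p}_U=\partial_{\xi(p)}A^{\xi_p}_U$ and $(\bar D A)^{\xi_p}_U=\partial_{\bar\xi(p)}A^{\xi_p}_U$ in $\U\L^\Sigma_\alpha(U)$, compatibly with the factorisation products $m_{U,V}$; the latter compatibility is automatic because all maps in sight commute with $m_{U,V}$, so the induced identity on the limit follows from the universal property. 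So the proof really only needs to compare two explicit elements of $\U\L^\Sigma_\alpha(U)$.

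The main step is then a direct computation using the explicit formula \eqref{cal Y map defined b} for the map $(\cdot)^{\xi_p}_U$. First I would observe that the only $p$-dependence in \eqref{cal Y map defined b} enters through the shifted coordinate $\xi_p=\xi-\xi(p)$ and $\bar\xi_p=\bar\xi-\bar\xi(p)$, and more precisely through the functions $\xi_p^{-m_i}$ and $\bar\xi_p^{-n_j}$ appearing in the $(0,1)$-forms $\lceil\xi_p^{-m_i}\rceil^{U_{i-1}}_{U_i}=-\xi_p^{-m_i}\bar\partial\rho^{U_{i-1}}_{U_i}$ (and similarly for the anti-chiral factors). Since $\rho^{U_{i-1}}_{U_i}$ is a fixed bump function independent of $p$, differentiating with respect to $\xi(p)$ acts only on the scalar function $\xi_p^{-m_i}$, and one has $\partial_{\xi(p)}\xi_p^{-m_i}=m_i\,\xi_p^{-m_i-1}$, while $\partial_{\bar\xi(p)}\xi_p^{-m_i}=0$ and the analogous facts hold for the anti-holomorphic factors. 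Applying $\partial_{\xi(p)}$ to the product inside the Chevalley--Eilenberg bracket $[\cdots]_U$ in \eqref{cal Y map defined b} and using the Leibniz rule then produces exactly a sum of $r$ terms, the $i$-th of which replaces $\lceil\xi_p^{-m_i}\rceil$ by $m_i\lceil\xi_p^{-m_i-1}\rceil$; comparing with \eqref{D bar D def a} and \eqref{cal Y map defined b} shows this is precisely $(DA)^{\xi_p}_U$. The computation for $\bar D$ is identical with the roles of the holomorphic and anti-holomorphic factors exchanged.

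The only point requiring a little care is the interchange of $\partial_{\xi(p)}$ with the operation of taking the cohomology class $[\cdot]_U$, i.e. that differentiating the representative with respect to the parameter $\xi(p)$ computes the derivative of the cohomology class; this is unproblematic because the family of representatives depends smoothly (indeed holomorphically in $\xi(p)$, anti-holomorphically in $\bar\xi(p)$) on $p$ with values in the fixed cochain complex $\bCE_\chain(\L^\Sigma_\alpha(U))$, and $H^0$ is a linear functor, so $\partial_{\xi(p)}$ and $H^0$ commute. I expect essentially no genuine obstacle here: the statement is a formal consequence of the explicit formula \eqref{cal Y map defined b}, the elementary identity $\partial_{\xi(p)}\xi_p^{-m}=m\,\xi_p^{-m-1}$, and the definitions \eqref{D bar D def} of $D,\bar D$; the only mild subtlety worth spelling out is the compatibility with passing to the limit $\lim(\U\L^\Sigma_\alpha)_p$, which follows from the commutative diagram defining \eqref{Y at p}. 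The final sentence of the statement---that $A^\xi_p$ depends (anti-)holomorphically on $p$ when $A$ is (anti-)chiral---is then immediate: if $\bar D A=0$ then $\partial_{\bar\xi(p)}A^\xi_p=(\bar D A)^\xi_p=0$, so $A^\xi_p$ satisfies the Cauchy--Riemann equations in $p$, and symmetrically for chiral $A$.
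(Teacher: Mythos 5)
Your proposal is correct and follows essentially the same route as the paper: reduce to the identity $(DA)^{\xi_p}_U=\partial_{\xi(p)}A^{\xi_p}_U$ in $\U\L^\Sigma_\alpha(U)$, observe that the only $p$-dependence in \eqref{cal Y map defined b} sits in the factors $\xi_p^{-m_i}$, $\bar\xi_p^{-n_j}$ (the bump functions being fixed), and apply $\partial_{\xi(p)}\xi_p^{-m}=m\,\xi_p^{-m-1}$ together with the Leibniz rule and the definition \eqref{D bar D def}. The extra remarks you add (commuting the parameter derivative with the passage to cohomology and to the limit, and deducing the (anti-)holomorphicity statement from $\bar D A=0$) are exactly the points the paper leaves implicit, and they are handled correctly.
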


\subsection{Vertex modes and vertex operators} \label{sec: VA products}

For any $R \in \RR_{>0}$ we use the notation $D_R \coloneqq \{ \zeta \in \CC \,|\, |\zeta| < R \} \subset \CC$ for the open disc of radius $R$ around the origin in $\CC$. Similarly, for any $R^-, R^+ \in \RR_{> 0}$ with $R^- < R^+$ we denote by $A_{R^-,R^+} \coloneqq \{ \zeta \in \CC \,|\, R^- < |\zeta| < R^+ \} \subset \CC$ the open annulus with radii $R^-$ and $R^+$ around the origin. Note that $A_{R^-, R^+} = D_{R^+} \setminus \overline{D_{R^-}}$.

Given a local coordinate $\xi : U \to \CC$ on a neighbourhood $U \subset \Sigma^\circ$ of $p \in \Sigma^\circ$, we say that $V \subset U$ is a \emph{disc shaped open subset} around $p$ in the local coordinate $\xi : U \to \CC$ if $V = \xi_p^{-1}(D_R)$ for some $R > 0$. Likewise, we say that $Y \subset U$ is an \emph{annulus shaped open subset} around $p$ in the local coordinate $\xi : U \to \CC$ if $Y = \xi_p^{-1}(A_{R^-, R^+})$ for some $0 < R^- < R^+$. We can then define disc shaped open subsets $V^\pm \coloneqq \xi_p^{-1}(D_{R^\pm})$ so that $Y = V^+ \setminus \overline{V^-}$.

By definition of the category $\Top(\Sigma)$, see the start of \S\ref{sec: twisted PFE}, disc shaped open subsets and annulus shaped open subsets around a given point $p$ are both objects in $\Top(\Sigma)$ homeomorphic to $\CC$ and $\CC \setminus \{0\}$, respectively.

\subsubsection{Vertex modes}

To every $\a \in \fl$, $n, \bar n \in \ZZ$ and any local coordinate $\xi$ in the neighbourhood of $p \in \Sigma^\circ$, we associate a pair of endomorphisms of the vector space $\hV^{\fl,\alpha}_p$, called the \emph{chiral vertex $n^{\rm th}$-mode} and \emph{anti-chiral vertex $\bar n^{\rm th}$-mode} of $\a$ at $p$, defined as follows. Given an annulus shaped open subset $Y \subset \Sigma^\circ$ around $p$ in the local coordinate $\xi$, we let $V^\pm \subset \Sigma^\circ$ denote the disc shaped open subsets around $p$ such that $Y = V^+ \setminus \overline{V^-}$. The preimages of $V^\pm$ under the projection $\pi : \Dz \to \Sigma$ is a disjoint union of disc shaped open subsets around $p_+$ and $p_-$, respectively, which we denote as $\pi^{-1}(V^\pm) = V^\pm_+ \sqcup V^\pm_-$, as in \S\ref{sec: Vec structure}. We define the elements
\begin{equation} \label{X bar X vertex modes}
\a^{\; \xi_p}_{(n)} \coloneqq \Big[ s\Big( \a_{\xi_p} \otimes \lceil \xi_p^n \rceil^{V^\tm_\tp}_{V^\tp_\tp}\Big) \Big]_Y, \qquad
\bar \a^{\; \xi_p}_{(\bar n)} \coloneqq \Big[ s\Big( \a_{\bar\xi_p} \otimes \lceil {\bar \xi}_p^{\bar n} \rceil^{V^\tm_\tm}_{V^\tp_\tm}\Big) \Big]_Y
\end{equation}
in $\U\L^\Sigma_\alpha(Y)$. We do not include the subset $Y$ in the notation for the elements \eqref{X bar X vertex modes} of $\U\L^\Sigma_\alpha(Y)$
since the endomorphisms of $\hV^{\fl,\alpha}_p$ they define will be independent of $Y$.
Explicitly, given any state $B \in \hVV^{\fl,\alpha}$ and any neighbourhood $U \subset \Sigma^\circ$ of $p$ we may form the factorisation products
\begin{equation*}
m_{(Y, p), U} \big( \a^{\; \xi_p}_{(n)} \otimes B^\xi_p \big) = ( \a_{(n)} B )^{\xi_p}_U, \qquad
m_{(Y, p), U} \big( \bar \a^{\; \xi_p}_{(\bar n)} \otimes B^\xi_p \big) = ( \bar \a_{(\bar n)} B )^{\xi_p}_U
\end{equation*}
in $\U\L^\Sigma_\alpha(U)$ for every $\a \in \fl$, where $Y \subset U$ is any annulus shaped open subset around $p$ and the map $m_{(Y,p), U} : \U\L^\Sigma_\alpha(Y) \otimes \lim (\U\L^\Sigma_\alpha)_p \to \U\L^\Sigma_\alpha(U)$ is defined as $m_{(Y,p), U} \coloneqq m_{(Y, V), U} \circ (\id \otimes m_{p, V})$ for any open neighbourhood $V \ni p$ disjoint from $Y$. The above equalities both follow from the definitions \eqref{cal Y map defined} and \eqref{X bar X vertex modes}, see also Proposition \ref{prop: Y map def}.
In particular, we obtain linear maps
\begin{subequations} \label{endo Xn bar Xn}
\begin{align}
\a^{\; \xi_p}_{(n)} : \hV^{\fl,\alpha}_p &\longrightarrow \hV^{\fl,\alpha}_p, \qquad
B^\xi_p \longmapsto \a^{\; \xi_p}_{(n)} B^\xi_p = (\a_{(n)} B)^\xi_p,\\
\bar \a^{\; \xi_p}_{(\bar n)} : \hV^{\fl,\alpha}_p &\longrightarrow \hV^{\fl,\alpha}_p, \qquad
B^\xi_p \longmapsto \bar \a^{\; \xi_p}_{(\bar n)} B^\xi_p = (\bar \a_{(\bar n)} B)^\xi_p
\end{align}
\end{subequations}
for any $\a \in \fl$, $n, \bar n \in \ZZ$ and any local coordinate $\xi$ in a neighbourhood of $p$.

\medskip

The infinite sequences of chiral vertex $n^{\rm th}$-modes in \eqref{X bar X vertex modes} can be described collectively, for $n \geq 0$ and $n < 0$ respectively, as the coefficients in the expansion of single elements of $\U\L^\Sigma_\alpha(Y)$ as follows. The same will be true of the anti-chiral vertex $\bar n^{\rm th}$-modes, see below, but focusing on the chiral vertex $n^{\rm th}$-modes $\a^{\; \xi_p}_{(n)}$ of $\a \in \fl$ we consider
\begin{equation} \label{generator Xn modes}
\Big[ s\Big( \a_{\xi_q} \otimes \lceil \xi_q^{-1} \rceil^{V^\tm_\tp}_{V^\tp_\tp} \Big) \Big]_Y \in \U\L^\Sigma_\alpha(Y)
\end{equation}
for some other point $q \in U$ in the local coordinate patch covered by $\xi : U \to \CC$, where the notation used is the same as in \eqref{X bar X vertex modes}. Since $\xi_q^{-1}$ is singular at $q$, the latter should not lie in the closure of the annulus shaped subset $Y = V^+ \setminus \overline{V^-}$, so there are two cases to consider:
\begin{itemize}[leftmargin=12mm]
  \item[$(\rm v_{<0})$] If $q \in V^-$ then $|\xi_p(t)| > |\xi_p(q)|$ for any $t \in Y$ so that the function $\xi_q^{-1} = (\xi_p - \xi_p(q))^{-1}$ can be expanded in small $\xi_p(q)$ and hence \eqref{generator Xn modes} has the following expansion
\begin{equation*}
\phantom{-} \Big[ s\Big( \a_{\xi_q} \otimes \lceil \xi_q^{-1} \rceil^{V^\tm_\tp}_{V^\tp_\tp}\Big) \Big]_Y = \sum_{n < 0} \a^{\; \xi_p}_{(n)} \xi_p(q)^{-n-1} \eqqcolon \a[\xi_p(q)]_+.
\qquad\quad
\raisebox{-6mm}{\begin{tikzpicture}
\def\R{1}
  \fill[blue!90,
        opacity      = 0.3]
        (.05*\R,1.2*\R) circle (.6*\R);
  \draw[thick, blue] (.05*\R,1.2*\R) circle (.6*\R);
  \fill[red!90,
        opacity      = 0.3]
        (.05*\R,1.2*\R) circle (.5*\R);
  \draw[thick, red] (.05*\R,1.2*\R) circle (.5*\R);
  \filldraw[thick] (.05*\R,1.2*\R) node[below=-.7mm]{\tiny $p$} circle (0.02*\R) node[above right=.5mm and -1.5mm, red]{\tiny $V^\tm$} node[above right=4.2mm and 1.7mm, blue]{\tiny $V^\tp$};
  \filldraw[thick] (-.3*\R,1.24*\R) node[below=-.7mm]{\tiny $q$} circle (0.02*\R);
\end{tikzpicture}
}
\end{equation*}
   \item[$(\rm v_{\geq 0})$] If $q \not \in \overline{V^+}$ then instead $|\xi_p(t)| < |\xi_p(q)|$ for any $t \in Y$ so that $\xi_q^{-1} = (\xi_p - \xi_p(q))^{-1}$ can be expanded in small $\xi_p$, and hence \eqref{generator Xn modes} now has the following expansion
\begin{equation*}
- \Big[ s\Big( \a_{\xi_q} \otimes \lceil \xi_q^{-1} \rceil^{V^\tm_\tp}_{V^\tp_\tp}\Big) \Big]_Y = \sum_{n \geq 0} \a^{\; \xi_p}_{(n)} \xi_p(q)^{-n-1} \eqqcolon \a[\xi_p(q)]_-.
\qquad
\raisebox{-6mm}{\begin{tikzpicture}
\def\R{1}
  \fill[blue!90,
        opacity      = 0.3]
        (.05*\R,1.2*\R) circle (.6*\R);
  \draw[thick, blue] (.05*\R,1.2*\R) circle (.6*\R);
  \fill[red!90,
        opacity      = 0.3]
        (.05*\R,1.2*\R) circle (.5*\R);
  \draw[thick, red] (.05*\R,1.2*\R) circle (.5*\R);
  \filldraw[thick] (.05*\R,1.2*\R) node[below=-.7mm]{\tiny $p$} circle (0.02*\R) node[above right=.5mm and -1.5mm, red]{\tiny $V^\tm$} node[above right=4.2mm and 1.7mm, blue]{\tiny $V^\tp$};
  \filldraw[thick] (-.7*\R,1.24*\R) node[below=-.7mm]{\tiny $q$} circle (0.02*\R);
\end{tikzpicture}
}
\end{equation*}
\end{itemize}

\begin{remark} \label{rem: a xi q no expand}
When $\a_{\xi_q}$ explicitly depends on the coordinate $\xi_q$, as in the Virasoro or $\beta\gamma$ system cases, we should also re-express this in terms of the local coordinate $\xi_p$ near $p$ in both of the above expansions in $(\rm v_{<0})$ and $(\rm v_{\geq 0})$. However, since the local coordinates $\xi_q$ and $\xi_p$ are related simply by a constant shift $\xi_q = \xi_p - \xi_p(q)$, in all three examples we consider it is the case that $\a_{\xi_q} = \a_{\xi_p}$ for every $\a \in \fl$. Indeed, in the Virasoro case we have $\Omega_{\xi_q} = - \partial_{\xi_q} = - \partial_{\xi_p} = \Omega_{\xi_p}$ and likewise in the $\beta\gamma$ system case we have $\gamma_{\xi_q} = \d \xi_q = \d \xi_p = \gamma_{\xi_p}$.
\end{remark}

It is convenient to gather the negative modes $\a^{\; \xi_p}_{(n)}$ for $n < 0$ and positive modes $\a^{\; \xi_p}_{(n)}$ for $n \geq 0$ together into a chiral vertex operator, the single doubly infinite series
\begin{equation} \label{X as X+ X-}
\a[\xi_p(q)] \coloneqq \a[\xi_p(q)]_+ + \a[\xi_p(q)]_-,
\end{equation}
with coefficients in $\End \hV^{\fl,\alpha}_p$. Strictly speaking, the nested sequence of open neighbourhoods $p_+ \in V^-_+ \Subset V^+_+ \subset U_+$ used in defining the modes $\a^{\; \xi_p}_{(n)}$ should be different here depending on whether $n < 0$ or $n \geq 0$ so that the point $q \in U$, which is fixed in \eqref{X as X+ X-}, satisfies $q \in V^-$ in the first term and $q \not \in \overline{V^+}$ in the second.
See the proof of Proposition \ref{prop: vertex operator general} for details.

We can similarly assemble all the anti-chiral elements $\bar \a^{\; \xi_p}_{(\bar n)} \in \U\L^\Sigma_\alpha(Y)$ in \eqref{X bar X vertex modes} into series $\bar\a[\bar\xi_p(q)]_\pm$ for $\bar n < 0$ and $\bar n \geq 0$, respectively, and formally set $\bar \a[\bar\xi_p(q)] \coloneqq \bar\a[\bar\xi_p(q)]_+ + \bar\a[\bar\xi_p(q)]_-$.

\subsubsection{Homogeneous vertex modes} \label{sec: hom vertex modes}

Since the vertex modes \eqref{X bar X vertex modes} living in $\U\L^\Sigma_\alpha(Y)$ are supported on an annulus shaped open subset $Y \subset \Sigma^\circ$ around $p$ in the local coordinate $\xi$, i.e. the shifted local coordinate $\xi_p$ provides a bijection $\xi_p : Y \SimTo A_{R^-, R^+}$ to an open annulus with radii $0 < R^- < R^+$, we have an action of the circle $S^1 = \{ \lambda \in \CC \,|\, |\lambda| = 1 \}$ on $Y$ induced by the obvious $S^1$-action on $A_{R^-, R^+} \subset \CC$ given by multiplication by $\lambda \in S^1$. In other words, in the local coordinate $\xi_p : Y \to \CC$ this action is given by $\xi_p \mapsto \lambda \xi_p$. We will say that $\a \in \fl$ has \emph{conformal dimension} $\Delta_\a$ if the behaviour of the associated chiral vertex $n^{\rm th}$-modes in \eqref{X bar X vertex modes} under this $S^1$-action on $Y$ takes the form
\begin{equation*}
\a^{\; \lambda \xi_p}_{(n)} = \lambda^{n-\Delta_\a+1} \a^{\; \xi_p}_{(n)}
\end{equation*}
for every $n \in \ZZ$. Note that since $|\lambda|=1$ we have $\bar\xi_p \mapsto \lambda^{-1} \bar\xi_p$ so that the anti-chiral vertex $n^{\rm th}$-modes associated with $\a \in \fl$ transform for any $n \in \ZZ$ as
\begin{equation*}
\bar\a^{\; \lambda \xi_p}_{(n)} = \lambda^{-n+\Delta_\b-1} \bar\a^{\; \xi_p}_{(n)}.
\end{equation*}

For any $\a \in \a$ of definite conformal dimension $\Delta_\a$, it is convenient to define its associated \emph{homogeneous chiral vertex $n^{\rm th}$-mode} and \emph{homogeneous anti-chiral vertex $n^{\rm th}$-mode} at $p$ in the local coordinate $\xi$, respectively, as
\begin{equation} \label{hom modes a def}
\a^{\xi_p}_{[n]} \coloneqq \a^{\xi_p}_{(n+\Delta_\a - 1)}, \qquad
\bar\a^{\xi_p}_{[n]} \coloneqq \bar\a^{\xi_p}_{(n+\Delta_\b - 1)}.
\end{equation}
The reason that these (anti-)chiral vertex $n^{\rm th}$-modes are called `homogeneous' comes from the fact that their behaviour under the $S^1$-action is just a rescaling by $\lambda^{\pm n}$, which is independent of $\a \in \fl$.
In terms of homogeneous chiral vertex $n^{\rm th}$-modes, the chiral vertex operator \eqref{X as X+ X-} and its anti-chiral counterpart take the form
\begin{equation} \label{vertex operator hom rewrite}
\a[\xi_p(q)] = \sum_{n \in \ZZ} \a^{\; \xi_p}_{[n]} \xi_p(q)^{-n-\Delta_\a}, \qquad
\bar\a[\bar\xi_p(q)] = \sum_{n \in \ZZ} \bar\a^{\; \xi_p}_{[n]} \bar\xi_p(q)^{-n-\Delta_\b}.
\end{equation}
We also define the homogeneous loop generators of the Lie algebra $\hg \oplus \hbg$ as, cf. \eqref{loop generators},
\begin{equation*}
\a_{[n]} \coloneqq \a \otimes t^{n+\Delta_\a - 1} \in \hg, \qquad \bar\a_{[n]} \coloneqq \a \otimes \bar t^{n+\Delta_\a - 1} \in \hbg.
\end{equation*}

Given any monomial state $A \in \hVV^{\fl,\alpha}$ as in \eqref{gen state Vkx} such that $\a^i$ and $\b^j$ have definite conformal dimensions $\Delta_{\a^i}$ and $\Delta_{\b^j}$ for each $i=1,\ldots, r$ and $j = 1, \ldots, \bar r$, we will say that $A$ has \emph{chiral conformal dimension} $\Delta_A \coloneqq \sum_{i=1}^r (\Delta_{\a^i} + m_i - 1)$ and that it has \emph{anti-chiral conformal dimension} $\bar \Delta_A \coloneqq \sum_{j=1}^{\bar r} (\Delta_{\b^j} + n_j - 1)$. In all cases we shall consider, every possible rewriting of the state $A$ is given by a linear combination of monomial states with the same chiral conformal dimension $\Delta_A$ and anti-chiral conformal dimension $\bar\Delta_A$ so we have a well defined $\ZZ^2$-grading on $\hVV^{\fl,\alpha}$.
Rewriting the relations \eqref{hom modes a def} as
\begin{equation*}
\a^{\xi_p}_{(-n)} = \a^{\xi_p}_{[-n-\Delta_\a+1]}, \qquad \bar\a^{\xi_p}_{(-n)} = \bar\a^{\xi_p}_{[-n-\Delta_\a+1]}
\end{equation*}
we see that the $\ZZ^2$-grading on $\hVV^{\fl,\alpha}$ is measured by the negative of the sums of the chiral and anti-chiral mode numbers of a homogeneous state $A$, respectively, when written in terms of homogeneous chiral and anti-chiral vertex modes.

We say that $A^\xi_p \in \hV^{\fl,\alpha}_p$ is \emph{homogeneous in the coordinate $\xi$} and call $\wgt (A^\xi_p) \coloneqq \Delta_A$ the \emph{chiral weight} of $A^\xi_p$ in the coordinate $\xi$ and $\wgtb(A^\xi_p) \coloneqq \bar\Delta_A$ its \emph{anti-chiral weight} in the coordinate $\xi$. Every choice of local coordinate $\xi$ in a neighbourhood of the point $p \in \Sigma^\circ$ induces a $\ZZ^2$-grading on the vector space $\hV^{\fl,\alpha}_p$. These $\ZZ^2$-gradings of $\hV^{\fl,\alpha}_p$ depend on the coordinate used because if $A^\xi_p \in \hV^{\fl,\alpha}_p$ is homogeneous in the coordinate $\xi$ then it will generally not be homogeneous in another local coordinate $\eta$ near $p$. More precisely, if $A^\xi_p = B^\eta_p$ for some $B \in \hVV^{\fl,\alpha}$ as in the proof of Proposition \ref{prop: indep loc coord}, then $B$ will generally not be of definite bi-grade.

The conformal dimensions of basis elements in $\fl$ and their corresponding homogeneous vertex operators in the three main examples from \S\ref{sec: main examples} are as follows.

\paragraph{Kac-Moody:} Recall that $\fl = \g$. Since $\X_{\xi_p} = \X$ is independent of the local coordinate for any $\X \in \g$, it follows that $\Delta_\X = 1$. In particular, homogeneous vertex $n^{\rm th}$-modes coincide with the vertex $n^{\rm th}$-modes, namely $\X_{[n]} = \X_{(n)}$ for all $n \in \ZZ$ and the Lie algebra relations of $\hg \oplus \hbg$ in \eqref{KM algebra vertex modes def} take the exact same form
\begin{subequations} \label{KM algebra vertex modes def hom}
\begin{align}
\label{KM algebra vertex modes def hom a} \big[ \X_{[m]}, \Y_{[n]} \big] &= [\X, \Y]_{[m+n]} + m\, \kappa(\X, \Y) \delta_{m+n, 0} \, {\ms k}, \\
\label{KM algebra vertex modes def hom b} \big[ \bar \X_{[m]}, \bar \Y_{[n]} \big] &= \overline{[\X, \Y]}_{[m+n]} + m\, \kappa(\X, \Y) \delta_{m+n, 0} \, \bar{\ms k}.
\end{align}
\end{subequations}
for every $\X, \Y \in \g$ and $m,n \in \ZZ$. It is evident from these relations that the $\ZZ^2$-grading on $\hVV^{\fl, \alpha}$ is well defined since each term has the same total homogeneous mode number.

\paragraph{Virasoro:} We have $\fl = \text{span}_\CC \{ \Omega \}$ and because $\Omega_{\xi_p} = - \partial_{\xi_p}$ we deduce that $\Delta_\Omega = 2$. The homogeneous chiral and anti-chiral vertex $n^{\rm th}$-modes of $\Omega$ are then given $\Omega_{[n]} = \Omega_{(n+1)}$ and $\bar\Omega_{[n]} = \bar\Omega_{(n+1)}$. The defining relations \eqref{Virasoro algebra vertex modes def} of $\hg \oplus \hbg$ then take the form
\begin{subequations} \label{Virasoro algebra vertex modes def hom}
\begin{align}
\label{Virasoro algebra vertex modes def hom a} \big[ \Omega_{[m]}, \Omega_{[n]} \big] &= (m-n) \Omega_{[m+n]} + \frac{m^3-m}{12} c \, \delta_{m+n, 0} \, {\ms k}, \\
\label{Virasoro algebra vertex modes def hom b} \big[ \bar\Omega_{[m]}, \bar\Omega_{[n]} \big] &= (m-n) \bar\Omega_{[m+n]} + \frac{m^3-m}{12} c\, \delta_{m+n, 0} \, \bar{\ms k},
\end{align}
\end{subequations}
which are the usual Virasoro algebra relations for $L_{(n)} = \Omega_{[n]}$ and $\bar L_{(n)} = \bar\Omega_{[n]}$. In particular, it is again clear from these relations that the $\ZZ^2$-grading on $\hVV^{\fl, \alpha}$ is well defined.

\paragraph{$\beta\gamma$ system:} Here $\fl = \text{span}_\CC \{ \beta, \gamma \}$ with $\beta_{\xi_p} = 1$ and $\gamma_{\xi_p} = \d \xi_p$ so that $\Delta_\beta = 1$ and $\Delta_\gamma = 0$. The homogeneous (anti-)chiral vertex $n^{\rm th}$-modes are then $\beta_{[n]} = \beta_{(n)}$, $\bar\beta_{[n]} = \bar\beta_{(n)}$, $\gamma_{[n]} = \gamma_{(n-1)}$ and $\bar\gamma_{[n]} = \bar\gamma_{(n-1)}$, cf. \eqref{a a star def}. The defining relations \eqref{beta gamma algebra vertex modes def} of $\hg \oplus \hbg$ then read
\begin{subequations} \label{beta gamma algebra vertex modes hom def}
\begin{align}
\label{beta gamma algebra vertex modes def hom a} \big[ \beta_{[m]}, \gamma_{[n]} \big] &= \delta_{m+n, 0} \, {\ms k}, \\
\label{beta gamma algebra vertex modes def hom b} \big[ \bar\beta_{[m]}, \bar\gamma_{[n]} \big] &= \delta_{m+n, 0} \, \bar{\ms k},
\end{align}
\end{subequations}
which are the usual infinite-dimensional Weyl algebra relations. Again, it is immediate from these relations that the $\ZZ^2$-grading on $\hVV^{\fl, \alpha}$ is well defined.

\subsubsection{Vertex operators}

The vertex operator associated with a state $A \in \hVV^{\fl,\alpha}$ can be described geometrically in terms of the prefactorisation algebra $\U\L^\Sigma_\alpha$ by preparing this state at a point $q \in U \subset \Sigma^\circ$ in some local coordinate $\xi : U \to \CC$, to obtain the element $A^\xi_q \in \hV^{\fl,\alpha}_q$, and then applying the factorisation product $m_{q, Y} : \hV^{\fl,\alpha}_q \to \U\L^\Sigma_\alpha(Y)$ to an annulus shaped open subset $Y \subset U$ with $q \in Y$ and encircling another point $p \in U$. The resulting element of $\U\L^\Sigma_\alpha(Y)$ then naturally acts on $\hV^{\fl,\alpha}_p$ via the factorisation products $m_{(Y,p), U}$, cf. the action of vertex modes on $\hV^{\fl,\alpha}_p$ described in \eqref{endo Xn bar Xn}, and encodes the vertex operator of the state $A \in \hVV^{\fl,\alpha}$, see the next proposition.

For any $\lambda \in \CC$ we use the notation $\underline{\lambda}$ as a shorthand for the pair $(\lambda, \bar \lambda)$. Similarly, we use the notation $\underline{\zeta}$ for a pair of formal variables $(\zeta, \bar \zeta)$. Recall that the normal ordered product of $r \in \ZZ_{\geq 2}$ operators $\O_i(\zeta)$ for $i \in \{ 1, \ldots, r \}$ with a decomposition $\O_i(\zeta) = \O_i(\zeta)_+ + \O_i(\zeta)_-$ into creation and annihilation operators $\O_i(\zeta)_\pm$ is defined recursively by
\begin{equation*}
\nord{\O_r(\zeta) \ldots \O_1(\zeta)} \,\coloneqq\, \O_r(\zeta)_+ \, \nord{\O_{r-1}(\zeta) \ldots \O_1(\zeta)} + \nord{\O_{r-1}(\zeta) \ldots \O_1(\zeta)} \, \O_r(\zeta)_-.
\end{equation*}

\begin{proposition} \label{prop: vertex operator general}
Let $q \in Y \subset \Sigma^\circ$ be an annulus shaped open subset around a point $p \in \Sigma^\circ$ in a local coordinate $\xi$. For any $A^\xi_q \in \hV^{\fl,\alpha}_q$, the element $m_{q,Y}(A^\xi_{q}) \in \U\L^\Sigma_\alpha(Y)$ expands as
\begin{equation*}
\qquad m_{q,Y}(A^\xi_{q}) = \mathcal Y\big( A^\xi_q, \underline{\xi_p(q)} \big),
\qquad\qquad
\raisebox{-9mm}{\begin{tikzpicture}
\def\R{1}
  \fill[blue!90,
        opacity      = 0.3,
        even odd rule]
        (.05*\R,1.2*\R) circle[radius=.8*\R] circle[radius=.3*\R];
  \draw[thick, blue] (.05*\R,1.2*\R) circle (.8*\R);
  \draw[thick, blue] (.05*\R,1.2*\R) circle (.3*\R);
  \filldraw[thick] (.05*\R,1.2*\R) node[below=-.7mm]{\tiny $p$} circle (0.02*\R) node[above right=5.5mm and 4.2mm, blue]{\tiny $Y$};
  \filldraw[thick] (-.5*\R,1.24*\R) node[below=-.7mm]{\tiny $q$}
							circle (0.02*\R);
\end{tikzpicture}
}
\end{equation*}
where the right hand side is the \emph{vertex operator} defined as the usual normal ordered product
\begin{align*}
\mathcal Y\big( A^\xi_q, \underline{\xi_p(q)} \big) &\coloneqq \nord{\frac{1}{(m_r -1)!} \partial_{\xi_p(q)}^{m_r - 1} \a^r[\xi_p(q)] \ldots \frac{1}{(m_1 -1)!} \partial_{\xi_p(q)}^{m_1 - 1} \a^1[\xi_p(q)]}\\
&\qquad \times \nord{\frac{1}{(n_{\bar r} -1)!} \partial_{\bar \xi_p(q)}^{n_{\bar r} - 1} \bar \b^{\bar r}[\bar \xi_p(q)] \ldots \frac{1}{(n_1 -1)!} \partial_{\bar \xi_p(q)}^{n_1 - 1} \bar \b^1[\bar \xi_p(q)]}
\end{align*}
for any monomial state $A \in \hVV^{\fl,\alpha}$ as in \eqref{gen state Vkx}, and extended by linearity to all of $\hVV^{\fl,\alpha}$.
\begin{proof}
By linearity it is sufficient to consider a state $A \in \hVV^{\fl,\alpha}$ as in \eqref{gen state Vkx}. In fact, we will focus on proving the statement for a chiral state $A = \a^r_{(-m_r)} \ldots \a^1_{(-m_1)} \vac$ since the treatment of the anti-chiral part is completely analogous.

Let $q_+ \in Y_0 \Subset \ldots \Subset Y_r \subset Y_+$ be a nested sequence of annuli shaped open subsets around the point $p_+$.
We have
\begin{equation*}
m_{q,Y}(A^\xi_{q}) = \Bigg[ \prod_{i=1}^r s\Big( \a^i_{\xi_q} \otimes \lceil \xi_q^{-m_i} \rceil^{Y_{i-1}}_{Y_i} \Big) \Bigg]_Y.
\end{equation*}
Each open subset $Y_j$ for $j \in \{ 0, \ldots, r \}$ can be written as a difference $Y_j = V^+_j \setminus \overline{V^-_j}$ for disc shaped open subsets $V^\pm_j$ around the point $p_+$. We can thus write each $\rho^{Y_{i-1}}_{Y_i} \in \Omega^{0,0}_c(Y_i)^1_{Y_{i-1}}$ for $i \in \{ 1, \ldots, r \}$ as a difference of two smooth bump functions
\begin{equation*}
\rho^{Y_{i-1}}_{Y_i} = \rho^{V^\tp_{i-1}}_{V^\tp_i} - \rho^{V^\tm_i}_{V^\tm_{i-1}}, \quad \textup{with} \quad
\rho^{V^\tp_{i-1}}_{V^\tp_i} \in \Omega^{0,0}_c(V^+_i)^1_{V^\tp_{i-1}}, \quad
\rho^{V^\tm_i}_{V^\tm_{i-1}} \in \Omega^{0,0}_c(V^-_{i-1})^1_{V^\tm_i}
\end{equation*}
where $p_+ \in V^-_i \Subset V^-_{i-1} \Subset V^+_{i-1} \Subset V^+_i$ is a nested sequence of disc shaped open subsets around $p_+$ in the local coordinate $\xi$ such that $q_+ \in V^+_j$ and $q_+ \not\in V^-_j$ for all $j \in \{ 0, \ldots, r \}$, which we can depict schematically as
\begin{center}
\begin{tikzpicture}
\def\R{1}
  \fill[blue!90,
        opacity      = 0.3,
        even odd rule]
        (.05*\R,1.2*\R) circle[radius=.8*\R] circle[radius=.3*\R];
  \draw[thick, blue] (.05*\R,1.2*\R) circle (.8*\R);
  \draw[thick, blue] (.05*\R,1.2*\R) circle (.3*\R);
  \fill[red!90,
        opacity      = 0.3,
        even odd rule]
        (.05*\R,1.2*\R) circle[radius=.7*\R] circle[radius=.4*\R];
  \draw[thick, red] (.05*\R,1.2*\R) circle (.7*\R);
  \draw[thick, red] (.05*\R,1.2*\R) circle (.4*\R);
  \filldraw[thick] (.05*\R,1.2*\R) node[below=-.7mm]{\tiny $p$} circle (0.02*\R) node[above right=2.6mm and -3mm, red]{\tiny $Y_{i-1}$} node[above right=5.5mm and 4.2mm, blue]{\tiny $Y_i$};
  \filldraw[thick] (-.5*\R,1.24*\R) node[below=-.7mm]{\tiny $q$} circle (0.02*\R);
\end{tikzpicture}
\qquad \raisebox{8mm}{\large $=$} \qquad
\begin{tikzpicture}
\def\R{1}
  \fill[blue!90,
        opacity      = 0.3]
        (.05*\R,1.2*\R) circle (.8*\R);
  \draw[thick, blue] (.05*\R,1.2*\R) circle (.8*\R);
  \fill[red!90,
        opacity      = 0.3]
        (.05*\R,1.2*\R) circle (.7*\R);
  \draw[thick, red] (.05*\R,1.2*\R) circle (.7*\R);
  \filldraw[thick] (.05*\R,1.2*\R) node[below=-.7mm]{\tiny $p$} circle (0.02*\R) node[above right=.5mm and -1.5mm, red]{\tiny $V^\tp_{i-1}$} node[above right=5.5mm and 3mm, blue]{\tiny $V^\tp_i$};
  \filldraw[thick] (-.5*\R,1.24*\R) node[below=-.7mm]{\tiny $q$} circle (0.02*\R);
\end{tikzpicture}
\qquad \raisebox{8mm}{\large $-$} \qquad
\raisebox{4mm}{\begin{tikzpicture}
\def\R{1}
  \fill[red!90,
        opacity      = 0.3]
        (.05*\R,1.2*\R) circle (.4*\R);
  \draw[thick, red] (.05*\R,1.2*\R) circle (.4*\R);
  \fill[blue!90,
        opacity      = 0.3]
        (.05*\R,1.2*\R) circle (.3*\R);
  \draw[thick, blue] (.05*\R,1.2*\R) circle (.3*\R);
  \filldraw[thick] (.05*\R,1.2*\R) node[below=-.7mm]{\tiny $p$} circle (0.02*\R) node[above left=2.3mm and .3mm, red]{\tiny $V^\tm_{i-1}$} node[above right=1.1mm and .9mm, blue]{\tiny $V^\tm_i$};
  \filldraw[thick] (-.5*\R,1.24*\R) node[below=-.7mm]{\tiny $q$} circle (0.02*\R);
\end{tikzpicture}
}
\end{center}
We may then write
\begin{equation} \label{m A proof}
m_{q,Y}(A^\xi_{q}) = \Bigg[ \prod_{i=1}^r s\bigg( \a^i_{\xi_q} \otimes \Big( \lceil \xi_q^{-m_i} \rceil^{V^\tp_{i-1}}_{V^\tp_i} - \lceil \xi_q^{-m_i} \rceil^{V^\tm_i}_{V^\tm_{i-1}} \Big) \bigg) \Bigg]_Y.
\end{equation}
Expanding this out we get a sum of $2^r$ terms, each of which is the cohomology class of a $r$-fold $\Sym$-product of terms of the form
\begin{equation} \label{2 types of terms pre}
s \Big( \a^i_{\xi_q} \otimes \lceil \xi_q^{-m_i} \rceil^{V^\tp_{i-1}}_{V^\tp_i} \Big) \qquad \textup{or} \qquad
- s \Big( \a^i_{\xi_q} \otimes \lceil \xi_q^{-m_i} \rceil^{V^\tm_i}_{V^\tm_{i-1}} \Big),
\end{equation}
for each $i \in \{ 1, \ldots, r \}$. Due to the support properties of the different smooth bump functions present, in each case we can expand $\xi_q^{-m_i} = \frac{1}{(m_i-1)!} \partial_{\xi(q)}^{m_i-1} (\xi_p - \xi_p(q))^{-1}$ in the region where $|\xi_p(q)| < |\xi_p|$ or $|\xi_p(q)| > |\xi_p|$, respectively. Using also the fact that $\a^i_{\xi_q} = \a^i_{\xi_p}$ by Remark \ref{rem: a xi q no expand}, we may write \eqref{2 types of terms pre} as
\begin{align} \label{2 types of terms}
&\frac{1}{(m_i-1)!} \partial_{\xi(q)}^{m_i-1} \Bigg( \sum_{n < 0} \xi_p(q)^{-n-1} s \Big( \a^i_{\xi_p} \otimes \lceil \xi_p^n \rceil^{V^\tp_{i-1}}_{V^\tp_i} \Big) \Bigg) \notag\\
&\qquad\qquad \textup{or} \qquad
\frac{1}{(m_i-1)!} \partial_{\xi(q)}^{m_i-1} \Bigg( \sum_{n \geq 0} \xi_p(q)^{-n-1} s \Big( \a^i_{\xi_p} \otimes \lceil \xi_p^n \rceil^{V^\tm_i}_{V^\tm_{i-1}} \Big) \Bigg).
\end{align}
Define the annulus shaped open subsets $Y^+_i \coloneqq V^+_i \setminus \overline{V^+_{i-1}}$ and $Y^-_i \coloneqq V^-_{i-1} \setminus \overline{V^-_i}$ for $i \in \{ 1, \ldots, r \}$. Then taking the cohomology classes $[\cdot]_{Y_i^\tp}$ or $[\cdot]_{Y_i^\tm}$ of the expressions in \eqref{2 types of terms} gives
\begin{align} \label{2 types of terms 2}
\frac{1}{(m_i-1)!} \partial_{\xi(q)}^{m_i-1} \a[\xi_p(q)]_+ \in \U\L^\Sigma_\alpha(Y^+_i)
\quad \textup{or} \quad
\frac{1}{(m_i-1)!} \partial_{\xi(q)}^{m_i-1} \a[\xi_p(q)]_- \in \U\L^\Sigma_\alpha(Y^-_i).
\end{align}
In summary, we can rewrite \eqref{m A proof} explicitly as a sum of $2^r$ terms
\begin{equation} \label{m A proof rewrite}
m_{q,Y}(A^\xi_{q}) = \sum_{(\varepsilon_1, \ldots, \varepsilon_r) \in \{+, -\}^r} m_{(Y_1^{\varepsilon_1}, \ldots, Y_r^{\varepsilon_r}), Y} \bigg( \bigotimes_{i=1}^r \frac{1}{(m_i-1)!} \partial_{\xi(q)}^{m_i-1} \a[\xi_p(q)]_{\varepsilon_i} \bigg)
\end{equation}
where the support of the $i^{\rm th}$ term in each of the above factorisation products is $Y_i^{\varepsilon_i}$, namely it is determined by the sign $\varepsilon_i \in \{ +, - \}$ according to \eqref{2 types of terms 2}.
The desired result now follows from observing that the relative ordering of the $2 r$ annuli $Y_i^\pm$ for $i \in \{1, \ldots, r\}$ appearing in each of the $2^r$ terms of the sum \eqref{m A proof rewrite} coincides with the relative ordering of the $2r$ operators appearing in each of the $2^r$ terms of the desired normal ordered product.
\end{proof}
\end{proposition}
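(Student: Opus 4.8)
The plan is to reduce everything, by linearity, to a single monomial state $A$ of the form \eqref{gen state Vkx}, and to treat its chiral and anti-chiral factors separately: the chiral loop generators are realised by Dolbeault forms supported on $Y_+$ and the anti-chiral ones on $Y_-$, so the $\Sym$-product defining $m_{q,Y}(A^\xi_q)$ factorises into a purely chiral and a purely anti-chiral contribution, and it suffices to carry out the argument for a chiral state $A = \a^r_{(-m_r)} \ldots \a^1_{(-m_1)} \vac$. First I would unwind $m_{q,Y}(A^\xi_q)$ using the definition \eqref{cal Y map defined} of $A^\xi_q$ together with the explicit description of the factorisation products of $\U\L^\Sigma_\alpha$ recalled in \S\ref{sec: twisted PFE}: choosing a nested sequence of annulus shaped open subsets $q_+ \in Y_0 \Subset \ldots \Subset Y_r \subset Y_+$ around $p_+$ in the local coordinate $\xi$, this presents $m_{q,Y}(A^\xi_q)$ as the $0^{\rm th}$ cohomology class of the $r$-fold $\Sym$-product $\prod_{i=1}^r s(\a^i_{\xi_q} \otimes \lceil \xi_q^{-m_i} \rceil^{Y_{i-1}}_{Y_i})$.

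Next I would split each annulus as $Y_j = V_j^+ \setminus \overline{V_j^-}$ into disc shaped open subsets around $p_+$, chosen along a nested chain $p_+ \in V_i^- \Subset V_{i-1}^- \Subset V_{i-1}^+ \Subset V_i^+$ with $q_+ \in V_j^+$ and $q_+ \notin \overline{V_j^-}$ for all $j$, and correspondingly write each bump function $\rho^{Y_{i-1}}_{Y_i}$ as a difference of bump functions attached to the outer and inner discs. Expanding the resulting product of differences yields $2^r$ terms, each an $r$-fold $\Sym$-product whose $i^{\rm th}$ factor is of ``creation type'' $s(\a^i_{\xi_p} \otimes \lceil \xi_q^{-m_i}\rceil^{V_{i-1}^+}_{V_i^+})$ or of ``annihilation type'' $-s(\a^i_{\xi_p} \otimes \lceil \xi_q^{-m_i}\rceil^{V_i^-}_{V_{i-1}^-})$, where I have replaced $\a^i_{\xi_q}$ by $\a^i_{\xi_p}$ using Remark \ref{rem: a xi q no expand}. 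In either case the support of $\bar\partial\rho$ lies in a region where $\xi_q^{-m_i} = \tfrac{1}{(m_i-1)!}\partial_{\xi(q)}^{m_i-1}(\xi_p - \xi_p(q))^{-1}$ can be Taylor expanded in powers of $\xi_p(q)$ (creation) or of $\xi_p$ (annihilation), and passing to cohomology classes over the annuli $Y_i^+ := V_i^+ \setminus \overline{V_{i-1}^+}$ or $Y_i^- := V_{i-1}^- \setminus \overline{V_i^-}$ identifies each factor, by the definition \eqref{X bar X vertex modes} of the vertex modes and the two expansions $({\rm v}_{<0})$, $({\rm v}_{\geq 0})$, with $\tfrac{1}{(m_i-1)!}\partial_{\xi(q)}^{m_i-1}\a^i[\xi_p(q)]_{\pm}$. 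The routine technical points here --- the vanishing of $\d_\CE$-exact terms, Stokes's theorem and the residue evaluations --- are handled exactly as in the proofs of Lemma \ref{lem: residue integral} and Proposition \ref{prop: Y map def}, using Lemma \ref{lem: up arrow indep choices} to keep the relevant supports disjoint.

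Collecting these identifications, $m_{q,Y}(A^\xi_q)$ becomes a sum over $(\varepsilon_1,\ldots,\varepsilon_r)\in\{+,-\}^r$ of factorisation products of the $\tfrac{1}{(m_i-1)!}\partial_{\xi(q)}^{m_i-1}\a^i[\xi_p(q)]_{\varepsilon_i}$ along the inclusion $\sqcup_{i=1}^r Y_i^{\varepsilon_i}\subset Y$, where all the $Y_i^+$ lie outside the annulus containing $q$, all the $Y_i^-$ lie inside it, and the whole collection is pairwise disjoint by construction. The final step, and the one I expect to be the main obstacle, is purely combinatorial: I must verify that this sum agrees term by term with the expansion of the recursively defined normal ordered product $\nord{\tfrac{1}{(m_r-1)!}\partial_{\xi_p(q)}^{m_r-1}\a^r[\xi_p(q)] \ldots \tfrac{1}{(m_1-1)!}\partial_{\xi_p(q)}^{m_1-1}\a^1[\xi_p(q)]}$, i.e. that the prescription ``creation parts to the left, annihilation parts to the right, original relative order preserved'' is exactly realised geometrically by ``creation parts in the outer annuli $Y_i^+$, annihilation parts in the inner annuli $Y_i^-$, nesting respecting the index $i$''. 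Since the factorisation product of $\U\L^\Sigma_\alpha$ is symmetric and records only which opens are disjoint and contained in which, this reduces to checking that the nesting data chosen in \eqref{cal Y map defined} and above is compatible with $q_+$ lying in the gap between $V_j^-$ and $V_j^+$ for every $j$; once this is established the equality $m_{q,Y}(A^\xi_q) = \mathcal Y(A^\xi_q,\underline{\xi_p(q)})$ follows, and the anti-chiral factor is treated identically with $\xi_p$ replaced by $\bar\xi_p$ throughout.
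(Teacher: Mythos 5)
Your proposal is correct and follows essentially the same route as the paper's own proof: the same reduction to a chiral monomial state, the same splitting of nested annuli into differences of disc-shaped subsets, the same $2^r$-term expansion identifying each factor with $\tfrac{1}{(m_i-1)!}\partial_{\xi(q)}^{m_i-1}\a^i[\xi_p(q)]_{\pm}$ supported on $Y_i^\pm$, and the same final observation that the relative ordering of the annuli reproduces the normal ordering prescription. The only difference is presentational: you flag the ordering comparison as a potential obstacle and spell out the chiral/anti-chiral factorisation slightly more explicitly, but both points are handled exactly as in the paper.
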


The coefficients in the expansion of the vertex operator from Proposition \ref{prop: vertex operator general}, namely
\begin{equation} \label{vertex algebra modes}
\mathcal Y\big( A^\xi_q, \underline{\xi_p(q)} \big) = \sum_{n, \bar n \in \ZZ} A^{\; \xi_p}_{(n, \bar n)} \, \xi_p(q)^{-n-1} \bar \xi_p(q)^{-\bar n-1},
\end{equation}
define the \emph{vertex modes} $A^{\; \xi_p}_{(n, \bar n)} \in \U\L^\Sigma_\alpha(Y)$ of the state $A \in \hVV^{\fl,\alpha}$ at $p \in \Sigma^\circ$ in the local coordinate $\xi$ for any $n, \bar n \in \ZZ$. Just as the elements \eqref{X bar X vertex modes} of $\U\L^\Sigma_\alpha(Y)$ did, each vertex mode of any given state $A \in \hVV^{\fl,\alpha}$ also gives rise to an endomorphism
\begin{equation*}
A^{\; \xi_p}_{(n, \bar n)} : \hV^{\fl,\alpha}_p \longrightarrow \hV^{\fl,\alpha}_p, \qquad
B^\xi_p \longmapsto A^{\; \xi_p}_{(n, \bar n)} B^\xi_p
\end{equation*}
of $\hV^{\fl,\alpha}_p$ for each $n, \bar n \in \ZZ$, defined by forming the factorisation product $m_{(Y, p), U} \big( A^{\; \xi_p}_{(n, \bar n)} \otimes B^\xi_p \big)$ with the input state $B \in \hVV^{\fl,\alpha}$ prepared at $p$ in the local coordinate $\xi$. The fact that this defines an element of $\hV^{\fl,\alpha}_p$ follows from the next lemma.

\begin{lemma} \label{lem: field vertex modes}
In any local coordinate $\xi$ in the neighbourhood of a point $p \in \Sigma^\circ$ and for any $A, B \in \hVV^{\fl,\alpha}$ we have
\begin{equation*}
A^{\; \xi_p}_{(n, \bar n)} B^\xi_p = \big( A_{(n, \bar n)} B \big)^\xi_p
\end{equation*}
for some unique $A_{(n, \bar n)} B \in \hVV^{\fl,\alpha}$. Moreover, we have $A_{(n, \bar n)} B = 0$ if either $n$ or $\bar n$ is sufficiently large. In particular, we have $A_{(n, \bar n)} \vac = 0$ if either $n \geq 0$ or $\bar n \geq 0$.
\begin{proof}
We use the same notation as in the proof of Proposition \ref{prop: vertex operator general}. And just as in the latter, it is sufficient to consider only the chiral part of the state $B \in \hVV^{\fl,\alpha}$, i.e. we can suppose that $B =\Y^s_{(-n_s)} \ldots \Y^1_{(-n_1)} \vac$, since the anti-chiral part can be treated completely analogously.

Now for every $n \in \ZZ$, the vertex mode $A^{\; \xi_p}_{(n, -1)} \in \U\L^\Sigma_\alpha(Y)$ is the coefficient of $\xi_p(q)^{-n-1}$ in the operator \eqref{m A proof rewrite} from the proof of Proposition \ref{prop: vertex operator general}, consisting of a sum of $2^r$ terms. Aside from the two extremal terms with $\varepsilon_i = \pm$ for all $i \in \{ 1, \ldots, r \}$, the remaining $2^r-2$ terms will all contribute an infinite sum to the coefficient of $\xi_p(q)^{-n-1}$. However, it can be seen by repeatedly applying the identities \eqref{commutator check} from Proposition \ref{prop: Y map def} that only finitely many terms in these infinite sums will contribute non-trivially to the factorisation product
\begin{equation*}
m_{(Y, p), U} \Big( A^{\; \xi_p}_{(n, -1)} \otimes B^\xi_p \Big) \in \U\L^\Sigma_\alpha(U)
\end{equation*}
into a disc shaped open subset $U \supset Y$. It follows that this factorisation product can be written as $(A_{(n, -1)} B)^{\xi_p}_U \in \U\L^\Sigma_\alpha(U)$ for some state $A_{(n, -1)} B \in \hVV^{\fl,\alpha}$ which is unique by the injectivity of \eqref{cal Y map defined}. If $n \in \ZZ_{\geq 0}$ is too large then $A_{(n, -1)} B=0$ since none of the $2^r$ terms in the sum \eqref{m A proof rewrite} will contribute to the above factorisation product. Finally, for the `in particular' part, terms from the sum \eqref{m A proof rewrite} which contribute to the coefficient of $\xi_p(q)^{-n-1}$ for $n \in \ZZ_{\geq 0}$ must have $\varepsilon_i = -$ for some $i \in \{1, \ldots, r\}$ and all such terms annihilate the vacuum by \eqref{commutator check d}.
\end{proof}
\end{lemma}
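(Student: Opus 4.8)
The plan is to compute the action of the vertex mode $A^{\xi_p}_{(n,\bar n)}$ on $B^\xi_p$ by reading it off the explicit description of the factorisation product $m_{q,Y}(A^\xi_q)$ already obtained in the proof of Proposition~\ref{prop: vertex operator general}, and then to recognise the resulting element of $\U\L^\Sigma_\alpha(U)$, for a disc shaped open $U \supset Y$, as $C^{\xi_p}_U$ for a genuine state $C \in \hVV^{\fl,\alpha}$ (compatibly as $U$ varies, so that it descends to an element of $\hV^{\fl,\alpha}_p$); once this is done, the injectivity of the linear map \eqref{cal Y map defined} established in Proposition~\ref{prop: Y map injective} forces $C$ to be unique, and we set $A_{(n,\bar n)}B \coloneqq C$. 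By linearity it is enough to take a monomial state $A$ as in \eqref{gen state Vkx}, and since the vertex operator of such a state factorises into a chiral and an anti-chiral normal ordered product acting on the corresponding factors of $B^\xi_p$, it suffices to treat $A$ and $B$ both chiral, say $A = \a^r_{(-m_r)}\dots\a^1_{(-m_1)}\vac$ and $B = \Y^s_{(-n_s)}\dots\Y^1_{(-n_1)}\vac$; the anti-chiral case is identical, and in this situation the only nonzero anti-chiral index of $A$ is $\bar n = -1$.

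First I would invoke \eqref{m A proof rewrite}, which exhibits $m_{q,Y}(A^\xi_q)$ as a sum of $2^r$ terms indexed by sign vectors $(\varepsilon_1,\dots,\varepsilon_r) \in \{+,-\}^r$, each term being a factorisation product of the operators $\tfrac{1}{(m_i-1)!}\partial^{m_i-1}_{\xi(q)}\a^i[\xi_p(q)]_{\varepsilon_i}$ supported on nested annuli $Y_i^{\varepsilon_i}$ around $p$. By \eqref{vertex algebra modes} the mode $A^{\xi_p}_{(n,-1)}$ is the coefficient of $\xi_p(q)^{-n-1}$ in this expansion. For the two extremal sign vectors (all $+$, or all $-$) this coefficient is a finite expression, whereas for each of the remaining $2^r-2$ mixed sign vectors it is an infinite sum of products of generator modes $\a^i_{(k)}$, involving arbitrarily large non-negative $k$ in the slots with $\varepsilon_i = -$.

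The heart of the argument — and the step I expect to be the main obstacle — is to show that these infinite sums collapse to finite ones once one forms the factorisation product $m_{(Y,p),U}\big(A^{\xi_p}_{(n,-1)} \otimes B^\xi_p\big) \in \U\L^\Sigma_\alpha(U)$ into a disc $U \supset Y$. This is precisely the mechanism behind normal ordering: using the commutator identities \eqref{commutator check} of Proposition~\ref{prop: Y map def}, a generator mode $\a^i_{(k)}$ with $k$ large may be commuted to the right past the finitely many negative modes building $B$, at the cost of only finitely many lower-order correction terms, after which \eqref{commutator check d} lets it annihilate $\vac^{\xi_p}_U = [1]_U$. Carrying this out term by term reduces the factorisation product to a finite $\CC$-linear combination of expressions of the form $(\cdot)^{\xi_p}_U$ applied to monomial states, so, because $(\cdot)^{\xi_p}_U$ is linear, it equals $C^{\xi_p}_U$ for a single $C \in \hVV^{\fl,\alpha}$; uniqueness of $C = A_{(n,-1)}B$ is then immediate from the injectivity of \eqref{cal Y map defined}.

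Finally, for the two `moreover' assertions I would argue as follows. Tracking the $\ZZ^2$-grading of \S\ref{sec: hom vertex modes} through the collapse shows that $A_{(n,\bar n)}B$ is a finite sum of states of definite chiral weight, equal up to the bounded spread coming from non-homogeneity of $A$ to $\Delta_A + \Delta_B - (n+1)$; since the weights occurring in $\hVV^{\fl,\alpha}$ are bounded below, this forces $A_{(n,\bar n)}B = 0$ once $n$ exceeds a bound depending only on $A$ and $B$, and symmetrically for $\bar n$. Equivalently, and more in the spirit of the geometric picture, for $n$ large enough none of the $2^r$ terms of \eqref{m A proof rewrite} survives in the collapsed factorisation product. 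For the `in particular' part, take $B = \vac$: every term of \eqref{m A proof rewrite} contributing to the coefficient of $\xi_p(q)^{-n-1}$ with $n \geq 0$ must have $\varepsilon_i = -$ for some $i$, hence involves a non-negative generator mode acting directly on $\vac^{\xi_p}_U = [1]_U$, which vanishes by \eqref{commutator check d}; thus $A_{(n,\bar n)}\vac = 0$ whenever $n \geq 0$ or $\bar n \geq 0$.
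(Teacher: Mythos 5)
Your proposal is correct and follows essentially the same route as the paper: extract the coefficient of $\xi_p(q)^{-n-1}$ from the $2^r$-term expansion \eqref{m A proof rewrite}, collapse the infinite sums in the mixed-sign terms using \eqref{commutator check} when acting on $B^\xi_p$, deduce uniqueness from the injectivity of \eqref{cal Y map defined}, and handle the vanishing statements via non-contribution for large $n$ and via \eqref{commutator check d} for $B = \vac$. The only addition is your alternative weight-grading argument for the `moreover' part, which is fine but not needed, since the paper (and your parenthetical remark) argues directly that no term of \eqref{m A proof rewrite} survives for $n$ sufficiently large.
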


In light of the mode expansion \eqref{vertex algebra modes} and the definition of the states $A_{(n, \bar n)} B \in \hVV^{\fl,\alpha}$ given in Lemma \ref{lem: field vertex modes}, it will be convenient to define the formal vertex operator map which gathers all these states into a formal sum defined as
\begin{equation} \label{vertex operator Y}
Y( A, \underline{\zeta} ) B = \sum_{n, \bar n \in \ZZ} A_{(n, \bar n)} B \, \zeta^{-n-1} \bar \zeta^{-\bar n-1},
\end{equation}
where $\zeta$ and $\bar\zeta$ are formal variables. This is related to the vertex operator in \eqref{vertex algebra modes} by
\begin{equation} \label{cal Y vs Y}
\mathcal Y\big( A^\xi_q, \underline{\xi_p(q)} \big) B^\xi_p = \Big( Y\big( A, \underline{\xi_p(q)} \big) B \Big)^\xi_p.
\end{equation}
In other words, by preparing all the states $A_{(n, \bar n)} B \in \hVV^{\fl,\alpha}$ for $n, \bar n \in \ZZ$, appearing in \eqref{vertex operator Y}, at the point $p \in U$ in the local coordinate chart $(U, \xi)$ and replacing the formal variables $\zeta$, $\bar \zeta$ in \eqref{vertex operator Y} by the complex numbers $\xi_p(q)$, $\bar \xi_p(q)$ for some other point $q \in U$, we get back the expansion of the factorisation product $m_{(p,q), U}(A^\xi_q \otimes B^\xi_p)$ in $q$ near $p$.

\subsubsection{Homogeneous vertex operators} \label{sec: hom vert op}

Recall from \S\ref{sec: hom vertex modes} the notion of homogeneous (anti-)chiral vertex modes for an element $\a \in \fl$ of definite conformal dimension $\Delta_\a$.
Given a monomial state $A \in \hVV^{\fl,\alpha}$, it is convenient to similarly introduce a notion of homogeneous vertex modes of $A$ by expanding the vertex operator from Proposition \ref{prop: vertex operator general} as
\begin{equation} \label{hom vertex modes def}
\mathcal Y\big( A^\xi_q, \underline{\xi_p(q)} \big) = \sum_{n, \bar n \in \ZZ} A^{\; \xi_p}_{[n, \bar n]} \xi_p(q)^{-n-\Delta_A} \bar \xi_p(q)^{-\bar n- \bar\Delta_A},
\end{equation}
which is to be compared with \eqref{vertex algebra modes}. The coefficients $A^{\; \xi_p}_{[n, \bar n]} \in \U\L^\Sigma_\alpha(Y)$ for $n, \bar n \in \ZZ$ are called the \emph{homogenous vertex modes} of the monomial state $A \in \hVV^{\fl,\alpha}$ at $p \in \Sigma$ in the local coordinate $\xi$. They are related to the vertex modes by a simple shift, cf. \eqref{hom modes a def},
\begin{equation} \label{hom vertex modes relation}
A^{\; \xi_p}_{[n, \bar n]} = A^{\; \xi_p}_{(n + \Delta_A - 1, \bar n + \bar\Delta_A - 1)}.
\end{equation} 
We also introduce the notation $A_{[n, \bar n]} = A_{(n + \Delta_A - 1, \bar n + \bar\Delta_A - 1)} \in \End \hVV^{\fl, \alpha}$, see Lemma \ref{lem: field vertex modes}.

\begin{lemma} \label{lem: chiral weights}
Let $A \in \hVV^{\fl,\alpha}$ be any monomial state and $n, \bar n \in \ZZ$.
Its vertex modes at $p \in \Sigma$ in any local coordinate $\xi$ are homogeneous elements of $\End \hV^{\fl,\alpha}_p$ in the local coordinate $\xi$, of chiral and anti-chiral weights
\begin{equation*}
\wgt \big( A^{\; \xi_p}_{(n, \bar n)} \big) = \Delta_A  - n - 1, \qquad
\wgtb \big( A^{\; \xi_p}_{(n, \bar n)} \big) = \bar\Delta_A  - \bar n - 1.
\end{equation*}
Likewise, the homogeneous vertex modes of $A$ at $p$ in the local coordinate $\xi$ are homogeneous elements of $\End \hV^{\fl,\alpha}_p$ in the local coordinate $\xi$, of chiral and anti-chiral weights
\begin{equation*}
\wgt \big( A^{\; \xi_p}_{[n, \bar n]} \big) = - n, \qquad
\wgtb \big( A^{\; \xi_p}_{[n, \bar n]} \big) = - \bar n.
\end{equation*}
\begin{proof}
By definition of the $\ZZ^2$-grading on $\hV^{\fl,\alpha}_p$ relative to the local coordinate $\xi$, it is clear that the homogeneous chiral and anti-chiral vertex modes \eqref{endo Xn bar Xn} are homogeneous with weights
\begin{equation*}
\wgt \big( \a^{\; \xi_p}_{[n]} \big) = - n, \qquad \wgtb \big( \a^{\; \xi_p}_{[n]} \big) = 0,
\qquad
\wgt \big( \bar \a^{\; \xi_p}_{[\bar n]} \big) = 0, \qquad \wgtb \big( \bar \a^{\; \xi_p}_{[\bar n]} \big) = - \bar n
\end{equation*}
for every $n, \bar n \in \ZZ$.
The last statement now follows using the explicit expression for the vertex operator of the monomial state $A \in \hVV^{\fl,\alpha}$ from Proposition \ref{prop: vertex operator general} and the definition \eqref{hom vertex modes def} of the homogeneous vertex modes, recalling that $\Delta_A = \sum_{i=1}^r (\Delta_{\a^i} + m_i - 1)$ and noting using \eqref{vertex operator hom rewrite} that the chiral factor $\partial_{\xi_p(q)}^{m_i-1} \a^i[\xi_p(q)]$ for $i \in \{ 1, \ldots, r\}$ consists of operators of the form
\begin{equation*}
\a^i_{[k]} \xi_p(q)^{-k - \Delta_{\a^i} - m_i + 1}
\end{equation*}
with $k \in \ZZ$ and likewise for the anti-chiral factors. The first result then follows by \eqref{hom vertex modes relation}.
\end{proof}
\end{lemma}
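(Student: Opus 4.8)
The plan is to reduce the statement about the vertex modes of a general monomial state $A$ to the corresponding statement about the homogeneous generating modes $\a^{\xi_p}_{[n]}$ and $\bar\a^{\xi_p}_{[\bar n]}$, and then to read off the $\ZZ^2$-weight of $A^{\xi_p}_{[n,\bar n]}$ by tracking the powers of $\xi_p(q)$ and $\bar\xi_p(q)$ in the explicit vertex operator of Proposition \ref{prop: vertex operator general}. The only input beyond this bookkeeping is that the $\ZZ^2$-grading on $\hV^{\fl,\alpha}_p$ relative to $\xi$ is well defined, which holds in each of the three examples thanks to the homogeneous relations \eqref{KM algebra vertex modes def hom}, \eqref{Virasoro algebra vertex modes def hom} and \eqref{beta gamma algebra vertex modes hom def}.

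First I would record the weights of the generating modes. By Lemma \ref{lem: field vertex modes}, $\a^{\xi_p}_{[n]}$ sends $B^\xi_p \mapsto (\a_{[n]} B)^\xi_p$, and since $\a_{[n]} = \a_{(n+\Delta_\a-1)} = \a \otimes t^{n+\Delta_\a-1}$ it acts on a state of the form \eqref{gen state Vkx} by adjoining a generator $\a_{(-m)}$ with $m = -(n+\Delta_\a-1)$, whose contribution $\Delta_\a + m - 1 = -n$ to the chiral conformal dimension shows that it raises the chiral grading by $-n$ and leaves the anti-chiral grading unchanged. Hence $\wgt(\a^{\xi_p}_{[n]}) = -n$, $\wgtb(\a^{\xi_p}_{[n]}) = 0$, and symmetrically $\wgt(\bar\a^{\xi_p}_{[\bar n]}) = 0$, $\wgtb(\bar\a^{\xi_p}_{[\bar n]}) = -\bar n$.

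Next, for a monomial $A$ as in \eqref{gen state Vkx} with each $\a^i$ and $\b^j$ of definite conformal dimension, I would substitute the expansion \eqref{vertex operator hom rewrite}, i.e. $\a^i[\xi_p(q)] = \sum_{k\in\ZZ} \a^i_{[k]} \, \xi_p(q)^{-k-\Delta_{\a^i}}$, into the formula of Proposition \ref{prop: vertex operator general}. Each chiral factor $\tfrac{1}{(m_i-1)!}\partial_{\xi_p(q)}^{m_i-1}\a^i[\xi_p(q)]$ then becomes a sum of terms $c_k \, \a^i_{[k]} \, \xi_p(q)^{-k-\Delta_{\a^i}-m_i+1}$ with $k \in \ZZ$ and $c_k$ scalar, and similarly for the anti-chiral factors. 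Multiplying these factors and normal ordering, the coefficient of $\xi_p(q)^{-n-\Delta_A}\bar\xi_p(q)^{-\bar n-\bar\Delta_A}$ --- which by \eqref{hom vertex modes def} is precisely $A^{\xi_p}_{[n,\bar n]}$ --- is a finite linear combination of normal ordered products $\nord{\a^r_{[k_r]}\cdots\a^1_{[k_1]}}\,\nord{\bar\b^{\bar r}_{[\bar k_{\bar r}]}\cdots\bar\b^1_{[\bar k_1]}}$ subject to $\sum_i k_i = n$ and $\sum_j \bar k_j = \bar n$; these two constraints are exactly the statements that the total power of $\xi_p(q)$ equals $-\sum_i k_i - \sum_i(\Delta_{\a^i}+m_i-1) = -n-\Delta_A$, using $\Delta_A = \sum_i(\Delta_{\a^i}+m_i-1)$, and similarly for the power of $\bar\xi_p(q)$.

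Since a normal ordered product merely reorders the creation and annihilation parts of the same operators, it preserves weights, so by the first step each such term has chiral weight $-\sum_i k_i = -n$ and anti-chiral weight $-\sum_j \bar k_j = -\bar n$; therefore $\wgt(A^{\xi_p}_{[n,\bar n]}) = -n$ and $\wgtb(A^{\xi_p}_{[n,\bar n]}) = -\bar n$. The statement for the non-homogeneous vertex modes then follows at once from \eqref{hom vertex modes relation} rewritten as $A^{\xi_p}_{(n,\bar n)} = A^{\xi_p}_{[n-\Delta_A+1,\,\bar n-\bar\Delta_A+1]}$, which gives $\wgt(A^{\xi_p}_{(n,\bar n)}) = \Delta_A-n-1$ and $\wgtb(A^{\xi_p}_{(n,\bar n)}) = \bar\Delta_A-\bar n-1$. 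I expect the only point requiring attention to be a mild piece of bookkeeping from the proof of Proposition \ref{prop: vertex operator general}: there the factors $\partial^{m_i-1}\a^i[\xi_p(q)]$ are realised on distinct nested annuli $Y_i^\pm$, but since each $\a^i_{[k]}$ defines the same endomorphism of $\hV^{\fl,\alpha}_p$ independently of its supporting annulus --- and $\a^i_{\xi_q} = \a^i_{\xi_p}$ by Remark \ref{rem: a xi q no expand} --- the weight of the resulting endomorphism is intrinsic and the power counting above is unaffected.
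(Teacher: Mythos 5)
Your proposal is correct and follows essentially the same route as the paper: first fix the weights of the generating homogeneous modes $\a^{\xi_p}_{[n]}$, $\bar\a^{\xi_p}_{[\bar n]}$, then read off the weight of $A^{\xi_p}_{[n,\bar n]}$ by power counting in the normal ordered expression of Proposition \ref{prop: vertex operator general} via \eqref{vertex operator hom rewrite}, and finally transfer to $A^{\xi_p}_{(n,\bar n)}$ through the shift \eqref{hom vertex modes relation}. The only cosmetic caveat is that the coefficient of a fixed power of $\xi_p(q)$, $\bar\xi_p(q)$ is in general an infinite formal sum (finite only upon acting on a given state), but since every term carries the same weights $(-n,-\bar n)$ this does not affect the argument.
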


Given any monomial state $A \in \hVV^{\fl,\alpha}$, we will refer to
\begin{equation} \label{homogeneous vertex op}
\mathcal X\big( A^\xi_q, \underline{\xi_p(q)} \big) \coloneqq \xi_p(q)^{\Delta_A} \bar \xi_p(q)^{\bar\Delta_A} \, \mathcal Y\big( A^\xi_q, \underline{\xi_p(q)} \big)
= \sum_{n, \bar n \in \ZZ} A^{\; \xi_p}_{[n, \bar n]} \xi_p(q)^{-n} \bar \xi_p(q)^{-\bar n}
\end{equation}
as the \emph{homogeneous vertex operator of $A^\xi_q \in \hV^{\fl,\alpha}_q$ at $q$ in the coordinate $\xi$}, see e.g. \cite[Remark 3.1.25]{LepowskyLi} in the chiral case. We extend this notion by linearity to all states in $\hV^{\fl,\alpha}_q$.

\subsubsection{Translation operators}

Recall the translation operators $D, \bar D : \hVV^{\fl,\alpha} \to \hVV^{\fl,\alpha}$ introduced in \eqref{D bar D def}. Their action on $\hV^{\fl,\alpha}_p$ was identified in Proposition \ref{prop: Y deriv} with that of the derivative operators $\partial_{\xi(p)}, \bar \partial_{\xi(p)} : \hV^{\fl,\alpha}_p \to \hV^{\fl,\alpha}_p$, respectively. In terms of the vertex modes \eqref{vertex algebra modes} of the vertex operator from Proposition \ref{prop: vertex operator general} their actions is given by the usual formula.

\begin{lemma} \label{lem: derivative vertex}
For any $A \in \hVV^{\fl,\alpha}$ and $n, \bar n \in \ZZ$ we have
\begin{equation*}
(D A)^{\, \xi_p}_{(n,\bar n)} = - n A^{\, \xi_p}_{(n-1,\bar n)}, \qquad
(\bar D A)^{\, \xi_p}_{(n,\bar n)} = - \bar n A^{\, \xi_p}_{(n,\bar n-1)}.
\end{equation*}
in $\End \hV^{\fl,\alpha}_p$, for any local coordinate $\xi$ in the neighbourhood of a point $p \in \Sigma^\circ$.
\begin{proof}
The first result follows from comparing the mode expansion of both sides of the relation
\begin{equation*}
\mathcal Y\big( (D A)^\xi_q, \underline{\xi_p(q)} \big) = m_{q,Y}\big( (D A)^\xi_{q} \big) = m_{q,Y}( \partial_{\xi(q)} A^\xi_{q} ) = \partial_{\xi(q)} \mathcal Y\big( A^\xi_q, \underline{\xi_p(q)} \big),
\end{equation*}
where the first and last equalities make use of Proposition \ref{prop: vertex operator general} and the second equality is by Proposition \ref{prop: Y deriv}. The proof of the second result is completely analogous.
\end{proof}
\end{lemma}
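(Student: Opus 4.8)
The plan is to follow the route already visible from Propositions \ref{prop: vertex operator general} and \ref{prop: Y deriv}, namely to transport the translation operator through the factorisation product and then read off Laurent coefficients. Fix $p\in\Sigma^\circ$, a local coordinate $\xi$, and an annulus shaped open subset $q\in Y\subset\Sigma^\circ$ around $p$ in the coordinate $\xi$, so that the vertex modes of $A$ at $p$ all live in $\U\L^\Sigma_\alpha(Y)$. The first step is to establish, for a fixed state $A\in\hVV^{\fl,\alpha}$, the chain of identities
\begin{align*}
\mathcal Y\big((DA)^\xi_q,\underline{\xi_p(q)}\big)
&=m_{q,Y}\big((DA)^\xi_q\big)=m_{q,Y}\big(\partial_{\xi(q)}A^\xi_q\big)\\
&=\partial_{\xi(q)}\,m_{q,Y}(A^\xi_q)=\partial_{\xi(q)}\,\mathcal Y\big(A^\xi_q,\underline{\xi_p(q)}\big).
\end{align*}
Here the outer two equalities are Proposition \ref{prop: vertex operator general}; the second equality is Proposition \ref{prop: Y deriv} applied at the point $q$, which gives $(DA)^\xi_q=\partial_{\xi(q)}A^\xi_q$ as elements of $\hV^{\fl,\alpha}_q$; and the third equality says that the factorisation product $m_{q,Y}$ carries the $q$-derivative of the holomorphic family $q\mapsto A^\xi_q$ to the $q$-derivative of the family $q\mapsto m_{q,Y}(A^\xi_q)$, which holds because $m_{q,Y}$ is $\CC$-linear and compatible with the limit $\lim\,(\U\L^\Sigma_\alpha)_q$ used to define $\hV^{\fl,\alpha}_q$.

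The second step is routine coefficient extraction. Using the mode expansion \eqref{vertex algebra modes}, the left-hand side above equals $\sum_{n,\bar n\in\ZZ}(DA)^{\xi_p}_{(n,\bar n)}\,\xi_p(q)^{-n-1}\bar\xi_p(q)^{-\bar n-1}$, while the right-hand side is obtained by applying $\partial_{\xi(q)}$ to $\sum_{n,\bar n\in\ZZ}A^{\xi_p}_{(n,\bar n)}\,\xi_p(q)^{-n-1}\bar\xi_p(q)^{-\bar n-1}$; since $\xi_p(q)=\xi(q)-\xi(p)$ is holomorphic in $q$ and $\bar\xi_p(q)$ anti-holomorphic, only the factor $\xi_p(q)^{-n-1}$ is differentiated, producing $(-n-1)\,\xi_p(q)^{-n-2}$. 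Shifting the index $n\mapsto n-1$ and comparing the coefficient of $\xi_p(q)^{-n-1}\bar\xi_p(q)^{-\bar n-1}$ gives $(DA)^{\xi_p}_{(n,\bar n)}=-n\,A^{\xi_p}_{(n-1,\bar n)}$; since the endomorphism of $\hV^{\fl,\alpha}_p$ attached to a vertex mode is defined through the linear factorisation products $m_{(Y,p),U}$ (Lemma \ref{lem: field vertex modes}), this identity descends to $\End\hV^{\fl,\alpha}_p$. The statement for $\bar D$ follows verbatim with $D$, $\partial_{\xi(q)}$, $\xi_p(q)$ replaced by $\bar D$, $\partial_{\bar\xi(q)}$, $\bar\xi_p(q)$.

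I expect the one genuinely delicate point to be the third equality in the chain, $m_{q,Y}\big(\partial_{\xi(q)}A^\xi_q\big)=\partial_{\xi(q)}\,m_{q,Y}(A^\xi_q)$: the map $m_{q,Y}$ nominally depends on where $q$ sits inside $Y$, so one must check that what is transported is the \emph{intrinsic} $q$-derivative of the $q$-parametrised family in $\hV^{\fl,\alpha}_q$, not some accidental dependence of $m_{q,Y}$ on $q$; this is where the naturality of the factorisation products with respect to the limit defining $\hV^{\fl,\alpha}_q$ enters. Alternatively, one can bypass this point entirely by arguing at the level of the explicit normal ordered product of Proposition \ref{prop: vertex operator general}: for a monomial state \eqref{gen state Vkx}, the operator $D$ of \eqref{D bar D def a} is precisely the sum over the $r$ ways of incrementing one index $m_i$ by one, which matches term by term the Leibniz expansion of $\partial_{\xi_p(q)}$ acting on the normal ordered product (normal ordering being compatible with $\partial_{\xi_p(q)}$ since it preserves the creation/annihilation splitting of each $\a^i[\xi_p(q)]$), after which one compares with \eqref{vertex algebra modes}. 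Either way the remaining steps are just the index bookkeeping indicated above.
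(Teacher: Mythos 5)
Your proof is correct and follows essentially the same route as the paper: the same chain $\mathcal Y\big((DA)^\xi_q,\underline{\xi_p(q)}\big)=m_{q,Y}\big((DA)^\xi_q\big)=m_{q,Y}\big(\partial_{\xi(q)}A^\xi_q\big)=\partial_{\xi(q)}\mathcal Y\big(A^\xi_q,\underline{\xi_p(q)}\big)$ via Propositions \ref{prop: vertex operator general} and \ref{prop: Y deriv}, followed by comparison of the mode expansion \eqref{vertex algebra modes}. The index bookkeeping and the remark on the delicate third equality (commuting the $q$-derivative past $m_{q,Y}$) are both fine and consistent with the paper's argument.
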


\begin{lemma} \label{lem: vacuum axiom}
We have $Y( \vac, \underline{\zeta} ) = \id_{\hVV^{\fl,\alpha}}$, and for any $A \in \hVV^{\fl,\alpha}$ we have the Taylor expansion $Y( A, \underline{\zeta} ) \vac = e^{\zeta D} e^{\bar \zeta \bar D} A$.
In particular, $D A = A_{(-2,-1)} \vac$ and $\bar D A = A_{(-1,-2)} \vac$.
\begin{proof}
Since $m_{q,Y}\big( \vac^\xi_q \big) = [1]_Y$ acts as the identity in $\End \hV^{\fl,\alpha}$ the first result is immediate from the definition of $\mathcal Y$ in Proposition \ref{prop: vertex operator general}.
In any chart $\xi : U \to \CC$ with $p, q \in U$ we have
\begin{align*}
\mathcal Y\big( A^\xi_q, \underline{\xi_p(q)} \big) \vac^{\xi_p}_U &= m_{(Y, p), U} \big( m_{q,Y}(A^\xi_q) \otimes \vac^\xi_p \big) = m_{q, U}(A^\xi_q) = A^{\xi_q}_U\\
&= e^{\xi_p(q) \partial_{\xi(p)}} e^{\bar \xi_p(q) \partial_{\bar \xi(p)}} A^{\xi_p}_U = \big( e^{\xi_p(q) D} e^{\bar \xi_p(q) \bar D} A \big)^{\xi_p}_U
\end{align*}
where $Y$ is an annulus shaped open subset containing $q$ and encircling the point $p$. In the second last equality we have Taylor expanded the expression $A^{\xi_q}_U$ in $q$ near $p$ and the final step is by Proposition \ref{prop: Y deriv}, see in particular its proof. The result now follows from \eqref{cal Y vs Y} and the injectivity of the linear map \eqref{cal Y map defined} by Proposition \ref{prop: Y map injective}.
\end{proof}
\end{lemma}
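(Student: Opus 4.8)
The plan is to deduce all three assertions from Proposition \ref{prop: vertex operator general} together with the infinitesimal–translation identities of Proposition \ref{prop: Y deriv}, using the injectivity of the preparation map (Proposition \ref{prop: Y map injective}) to transport identities obtained in $\hV^{\fl,\alpha}_p$ back to $\hVV^{\fl,\alpha}$.

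For the first statement, observe that $\vac$ is the degenerate case $r = \bar r = 0$ of \eqref{gen state Vkx}, so by the definition \eqref{cal Y map defined} its preparation is the unit, $m_{q,Y}(\vac^\xi_q) = [1]_Y$. Hence the associated vertex operator $\mathcal Y(\vac^\xi_q, \underline{\xi_p(q)})$ from Proposition \ref{prop: vertex operator general} is the empty normal-ordered product, and acting on an arbitrary $B^\xi_p$ through the factorisation product $m_{(Y,p),U}$ it simply extends $[1]_Y$ and returns $B^{\xi_p}_U$, independently of $\underline{\xi_p(q)}$. Via \eqref{cal Y vs Y} this reads $\big(Y(\vac,\underline{\xi_p(q)})B\big)^\xi_p = B^\xi_p$, and injectivity of $(\cdot)^\xi_p$ forces $Y(\vac,\underline{\zeta}) = \id_{\hVV^{\fl,\alpha}}$.

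For the Taylor-expansion statement I would evaluate $\mathcal Y(A^\xi_q,\underline{\xi_p(q)})\vac^{\xi_p}_U$ directly in a chart $\xi:U\to\CC$ containing both $p$ and $q$. Unwinding the factorisation products, this equals $m_{(Y,p),U}\big(m_{q,Y}(A^\xi_q)\otimes\vac^\xi_p\big)$; since $\vac^\xi_p$ is carried to the unit on a small disc $V\ni p$ disjoint from $Y$, the outer factorisation product merely extends $m_{q,Y}(A^\xi_q)$ from $Y$ to $U$, and coherence of the factorisation products (the limit cone over $\Top(\Sigma)_q$ being compatible with the precosheaf extension maps) identifies the result with $m_{q,U}(A^\xi_q) = A^{\xi_q}_U$. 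Then I would Taylor-expand $A^{\xi_q}_U$ about $q=p$: by Proposition \ref{prop: Y deriv}, each $\partial_{\xi(p)}$ and $\partial_{\bar\xi(p)}$ applied to $A^{\xi_p}_U$ yields $(DA)^{\xi_p}_U$ and $(\bar D A)^{\xi_p}_U$, so the expansion collapses to $\big(e^{\xi_p(q)D}e^{\bar\xi_p(q)\bar D}A\big)^{\xi_p}_U$. Comparing with \eqref{cal Y vs Y} and using injectivity of the preparation map once more gives $Y(A,\underline{\zeta})\vac = e^{\zeta D}e^{\bar\zeta\bar D}A$ as an identity of formal series, with only nonnegative powers of $\zeta,\bar\zeta$ occurring, consistently with Lemma \ref{lem: field vertex modes}. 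The two displayed corollaries then follow by reading off the coefficients of $\zeta^1\bar\zeta^0$ and $\zeta^0\bar\zeta^1$ in $\sum_{n,\bar n}A_{(n,\bar n)}\vac\,\zeta^{-n-1}\bar\zeta^{-\bar n-1} = e^{\zeta D}e^{\bar\zeta\bar D}A$, giving $A_{(-2,-1)}\vac = DA$ and $A_{(-1,-2)}\vac = \bar D A$.

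I expect the main obstacle to be the bookkeeping in the middle step: one must arrange the nested annulus/disc supports so that $m_{(Y,p),U}\big(m_{q,Y}(A^\xi_q)\otimes\vac^\xi_p\big)$ genuinely collapses to $m_{q,U}(A^\xi_q)$, and one must check that the $q$-expansion performed here is precisely the Taylor expansion underlying Proposition \ref{prop: Y deriv} (in particular that the ambiguities up to $\bar\partial$-exact terms and the reordering of supports do not interfere). Everything else is formal manipulation of the already-established structures.
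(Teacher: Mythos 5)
Your proposal is correct and follows essentially the same route as the paper's proof: preparing the vacuum as the unit $[1]_Y$, collapsing $m_{(Y,p),U}\big(m_{q,Y}(A^\xi_q)\otimes\vac^\xi_p\big)$ to $A^{\xi_q}_U$, Taylor expanding in $q$ near $p$ via Proposition \ref{prop: Y deriv}, and concluding through \eqref{cal Y vs Y} and the injectivity of the preparation map. The coefficient extraction you give for the ``in particular'' part is exactly the intended reading of the expansion, so no gaps remain.
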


\begin{proposition}
For any $A, B \in \hVV^{\fl,\alpha}$ we have $Y( B, \underline{\zeta} ) A = e^{\zeta D} e^{\bar \zeta \bar D} Y( A, - \underline{\zeta} ) B$.
\begin{proof}
By associativity \eqref{PFA commutativity} of the factorisation product we have
\begin{equation*}
\mathcal Y\big( B^\xi_q, \underline{\xi_p(q)} \big) A^\xi_p = \mathcal Y\big( A^\xi_p, \underline{\xi_q(p)} \big) B^\xi_q
\end{equation*}
since both sides are given by $m_{(p, q), U}(A^\xi_p \otimes B^\xi_q)$ in any local chart $\xi : U \to \CC$ containing $p$ and $q$. In other words, in terms of the map $Y$ defined in \eqref{vertex operator Y} we have
\begin{equation*}
\Big( Y\big( B, \underline{\xi_p(q)} \big) A\Big)^\xi_p = \Big( Y\big( A, \underline{\xi_q(p)} \big) B \Big)^\xi_q
= \Big( e^{\xi_p(q) D} e^{\bar \xi_p(q) \bar D} Y\big( A, \underline{\xi_q(p)} \big) B \Big)^\xi_p
\end{equation*}
where the last step is by Lemma \ref{lem: vacuum axiom}, see in particular its proof.
\end{proof}
\end{proposition}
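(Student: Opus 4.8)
The plan is to obtain both sides of the identity as two expansions of one and the same element of $\U\L^\Sigma_\alpha$, exploiting the associativity/commutativity \eqref{PFA commutativity} of the prefactorisation algebra together with Proposition~\ref{prop: vertex operator general}. Fix a disc shaped open subset $U \subset \Sigma^\circ$ in a local coordinate $\xi$ and two distinct points $p, q \in U$; choose an annulus shaped open subset around $p$ containing $q$, and an annulus shaped open subset around $q$ containing $p$. The single element $m_{(p,q),U}\big(A^\xi_p \otimes B^\xi_q\big) \in \U\L^\Sigma_\alpha(U)$ then factorises through each of these annuli, and Proposition~\ref{prop: vertex operator general} identifies the first factorisation with $\mathcal Y\big(B^\xi_q, \underline{\xi_p(q)}\big) A^\xi_p$ and the second with $\mathcal Y\big(A^\xi_p, \underline{\xi_q(p)}\big) B^\xi_q$. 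Hence these two expressions agree.

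First I would translate this equality into the formal vertex operator map via \eqref{cal Y vs Y}, so that it reads $\big(Y(B, \underline{\xi_p(q)}) A\big)^\xi_p = \big(Y(A, \underline{\xi_q(p)}) B\big)^\xi_q$, the two sides being states prepared at the distinct points $p$ and $q$. The next step transports the right hand side to a state prepared at $p$: for any $C \in \hVV^{\fl,\alpha}$ one has $C^{\xi_q}_U = \big(e^{\xi_p(q) D} e^{\bar\xi_p(q)\bar D} C\big)^{\xi_p}_U$, which is the Taylor expansion of $C^{\xi_q}_U$ around $p$ combined with Proposition~\ref{prop: Y deriv} (this is exactly the computation carried out in the proof of Lemma~\ref{lem: vacuum axiom}). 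Applying it coefficient-by-coefficient with $C = Y(A, \underline{\xi_q(p)}) B$, then cancelling the injective preparation map $(\cdot)^{\xi_p}_U$ (injective by Proposition~\ref{prop: Y map injective}), and using that $\xi_q(p) = \xi(p) - \xi(q) = -\xi_p(q)$ and likewise $\bar\xi_q(p) = -\bar\xi_p(q)$, i.e. $\underline{\xi_q(p)} = -\underline{\xi_p(q)}$, yields $Y(B, \underline{\xi_p(q)}) A = e^{\xi_p(q) D} e^{\bar\xi_p(q)\bar D} Y(A, -\underline{\xi_p(q)}) B$, i.e. the asserted identity with the complex number $\xi_p(q)$ in place of $\zeta$.

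It remains to upgrade this to an identity of formal series in $\underline{\zeta}$. By Lemma~\ref{lem: field vertex modes}, $A_{(n,\bar n)} B$ and $B_{(n,\bar n)} A$ vanish as soon as $n$ or $\bar n$ is large, so each side of the sought identity is a formal Laurent series (finitely many negative powers of each of $\zeta$, $\bar\zeta$) with coefficients in $\hVV^{\fl,\alpha}$, and expanding $e^{\zeta D} e^{\bar\zeta \bar D}$ contributes only finitely many terms in each bidegree; the equality of $\U\L^\Sigma_\alpha(U)$-valued functions obtained above holds for all $q$ in a punctured neighbourhood of $p$, which is the common domain of validity of the expansions $(\rm v_{<0})$ and $(\rm v_{\geq 0})$ underlying $\mathcal Y$, so comparing the coefficients of $\xi_p(q)^{-n-1}\bar\xi_p(q)^{-\bar n-1}$ establishes the formal identity. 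This last piece of coefficient bookkeeping — ensuring that the two distinct expansion regions do not obstruct the comparison — is the only genuinely delicate point; everything else is a direct application of Proposition~\ref{prop: vertex operator general}, Lemma~\ref{lem: vacuum axiom} and Proposition~\ref{prop: Y map injective}.
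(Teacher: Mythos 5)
Your proof is correct and takes essentially the same route as the paper: both sides are identified with the single element $m_{(p,q),U}(A^\xi_p\otimes B^\xi_q)$ via Proposition \ref{prop: vertex operator general}, and the state prepared at $q$ is transported back to $p$ by the Taylor-expansion argument from the proof of Lemma \ref{lem: vacuum axiom}, with $\underline{\xi_q(p)}=-\underline{\xi_p(q)}$ producing the sign. Your extra remarks on injectivity of the preparation map and the coefficient-wise upgrade to a formal-series identity simply make explicit what the paper's proof leaves implicit.
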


\subsubsection{(Anti-)chiral states} \label{sec: anti-chiral states}

The chiral vertex operator \eqref{X as X+ X-} and its anti-chiral analogue for any $\a \in \fl$ correspond to the states $\a_{(-1)}\vac^\xi_q$ and $\bar \a_{(-1)} \vac^\xi_q$, respectively, namely
\begin{equation} \label{X as vertex operator}
\a[\xi_p(q)] = \mathcal Y\big( \a_{(-1)} \vac^\xi_q, \xi_p(q) \big), \qquad
\bar\a[\bar\xi_p(q)] = \mathcal Y\big( \bar\a_{(-1)} \vac^\xi_q, \bar \xi_p(q) \big)
\end{equation}
and comparing the mode expansion of both sides it follows that for any $n, \bar n \in \ZZ$ we have
\begin{equation} \label{modes lvl 1 state}
\big( \a_{(-1)} \vac \big)^{\; \xi_p}_{(n, \bar n)} = \a^{\; \xi_p}_{(n)} \delta_{\bar n, -1}, \qquad
\big( \bar \a_{(-1)} \vac \big)^{\; \xi_p}_{(n, \bar n)} = \bar \a^{\; \xi_p}_{(\bar n)} \delta_{n, -1}.
\end{equation}
Notice $\Delta_{\a_{(-1)} \vac} = \bar\Delta_{\bar \a_{(-1)} \vac} = \Delta_\a$ and $\Delta_{\bar\a_{(-1)} \vac} = \bar\Delta_{\a_{(-1)} \vac} = 0$ for any $\a \in \fl$ so that also
\begin{align*}
\big( \a_{(-1)} \vac \big)^{\; \xi_p}_{[n,\bar n]} = \a^{\; \xi_p}_{[n]} \delta_{\bar n, 0}, \qquad
\big( \bar \a_{(-1)} \vac \big)^{\; \xi_p}_{[n, \bar n]} = \bar \a^{\; \xi_p}_{[\bar n]} \delta_{n, 0}.
\end{align*}

More generally, if $A \in \hVV^{\fl,\alpha}$ is chiral then by Proposition \ref{prop: Y deriv} the expansion of $m_{q, Y}(A^\xi_q)$ from Proposition \ref{prop: vertex operator general} is holomorphic, i.e. of the form $\mathcal Y\big( A^\xi_q, \xi_p(q) \big) = \sum_{n \in \ZZ} A^{\, \xi_p}_{(n)} \xi_p(q)^{-n-1}$, in which case its modes can be extracted as contour integrals
\begin{subequations} \label{(anti)chiral modes}
\begin{equation} \label{chiral mode}
A^{\, \xi_p}_{(n)} \coloneqq A^{\, \xi_p}_{(n,-1)} = \frac{1}{2 \pi \ii} \int_{c_p} m_{q,Y}(A^\xi_q) \xi_p(q)^n \d\xi(q)
\end{equation}
for every $n \in \ZZ$, using any counterclockwise oriented contour $c_p$ in $Y$ encircling the point $p$. If instead $A \in \hVV^{\fl,\alpha}$ is anti-chiral then we similarly have $\mathcal Y\big( A^\xi_q, \bar \xi_p(q) \big) = \sum_{n \in \ZZ} A^{\, \xi_p}_{(n)} \bar \xi_p(q)^{- n-1}$ whose modes we can extract as contour integrals
\begin{equation} \label{anti-chiral mode}
A^{\, \xi_p}_{(n)} \coloneqq A^{\, \xi_p}_{(-1, n)} = - \frac{1}{2 \pi \ii} \int_{c_p} m_{q,Y}(A^\xi_q) \bar \xi_p(q)^n \d\bar \xi(q).
\end{equation}
\end{subequations}
Although we have introduced the same abbreviated notation in both \eqref{chiral mode} and \eqref{anti-chiral mode}, this should not lead to confusion since the full correct notation $A^{\, \xi_p}_{(n, -1)}$ or $A^{\, \xi_p}_{(-1, n)}$ can be restored by noting that $A$ is chiral or anti-chiral, respectively.

Recall that the vacuum $\vac$ is both chiral and anti-chiral. It follows from Lemma \ref{lem: vacuum axiom} that its modes are $\vac^{\, \xi_p}_{(n,\bar n)} = [1]_Y \delta_{n, -1} \delta_{\bar n, -1}$ for all $n, \bar n \in \ZZ$ which is consistent with \eqref{(anti)chiral modes}.

\subsection{Borcherds type identities} \label{sec: Borcherds Vertex}

In an ordinary vertex algebra the modes of arbitrary states satisfy algebraic relations known as the Borcherds identities \cite{Borcherds:1983sq, FHL, LepowskyLi}. In the present setting, the vertex algebra modes of both chiral and anti-chiral states \eqref{(anti)chiral modes} also satisfy the same standard Borcherds identities. In fact, slightly more general `Borcherds type' identities also hold, given below in Proposition \ref{prop: Borcherds}, between the vertex modes of any (anti-)chiral state and those of a generic state.
However, there does not appear to be algebraic `Borcherds type' identities relating the vertex modes of two non-chiral states (see \cite[Remark 1.9]{Moriwaki:2020cxf}). This is the reason why the Borcherds identity is replaced by a different axiom in the various existing frameworks for full vertex operator algebras \cite{Huang:2005gz, Moriwaki:2020cxf, Singh:2023mom}. See \S\ref{sec: Full VOA axioms} for details.

\subsubsection{Moving points} \label{sec: moving points}

We can keep track of the dependence of the linear map \eqref{cal Y map defined} on the location of the point $p \in U$ where the state $A$ is inserted as follows. Given any open subset $U \subset \Sigma^\circ$ in $\Top(\Sigma)$, equipped with a local coordinate $\xi : U \to \CC$, we define the map
\begin{equation} \label{moving points 1}
\Phi^{1, \xi}_U : U \times \hVV^{\fl,\alpha} \longrightarrow \U\L^\Sigma_\alpha(U), \qquad
(p, A) \longmapsto m_{p,U}(A^\xi_p).
\end{equation}
The superscript $\xi$ refers to the coordinate used to prepare the state $A$ at the point $p$.
We can also generalise the linear map \eqref{cal Y map defined} to describe the insertion of $n \in \ZZ_{\geq 1}$ states from $\hVV^{\fl,\alpha}$ at a subset of $n$ points in $\Sigma^\circ$ as follows. Given any open subset $U \subset \Sigma^\circ$ in $\Top(\Sigma)$ and a collection of points $p_i \in U$ with $i \in \{ 1, \ldots, n \}$ for $n \in \ZZ_{\geq 1}$, we define morphisms
\begin{equation} \label{fac prod pi to U}
m_{(p_i), U} : \bigotimes_{i=1}^n \lim \, (\U\L^\Sigma_\alpha)_{p_i} \longrightarrow \U\L^\Sigma_\alpha(U)
\end{equation}
as the composition $m_{(p_i), U} \coloneqq m_{(U_i), U} \circ \big( \! \bigotimes_{i=1}^n m_{p_i, U_i} \big)$ for any inclusion of $n$ disjoint subsets $\sqcup_{i=1}^n U_i \subset U$. Combining the linear map \eqref{fac prod pi to U} with the local realisations \eqref{Y at p} at each $p_i$ in the local coordinate $\xi$, we then obtain a linear map
\begin{equation} \label{n Fg to Ug}
\bigotimes_{i=1}^n \hVV^{\fl,\alpha} \longrightarrow \U\L^\Sigma_\alpha(U), \qquad (A_i)_{i=1}^n \longmapsto m_{(p_i), U}\bigg( \bigotimes_{i=1}^n (A_i)^\xi_{p_i} \bigg).
\end{equation}
In fact, one could more generally consider different local coordinates $\xi_i$, $i \in \{ 1, \ldots, n \}$ around each of the points $p_i$ to obtain a linear map
\begin{equation} \label{n Fg to Ug xi i}
\bigotimes_{i=1}^n \hVV^{\fl,\alpha} \longrightarrow \U\L^\Sigma_\alpha(U), \qquad (A_i)_{i=1}^n \longmapsto m_{(p_i), U}\bigg( \bigotimes_{i=1}^n (A_i)^{\xi_i}_{p_i} \bigg).
\end{equation}
It is useful, as in \eqref{moving points 1}, to keep track of the location of the points $p_i$ at which the individual states $A_i$ are prepared. For any open subset $U \subset \Sigma^\circ$ in $\Top(\Sigma)$ and any $n \in \ZZ_{\geq 1}$ we let
\begin{equation*}
\Conf_n(U) \coloneqq \big\{ (p_i)_{i=1}^n \in U^{\times n} \,\big|\, p_i \neq p_j \; \text{for all}\; i \neq j \in \{ 1, \ldots, n \} \big\}
\end{equation*}
denote the configuration space of $n$ distinct points in $U$. In particular, for $n = 1$ this is just $\Conf_1(U) = U$.
We can now define the analogue of \eqref{moving points 1} for $n$ points as
\begin{align} \label{moving points n}
\Phi^{n, \xi}_U : \Conf_n(U) \times \bigotimes_{i=1}^n \hVV^{\fl,\alpha} \longrightarrow \U\L^\Sigma_\alpha(U), \qquad
\big( (p_i)_{i=1}^n, (A_i)_{i=1}^n \big) \longmapsto m_{(p_i), U}\bigg( \bigotimes_{i=1}^n (A_i)^\xi_{p_i} \bigg),
\end{align}
where again the superscript $\xi$ refers to the coordinate used to prepare each state $A_i$ at $p_i$.
Again, more generally, we could use different local coordinates $\xi_i$, $i \in \{ 1, \ldots, n \}$ around each of the points $p_i$ as in \eqref{n Fg to Ug xi i}, in which case we may define a map
\begin{align} \label{moving points n coord}
\Phi^{n, \bm \xi}_U : \Conf_n(U) \times \bigotimes_{i=1}^n \hVV^{\fl,\alpha} \longrightarrow \U\L^\Sigma_\alpha(U), \qquad
\big( (p_i)_{i=1}^n, (A_i)_{i=1}^n \big) \longmapsto m_{(p_i), U}\bigg( \bigotimes_{i=1}^n (A_i)^{\xi_i}_{p_i} \bigg),
\end{align}
which depends on the collection $\bm \xi = (\xi_i)_{i=1}^n$ of local coordinates around each point $p_i$.

\subsubsection{`Borcherds type' identities and consequences} \label{sec: Borcherds type id}

Recall the state $A_{(n, \bar n)} B \in \hVV^{\fl,\alpha}$ given by Lemma \ref{lem: field vertex modes} for any $A, B \in \hVV^{\fl,\alpha}$ and $n, \bar n \in \ZZ$.

\begin{proposition} \label{prop: Borcherds}
Let $A, B, C \in \hVV^{\fl,\alpha}$ and $k, \bar k, m, \bar m, n \in \ZZ$. We have the identities in $\hVV^{\fl,\alpha}$:
\begin{itemize}
  \item[$(i)$] If $A$ is chiral then
\begin{align*}
&\sum_{j \geq 0} \binom m j \big( A_{(n+j)} B \big)_{(m+k-j, \bar k)} C\\
&\qquad= \sum_{j \geq 0} (-1)^j \binom n j A_{(m+n-j)} B_{(k+j, \bar k)} C - \sum_{j \geq 0} (-1)^{n+j} \binom n j B_{(n+k-j, \bar k)} A_{(m+j)} C.
\end{align*}
  \item[$(ii)$] If $A$ is anti-chiral then
\begin{align*}
&\sum_{j \geq 0} \binom {\bar m} j \big( A_{(n+j)} B \big)_{(k, \bar m+\bar k-j)} C\\
&\quad= \sum_{j \geq 0} (-1)^j \binom n j A_{(\bar m+n-j)} B_{(k,\bar k+j)} C - \sum_{j \geq 0} (-1)^{n+j} \binom n j B_{(k, n+\bar k-j)} A_{(\bar m+j)} C.
\end{align*}
\end{itemize}
\begin{proof}
We prove only the chiral case $(i)$. The proof of the anti-chiral case $(ii)$ is completely analogous. So let $A, B \in \hVV^{\fl,\alpha}$ and suppose that $\bar D A = 0$.

Let $U \subset \Sigma^\circ$ be an open subset equipped with a local coordinate $\xi : U \to \CC$. Recall the map $\Phi^{3, \xi}_U$ in \eqref{moving points n} and define $F_{A, B, C} : \Conf_3(U) \to \U\L^\Sigma_\alpha(U)$ by
\begin{align} \label{FABC}
F_{A,B,C}(q, p, t) &\coloneqq \Phi^{3, \xi}_U\big( (q,p,t), (A, B,C) \big) \xi_p(q)^n \xi_t(q)^m \xi_t(p)^k \bar \xi_t(p)^{\bar k} \notag\\
&\, = m_{(q, p, t), U} \big( A^\xi_q \otimes B^\xi_p \otimes C^\xi_t \big) \xi_p(q)^n \xi_t(q)^m \xi_t(p)^k \bar \xi_t(p)^{\bar k}
\end{align}
for pairwise distinct points $q, p, t \in U$. Since we are assuming that $\bar D A = 0$, the above depends holomorphically on $q$ by Proposition \ref{prop: Y deriv}. By Cauchy's theorem we then have the identity
\begin{align} \label{contour identity Borcherds}
\frac{1}{2\pi \ii} \int_{c_p} F_{A,B,C}(q, p, t) \d\xi(q) = \frac{1}{2\pi \ii} \int_{c_t} F_{A,B,C}(q, p, t) \d\xi(q) - \frac{1}{2\pi \ii} \int_{c'_t} F_{A,B,C}(q, p, t) \d\xi(q)
\end{align}
using the standard deformation of contour argument
\begin{center}
\raisebox{5mm}{\begin{tikzpicture}
\def\R{1}
  \filldraw[thick] (-.4*\R,1.3*\R) node[below=-.7mm]{\tiny $t$}
circle (0.02*\R);
  \draw[thick, red] (-.03*\R,1.8*\R) ellipse (.3cm and .3cm);
  \draw[->,>=stealth, red, thick] (.27*\R,1.92*\R);
  \filldraw[thick] (-.3*\R,1.67*\R) node[below left=-1.2mm]{\tiny $q$} circle (0.02*\R);
  \filldraw[thick] (-.03*\R,1.8*\R) node[below left=-1.5mm]{\tiny $p$} node[right=3mm, red]{$c_p$} circle (0.02*\R);
\end{tikzpicture}
}
\qquad \raisebox{8mm}{\large $=$} \qquad
\begin{tikzpicture}
\def\R{1}
  \draw[thick, red] (-.28*\R,1.3*\R) circle (.8cm);
  \draw[->,>=stealth, red, thick] (.52*\R,1.4*\R);
  \filldraw[thick] (.4*\R,1.72*\R) node[above right=-1.4mm]{\tiny $q$} circle (0.02*\R);
  \filldraw[thick] (.07*\R,1.7*\R) node[below left=-1.2mm]{\tiny $p$} circle (0.02*\R);
  \filldraw[thick] (-.3*\R,1.3*\R) node[below=-.7mm]{\tiny $t$}
node[right=2.5mm, red]{$c_t$} circle (0.02*\R);
\end{tikzpicture}
\qquad \raisebox{8mm}{\large $-$} \qquad
\raisebox{5mm}{\begin{tikzpicture}
\def\R{1}

  \draw[thick, red] (-.4*\R,1.28*\R) circle (.3cm);
  \draw[->,>=stealth, red, thick] (-.1*\R,1.4*\R);
  \filldraw[thick] (-.3*\R,1.55*\R) node[above right=-1.4mm]{\tiny $q$} circle (0.02*\R);
  \filldraw[thick] (.07*\R,1.7*\R) node[above right=-1.2mm]{\tiny $p$} circle (0.02*\R);
  \filldraw[thick] (-.4*\R,1.3*\R) node[below=-.7mm]{\tiny $t$} node[below right= -3.5mm and 3mm, red]{$c'_t$} 
circle (0.02*\R);
\end{tikzpicture}
}
\end{center}
where $c_t$ and $c'_t$ are counterclockwise oriented contours around $t$, as depicted, and $c_p$ is a small counterclockwise oriented contour around the point $p$.

On the left hand side of \eqref{contour identity Borcherds}, since $q$ is closer to $p$ than $p$ is to $t$, we can expand the factor $\xi_t(q)^m = \big( \xi_t(p) + \xi_p(q) \big)^m$ of \eqref{FABC} in small $\xi_p(q)$.
Using also Proposition \ref{prop: vertex operator general} gives
\begin{equation*}
F_{A,B,C}(q, p, t) = \sum_{j \geq 0} \binom m j \mathcal Y \Big( \mathcal Y \big( A^\xi_q, \xi_p(q) \big) B^\xi_p, \ul{\xi_t(p)} \Big) C^\xi_t \, \xi_p(q)^{n+j} \xi_t(p)^{m+k-j} \bar \xi_t(p)^{\bar k}.
\end{equation*}
In the first integral on the right hand side of \eqref{contour identity Borcherds}, since $p$ is closer to $t$ than $q$ is, we can expand the factor $\xi_p(q)^n = \big( \xi_t(q) - \xi_t(p) \big)^n$ of \eqref{FABC} in small $\xi_t(p)$.
Using also Proposition \ref{prop: vertex operator general} we have the following expansion
\begin{equation*}
F_{A,B,C}(q, p, t) = \sum_{j \geq 0} (-1)^j \binom n j \mathcal Y\big( A^\xi_q, \xi_t(q) \big) \mathcal Y \big( B^\xi_p, \ul{\xi_t(p)} \big) C^\xi_t \, \xi_t(p)^{k+j} \xi_t(q)^{m+n-j} \bar \xi_t(p)^{\bar k}.
\end{equation*}
Likewise, in the second integral on the right hand side of \eqref{contour identity Borcherds} we can expand the same factor $\xi_p(q)^n = (-1)^n \big( \xi_t(p) - \xi_t(q) \big)^n$ of \eqref{FABC} in small $\xi_t(q)$.
Using also Proposition \ref{prop: vertex operator general} yields
\begin{equation*}
F_{A,B,C}(q, p, t) = \sum_{j \geq 0} (-1)^{n+j} \binom n j \mathcal Y\big( B^\xi_p, \ul{\xi_t(p))} \big) \mathcal Y\big( A^\xi_q, \xi_t(q) \big) C^\xi_t \, \xi_t(q)^{m+j} \xi_t(p)^{n+k-j} \bar \xi_t(p)^{\bar k}.
\end{equation*}
Upon integrating the above expansions in $\xi(q)$ along the contours $c_p$, $c_t$ and $c'_t$, respectively, and using the definition of the vertex modes in \eqref{vertex algebra modes} we obtain the three terms of the desired identity as the coefficient of $\xi_t(p)^0 \bar \xi_t(p)^0$.
\end{proof}
\end{proposition}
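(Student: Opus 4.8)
The plan is to derive both identities from a single contour-deformation (Cauchy's theorem) applied to a three‑point insertion, with the three distinct operator product expansions now supplied geometrically by the factorisation product of $\U\L^\Sigma_\alpha$ rather than by a formal algebraic argument. Fix an open subset $U\subset\Sigma^\circ$ in $\Top(\Sigma)$ with a local coordinate $\xi:U\to\CC$, and for pairwise distinct $q,p,t\in U$ set
\[
F(q,p,t)\coloneqq \Phi^{3,\xi}_U\big((q,p,t),(A,B,C)\big)\,\xi_p(q)^n\,\xi_t(q)^m\,\xi_t(p)^k\,\bar\xi_t(p)^{\bar k}\in\U\L^\Sigma_\alpha(U),
\]
using the moving‑points map \eqref{moving points n}. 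In case $(i)$, $\bar D A=0$, so by Proposition \ref{prop: Y deriv} the element $m_{q,U}(A^\xi_q)$, hence $F(q,p,t)$, depends holomorphically on $q$ away from $p$ and $t$.

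The first step is to invoke Cauchy's theorem to deform a small counterclockwise contour $c_p$ encircling only $p$ into $c_t-c'_t$, where $c_t$ is a large counterclockwise contour encircling both $p$ and $t$ and $c'_t$ a small one encircling only $t$, giving
\[
\frac{1}{2\pi\ii}\int_{c_p}F\,\d\xi(q)=\frac{1}{2\pi\ii}\int_{c_t}F\,\d\xi(q)-\frac{1}{2\pi\ii}\int_{c'_t}F\,\d\xi(q).
\]
The second step is, in each of the three integrals, to use associativity of the factorisation product \eqref{PFA commutativity} together with Proposition \ref{prop: vertex operator general} to rewrite $F$ as the convergent nested expansion adapted to the location of $q$: for $c_p$ one lets $A$ at $q$ act on $B$ at $p$ and the resulting state on $C$ at $t$, re‑expanding $\xi_t(q)^m=(\xi_t(p)+\xi_p(q))^m$ in small $\xi_p(q)$; for $c_t$ one lets $A$ at $q$ and $B$ at $p$ act directly on $C$ at $t$, re‑expanding $\xi_p(q)^n=(\xi_t(q)-\xi_t(p))^n$ in small $\xi_t(p)$; for $c'_t$ one lets $B$ at $p$ act on $A$ at $q$ acting on $C$ at $t$, re‑expanding $\xi_p(q)^n=(-1)^n(\xi_t(p)-\xi_t(q))^n$ in small $\xi_t(q)$. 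The binomial theorem then produces the sums $\sum_{j\ge 0}\binom mj\,$, $\sum_{j\ge 0}(-1)^j\binom nj\,$, $\sum_{j\ge 0}(-1)^{n+j}\binom nj\,$ respectively.

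To conclude, one integrates each expansion in $\xi(q)$ over its contour, so that by \eqref{vertex algebra modes} the integration selects the mode index $A_{(n+j)}$, resp. $A_{(m+n-j)}$, resp. $A_{(m+j)}$, and then extracts the coefficient of $\xi_t(p)^0\bar\xi_t(p)^0$, which by \eqref{vertex algebra modes} again fixes the remaining mode indices; this yields exactly the three sums of $(i)$, valued in $\U\L^\Sigma_\alpha(U)$. Applying the inverse of the injection $(\cdot)^\xi_t$ (Proposition \ref{prop: Y map injective}) and Lemma \ref{lem: field vertex modes} transports the identity back to $\hVV^{\fl,\alpha}$. Part $(ii)$ is obtained by the same argument with the coordinate $\xi(q)$ replaced by $\bar\xi(q)$ throughout, using that $\bar D A=0$ is replaced by $D A=0$ so that $F$ is anti‑holomorphic in $q$, and that anti‑chiral modes are extracted via $-\tfrac{1}{2\pi\ii}\int\cdots\d\bar\xi(q)$ as in \eqref{anti-chiral mode}.

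The main obstacle is the second step: justifying that the single object $m_{(q,p,t),U}(A^\xi_q\otimes B^\xi_p\otimes C^\xi_t)$ genuinely admits the three different nested rewritings corresponding to the configurations $|\xi_p(q)|<|\xi_t(p)|$, $|\xi_t(p)|<|\xi_t(q)|$ and $|\xi_t(q)|<|\xi_t(p)|$, which requires repeated use of \eqref{PFA commutativity} and Proposition \ref{prop: vertex operator general} to commute the order in which the three local copies are fused, and controlling the contour deformation carefully. One must also ensure the $j$‑sums are meaningful termwise in $\hVV^{\fl,\alpha}$, i.e. that acting on a fixed state only finitely many $j$ contribute; this is precisely the truncation statement of Lemma \ref{lem: field vertex modes}. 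Once these points are secured, the remainder is routine bookkeeping with binomial coefficients and residues.
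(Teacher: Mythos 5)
Your proposal is correct and follows essentially the same route as the paper's own proof: the same three-point insertion $F_{A,B,C}$ built from the moving-points map, the same Cauchy contour deformation $c_p = c_t - c'_t$, the same three binomial re-expansions via Proposition \ref{prop: vertex operator general}, and the same extraction of the coefficient of $\xi_t(p)^0\bar\xi_t(p)^0$. The extra remarks on transporting the identity back to $\hVV^{\fl,\alpha}$ via injectivity and on termwise finiteness via Lemma \ref{lem: field vertex modes} are consistent with, and implicit in, the paper's argument.
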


\begin{remark} \label{rem: no general Borcherds}
It is clear from the proof of Proposition \ref{prop: Borcherds} that `Borcherds type' identities can only be derived in the case when one of the two states involved is either chiral or anti-chiral. If this is not the case then we cannot use Cauchy's residue theorem as in \eqref{contour identity Borcherds} to relate the three terms in the Borcherds identity (see \cite[Remark 1.9]{Moriwaki:2020cxf}).
\end{remark}

The following consequences of the `Borcherds type' identities are equivalent to those in \cite[Lemma 3.11]{Moriwaki:2020cxf}, in the case of a chiral state $A$. We have also explicitly stated the version for an anti-chiral state $A$ for completeness.

\begin{corollary} \label{cor: com nord vertex}
Let $A, B, C \in \hVV^{\fl,\alpha}$ and $k, \bar k, m, \bar m \in \ZZ$. We have the identities in $\hVV^{\fl,\alpha}$:
\begin{itemize}
  \item[$(i)$] If $A$ is chiral then
\begin{align*}
\big[ A_{(m)}, B_{(k, \bar k)} \big] C &= \sum_{j \geq 0} \binom m j \big( A_{(j)} B \big)_{(m+k-j, \bar k)} C,\\
\big( A_{(-1)} B \big)_{(k, \bar k)} C &= \sum_{j \geq 0} A_{(-j-1)} B_{(k+j, \bar k)} C + \sum_{j < 0} B_{(k+j, \bar k)} A_{(-j-1)} C.
\end{align*}
  \item[$(ii)$] If $A$ is anti-chiral then
\begin{align*}
\big[ A_{(\bar m)}, B_{(k,\bar k)} \big] C &= \sum_{j \geq 0} \binom {\bar m} j \big( A_{(j)} B \big)_{(k, \bar m+\bar k-j)} C,\\
\big( A_{(-1)} B \big)_{(k, \bar k)} C
&= \sum_{j \geq 0} A_{(-j-1)} B_{(k, \bar k+j)} C + \sum_{j < 0} B_{(k, \bar k+j)} A_{(-j-1)} C.
\end{align*}
\end{itemize}
\begin{proof}
The commutator formulae both follow from taking $n=0$ in the Borcherds type identities of Proposition \ref{prop: Borcherds}. The second relation in $(i)$ (resp. $(ii)$) follows from Proposition \ref{prop: Borcherds}$(i)$ (resp. $(ii)$) in the case $n=-1$ and $m=0$ (resp. $\bar m =0$).
\end{proof}
\end{corollary}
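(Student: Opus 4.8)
The statement to prove is Corollary \ref{cor: com nord vertex}, whose proof is essentially a deduction from the `Borcherds type' identities already established in Proposition \ref{prop: Borcherds}. So my plan is entirely algebraic: I only need to specialise the parameters $n$ and (as appropriate) $m$ or $\bar m$ in the two identities of Proposition \ref{prop: Borcherds}, and then recognise the resulting infinite sums as commutators and as the normal ordered product relation.

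First, for the commutator formula in case $(i)$, I set $n = 0$ in Proposition \ref{prop: Borcherds}$(i)$. On the left side, $\binom{m}{j}$ combined with $A_{(0+j)}B = A_{(j)}B$ gives exactly $\sum_{j \geq 0} \binom{m}{j} (A_{(j)}B)_{(m+k-j,\bar k)}C$. On the right side, $\binom{0}{j} = \delta_{j,0}$ kills all but the $j=0$ term in each of the two sums, leaving $A_{(m)}B_{(k,\bar k)}C - B_{(k,\bar k)}A_{(m)}C = [A_{(m)}, B_{(k,\bar k)}]C$, where I use the fact that these vertex modes act as honest endomorphisms of $\hVV^{\fl,\alpha}$ by Lemma \ref{lem: field vertex modes}. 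That is the first identity. The commutator formula in case $(ii)$ is the mirror image: set $n=0$ in Proposition \ref{prop: Borcherds}$(ii)$, with the roles of the chiral and anti-chiral indices swapped, and the same collapse of the binomial coefficients produces $[A_{(\bar m)}, B_{(k,\bar k)}]C = \sum_{j \geq 0} \binom{\bar m}{j} (A_{(j)}B)_{(k,\bar m + \bar k - j)}C$.

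Next, for the second ($\emph{normal ordered product}$) identity in case $(i)$, I specialise Proposition \ref{prop: Borcherds}$(i)$ to $n = -1$ and $m = 0$. The left side becomes $\sum_{j \geq 0} \binom{0}{j}(A_{(-1+j)}B)_{(k-j,\bar k)}C = (A_{(-1)}B)_{(k,\bar k)}C$. On the right, $\binom{-1}{j} = (-1)^j$ so the first sum is $\sum_{j \geq 0} (-1)^j (-1)^j A_{(-1-j)}B_{(k+j,\bar k)}C = \sum_{j \geq 0} A_{(-j-1)}B_{(k+j,\bar k)}C$, while the second sum is $-\sum_{j \geq 0} (-1)^{-1+j}(-1)^j B_{(-1+k-j,\bar k)}A_{(j)}C = \sum_{j \geq 0} B_{(k-j-1,\bar k)}A_{(j)}C$; reindexing the latter sum with $j \mapsto -j-1$ (so $j$ runs over $\ZZ_{<0}$) turns it into $\sum_{j < 0} B_{(k+j,\bar k)}A_{(-j-1)}C$, which is precisely the stated second term. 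The case $(ii)$ second identity follows identically from Proposition \ref{prop: Borcherds}$(ii)$ with $n=-1$ and $\bar m = 0$, swapping chiral for anti-chiral indices throughout.

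There is essentially no obstacle here beyond bookkeeping: the only points requiring a modicum of care are the two sign manipulations $(-1)^j(-1)^j = 1$ and $(-1)^{-1+j}(-1)^j = -1$ arising from $\binom{-1}{j} = (-1)^j$, and the reindexing $j \mapsto -j-1$ that converts the tail of one sum over $\ZZ_{\geq 0}$ into a sum over $\ZZ_{<0}$. I should also note that all these manipulations of infinite sums are legitimate because, by Lemma \ref{lem: field vertex modes}, for any fixed $C \in \hVV^{\fl,\alpha}$ only finitely many terms $A_{(\ell)}C$ (resp. $B_{(\ell,\bar\ell)}C$) are nonzero once $\ell$ (resp. $\ell$) is large enough, so each sum appearing in the statement is in fact finite when applied to $C$. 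With those remarks in place the corollary is immediate.

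\begin{proof}
Both commutator formulae follow by setting $n = 0$ in the corresponding identity of Proposition \ref{prop: Borcherds}. In case $(i)$, the left hand side of Proposition \ref{prop: Borcherds}$(i)$ becomes $\sum_{j \geq 0} \binom m j \big( A_{(j)} B \big)_{(m+k-j, \bar k)} C$, while on the right hand side the factor $\binom 0 j = \delta_{j,0}$ collapses both sums to their $j = 0$ terms, giving $A_{(m)} B_{(k,\bar k)} C - B_{(k,\bar k)} A_{(m)} C = \big[ A_{(m)}, B_{(k,\bar k)} \big] C$, where we have used that the vertex modes act as endomorphisms of $\hVV^{\fl,\alpha}$ by Lemma \ref{lem: field vertex modes}. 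The case $(ii)$ is identical, starting from Proposition \ref{prop: Borcherds}$(ii)$ with $n = 0$.

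For the second identity in case $(i)$, set $n = -1$ and $m = 0$ in Proposition \ref{prop: Borcherds}$(i)$. The left hand side becomes $\big( A_{(-1)} B \big)_{(k, \bar k)} C$ since $\binom 0 j = \delta_{j,0}$. On the right hand side we use $\binom{-1}{j} = (-1)^j$. The first sum becomes $\sum_{j \geq 0} (-1)^j (-1)^j A_{(-j-1)} B_{(k+j, \bar k)} C = \sum_{j \geq 0} A_{(-j-1)} B_{(k+j, \bar k)} C$. The second sum becomes $- \sum_{j \geq 0} (-1)^{-1+j} (-1)^j B_{(k-j-1, \bar k)} A_{(j)} C = \sum_{j \geq 0} B_{(k-j-1, \bar k)} A_{(j)} C$, and after reindexing $j \mapsto -j-1$, so that $j$ now runs over $\ZZ_{<0}$, this equals $\sum_{j < 0} B_{(k+j, \bar k)} A_{(-j-1)} C$. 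Combining these two contributions yields the stated identity. The second identity in case $(ii)$ follows in the same way from Proposition \ref{prop: Borcherds}$(ii)$ with $n = -1$ and $\bar m = 0$. All sums appearing above are finite when applied to the fixed state $C \in \hVV^{\fl,\alpha}$, by Lemma \ref{lem: field vertex modes}.
\end{proof}
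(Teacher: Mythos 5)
Your proposal is correct and follows exactly the paper's own route: the paper's proof likewise obtains the commutator formulae by setting $n=0$ in Proposition \ref{prop: Borcherds}, and the normal ordered product formulae by taking $n=-1$ with $m=0$ (resp. $\bar m = 0$); you have merely written out the binomial collapse and the reindexing $j \mapsto -j-1$ that the paper leaves implicit.
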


It is instructive to see that we recover the Lie algebra relations of the generators of $\hg$ and $\hbg$ from \S\ref{sec: Vec structure} as a special case of the commutation relations in Corollary \ref{cor: com nord vertex}. We consider the three main examples from \S\ref{sec: main examples} separately.

\paragraph{Kac-Moody:}
It follows from the action of the vertex modes of any $\X \in \g$ in \eqref{endo Xn bar Xn} and using Proposition \ref{prop: Y map def} that for any $\Y \in \g$, $n \in \ZZ_{\geq 0}$ and $p \in \Sigma^\circ$ we have
\begin{subequations} \label{positive modes lvl 1}
\begin{align}
\X^{\; \xi_p}_{(n)} \Y_{(-1)} \vac^\xi_p &= [\X, \Y]_{(-1)} \vac^\xi_p \, \delta_{n,0} + \kappa(\X, \Y) \vac^\xi_p \delta_{n,1},\\
\bar \X^{\; \xi_p}_{(n)} \bar \Y_{(-1)} \vac^\xi_p &= \overline{[\X, \Y]}_{(-1)} \vac^\xi_p \, \delta_{n,0} + \kappa(\X, \Y) \vac^\xi_p \delta_{n,1}
\end{align}
\end{subequations}
in $\hV^{\fl,\alpha}_p$ for any local coordinate $\xi$ in the neighbourhood of $p$. Then as a simple application of Corollary \ref{cor: com nord vertex}, recalling also \eqref{modes lvl 1 state}, we recover \eqref{KM algebra vertex modes def} with $\ms k$ and $\bar{\ms k}$ both set to $1$.

\paragraph{Virasoro:} Again, by definition of the vertex modes in \eqref{endo Xn bar Xn} and using Proposition \ref{prop: Y map def}, for any $n \in \ZZ_{\geq 0}$, $p \in \Sigma^\circ$ and working in any local coordinate $\xi$ near $p$ we have
\begin{subequations} \label{Virasoro positive modes lvl 1}
\begin{align}
\Omega^{\; \xi_p}_{(n)} \Omega_{(-1)} \vac^\xi_p &= \Omega_{(-2)} \vac^\xi_p \, \delta_{n,0} + 2 \Omega_{(-1)} \vac^\xi_p \, \delta_{n,1} + \tfrac c2 \vac^\xi_p \delta_{n,3},\\
\bar\Omega^{\; \xi_p}_{(n)} \bar\Omega_{(-1)} \vac^\xi_p &= \bar\Omega_{(-2)} \vac^\xi_p \, \delta_{n,0} + 2 \bar\Omega_{(-1)} \vac^\xi_p \, \delta_{n,1} + \tfrac c2 \vac^\xi_p \delta_{n,3}.
\end{align}
\end{subequations}
As a consequence of Corollary \ref{cor: com nord vertex} we then recover the Virasoro algebra in the form \eqref{Virasoro algebra vertex modes def} with $\ms k$ and $\bar{\ms k}$ both set to $1$. To deal with the first term on the right hand sides of \eqref{Virasoro positive modes lvl 1} we use the fact that $\Omega_{(-2)} \vac = D(\Omega_{(-1)} \vac)$ and $\bar \Omega_{(-2)} \vac = \bar D(\bar \Omega_{(-1)} \vac)$ by definition of the translation operators in \eqref{D bar D def} and then apply Lemma \ref{lem: derivative vertex} and the identities \eqref{modes lvl 1 state}.

\paragraph{$\bm\beta \bm \gamma$ system:}
For any $n \in \ZZ_{\geq 0}$ and any local coordinate $\xi$ near $p \in \Sigma^\circ$ we find
\begin{subequations} \label{beta gamma positive modes lvl 1}
\begin{align}
\beta^{\; \xi_p}_{(n)} \gamma_{(-1)} \vac^\xi_p &= \vac^\xi_p \delta_{n,0},\\
\bar\beta^{\; \xi_p}_{(n)} \bar\gamma_{(-1)} \vac^\xi_p &= \vac^\xi_p \delta_{n,0}.
\end{align}
\end{subequations}
We then immediately recover the infinite-dimensional Weyl algebra relations \eqref{beta gamma algebra vertex modes def}, again with $\ms k$ and $\bar{\ms k}$ both set to $1$, by applying Corollary \ref{cor: com nord vertex}.

\subsection{Conformal and anti-conformal states} \label{sec: chiral states}

In this subsection we introduce the conformal and anti-conformal states $\Omega, \bar \Omega \in \hVV^{\fl,\alpha}$, which are respectively chiral and anti-chiral, and whose vertex $n^{\rm th}$-modes generate infinitesimal local coordinate transformations. An important use of the non-negative shifted vertex $n^{\rm th}$-modes of the conformal and anti-conformal states, namely $\Omega_{(n+1)}$ and $\bar \Omega_{(n+1)}$ for $n \geq 0$, is in establishing a generalisation of Huang's change of variable formula to the present case of full vertex operator algebras. This is the content of Corollary \ref{cor: Huang's formula} below. As an application we also show how to define an invariant bilinear form on $\hVV^{\fl,\alpha}$, in the sense of Proposition \ref{prop: inv bilinear form}.

\subsubsection{(Anti-)conformal states} \label{sec: conformal states}

We call a chiral state $\Omega \in \VV^{\fl, \alpha} \subset \hVV^{\fl, \alpha}$ \emph{conformal} if
\begin{subequations} \label{Omega positive prod}
\begin{equation} \label{Omega positive prod a}
\Omega_{(n)} \Omega = D \Omega \, \delta_{n,0} + 2 \Omega \, \delta_{n,1} + \frac{c}{2} \vac\, \delta_{n, 3}
\end{equation}
for every $n \geq 0$ and, moreover, for any $A \in \hVV^{\fl, \alpha}$ of chiral conformal dimension $\Delta_A$ we have
\begin{equation} \label{Omega positive prod b}
\Omega_{(0)} A = D A, \qquad 
\Omega_{(1)} A = \Delta_A A.
\end{equation}
\end{subequations}
In particular, it follows that $\Delta_\Omega = 2$.

Similarly, an \emph{anti-conformal state} is an anti-chiral state $\bar\Omega \in \overline{\VV}^{\fl, \alpha} \subset \hVV^{\fl, \alpha}$ such that
\begin{subequations} \label{bOmega positive prod}
\begin{equation} \label{bOmega positive prod a}
\bar\Omega_{(n)} \bar\Omega = D \bar\Omega \, \delta_{n,0} + 2 \bar\Omega \, \delta_{n,1} + \frac{c}{2} \vac\, \delta_{n, 3}
\end{equation}
for every $n \geq 0$, and for any $A \in \hVV^{\fl, \alpha}$ of anti-chiral conformal dimension $\bar\Delta_A$ we have
\begin{equation} \label{bOmega positive prod b}
\bar\Omega_{(0)} A = \bar D A, \qquad
\bar\Omega_{(1)} A = \bar\Delta_A A.
\end{equation}
\end{subequations}
In particular, $\bar\Delta_{\bar\Omega} = 2$. Because $\Omega$ is chiral and $\bar\Omega$ is anti-chiral we also have that $\Omega_{(n)} \bar \Omega = 0$ for all $n \geq 0$. The parameter $c \in \RR$ entering \eqref{bOmega positive prod a} and \eqref{bOmega positive prod b} is the \emph{central charge}.

Since the (anti-)chiral conformal dimensions of the states $\Omega, \bar\Omega \in \hVV^{\fl, \alpha}$ are different from $1$, it is convenient to work with their homogeneous vertex $n^{\rm th}$-modes introduced in \S\ref{sec: hom vert op} which we will denote by
\begin{equation} \label{Virasoro modes}
L_{(n)} \coloneqq \Omega_{[n]} = \Omega_{(n+1)}, \qquad
\bar L_{(n)} \coloneqq \bar \Omega_{[n]} = \bar \Omega_{(n+1)}
\end{equation}
for every $n \in \ZZ$. The notations $L_{(n)}$, $\bar L_{(n)} \in \End \hVV^{\fl ,\alpha}$ are slightly misleading since $L$ and $\bar L$ do not represent states in $\hVV^{\fl, \alpha}$: the states in question are $\Omega, \bar\Omega \in \hVV^{\fl, \alpha}$ and \eqref{Virasoro modes} represent their homogeneous vertex $n^{\rm th}$-modes. On the other hand, the more standard notation $L_n$, $\bar L_n$ for these endomorphisms will be reserved for the Fourier modes introduced in \S\ref{sec: Fourier modes def} below so we will keep using the notation \eqref{Virasoro modes}, hoping this will not cause any confusion. It will also be useful to introduce the notation
\begin{equation} \label{Virasoro modes xi}
L^{\xi_p}_{(n)} \coloneqq \Omega^{\xi_p}_{[n]} = \Omega^{\xi_p}_{(n+1)}, \qquad
\bar L^{\xi_p}_{(n)} \coloneqq \bar \Omega^{\xi_p}_{[n]} = \bar \Omega^{\xi_p}_{(n+1)}
\end{equation}
for any local coordinate $\xi : U \to \CC$ in a neighbourhood of a point $p \in U$.

It follows form the first equations in \eqref{Omega positive prod b} and \eqref{bOmega positive prod b} that $L_{(-1)} = D$ and $\bar L_{(-1)} = \bar D$ act as the translation operators \eqref{D bar D def}, and from the second equations in \eqref{Omega positive prod b} and \eqref{bOmega positive prod b} that $L_{(0)}$ and $\bar L_{(0)}$ measure the chiral and anti-chiral conformal dimensions of a state. It follows also from Lemma \ref{lem: field vertex modes} that for any given state $A \in \hVV^{\fl,\alpha}$ there exists a $N_A \in \ZZ_{\geq 0}$ such that $L_{(k)} A = 0$ and $\bar L_{(k)} A = 0$ for all $k \geq N_A$. Moreover, it follows from Lemma \ref{lem: chiral weights} that $L_{(k)}$ and $\bar L_{(k)}$ lower the conformal dimension when $k \geq 1$ and hence both they act locally nilpotently on $\hVV^{\fl, \alpha}$ in the sense that for every $A \in \hVV^{\fl,\alpha}$ there exists $r_A, \bar r_A \in \ZZ_{\geq 1}$ such that
\begin{equation} \label{locally nilpotent}
\big( L_{(k)} \big)^{r_A} A = 0, \qquad
\big( \bar L_{(k)} \big)^{\bar r_A} A = 0.
\end{equation}

As an application of Corollary \ref{cor: com nord vertex} using \eqref{Omega positive prod a}, \eqref{bOmega positive prod a} and Lemma \ref{lem: derivative vertex}, we deduce that $L_{(n)}, \bar L_{(n)} \in \End \hVV^{\fl, \alpha}$ generate a direct sum of two copies of the Virasoro algebra, i.e.
\begin{align*}
[ L_{(m)}, L_{(n)} ] &= (m-n) L_{(m+n)} + \frac{m^3-m}{12} \, c \,\delta_{m+n,0},\\
[ \bar L_{(m)}, \bar L_{(n)} ] &= (m-n) \bar L_{(m+n)} + \frac{m^3-m}{12} \,  c \, \delta_{m+n,0}
\end{align*}
and $[ L_{(m)}, \bar L_{(n)} ] = 0$ for any $m, n \in \ZZ$. More importantly for what follows, when $m, n \in \ZZ_{\geq -1}$ we have a direct sum of two copies of the simpler Witt algebra
\begin{equation} \label{Witt algebra}
[ L_{(m)}, L_{(n)} ] = (m-n) L_{(m+n)}, \qquad
[ \bar L_{(m)}, \bar L_{(n)} ] = (m-n) \bar L_{(m+n)}.
\end{equation}

The (anti-)conformal states in each of the three main examples from \S\ref{sec: main examples} are as follows.

\paragraph{Kac-Moody:}
For simplicity we assume here that the Lie algebra $\g$ is simple. The extension to the reductive case is straightforward, decomposing the Lie algebra $\g = \bigoplus_i \g_i \oplus \mathfrak z$ as a direct sum of its simple components $\g_i$ and its centre $\mathfrak z$. The construction of (anti-)conformal states in the more general case when $\g = \mathcal T^n \f \coloneqq \f[t]/ t^{n+1} \f[t]$ is a Takiff algebra \cite{Takiff} for any reductive Lie algebra $\f$ and any $n \in \ZZ_{\geq 0}$ can be treated similarly along the lines of \cite{Quella:2020uhk}.

Let $\{I_a\}_{a=1}^{\dim \g}$ and $\{I^a\}_{a=1}^{\dim \g}$ be dual bases of $\g$ with respect to a non-degenerate symmetric invariant bilinear form $\kappa_0 : \g \otimes \g \to \CC$. For any $\X \in \g$ we then have $\kappa_0(\X, I^b) I_b = \X$, where we use summation convention for Lie algebra indices $b = 1, \ldots, \dim \g$. Since every non-degenerate symmetric invariant bilinear form on the simple Lie algebra $\g$ is proportional to the Killing form $\kappa_\g : \g \otimes \g \to \CC$, we denote by $\frac{\kappa}{\kappa'}$ the relative coefficient of proportionality between any two non-degenerate symmetric invariant bilinear forms $\kappa, \kappa' : \g \otimes \g \to \CC$. The \emph{critical} level is defined as $\kappa_c \coloneqq - \frac 12 \kappa_\g : \g \otimes \g \to \CC$. We assume that the level $\kappa : \g \otimes \g \to \CC$ entering the definition of $\hVV^{\fl,\alpha}$ is non-critical, i.e. $\kappa \neq \kappa_c$. We then define the \emph{conformal} and \emph{anti-conformal states} of $\hVV^{\fl,\alpha}$, respectively, as
\begin{equation} \label{(anti)conformal states KM}
\conf \coloneqq \frac{\kappa_0}{2(\kappa - \kappa_c)} I_{a(-1)} I^a_{(-1)} \vac, \qquad
\bar\conf \coloneqq \frac{\kappa_0}{2(\kappa - \kappa_c)} \bar I_{a(-1)} \bar I^a_{(-1)} \vac.
\end{equation}
It is a standard fact that these satisfy the relations \eqref{Omega positive prod a} and \eqref{Omega positive prod b} with central charge 
$c \coloneqq \kappa \, \dim \g/ (\kappa - \kappa_c)$.

It is straightforward to check, for instance using the commutation relations from Corollary \ref{cor: com nord vertex} in the case where $A = \X_{(-1)} \vac$ and $B = \Omega$ along with the defining relations \eqref{KM algebra vertex modes def} in $\fl$, that the homogeneous vertex $n^{\rm th}$-modes \eqref{Virasoro modes} satisfy
\begin{align} \label{Lk comm with modes KM}
\big[ L_{(k)}, \X_{(-n)} \big] = n \,\X_{(k-n)}, \qquad
\big[ \bar L_{(k)}, \bar \X_{(-n)} \big] = n \,\bar \X_{(k-n)}.
\end{align}
for every $k, n \in \ZZ$. It then immediately follows that \eqref{Omega positive prod b} and \eqref{bOmega positive prod b} hold for any state $A \in \hVV^{\fl , \alpha}$ of definite (anti-)chiral conformal dimension.

Although the formulae \eqref{(anti)conformal states KM} for the (anti-)conformal states of $\hVV^{\fl, \alpha}$ take a different form in the case when $\g$ is reductive, the resulting relations \eqref{Lk comm with modes KM} remain true for arbirary reductive $\g$. We also expect \eqref{Lk comm with modes KM} to hold more generally in the case of a Takiff algebra $\g = \mathcal T^n \f$ for any reductive Lie algebra $\f$ and any $n \in \ZZ_{\geq 0}$.

\paragraph{Virasoro:}
We define the \emph{conformal} and \emph{anti-conformal states} of $\hVV^{\fl,\alpha}$ as
\begin{equation} \label{(anti)conformal states Vir}
\Omega \coloneqq \Omega_{(-1)} \vac, \qquad
\bar \Omega \coloneqq \bar\Omega_{(-1)} \vac.
\end{equation}
We are abusing notation slightly by using $\Omega$ to denote both the basis element of $\fl = \text{span}_\CC \{ \Omega \}$ and the above conformal state in $\hVV^{\fl,\alpha}$. Since these live in different spaces, it should be clear from the context which we mean. This abuse of notation is also justified by the identity \eqref{modes lvl 1 state} which implies that for any $n \in \ZZ$ we have $\Omega_{(n)} = \big(\Omega_{(-1)} \vac\big)_{(n)}$ with $\Omega \in \fl$, so that $\Omega_{(n)}$ denotes the same element of $\End \hVV^{\fl,\alpha}$ whether $\Omega \in \fl$ or $\Omega \in \hVV^{\fl, \alpha}$.

It follows from the defining relations \eqref{Virasoro algebra vertex modes def} in $\fl$ that \eqref{(anti)conformal states Vir} satisfy the relations \eqref{Omega positive prod a} and \eqref{Omega positive prod b} for any central charge $c$. The relations \eqref{Virasoro algebra vertex modes def}, with the central elements $\ms k$ and $\bar{\ms k}$ set to $1$ as in the definition of $\hVV^{\fl, \alpha}$ from \S\ref{sec: Vec structure}, can also be rewritten as
\begin{subequations} \label{Lk comm with modes Vir}
\begin{align}
\big[ L_{(k)}, \Omega_{(-n)} \big] &= (n+k+1) \Omega_{(k-n)} + \frac{k^3-k}{12} c \, \delta_{k-n, 1},\\
\big[ \bar L_{(k)}, \bar \Omega_{(-n)} \big] &= (n+k+1) \bar \Omega_{(k-n)} + \frac{k^3-k}{12} c \, \delta_{k-n, 1}.
\end{align}
\end{subequations}
for every $k, n \in \ZZ$. Since the terms proportional to $c$ on the right hand sides vanish when $k=-1$ or $k=0$, it follows that \eqref{Omega positive prod b} and \eqref{bOmega positive prod b} hold for any state $A \in \hVV^{\fl , \alpha}$ of definite (anti-)chiral conformal dimension.

\paragraph{$\bm \beta \bm \gamma$ system:}
Here we define the \emph{conformal} and \emph{anti-conformal states} of $\hVV^{\fl,\alpha}$ as
\begin{equation} \label{(anti)conformal states beta gamma}
\Omega = \beta_{(-1)} \gamma_{(-2)} \vac, \qquad
\bar\Omega = \bar\beta_{(-1)} \bar\gamma_{(-2)} \vac.
\end{equation}
It follows again from Corollary \ref{cor: com nord vertex} and the defining relations \eqref{beta gamma algebra vertex modes def} in $\fl$, that the homogeneous vertex $n^{\rm th}$-modes \eqref{Virasoro modes} satisfy
\begin{subequations} \label{Lk comm with modes beta gamma}
\begin{alignat}{2}
\big[ L_{(k)}, \beta_{(-n)} \big] &= n \beta_{(k-n)}, &\qquad
\big[ L_{(k)}, \gamma_{(-n)} \big] &= (n-k-1) \gamma_{(k-n)},\\
\big[ \bar L_{(k)}, \bar \beta_{(-n)} \big] &= n \bar \beta_{(k-n)}, &\qquad
\big[ \bar L_{(k)}, \bar \gamma_{(-n)} \big] &= (n-k-1) \bar \gamma_{(k-n)}
\end{alignat}
\end{subequations}
for every $k, n \in \ZZ$, and hence that \eqref{Omega positive prod b} and \eqref{bOmega positive prod b} hold for any state $A \in \hVV^{\fl , \alpha}$ of definite (anti-)chiral conformal dimension.

\subsubsection{Change of variable formula}

Given two local coordinates $\xi, \eta : U \to \CC$ on an open subset $U \subset \Sigma^\circ$, we denote the change of coordinate from $\eta$ to $\xi$ by $\varrho^{\eta \to \xi} : \xi \circ \eta^{-1} : \eta(U) \to \xi(U)$, or simply $\varrho$ for short when the coordinates $\eta$ and $\xi$ are clear from the context. Furthermore, given any point $p \in U$, we denote the corresponding change of shifted local coordinate from $\eta_p$ to $\xi_p$ by $\varrho^{\eta \to \xi}_p : \eta_p(U) \to \xi_p(U)$, or simply by $\varrho_p$ for short. Explicitly, we have
\begin{subequations} \label{b coeff definition}
\begin{equation} \label{varrho p Taylor}
\xi_p = \varrho_p(\eta_p) = \varrho\big( \eta_p + \eta(p) \big) - \xi(p) = \varrho'\big( \eta(p) \big) \Bigg(\eta_p + \sum_{n \geq 2} \frac{1}{n!} \frac{\varrho^{(n)}\big( \eta(p) \big)}{\varrho' \big( \eta(p) \big)} \eta_p^n \Bigg)
\end{equation}
where we note that $\varrho'\big( \eta(p) \big) \neq 0$ because the coordinate transformation $\varrho_p$ is invertible. Since the set of local (holomorphic) coordinate transformations fixing the origin is generated by holomorphic vector fields vanishing at $0$, i.e. $\ell^{\eta_p}_k \coloneqq - \eta_p^{k+1} \partial_{\eta_p}$ for $k \geq 0$, following \cite[\S 2.1]{Huang}, see also \cite[\S6.3.1]{FBbook}, it is natural to rewrite the above power series in the form
\begin{equation} \label{b coeff def}
\xi_p = \exp \bigg( \!\! - \sum_{k \geq 1} b_k \ell^{\eta_p}_k \bigg) b_0^{- \ell^{\eta_p}_0} \eta_p.
\end{equation}
By equating this power series term by term with the one in \eqref{varrho p Taylor} we find that the first few coefficients are given explicitly by
\begin{gather}
b_0 = \varrho' \big( \eta(p) \big), \qquad
b_1 = \frac 12 (P \varrho)\big( \eta(p) \big), \qquad
b_2 = \frac 1{3!} (S \varrho)\big( \eta(p) \big), \notag \\
\label{b coeff def c} b_3 = \frac 1{4!} \Big( (S \varrho)'\big( \eta(p) \big) - (S \varrho)\big( \eta(p) \big) (P \varrho)\big( \eta(p) \big) \Big), \qquad \ldots
\end{gather}
\end{subequations}
where $(P f)(t) \coloneqq f''(t) / f'(t)$ is the pre-Schwarzian derivative of $f$ at $t$, i.e. the logarithmic derivative of $f'$, and $(S f)(t) \coloneqq (P f)'(t) - \frac 12 (P f)(t)^2$ is the Schwarzian derivative of $f$ at $t$.

Recall from \S\ref{sec: general setup} that in the shifted local (anti-)holomorphic coordinates $\xi_p : U_+ \to \CC$ and $\bar\xi_p : U_- \to \CC$ around a point $p \in U$, see \eqref{shifted coords def}, we have the isomorphism of \dg{} vector spaces $\Omega^{0,\chain}_c\big( \pi^{-1}(U), L \big) \cong \fl_{\xi_p} \otimes \Omega^{0,\chain}_c(U_+) \oplus \fl_{\bar\xi_p} \otimes \Omega^{0,\chain}_c(U_-)$. Letting $L_X$ denote the Lie derivative of a vector field $X$ on $U$, we obtain linear maps
\begin{equation}
L_{\ell^{\xi_p}_k}, L_{\bar\ell^{\xi_p}_k} : \L^\Sigma_\alpha(U) \longrightarrow \L^\Sigma_\alpha(U)
\end{equation}
for $k \geq -1$.
These extend by the Leibniz rule to endomorphisms of $\U\L^\Sigma_\alpha(U)$ and in turn by taking limits over neighbourhoods $U \ni p$ to linear maps
\begin{equation}
L_{\ell^{\xi_p}_k}, L_{\bar\ell^{\xi_p}_k} : \hV^{\fl, \alpha}_p \longrightarrow \hV^{\fl, \alpha}_p.
\end{equation}

The following proposition is a generalisation of Proposition \ref{prop: Y deriv}, which concerned only the translation operators $L_{(-1)} = D$ and $\bar L_{(-1)} = \bar D$, to all operators $L_{(k)}$ and $\bar L_{(k)}$ for $k \geq -1$.

\begin{proposition} \label{prop: L coord change}
For any $A \in \hVV^{\fl, \alpha}$ and local coordinate $\xi$ in the neighbourhood of $p \in \Sigma^\circ$, we have the relations
\begin{equation} \label{L is coord change}
( L_{(k)} A )^\xi_p = L_{\ell^{\xi_p}_k} A^\xi_p, \qquad
( \bar L_{(k)} A )^\xi_p = L_{\bar\ell^{\xi_p}_k} A^\xi_p
\end{equation}
in $\hV^{\fl, \alpha}_p$ for every $k \in \ZZ_{\geq -1}$.
\begin{proof}
By linearity it is sufficient to consider a state $A \in \hVV^{\fl,\alpha}$ of the form \eqref{gen state Vkx}. And just as in the proof of Proposition \ref{prop: vertex operator general} we may focus on a chiral state $A = \a^r_{(-m_r)} \ldots \a^1_{(-m_1)} \vac$ since the treatment of the anti-chiral part is completely analogous. Since both sides of \eqref{L is coord change} are defined using the Leibniz rule it then suffices to show that
\begin{equation} \label{L l compare}
\big[ L_{(k)}^{\xi_p}, \a^{i\, \xi_p}_{(-m_i)} \big] = L_{\ell^{\xi_p}_k} \a^{i\, \xi_p}_{(-m_i)}
= L_{\ell^{\xi_p}_k} \Big[ s \Big( \a^i_{\xi_p} \otimes \Big(\! - \xi_p^{-m_i} \bar\partial \rho^{V^\tm_\tp}_{V^\tp_\tp} \Big) \Big) \Big]_Y,
\end{equation}
where the second step is by definition of the vertex modes in \eqref{X bar X vertex modes}.
We now consider separately the three main examples from \S\ref{sec: main examples}.

\paragraph{Kac-Moody:} Here $\a^i = \X^i \in \g$ for all $i = 1, \ldots, r$. And by \eqref{Lk comm with modes KM} we have
\begin{equation} \label{L l compare KM left}
\big[ L_{(k)}, \a^i_{(-m_i)} \big] = \big[ L_{(k)}, \X^i_{(-m_i)} \big] = m_i \,\X^i_{(k-m_i)}
\end{equation}
for the left hand side of \eqref{L l compare}. As for the right hand side, by definition of the Lie derivative we have $L_{\ell^{\xi_p}_k} \xi_p^{-m_i} = \ell^{\xi_p}_k (\xi_p^{-m_i}) = m_i \xi_p^{k-m_i}$ so this reads
\begin{align*}
L_{\ell^{\xi_p}_k} \X^{i\, \xi_p}_{(-m_i)} &= m_i \Big[ \X^i \otimes \Big(\! - \xi_p^{k-m_i} \bar\partial \rho^{V^\tm_\tp}_{V^\tp_\tp} \Big) \Big]_Y - \Big[ s \Big( \X^i \otimes \Big( \xi_p^{-m_i} \bar\partial \big( \ell^{\xi_p}_k \rho^{V^\tm_\tp}_{V^\tp_\tp} \big) \Big) \Big) \Big]_Y\\
&= m_i \,\X^{i \, \xi_p}_{(k-m_i)} - \Big[ \bar\partial \Big( s\Big( \X^i \otimes \Big( \xi_p^{-m_i} \ell^{\xi_p}_k \rho^{V^\tm_\tp}_{V^\tp_\tp} \Big) \Big) \Big) \Big]_Y.
\end{align*}
The first term on the right is exactly the vertex mode at $p$ in the coordinate $\xi$ associated with the loop generator in $\hg$ on the right hand side of \eqref{L l compare KM left}. The second term on the right hand side is the cohomology of a $\bar\partial$-exact element of $\fl_{\xi_p} \otimes \Omega^{0,1}_c(Y_+)$ which therefore vanishes.

\paragraph{Virasoro:} Now $\a^i = \Omega$ for all $i = 1, \ldots, r$ and by \eqref{Lk comm with modes Vir}, the left hand side of \eqref{L l compare} reads
\begin{align} \label{L l compare Virasoro left}
\big[ L_{(k)}, \a^i_{(-m_i)} \big] = \big[ L_{(k)}, \Omega_{(-m_i)} \big] &= (m_i+k+1) \Omega_{(k-m_i)} + \frac{k^3-k}{12} c \, \delta_{k-m_i, 1}.
\end{align}
For the right hand side of \eqref{L l compare} we recall that $\Omega_{\xi_p} = - \partial_{\xi_p}$ on which the Lie derivative of $\ell^{\xi_p}_k$ acts as $- L_{\ell^{\xi_p}_k} \partial_{\xi_p} = - (k+1) \xi_p^k \partial_{\xi_p}$. Using also $L_{\ell^{\xi_p}_k} \xi_p^{-m_i} = m_i \xi_p^{k-m_i}$ as above we find
\begin{align} \label{L l compare Virasoro right}
L_{\ell^{\xi_p}_k} \Omega^{\xi_p}_{(-m_i)}
&= (m_i+k+1) \Big[ s \Big( \Omega_{\xi_p} \otimes \Big(\! - \xi_p^{k-m_i} \bar\partial \rho^{V^\tm_\tp}_{V^\tp_\tp} \Big) \Big) \Big]_Y - \Big[ s \Big( \Omega_{\xi_p} \otimes \Big( \xi_p^{-m_i} \bar\partial \big( \ell^{\xi_p}_k \rho^{V^\tm_\tp}_{V^\tp_\tp} \big) \Big) \Big) \Big]_Y \notag\\
&\qquad\qquad\qquad\qquad\qquad\qquad\qquad + \frac{k^3-k}{24 \pi \ii} c \int_U \xi_p^{k-m_i-2} \d \xi_p \wedge \bar\partial \rho^{V^\tm_\tp}_{V^\tp_\tp}.
\end{align}
To explain the origin of the last term, recall from \S\ref{sec: main examples} that in the Virasoro case the $2$-cocycle \eqref{cocycle Virasoro} depends explicitly on the coordinate up to a $2$-coboundary \eqref{coboundary def}. And under a change of coordinate $\eta \mapsto \xi$ this induces an isomorphism of unital local Lie algebras \eqref{Phi iso change alpha} given by a shift along the central element. Here we are considering the Lie derivative action of the vector field $\ell^{\xi_p}_k$ which corresponds to the infinitesimal change of coordinate $\xi_p \mapsto \xi'_p \coloneqq \xi_p - \epsilon \xi_p^{k+1}$. But for an infinitesimal coordinate transformation $\eta \mapsto \xi = \eta + \epsilon v(\eta)$, the Schwarzian derivative to leading order in $\epsilon$ is $(S \xi)(\eta) = \epsilon v'''(\eta)$ which here takes the form $- (k^3 - k) \xi_p^{k-2}$. This explains the last term on the right hand side of \eqref{L l compare Virasoro right}. Using Lemma \ref{lem: residue integral} to evalute the integral we obtain exactly the last term on the right hand side of \eqref{L l compare Virasoro left}. As in the Kac-Moody case, the first term on the right hand side of \eqref{L l compare Virasoro right} corresponds to that of \eqref{L l compare Virasoro left} and the second term on the right hand side of \eqref{L l compare Virasoro left} vanishes.

\paragraph{$\bm \beta \bm \gamma$ system:} Here we have either $\a^i = \beta$ or $\a^i = \gamma$ for each $i = 1, \ldots, r$.

In the first case, since $\a^i_{\xi_p} = \beta_{\xi_p} = 1$ is constant, the computation works exactly as in the Kac-Moody case and we find that \eqref{L l compare} matches the first equation in \eqref{Lk comm with modes beta gamma}.

In the second case, the left hand side of \eqref{L l compare} reads
\begin{align} \label{L l compare beta gamma left}
\big[ L_{(k)}, \a^i_{(-m_i)} \big] = \big[ L_{(k)}, \gamma_{(-m_i)} \big] &= (m_i-k-1) \gamma_{(k-m_i)}
\end{align}
by virtue of the second equation in \eqref{Lk comm with modes beta gamma}. Recalling from \S\ref{sec: main examples} that we have $\a^i_{\xi_p} = \d \xi_p$ and using the fact that $L_{\ell^{\xi_p}_k} \d \xi_p = - (k+1) \xi_p^k \d \xi_p$ we deduce
\begin{align} \label{L l compare beta gamma right}
L_{\ell^{\xi_p}_k} \gamma^{\xi_p}_{(-m_i)}
&= (m_i-k-1) \Big[ s \Big( \gamma_{\xi_p} \otimes \Big(\! - \xi_p^{k-m_i} \bar\partial \rho^{V^\tm_\tp}_{V^\tp_\tp} \Big) \Big) \Big]_Y.
\end{align}
This agrees exactly with vertex mode at $p$ in the coordinate $\xi$ of the loop generator in $\hg$ on the right hand side of \eqref{L l compare beta gamma left}.
\end{proof}
\end{proposition}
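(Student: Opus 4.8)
The plan is to follow the reduction already used in the proof of Proposition~\ref{prop: vertex operator general}: by linearity it is enough to treat a monomial state $A = \a^r_{(-m_r)}\ldots\a^1_{(-m_1)}\vac$, and since the anti-chiral part is entirely parallel I would focus on a chiral $A$, hence on the first identity in \eqref{L is coord change}. The crucial structural point is that both sides act by the Leibniz rule on such monomials: the operator $L_{(k)}$ is a derivation of the products of loop generators (this is visible from the commutators recorded in \S\ref{sec: conformal states}), and $L_{\ell^{\xi_p}_k}$ is by construction the Leibniz extension of a Lie derivative to $\U\L^\Sigma_\alpha(U)$ and thence to $\hV^{\fl,\alpha}_p$. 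Consequently the statement reduces to the single ``building block'' identity
\[
\big[\, L_{(k)}^{\xi_p},\, \a^{\xi_p}_{(-m)}\,\big] \;=\; L_{\ell^{\xi_p}_k}\,\a^{\xi_p}_{(-m)}
\]
in $\End\hV^{\fl,\alpha}_p$, for each basis element $\a\in\fl$ and each $m\geq 1$.

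For the left-hand side I would quote the explicit commutators $[L_{(k)},\a_{(-m)}]$ already computed case by case in \S\ref{sec: conformal states} --- \eqref{Lk comm with modes KM} in the Kac-Moody case, \eqref{Lk comm with modes Vir} in the Virasoro case and \eqref{Lk comm with modes beta gamma} in the $\beta\gamma$ case --- each of which produces a single shifted vertex mode together with, in the Virasoro case only, a central term proportional to $\tfrac{k^3-k}{12}c\,\delta_{k-m,1}$. For the right-hand side I would compute $L_{\ell^{\xi_p}_k}$ directly on the cochain representative $s\big(\a_{\xi_p}\otimes(-\xi_p^{-m}\bar\partial\rho^{V^\tm_\tp}_{V^\tp_\tp})\big)$ of the vertex mode from \eqref{X bar X vertex modes}, using $\ell^{\xi_p}_k = -\xi_p^{k+1}\partial_{\xi_p}$. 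Three pieces arise: the Lie derivative of $\xi_p^{-m}$ gives $m\,\xi_p^{k-m}$, the shifted-mode term, matching the left side via the definition \eqref{cal Y map defined}; the Lie derivative of the coordinate-dependent factor $\a_{\xi_p}$ is trivial for Kac-Moody, equals $-(k+1)\xi_p^k\partial_{\xi_p}$ acting on $\Omega_{\xi_p}=-\partial_{\xi_p}$ in the Virasoro case, and $-(k+1)\xi_p^k\,\d\xi_p$ acting on $\gamma_{\xi_p}=\d\xi_p$ in the $\beta\gamma$ case, which accounts for the $k$-dependent numerical coefficients in the commutators; and the Lie derivative of the bump function produces a term $\xi_p^{-m}\bar\partial\big(\ell^{\xi_p}_k\rho^{V^\tm_\tp}_{V^\tp_\tp}\big)$ which, since $\xi_p^{-m}$ is holomorphic on the relevant region, is $\bar\partial$-exact in $\fl_{\xi_p}\otimes\Omega^{0,1}_c(Y_+)$ and hence disappears upon taking cohomology.

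The only genuinely delicate point --- and the step I expect to be the main obstacle --- is producing the central term in the Virasoro case. This comes from the fact that the $2$-cocycle \eqref{cocycle Virasoro} is coordinate-dependent and changes under $\eta\mapsto\xi$ by the coboundary \eqref{coboundary def}; the Lie derivative along $\ell^{\xi_p}_k$ implements the infinitesimal coordinate change $\xi_p\mapsto\xi_p-\epsilon\,\xi_p^{k+1}$, which via \eqref{Phi iso change alpha} shifts along the central element by the leading-order Schwarzian of this transformation. Since $(S\xi)(\eta)=\epsilon\,v'''(\eta)$ for $\xi=\eta+\epsilon v$, here $v=-\xi_p^{k+1}$ yields $-(k^3-k)\xi_p^{k-2}$, and hence an extra contribution $\tfrac{k^3-k}{24\pi\ii}c\int_U\xi_p^{k-m-2}\,\d\xi_p\wedge\bar\partial\rho^{V^\tm_\tp}_{V^\tp_\tp}$ on the right-hand side. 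Evaluating this integral with Lemma~\ref{lem: residue integral}, which extracts the residue at $p$, gives exactly $\tfrac{k^3-k}{12}c\,\delta_{k-m,1}$, matching \eqref{Lk comm with modes Vir}. Finally, as in the proof of Proposition~\ref{prop: Y map def}, one should check that all the $\bar\partial$-exact pieces and the remaining ``ellipses'' factors have disjoint supports, so that no spurious contribution from the Chevalley--Eilenberg differential $\d_{[\cdot,\cdot]_\alpha}$ appears when working modulo $\d_\CE$-exact terms.
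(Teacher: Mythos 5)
Your proposal is correct and follows essentially the same route as the paper's own proof: the same reduction via linearity and the Leibniz rule to the single commutator identity $\big[ L_{(k)}^{\xi_p}, \a^{\xi_p}_{(-m)} \big] = L_{\ell^{\xi_p}_k} \a^{\xi_p}_{(-m)}$, the same case-by-case comparison with \eqref{Lk comm with modes KM}, \eqref{Lk comm with modes Vir} and \eqref{Lk comm with modes beta gamma}, the same discarding of the $\bar\partial$-exact bump-function term in cohomology, and the same derivation of the Virasoro central term from the coboundary shift \eqref{Phi iso change alpha} via the infinitesimal Schwarzian and Lemma \ref{lem: residue integral}. The closing remark about disjoint supports and $\d_{[\cdot,\cdot]_\alpha}$ is a sensible extra precaution but does not change the argument.
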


We define a linear map $R^{\eta \to \xi}_p : \hVV^{\fl,\alpha} \to \hVV^{\fl,\alpha}$ by, cf. \cite[(6.3.3)]{FBbook},
\begin{equation} \label{R eta xi def}
R^{\eta \to \xi}_p \coloneqq \exp \bigg( \!\! - \sum_{k \geq 1} \big( b_k L_{(k)} + \bar b_k \bar L_{(k)} \big) \bigg) b_0^{- L_{(0)}} \bar b_0^{- \bar L_{(0)}}.
\end{equation}
This is well defined since the last two factors act diagonally as $b_0^{- \Delta_A} \bar b_0^{- \bar \Delta_A}$ on any monomial state $A$ as in \eqref{gen state Vkx}, of chiral and anti-chiral conformal dimensions $\Delta_A$ and $\bar \Delta_A$ respectively, and the exponential is well defined since $L_{(k)}, \bar L_{(k)} : \hVV^{\fl,\alpha} \to \hVV^{\fl,\alpha}$ are locally nilpotent for $k \geq 1$.

It follows immediately from the definitions \eqref{Virasoro modes}, \eqref{Virasoro modes xi} and from Lemma \ref{lem: field vertex modes} that for any state $A \in \hVV^{\fl,\alpha}$ and any $k \in \ZZ$ we have
\begin{equation} \label{L action A}
L^{\; \xi_p}_{(k)} A^\xi_p = (L_{(k)} A)^\xi_p, \qquad
\bar L^{\; \xi_p}_{(k)} A^\xi_p = (\bar L_{(k)} A)^\xi_p.
\end{equation}
It is then also convenient to introduce the linear map $\R^{\eta \to \xi}_p : \hV^{\fl,\alpha}_p \to \hV^{\fl,\alpha}_p$ defined by
\begin{equation} \label{R cal eta xi def}
\R^{\eta \to \xi}_p \coloneqq \exp \bigg( \!\! - \sum_{k \geq 1} \Big( b_k L_{(k)}^{\, \eta_p} + \bar b_k \bar L_{(k)}^{\, \eta_p} \Big) \bigg) b_0^{- L_{(0)}^{\, \eta_p}} \bar b_0^{- \bar L_{(0)}^{\, \eta_p}},
\end{equation}
with the property $\R^{\eta \to \xi}_p A^\eta_p = ( R^{\eta \to \xi}_p A )^\eta_p$ as a result of the identity \eqref{L action A}. In particular, the operator $\R^{\eta \to \xi}_p : \hV^{\fl,\alpha}_p \to \hV^{\fl,\alpha}_p$ naturally takes as input a state $A \in \hVV^{\fl,\alpha}$ prepared at $p$ in the local coordinate $\eta$ and returns another state $R^{\eta \to \xi}_p A \in \hVV^{\fl,\alpha}$ also prepared at $p$ in the same local coordinate $\eta$. This is to be contrasted with the linear map $(\cdot)^{\eta \to \xi}_p : \hV^{\fl,\alpha}_p \to \hV^{\fl,\alpha}_p$ defined in \eqref{change of preparation} which given a state $A \in \hVV^{\fl,\alpha}$ prepared at $p$ in the local coordinate $\eta$ returns the same state $A$ also prepared at $p$ but now in the new coordinate $\xi$. These two maps, in fact, coincide.

\begin{theorem} \label{thm: change of coordinate}
Let $p \in U \subset \Sigma^\circ$ be a point in a connected open subset equipped with two local coordinates $\xi , \eta : U \to \CC$. For any state $A \in \hVV^{\fl,\alpha}$ we have $A^\xi_p = \R^{\eta \to \xi}_p A^\eta_p$.
\begin{proof}
Consider the operator
\begin{equation} \label{hat varrho def}
\widehat{\varrho}^{\eta \to \xi}_p \coloneqq \exp \bigg( \!\! - \sum_{k \geq 1} \big( b_k L_{\ell^{\eta_p}_k} + \bar b_k L_{\bar \ell^{\eta_p}_k} \big) \bigg) b_0^{-L_{\ell^{\eta_p}_0}} \bar b_0^{-L_{\bar \ell^{\eta_p}_0}} : \hV^{\fl, \alpha}_p \longrightarrow \hV^{\fl, \alpha}_p.
\end{equation}
Since the Lie derivatives along the vector fields $\ell^{\eta_p}_k$ and $\bar\ell^{\eta_p}_k$ generate local (anti-)holomorphic coordinate transformations fixing the point $p$, it follows by definition of the coefficients $b_k$ and $\bar b_k$ in \eqref{b coeff definition} that \eqref{hat varrho def} implements the coordinate transformation $\eta_p \mapsto \xi_p$ in $\hV^{\fl, \alpha}_p$. Therefore
\begin{equation*}
A^\xi_p = \widehat{\varrho}^{\eta \to \xi}_p A^\eta_p = ( R^{\eta \to \xi} A)^\eta_p = \R^{\eta \to \xi} A^\eta_p
\end{equation*}
where the second equality is by definition \eqref{R eta xi def} and Proposition \ref{prop: L coord change}, while the last equality is by definition \eqref{R cal eta xi def}.
\end{proof}
\end{theorem}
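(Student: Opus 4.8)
The plan is to reduce the theorem to Proposition~\ref{prop: L coord change} via the operator $\widehat{\varrho}^{\eta\to\xi}_p$ defined in \eqref{hat varrho def}, which is built directly out of the Lie-derivative endomorphisms $L_{\ell^{\eta_p}_k}$ and $L_{\bar\ell^{\eta_p}_k}$ of $\hV^{\fl,\alpha}_p$ rather than out of the abstract modes $L_{(k)}, \bar L_{(k)}$ of the (anti-)conformal states. The first step is to argue that $\widehat{\varrho}^{\eta\to\xi}_p$ is well defined: the factors $b_0^{-L_{\ell^{\eta_p}_0}}\bar b_0^{-L_{\bar\ell^{\eta_p}_0}}$ act by a scalar on each fixed chiral/anti-chiral bi-degree (since $L_{\ell^{\eta_p}_0}$ and $L_{\bar\ell^{\eta_p}_0}$ measure the respective conformal dimensions, by the second relations in \eqref{Omega positive prod b}, \eqref{bOmega positive prod b} transported through Proposition~\ref{prop: L coord change}), and the exponential converges because $L_{\ell^{\eta_p}_k}$ and $L_{\bar\ell^{\eta_p}_k}$ for $k\ge 1$ lower the bi-degree, hence act locally nilpotently; this is the same mechanism that makes \eqref{R eta xi def} and \eqref{R cal eta xi def} well defined.

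The geometric heart of the argument is the claim that $\widehat{\varrho}^{\eta\to\xi}_p A^\eta_p = A^\xi_p$ for every $A\in\hVV^{\fl,\alpha}$. The point is that the Lie derivatives $L_{\ell^{\eta_p}_k}$ and $L_{\bar\ell^{\eta_p}_k}$, acting on $\L^\Sigma_\alpha(U)$ and hence on $\U\L^\Sigma_\alpha(U)$ and on the limit $\hV^{\fl,\alpha}_p$, are precisely the infinitesimal generators of local (anti-)holomorphic coordinate changes fixing $p$: the flow of $\ell^{\eta_p}_k = -\eta_p^{k+1}\partial_{\eta_p}$ sends $\eta_p$ to the one-parameter family of coordinates whose composite is exactly the decomposition \eqref{b coeff def}. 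Since the map $(\cdot)^{\eta_p}_U$ of \eqref{cal Y map defined} is built by pulling back the germs $\xi_p^{-m}$ along the coordinate, transforming the coordinate $\eta_p\mapsto\xi_p$ inside the formula \eqref{cal Y map defined b} is implemented on the vector-space level exactly by applying the Lie derivatives in the order dictated by \eqref{b coeff def}. In other words, by the very definition of the coefficients $b_k,\bar b_k$ in \eqref{b coeff definition} we have $\widehat{\varrho}^{\eta\to\xi}_p A^\eta_p = A^\xi_p$, which is the content of the first equality in the displayed chain of the proof.

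It then remains only to rewrite $\widehat{\varrho}^{\eta\to\xi}_p$ in terms of the vertex modes of the (anti-)conformal states. By Proposition~\ref{prop: L coord change}, for each $k\ge -1$ and each $A\in\hVV^{\fl,\alpha}$ we have $(L_{(k)}A)^\eta_p = L_{\ell^{\eta_p}_k} A^\eta_p$ and $(\bar L_{(k)}A)^\eta_p = L_{\bar\ell^{\eta_p}_k} A^\eta_p$; equivalently $L^{\eta_p}_{(k)} = L_{\ell^{\eta_p}_k}$ and $\bar L^{\eta_p}_{(k)} = L_{\bar\ell^{\eta_p}_k}$ as endomorphisms of $\hV^{\fl,\alpha}_p$. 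Substituting these identifications factor by factor into \eqref{hat varrho def}, and comparing with \eqref{R cal eta xi def}, gives $\widehat{\varrho}^{\eta\to\xi}_p = \R^{\eta\to\xi}_p$ on $\hV^{\fl,\alpha}_p$; here one uses that all the $L$'s and $\bar L$'s in sight commute with each other pairwise (as $[L_{(m)},\bar L_{(n)}]=0$ and the chiral/anti-chiral families separately close, so the exponential factorises consistently), and that the scalar action of $b_0^{-L_{(0)}^{\eta_p}}\bar b_0^{-\bar L_{(0)}^{\eta_p}}$ matches that of $b_0^{-L_{\ell^{\eta_p}_0}}\bar b_0^{-L_{\bar\ell^{\eta_p}_0}}$. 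Combining $A^\xi_p = \widehat{\varrho}^{\eta\to\xi}_p A^\eta_p$ with $\widehat{\varrho}^{\eta\to\xi}_p = \R^{\eta\to\xi}_p$, and using $\R^{\eta\to\xi}_p A^\eta_p = (R^{\eta\to\xi}_p A)^\eta_p$ from \eqref{R cal eta xi def}, yields the theorem.

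\textbf{Main obstacle.} The genuinely nontrivial input is Proposition~\ref{prop: L coord change}, which has already been established case-by-case; granting it, the only delicate point remaining is the claim $\widehat{\varrho}^{\eta\to\xi}_p A^\eta_p = A^\xi_p$ — i.e.\ that the \emph{formal} exponential of Lie derivatives really equals the \emph{change-of-preparation} map $(\cdot)^{\eta\to\xi}_p$ of \eqref{change of preparation}. Making this precise requires checking that the two $\eta_p$-power-series expansions in \eqref{b coeff definition} genuinely agree (so that the abstract normal-ordering-free expansion \eqref{A rewrite as B} from the proof of Proposition~\ref{prop: indep loc coord} coincides with the one produced by iterating $L_{\ell^{\eta_p}_k}$), and invoking the fact that $R^{\eta\to\xi}_p$ is the standard operator implementing coordinate changes on universal enveloping vertex algebras in the sense of Huang \cite{Huang} and \cite[\S6.3]{FBbook}, whose proof carries over verbatim to the non-chiral tensor-product setting since the chiral and anti-chiral factors are decoupled.
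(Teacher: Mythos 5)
Your proposal is correct and follows essentially the same route as the paper: you introduce the geometric operator $\widehat{\varrho}^{\eta\to\xi}_p$ built from the Lie derivatives $L_{\ell^{\eta_p}_k}$, $L_{\bar\ell^{\eta_p}_k}$, observe that by the definition \eqref{b coeff definition} of the $b_k$, $\bar b_k$ it implements the change of coordinate $\eta_p \mapsto \xi_p$ on $\hV^{\fl,\alpha}_p$, and then identify it with $\R^{\eta\to\xi}_p$ via Proposition~\ref{prop: L coord change}. The only addition beyond the paper's argument is your explicit discussion of well-definedness and of the commutativity needed to factorise the exponential, which is a harmless elaboration of the same proof.
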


The following gives an analogue of Huang's ``change of variable'' formula \cite[p.176-177]{Huang}, see also \cite[Lemma 6.5.6]{FBbook}, in the present full vertex algebra setting.
\begin{corollary} \label{cor: Huang's formula}
With the same setting as in Theorem \ref{thm: change of coordinate}, for any $A, B \in \hVV^{\fl,\alpha}$ we have
\begin{equation*}
\R^{\eta \to \xi}_p \bigg( \Big( \mathcal Y\big( A^\xi_q, \underline{\xi_p(q)} \big) B^\xi_p \Big)_p^{\xi \to \eta} \bigg)
= \mathcal Y\big( \R^{\eta \to \xi}_q A^\eta_q, \underline{\eta_p(q)} \big) \R^{\eta \to \xi}_p B^\eta_p.
\end{equation*}
\begin{remark}
The slightly awkward use of the linear map $(\cdot)^{\xi \to \eta}_p : \hV^{\fl,\alpha}_p \to \hV^{\fl,\alpha}_p$ on the left hand side ensures that this is an equality between states prepared at $p$ in the same local coordinate $\eta$. Indeed, the left hand side above could also be written simply as $\mathcal Y\big( A^\xi_q, \underline{\xi_p(q)} \big) B^\xi_p$ but then the equality in Corollary \ref{cor: Huang's formula} would be comparing a set of states prepared at $p$ in the local coordinate $\xi$ with a set of states prepared at $p$ in the local coordinate $\eta$. In other words, in terms of the formal vertex operator $Y(A, \underline{\zeta}) B$ introduced in \eqref{vertex operator Y} for any $A, B \in \hVV^{\fl,\alpha}$, the identity in Corollary \ref{cor: Huang's formula} takes the more recognisable form
\begin{equation*}
R^{\eta \to \xi}_p \Big( Y\big( A, \underline{\xi_p(q)} \big) B \Big) = Y\big( R^{\eta \to \xi}_q A, \underline{\eta_p(q)} \big) R^{\eta \to \xi}_p B. \qedhere
\end{equation*}
\end{remark}
\begin{proof}[Proof of Corollary \ref{cor: Huang's formula}.]
By Theorem \ref{thm: change of coordinate} and the definition of the linear map $(\cdot)^{\xi \to \eta}_p$, for any state $C \in \hVV^{\fl,\alpha}$ we have $C^\xi_p = \R^{\eta \to \xi}_p \big( ( C^\xi_p )_p^{\xi \to \eta} \big)$. In particular, it follows that
\begin{equation*}
\mathcal Y\big( A^\xi_q, \underline{\xi_p(q)} \big) B^\xi_p = \R^{\eta \to \xi}_p \bigg( \Big( \mathcal Y\big( A^\xi_q, \underline{\xi_p(q)} \big) B^\xi_p \Big)_p^{\xi \to \eta} \bigg).
\end{equation*}
On the other hand, we have
\begin{align*}
\mathcal Y\big( A^\xi_q, \underline{\xi_p(q)} \big) B^\xi_p &= m_{(Y, p), U} \big( m_{q, Y}(A^\xi_q) \otimes B^\xi_p \big)
= m_{(Y, p), U} \Big( m_{q, Y}\Big( \big( R^{\eta \to \xi}_q A \big)^\eta_q \Big) \otimes \big( R^{\eta \to \xi}_p B \big)^\eta_p \Big)\\
&= \mathcal Y\Big( \big( R^{\eta \to \xi}_q A \big)^\eta_q, \underline{\eta_p(q)} \Big) \big( R^{\eta \to \xi}_p B \big)^\eta_p
= \mathcal Y\big( \R^{\eta \to \xi}_q A^\eta_q, \underline{\eta_p(q)} \big) \R^{\eta \to \xi}_p B^\eta_p,
\end{align*}
where in the first and third equalities we used Proposition \ref{prop: vertex operator general} and the second equality is by Theorem \ref{thm: change of coordinate} again.
\end{proof}
\end{corollary}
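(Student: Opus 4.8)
The plan is to deduce Huang's formula purely from Theorem~\ref{thm: change of coordinate} together with the coordinate-independence of the factorisation products of $\U\L^\Sigma_\alpha$, without any new computation; the real content will already sit in Theorem~\ref{thm: change of coordinate}, which we may assume.

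First I would collapse the left-hand side. Set $v \coloneqq \mathcal Y\big(A^\xi_q, \underline{\xi_p(q)}\big) B^\xi_p \in \hV^{\fl,\alpha}_p$; since $q$ lies in the annulus $Y$, the quantities $\xi_p(q)=\xi(q)-\xi(p)$ and $\bar\xi_p(q)$ are genuine complex numbers, so $v$ is an honest element of $\hV^{\fl,\alpha}_p$. By injectivity of \eqref{Y at p} there is a unique $C\in\hVV^{\fl,\alpha}$ with $C^\xi_p=v$, and by definition \eqref{change of preparation} we have $(v)^{\xi\to\eta}_p=C^\eta_p$. Applying $\R^{\eta\to\xi}_p$, using the relation between $\R^{\eta\to\xi}_p$ and $R^{\eta\to\xi}_p$ recorded after \eqref{R cal eta xi def} and then Theorem~\ref{thm: change of coordinate}, gives $\R^{\eta\to\xi}_p\big(C^\eta_p\big)=(R^{\eta\to\xi}_p C)^\eta_p=C^\xi_p=v$. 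Hence the whole left-hand side is simply $\mathcal Y\big(A^\xi_q, \underline{\xi_p(q)}\big) B^\xi_p$; the insertion of $(\cdot)^{\xi\to\eta}_p$ serves only to present both sides as states prepared at $p$ in the coordinate $\eta$, as the Remark indicates.

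Next I would rewrite this geometrically and transport it to the $\eta$-chart. By Proposition~\ref{prop: vertex operator general} and the definition of the action of a vertex operator on $\hV^{\fl,\alpha}_p$, one has $\mathcal Y\big(A^\xi_q, \underline{\xi_p(q)}\big) B^\xi_p = m_{(Y,p),U}\big(m_{q,Y}(A^\xi_q)\otimes B^\xi_p\big)$ for a disc-shaped $U\supset Y$ around $p$. Then I would apply Theorem~\ref{thm: change of coordinate} to each input separately, $A^\xi_q=(R^{\eta\to\xi}_q A)^\eta_q$ and $B^\xi_p=(R^{\eta\to\xi}_p B)^\eta_p$, and re-apply Proposition~\ref{prop: vertex operator general} in the coordinate $\eta$ to identify $m_{q,Y}\big((R^{\eta\to\xi}_q A)^\eta_q\big)=\mathcal Y\big((R^{\eta\to\xi}_q A)^\eta_q,\underline{\eta_p(q)}\big)$. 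Feeding this back through $m_{(Y,p),U}$ and converting the states $(R^{\eta\to\xi}_\bullet \cdot)^\eta_\bullet$ back to $\R^{\eta\to\xi}_\bullet(\cdot)^\eta_\bullet$ produces $\mathcal Y\big(\R^{\eta\to\xi}_q A^\eta_q,\underline{\eta_p(q)}\big)\,\R^{\eta\to\xi}_p B^\eta_p$, which is the right-hand side.

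The main obstacle is genuinely Theorem~\ref{thm: change of coordinate} itself — the identification of the change-of-preparation map $(\cdot)^{\eta\to\xi}_p$ with the operator $\R^{\eta\to\xi}_p$ built from the homogeneous Virasoro modes $L^{\eta_p}_{(k)},\bar L^{\eta_p}_{(k)}$ of the (anti-)conformal states; granted that, the present corollary is bookkeeping, since $m_{q,Y}(A_q)$ is coordinate-free and a coordinate only enters through its expansion as a series in $\xi_p(q)$ versus $\eta_p(q)$, forcing the two vertex-operator presentations of the same factorisation product to agree once $R$ is applied to the states. Finally I would note that applying $\big((\cdot)^\eta_p\big)^{-1}$ to the resulting identity in $\hV^{\fl,\alpha}_p$ and using \eqref{cal Y vs Y} yields the formal-variable form $R^{\eta\to\xi}_p\big(Y(A,\underline{\xi_p(q)})B\big)=Y\big(R^{\eta\to\xi}_q A,\underline{\eta_p(q)}\big)R^{\eta\to\xi}_p B$ stated in the Remark.
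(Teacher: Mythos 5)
Your proposal is correct and follows essentially the same route as the paper: collapse the left-hand side using Theorem \ref{thm: change of coordinate} together with the identity $\R^{\eta\to\xi}_p A^\eta_p = (R^{\eta\to\xi}_p A)^\eta_p$, then rewrite $\mathcal Y\big(A^\xi_q,\underline{\xi_p(q)}\big)B^\xi_p$ as the factorisation product $m_{(Y,p),U}\big(m_{q,Y}(A^\xi_q)\otimes B^\xi_p\big)$ via Proposition \ref{prop: vertex operator general}, transport both inputs to the $\eta$-chart by Theorem \ref{thm: change of coordinate}, and read off the right-hand side by applying Proposition \ref{prop: vertex operator general} again. The only cosmetic difference is that you make the intermediate state $C$ with $C^\xi_p = \mathcal Y\big(A^\xi_q,\underline{\xi_p(q)}\big)B^\xi_p$ explicit, which is the same bookkeeping the paper performs implicitly.
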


\subsection{Invariant bilinear form} \label{sec: invariant bilinear}

From now on we will specialise to the case of the $2$-sphere
\begin{equation} \label{Sigma 2-sphere}
\Sigma = S^2.
\end{equation}
Since $\Sigma$ is orientable, in this case the double is a disjoint union $\Dz = \Sigma_+ \sqcup \Sigma_-$ of two copies of $\Sigma$ equipped with opposite orientations.
As a simple application of Theorem \ref{thm: change of coordinate}, in \S\ref{sec: inv bilinear def} below we will construct a canonical invariant bilinear form on the vector space $\hVV^{\fl,\alpha}$. In particular, this will be used in \S\ref{sec: Full VOA axioms} to show that $\hVV^{\fl, \alpha}$ satisfies the axioms of a full vertex algebra \cite{Moriwaki:2020cxf}.
To define this invariant bilinear form, we first need to a notion of vacuum state.

\subsubsection{Vacuum state}

Recall from \S\ref{sec: twisted PFE} that $\Sigma \in \Top(\Sigma)$. However, $\Sigma = S^2$ cannot be covered by a single coordinate patch, so Theorem \ref{thm: cohomology gSigma} does not apply to the open subset $U = \Sigma$ and in particular $\U\L^\Sigma_\alpha(\Sigma)$ is not given by the isomorphism in \eqref{Ug(U) isomorphism}. Instead we have the following result which depends on the choice of holomorphic vector bundle $L$ among the three examples in \S\ref{sec: main examples}.

\begin{lemma} \label{lem: iso correlation}
In the Kac-Moody and Virasoro cases we have an isomorphism
\begin{equation} \label{iso for sphere}
\langle \cdot \rangle : \U\L^\Sigma_\alpha(\Sigma) \overset{\cong}\longrightarrow \CC
\end{equation}
such that $\big\langle [1]_{\Sigma} \big\rangle = 1$. In the $\beta\gamma$ system case we have instead $\U\L^\Sigma_\alpha(\Sigma) \cong 0$.
\begin{proof}
We consider the three cases separately.

\paragraph{Kac-Moody:} We have $H^0\big( \L^\Sigma_\alpha(\Sigma) \big) = \g \oplus \g$ and $H^1\big( \L^\Sigma_\alpha(\Sigma) \big) = \CC$ since
\begin{equation*}
H^0\big( \g \otimes \Omega^{0,\chain}(\Dz) \big) = \g \oplus \g, \qquad
H^1\big( \g \otimes \Omega^{0,\chain}(\Dz) \big) = 0.
\end{equation*}
Consider the unital \dg{} Lie algebra $\g \oplus \g \oplus \CC{\bf 1}$ with ${\bf 1}$ of degree $1$ and equipped with the trivial Lie bracket. We then have a quasi-isomorphism of unital \dg{} Lie algebras
\begin{equation*}
\phi : \g \oplus \g \oplus \CC{\bf 1} \overset{\simeq}\longrightarrow \L^\Sigma_\alpha(\Sigma)
\end{equation*}
sending $(\X, \Y) \in \g \oplus \g$ to the constant function in $\g \otimes \Omega^{0,0}(\Dz) = \g \otimes \Omega^{0,0}(\Sigma_+) \oplus \g \otimes \Omega^{0,0}(\Sigma_-)$ equal to $\X$ on $\Sigma_+$ and $\Y$ on $\Sigma_-$, and ${\bf 1}$ to $\kent \in \L^\Sigma_\alpha(\Sigma)$.
Since the functor $\bCE_\chain$ introduced in \S\ref{sec: udgLie} preserved quasi-isomorphisms by Proposition \ref{prop: bCE functor} we obtain a quasi-isomorphism
\begin{equation*}
\bCE_\chain(\phi) : \Sym\big( (\g \oplus \g)[1] \big) \overset{\simeq}\longrightarrow \bCE_\chain\big( \L^\Sigma_\alpha(\Sigma) \big)  
\end{equation*}
where the differential $\d_{\CE}$ on the domain is trivial.
Taking the $0^{\rm th}$-cohomology then yields an isomorphism $\CC \SimTo \U\L^\Sigma_\alpha(\Sigma)$ which sends $1 \in \CC$ to $[1]_\Sigma \in \U\L^\Sigma_\alpha(\Sigma)$ and whose inverse then gives the required isomorphism \eqref{iso for sphere}.

\paragraph{Virasoro:}
We have $H^0\big( \L^\Sigma_\alpha(\Sigma) \big) = \CC^6$ and $H^1\big( \L^\Sigma_\alpha(\Sigma) \big) = \CC$ since
\begin{equation*}
H^0\big( \Omega^{0,\chain}(\Dz, T^{1,0} \Dz) \big) = \CC^6, \qquad
H^1\big( \Omega^{0,\chain}(\Dz, T^{1,0} \Dz) \big) = 0
\end{equation*}
where the $0^{\rm th}$ cohomology is given by global (anti-)holomorphic vector fields on $\CP$.
Consider therefore the unital \dg{} Lie algebra $\CC^6 \oplus \CC{\bf 1}$, with ${\bf 1}$ of degree 1, again equipped with the trivial Lie bracket.
We then have a quasi-isomorphism of unital \dg{} Lie algebras
\begin{equation*}
\phi : \CC^6 \oplus \CC{\bf 1} \overset{\simeq}\longrightarrow \L^\Sigma_\alpha(\Sigma)
\end{equation*}
which sends $(a,b,c, a',b',c') \in \CC^6$ to the pair of global (anti-)holomorphic vector fields given by $\partial_\xi \otimes (a \xi^2 + b \xi + c) \in \fl_\xi \otimes \Omega^{0,0}(\Sigma_+)$ and $\partial_{\bar\xi} \otimes (a' \bar\xi^2 + b' \bar\xi + c') \in \fl_{\bar\xi} \otimes \Omega^{0,0}(\Sigma_-)$ in any choice of holomorphic coordinate $\xi : \CC = \CP \setminus \{\infty\} \to \CC$, and ${\bf 1}$ to $\kent \in \L^\Sigma_\alpha(\Sigma)$. Applying the functor $\bCE_\chain$ we obtain a quasi-isomorphism
\begin{equation*}
\bCE_\chain(\phi) : \Sym\big( \CC^6[1] \big) \overset{\simeq}\longrightarrow \bCE_\chain\big( \L^\Sigma_\alpha(\Sigma) \big)  
\end{equation*}
where the differential $\d_{\CE}$ on the domain is again trivial so that taking the $0^{\rm th}$ cohomology we again obtain the inverse of the required isomorphism \eqref{iso for sphere}.

\paragraph{$\bm \beta \bm \gamma$ system:}
We have $H^0\big( \L^\Sigma_\alpha(\Sigma) \big) = \CC^2$ and $H^1\big( \L^\Sigma_\alpha(\Sigma) \big) = \CC^3$ coming from the fact that
\begin{equation*}
H^0\big( \Omega^{0,\chain}(\Dz, L) \big) = \CC^2, \qquad
H^1\big( \Omega^{0,\chain}(\Dz, L) \big) = \CC^2.
\end{equation*}
We then form the unital \dg{} Lie algebra $\fl \oplus \bar\fl \oplus \CC {\bf 1}$ with $\fl = \text{span}_\CC \{\beta, \gamma\}$ and where we set $\bar\fl \coloneqq \text{span}_\CC \{ \bar\beta, \bar\gamma\}$. Here $\beta$ and $\bar\beta$ of degree $0$ and $\gamma$, $\bar\gamma$ and ${\bf 1}$ all of degree $1$. The Lie bracket is given by $[\beta, \gamma] = {\bf 1} = [\bar\beta,\bar\gamma]$ and $[\beta, \bar\gamma] = 0 = [\bar\beta, \gamma]$. We have a quasi-isomorphism of unital \dg{} Lie algebras
\begin{equation*}
\phi : \fl \oplus \bar\fl \oplus \CC{\bf 1} \overset{\simeq}\longrightarrow \L^\Sigma_\alpha(\Sigma)
\end{equation*}
which sends $\beta$ and $\bar\beta$ to the pair of constant functions $1 \in \Omega^{0,0}(\Sigma_+)$ and $1 \in \Omega^{0,0}(\Sigma_-)$, and sends $\gamma$ and $\bar\gamma$ to the pair of $(1,1)$-forms $\d \xi \wedge \xi^{-1} \bar\partial\chi \in \Omega^{1,1}(\Sigma_+)$ and $\d \bar\xi \wedge \bar\xi^{-1} \bar\partial\chi' \in \Omega^{1,1}(\Sigma_-)$ where $\chi \in \Omega^{0,0}(\Sigma_+)$ is constant equal to $1$ in a neighbourhood of $\xi = 0$ and constant equal to $0$ in a neighbourhood of $\xi = \infty$, and likewise for $\chi' \in \Omega^{0,0}(\Sigma_-)$ in the holomorphic coordinate $\bar\xi$ on $\Sigma_-$.
Applying the functor $\bCE_\chain$ we obtain a quasi-isomorphism
\begin{equation*}
\bCE_\chain(\phi) : \Sym\big( (\fl \oplus \bar\fl)[1] \big) \overset{\simeq}\longrightarrow \bCE_\chain\big( \L^\Sigma_\alpha(\Sigma) \big)  
\end{equation*}
where the differential on the domain is $\d_{\CE} = \d_{[\cdot,\cdot]}$. A degree $0$ element in the domain is a linear combination of elements of the form $(s\gamma)^n (s\bar\gamma)^{\bar n}$ with $n, \bar n \in \ZZ_{\geq 0}$. But the latter is always $\d_{\CE}$-exact since we can write it, for instance, as $- \d_{\CE}\big( \frac{1}{n+1} (s\beta)(s\gamma)^{n+1} (s\bar\gamma)^{\bar n} \big)$ using the fact that $- \d_{\CE}\big( (s\beta)(s\gamma) \big) = s[\beta,\gamma] = s{\bf 1} = 1$ where the last step is from working in the quotient in the definition of the functor $\bCE_\chain$ in \eqref{bar CE functor}. The result for the $\beta\gamma$ system now follows from taking the $0^{\rm th}$ cohomology of the above quasi-isomorphism.
\end{proof}
\end{lemma}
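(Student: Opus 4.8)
The plan is to prove Lemma \ref{lem: iso correlation} by treating the three examples separately, in each case constructing a finite-dimensional unital \dg{} Lie algebra $\mathfrak{m}$ together with a quasi-isomorphism $\phi : \mathfrak{m} \overset{\simeq}\to \L^\Sigma_\alpha(\Sigma)$, then applying the homology Chevalley-Eilenberg functor $\bCE_\chain$ (which by Proposition \ref{prop: bCE functor} preserves quasi-isomorphisms) and finally taking $0^{\rm th}$ cohomology. The finite-dimensional model $\mathfrak{m}$ is obtained by computing $H^\chain\big( \L^\Sigma_\alpha(\Sigma) \big)$, which since $\Dz = \Sigma_+ \sqcup \Sigma_- = \CP \sqcup \CP$ reduces to well-known facts: $H^0$ of the Dolbeault complex valued in a holomorphic bundle over $\CP$ is the space of global holomorphic sections and $H^1$ is its dual obstruction space, computed via Serre duality or Riemann-Roch.

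First I would record the cohomology computations. For Kac-Moody, $L = \Dz \times \g$ is trivial, so $H^0\big( \g \otimes \Omega^{0,\chain}(\Dz) \big) = \g^{\oplus 2}$ (constant functions on each sphere) and $H^1 = 0$ since $\O_{\CP}$ has no higher cohomology; hence $H^1\big( \L^\Sigma_\alpha(\Sigma) \big) = \CC \kent$. For Virasoro, $L = T^{1,0}\Dz$ and global holomorphic vector fields on $\CP$ form $\sl_2(\CC) \cong \CC^3$, so $H^0 = \CC^6$, while $H^1(\CP, T_{\CP}) = 0$, giving again $H^1\big( \L^\Sigma_\alpha(\Sigma) \big) = \CC$. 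For the $\beta\gamma$ system, $L = (\Dz \times \CC) \oplus T^{\ast 1,0}\Dz$; the trivial summand contributes $H^0 = \CC$ and $H^1 = 0$ on each sphere, while $T^{\ast 1,0}\CP = \O_{\CP}(-2)$ has $H^0 = 0$ and $H^1 = \CC$ on each sphere, so $H^0\big( \L^\Sigma_\alpha(\Sigma) \big) = \CC^2$ and $H^1\big( \L^\Sigma_\alpha(\Sigma) \big) = \CC^2 \oplus \CC\kent = \CC^3$.

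Next, in each case I would build the explicit quasi-isomorphism $\phi$ from the minimal model, exactly as sketched in the excerpt's proof: for Kac-Moody send $(\X,\Y)$ to the locally constant section $\X$ on $\Sigma_+$, $\Y$ on $\Sigma_-$, and ${\bf 1}\mapsto\kent$; for Virasoro send $(a,b,c,a',b',c')$ to the explicit degree-$2$ polynomial vector fields $\partial_\xi\otimes(a\xi^2+b\xi+c)$ and its anti-holomorphic analogue; for $\beta\gamma$ send $\beta,\bar\beta$ to the constant functions $1$, $\gamma,\bar\gamma$ to the $(1,1)$-forms $\d\xi\wedge\xi^{-1}\bar\partial\chi$ resp. $\d\bar\xi\wedge\bar\xi^{-1}\bar\partial\chi'$ with bump functions $\chi,\chi'$, and ${\bf 1}\mapsto\kent$. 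One checks $\phi$ is a map of \dg{} Lie algebras: the only nontrivial bracket relation is in the $\beta\gamma$ case, where $[\beta,\gamma]={\bf 1}$ must map to $\alpha(1, \d\xi\wedge\xi^{-1}\bar\partial\chi)\kent = \kent$, which follows from the residue-type integral $\frac{1}{2\pi\ii}\int_{\Sigma_+} 1 \wedge \d\xi\wedge\xi^{-1}\bar\partial\chi = 1$ by Stokes's theorem (cf. Lemma \ref{lem: residue integral}). That $\phi$ is a quasi-isomorphism is clear since it induces an isomorphism on cohomology by the stated computations, the differential being zero on the target cohomology in the needed degrees.

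Finally, applying $\bCE_\chain$ gives a quasi-isomorphism onto $\bCE_\chain\big(\L^\Sigma_\alpha(\Sigma)\big)$ from a symmetric algebra with an easily described differential, and I take $H^0$. In the Kac-Moody and Virasoro cases the source differential $\d_{\CE}$ vanishes (trivial bracket, degree-$1$ generators squaring to nothing in $\Sym$), so $H^0$ of the source is $\CC$, spanned by $1$, and tracing $1 \mapsto [1]_\Sigma$ gives the isomorphism \eqref{iso for sphere} with $\langle [1]_\Sigma\rangle = 1$. In the $\beta\gamma$ case the differential is $\d_{\CE} = \d_{[\cdot,\cdot]}$ and a degree-$0$ element is a combination of $(s\gamma)^n(s\bar\gamma)^{\bar n}$; the key computation is that each such monomial is $\d_{\CE}$-exact, written as $-\d_{\CE}\big(\frac{1}{n+1}(s\beta)(s\gamma)^{n+1}(s\bar\gamma)^{\bar n}\big)$, using $-\d_{\CE}\big((s\beta)(s\gamma)\big) = s[\beta,\gamma]_\alpha = s{\bf 1} = 1$ where the last equality holds because $\bCE_\chain$ quotients by $s\kent - 1$ (definition \eqref{bar CE functor}); hence $H^0 = 0$ and $\U\L^\Sigma_\alpha(\Sigma) \cong 0$. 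The main obstacle I anticipate is verifying that $\phi$ is genuinely a quasi-isomorphism of \dg{} Lie algebras (not just a map inducing a cohomology iso) and in particular getting the normalisations in the $\beta\gamma$ bracket and the Serre-duality identification $H^1(\CP,\O(-2))\cong\CC$ precise enough that the central/cocycle bookkeeping is consistent; once those are pinned down, the rest is formal.
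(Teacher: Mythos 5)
Your proposal is correct and follows essentially the same route as the paper's proof: the same cohomology computations over $\Dz = \CP \sqcup \CP$, the same finite-dimensional unital \dg{} Lie algebra models with the same explicit quasi-isomorphisms $\phi$ (constant sections, quadratic vector fields, and the bump-function $(1,1)$-forms respectively), followed by $\bCE_\chain$ and $H^0$, including the identical exactness trick $(s\gamma)^n(s\bar\gamma)^{\bar n} = -\d_{\CE}\big(\tfrac{1}{n+1}(s\beta)(s\gamma)^{n+1}(s\bar\gamma)^{\bar n}\big)$ in the $\beta\gamma$ case. The only addition is your explicit residue check that $\phi$ respects the bracket $[\beta,\gamma]={\bf 1}$, which the paper leaves implicit in its choice of the forms $\d\xi\wedge\xi^{-1}\bar\partial\chi$.
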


A linear map $\U\L^\Sigma_\alpha(\Sigma) \to \CC$, as in \eqref{iso for sphere} but not necessarily an isomorphism, defines the notion of a \emph{state} for the prefactorisation algebra $\U\L^\Sigma_\alpha$ in the sense of \cite[Definition 4.9.1]{CGBook1}. We will use this below to define an invariant bilinear form on $\hVV^{\fl, \alpha}$.

Note that while Lemma \ref{lem: iso correlation} provides us with a state in both the Kac-Moody and Virasoro cases, it fails to provide a suitable notion of state for the $\beta\gamma$ system. It is clear form the proof of Lemma \ref{lem: iso correlation} that the culprit for the vanishing of $\U\L^\Sigma_\alpha(\Sigma)$ are the global holomorphic sections of $\Omega^{0,0}(\Sigma_\pm)$ and $\Omega^{1,1}(\Sigma_\pm)$ which pair non-trivially to $1$ under the differential $\d_{\CE}$. Removing these zero modes can be achieved by adding a mass term which will have the effect of modifying the global observables of the factorisation algebra $\U\L^\Sigma_\alpha$ over $\Sigma = S^2$ while preserving the observables over small open sets \cite[\S 6.1.2]{GwilliamThesis}, see also \cite[Lemma 2.5.1]{CGBook1}. We will not pursue this idea further here and from now on, when discussing the invariant bilinear form on $\hVV^{\fl, \alpha}$, we will therefore only focus on the Kac-Moody and Virasoro cases.

\subsubsection{Invariant bilinear form on \texorpdfstring{$\hVV^{\fl,\alpha}$}{F}} \label{sec: inv bilinear def}

Let us fix antipodal points $o, o'$ on $\Sigma$ and a coordinate $u : \Sigma \setminus \{ o' \} \to \CC$ with $u(o) = 0$. We also let $u^{-1} : \Sigma \setminus \{ o \} \to \CC$ be its inverse defined by $u^{-1}(p) \coloneqq u(p)^{-1}$ so that, in particular, $u^{-1}(o') = 0$.
Consider the linear map \eqref{n Fg to Ug xi i} in the case $n=2$ and $U = \Sigma$, associated with the points $o, o' \in \Sigma$ and with the local coordinates $u$ and $u^{-1}$ near these points, respectively. We obtain a linear map
\begin{equation} \label{bilinear form Fg}
\begin{array}{l}
(\cdot, \cdot) : \hVV^{\fl,\alpha} \otimes \hVV^{\fl,\alpha} \longrightarrow \CC, \\[3mm]
(B, C) \coloneqq \Big\langle m_{(o', o), \Sigma}\big( B^{u^{\tmo}}_{o'} \otimes C^u_o \big) \Big\rangle.
\end{array}
\qquad\qquad
\raisebox{-12mm}
{\begin{tikzpicture}
  \shade[ball color = lightgray, opacity = 0.5] (0,0,0) circle (1);
  \tdplotsetrotatedcoords{0}{0}{120};
 
  \fill[tdplot_rotated_coords, blue!90,
        opacity      = 0.3]
        ({1/sqrt(5)},0,{-2/sqrt(5)}) arc (0:360:{1/sqrt(5)});
  \draw[thick, blue, dashed, tdplot_rotated_coords] ({1/sqrt(5)},0,{-2/sqrt(5)}) arc (0:360:{1/sqrt(5)}) node[below right=0mm and -1mm]{\tiny $V$} node[black, below left=-2mm and 12mm]{\tiny $\Sigma$};
  
  \fill[tdplot_rotated_coords, red!90,
        opacity      = 0.4]
        ({1/sqrt(5)},0,{2/sqrt(5)}) arc (0:360:{1/sqrt(5)});
  \draw[thick, red, tdplot_rotated_coords] ({1/sqrt(5)},0,{2/sqrt(5)}) arc (0:360:{1/sqrt(5)}) node[above right=0mm and -1mm]{\tiny $V'$};

  \draw (0,0) circle (1);
  \draw[tdplot_rotated_coords, fill] (0,0,1) node[above=.7mm]{\tiny $o'$} circle (0.5pt);
  \draw[tdplot_rotated_coords, fill, black!90] (0,0,-1) node[black, below=.7mm]{\tiny $o$} circle (0.5pt);
\end{tikzpicture}
}
\end{equation}
Specifically, the state $B \in \hVV^{\fl,\alpha}$ is prepared at $o' \in V'$ in the local chart $(V', u^{-1})$ while the state $C \in \hVV^{\fl,\alpha}$ is prepared at $o \in V$ in the local chart $(V, u)$ with $V \cap V' = \emptyset$. We then apply the factorisation product $m_{(V', V), \Sigma} : \U\L^\Sigma_\alpha(V') \otimes \U\L^\Sigma_\alpha(V) \to \U\L^\Sigma_\alpha(\Sigma)$ to the element $B^{u^{\tmo}}_{V'} \otimes C^u_V$ followed by the isomorphism $\langle \cdot \rangle$ from Lemma \ref{lem: iso correlation} to obtain a complex number.

\begin{proposition} \label{prop: inv bilinear form}
In the Kac-Moody and Virasoro cases, the bilinear form \eqref{bilinear form Fg} is invariant in the sense that
\begin{equation*}
\big( B, Y(A, \underline{\zeta}) C \big) = \Big( Y\big( e^{\zeta L_{(1)} + \bar \zeta \bar L_{(1)}} (-1)^{L_{(0)} + \bar L_{(0)}} \zeta^{- 2 L_{(0)}} \bar \zeta^{- 2 \bar L_{(0)}} A, \underline{\zeta}^{-1} \big) B, C \Big)
\end{equation*}
for any states $A, B, C \in \hVV^{\fl,\alpha}$.
\begin{remark}
This identity coincides exactly with \cite[\S3.1]{Moriwaki:2020dlj} or \cite[eq. (4)]{Adamo:2024etu}, except for the fact that the operator $(-1)^{L_{(0)} + \bar L_{(0)}}$ here is replaced by $(-1)^{L_{(0)} - \bar L_{(0)}}$ there. However, since for us $\bar L_{(0)}$ acts as an integer it follows that these two operators coincide in our case. 
\end{remark}
\begin{remark}
In the $\beta\gamma$ system case the identity is vacuous as both sides vanish identically.
\end{remark}
\begin{proof}[Proof of Proposition \ref{prop: inv bilinear form}]
The change of coordinate $u^{-1} \to u$ on $U = \Sigma \setminus \{ o, o' \}$ is given simply by $\varrho(t) = t^{-1}$. Letting $p \in U$ and denoting its coordinate by $w \coloneqq u(p) \in \CC^\times$ we find that $b_0 = - w^2$ and $b_1 = - w$ while $b_n = 0$ for all $n \geq 2$. In particular, using Theorem \ref{thm: change of coordinate} and definition \eqref{R eta xi def}, for any state $A \in \hVV^{\fl,\alpha}$ we have
\begin{equation} \label{change of coord z inverse}
A^u_p = \exp \Big( w\, L_{(1)}^{\, u^{\tmo}_p} + \bar w\, \bar L_{(1)}^{\, u^{\tmo}_p} \Big) \big( \! - w^2 \big)^{- L_{(0)}^{\, u^{\tmo}_p}} \big( \! - \bar w^2 \big)^{- \bar L_{(0)}^{\, u^{\tmo}_p}} A^{u^{\tmo}}_p = \widetilde{A}^{u^\tmo}_p,
\end{equation}
where in the last step we have used \eqref{L action A} and introduced the state
\begin{equation*}
\widetilde{A} \coloneqq \exp\big( w L_{(1)} + \bar w \bar L_{(1)} \big) (-1)^{L_{(0)} + \bar L_{(0)}} w^{- 2 L_{(0)}} \bar w^{- 2 \bar L_{(0)}} A \in \hVV^{\fl,\alpha}.
\end{equation*}

Consider now the linear map \eqref{n Fg to Ug xi i} for $n=3$ distinct points $o, o', p \in \Sigma$ with $U = \Sigma$. At the points $o$ and $o'$ we still use the local coordinates $u$ and $u^{-1}$, respectively, and at the point $p$ we will also use $u$. The linear map \eqref{n Fg to Ug xi i} for $U = \Sigma$ combined with the isomorphism from Lemma \ref{lem: iso correlation} then gives
\begin{equation*}
\hVV^{\fl,\alpha} \otimes \hVV^{\fl,\alpha} \otimes \hVV^{\fl,\alpha} \longrightarrow \CC, \qquad (B, A, C) \longmapsto \Big\langle m_{(o', p, o), \Sigma}\big( B^{u^\tmo}_{o'} \otimes A_p^u \otimes C^u_o \big) \Big\rangle.
\end{equation*}
Using the associativity \eqref{PFA commutativity} of the factorisation product, we can compute the factorisation product on the right hand side by first computing the factorisation product
\begin{equation*}
m_{(p, o), V}(A^u_p \otimes C^u_o) = \mathcal Y\big( A^u_p, \underline{u_o(p)} \big) C^u_o = \big( Y(A, \underline{w}) C \big)^u_V
\end{equation*}
for some open $V \subset \Sigma \setminus \{ o' \}$ containing $p$ and $o$, where we have used the fact that $u(o) = 0$ so that $u_o(p) = w$, followed by the factorisation product $m_{(o', V), \Sigma} \big( B^{u^\tmo}_{o'} \otimes m_{(p, o), V}(A^u_p \otimes C^u_o) \big)$. By definition \eqref{bilinear form Fg} of the bilinear form on $\hVV^{\fl,\alpha}$, the result of this computation can be written simply as $\big( B, Y(A, \underline{w}) C \big)$. On the other hand, we can compute the same factorisation product by first using \eqref{change of coord z inverse} to rewrite the state $A$ prepared at $p$ in the coordinate $u$ in terms of states prepared in the coordinate $u^{-1}$. This allows us to first compute the factorisation product
\begin{equation*}
m_{(o', p), V'}(B^{u^{-1}}_{o'} \otimes A^u_p) = \mathcal Y\big( \widetilde{A}^{u^\tmo}_p, \underline{u^{-1}_{o'}(p)} \big) B^{u^\tmo}_{o'} = \big( Y\big( \widetilde{A}, \underline{w}^{-1} \big) B \big)^{u^\tmo}_{V'}
\end{equation*}
for some choice of open subset $V' \subset \Sigma \setminus \{ o \}$ containing the points $p$ and $o'$, where we have used the fact that $u^{-1}(o') = 0$ and $u^{-1}(p) = w^{-1}$. By subsequently performing the remaining factorisation product with $C^u_o$, namely $m_{(V', o), \Sigma} \big( m_{(o', p), V'}(B^{u^\tmo}_{o'} \otimes A^u_p) \otimes C^u_o \big)$, and comparing the result with the other computation described above we deduce the claim.
\end{proof}
\end{proposition}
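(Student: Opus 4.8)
The plan is to reduce the invariance identity to the associativity of the factorisation product applied to three points on $S^2$, exactly as in the chiral case but now carefully tracking the non-chiral change-of-coordinate operator. First I would work on $U = \Sigma \setminus \{ o, o' \}$, where the transition function between the coordinate $u^{-1}$ near $o'$ and $u$ near $o$ is the inversion $\varrho(t) = t^{-1}$, and compute the coefficients $b_k, \bar b_k$ from \eqref{b coeff definition}: a direct calculation gives $b_0 = -w^2$, $b_1 = -w$ and $b_k = 0$ for $k \geq 2$, where $w = u(p)$. Feeding these into the definition \eqref{R eta xi def} of $R^{\eta \to \xi}_p$ and invoking Theorem \ref{thm: change of coordinate} together with \eqref{L action A} yields the key identity $A^u_p = \widetilde A^{u^\tmo}_p$, where $\widetilde A = \exp\big( w L_{(1)} + \bar w \bar L_{(1)} \big) (-1)^{L_{(0)} + \bar L_{(0)}} w^{-2 L_{(0)}} \bar w^{-2 \bar L_{(0)}} A$. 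This is the state-valued version of the classical Möbius covariance factor appearing in the definition of an invariant bilinear form.

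Next I would set up the linear map \eqref{n Fg to Ug xi i} for $n = 3$ with the three distinct points $o, o', p \in \Sigma$, using the coordinate $u$ at both $o$ and $p$ and $u^{-1}$ at $o'$, and postcompose with the isomorphism $\langle \cdot \rangle$ from Lemma \ref{lem: iso correlation} (valid in the Kac-Moody and Virasoro cases). The resulting trilinear form $\big\langle m_{(o', p, o), \Sigma}( B^{u^\tmo}_{o'} \otimes A^u_p \otimes C^u_o ) \big\rangle$ will be evaluated in two different ways using associativity \eqref{PFA commutativity}. Grouping $p$ with $o$ first, and using $u_o(p) = w$ since $u(o) = 0$, gives $m_{(p,o),V}(A^u_p \otimes C^u_o) = \mathcal Y(A^u_p, \underline w) C^u_o = (Y(A, \underline w) C)^u_V$ by Proposition \ref{prop: vertex operator general}, so that the whole expression equals $\big( B, Y(A, \underline w) C \big)$ by definition \eqref{bilinear form Fg}. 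Grouping $p$ with $o'$ first instead, after rewriting $A^u_p = \widetilde A^{u^\tmo}_p$ using the identity established above, gives $m_{(o',p),V'}(B^{u^\tmo}_{o'} \otimes \widetilde A^{u^\tmo}_p) = \mathcal Y(\widetilde A^{u^\tmo}_p, \underline{u^{-1}_{o'}(p)}) B^{u^\tmo}_{o'} = (Y(\widetilde A, \underline w^{-1}) B)^{u^\tmo}_{V'}$ since $u^{-1}(o') = 0$ and $u^{-1}(p) = w^{-1}$; performing the remaining product against $C^u_o$ yields $\big( Y(\widetilde A, \underline w^{-1}) B, C \big)$. Equating the two evaluations with $w$ renamed to $\zeta$ gives precisely the claimed identity.

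The main obstacle I anticipate is ensuring that the change-of-preparation step is applied in a way that is compatible with the factorisation product groupings — specifically, that replacing $A^u_p$ by $\widetilde A^{u^\tmo}_p$ inside the $n=3$ factorisation product is legitimate. This is not purely formal: it uses that $(\cdot)^u_p$ and $(\cdot)^{u^\tmo}_p$ have the same image $\hV^{\fl,\alpha}_p$ (Proposition \ref{prop: indep loc coord}) and that the factorisation products $m_{(\ldots), \Sigma}$ descend to the limits $\lim (\U\L^\Sigma_\alpha)_p$ regardless of which coordinate was used to prepare the local state. One must also be slightly careful that the open neighbourhoods $V, V'$ can be chosen disjoint from the third insertion point in each grouping so that associativity \eqref{PFA commutativity} genuinely applies; this is a harmless topological fact on $S^2$ given that $o, o', p$ are pairwise distinct. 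A secondary point worth stating explicitly, as in the remark, is that the operator $(-1)^{L_{(0)} + \bar L_{(0)}}$ agrees with $(-1)^{L_{(0)} - \bar L_{(0)}}$ here because $\bar L_{(0)}$ acts with integer eigenvalues on $\hVV^{\fl,\alpha}$, which reconciles the formula with the conventions of \cite{Moriwaki:2020dlj, Adamo:2024etu}. The $\beta\gamma$ case is excluded throughout because $\U\L^\Sigma_\alpha(\Sigma) \cong 0$ there, so both sides of the identity vanish trivially.
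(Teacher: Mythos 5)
Your proposal is correct and follows essentially the same route as the paper's own proof: computing $b_0=-w^2$, $b_1=-w$, $b_{\geq 2}=0$ for the inversion $\varrho(t)=t^{-1}$ to obtain $A^u_p=\widetilde A^{u^\tmo}_p$, then evaluating the three-point insertion $\big\langle m_{(o',p,o),\Sigma}(B^{u^\tmo}_{o'}\otimes A^u_p\otimes C^u_o)\big\rangle$ in two ways via associativity of the factorisation product. The additional care you flag about compatibility of the change of preparation with the factorisation product groupings is implicit in the paper's argument and does not change the proof.
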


\subsubsection{Full vertex algebra axioms} \label{sec: Full VOA axioms}

As recalled in \S\ref{sec: full VOA intro}, there are a number of mathematical formulations of the notion of full, or non-chiral, vertex operator algebras capable of describing both chiral and anti-chiral states in a full conformal field theory. Since the emphasis in each formulation is slightly different they also go by distinct names: \emph{OPE-algebras} in \cite{KapustinOrlov, Rosellen}, \emph{full field algebras} in \cite{Huang:2005gz, Kong:2006wa}, \emph{full vertex algebras} in \cite{Moriwaki:2020dlj, Moriwaki:2020cxf} and \emph{non-chiral vertex operator algebras} in \cite{Singh:2023mom}.
We now show that our geometric realisation \eqref{cal Y map defined} of the vector space $\hVV^{\fl, \alpha}$ using the prefactorisation algebra $\U\L^\Sigma_\alpha$ naturally endows it with the structure of a full vertex algebra in the sense of \cite{Moriwaki:2020dlj, Moriwaki:2020cxf}, justifying the use of the name `full vertex algebra' for the vector space $\hV^{\fl, \alpha}_p$.

Recall from \cite{Moriwaki:2020dlj, Moriwaki:2020cxf}, see also \cite{Moriwaki:2021epl}, that a full vertex algebra is an $\RR^2$-graded vector space $F = \bigoplus_{h, \bar h \in \RR^2} F_{h, \bar h}$ over $\CC$ equipped with a linear map
\begin{equation*}
Y(\--, \ul{z}) : F \longrightarrow \End(F)[[z^{\pm 1}, \bar z^{\pm 1}, |z|^\RR]], \qquad
A \longmapsto Y(A, \ul{z}) = \sum_{\substack{r, s \in \RR\\ r-s \in \ZZ}} A_{(r,s)} z^{-r-1} \bar z^{-s-1},
\end{equation*}
where the space $\End(F)[[z^{\pm 1}, \bar z^{\pm 1}, |z|^\RR]]$ consists of formal sums as above with $A_{(r,s)} \in \End(F)$,
and a non-zero element $\vac \in F_{0,0}$ satisfying the following axioms:
\begin{itemize}
    \item[FV1)] For any $A, B \in F$,
\begin{itemize}
    \item[(a)] There exists $N \in \RR$ such that $A_{(r,s)} B = 0$ for any $r \geq N$ or $s \geq N$, and
    \item[(b)] For any $H \in \RR$, the set $\big\{ (r,s) \in \RR^2 \,|\, A_{(r,s)} B \neq 0 \;\text{and}\; r+s \geq H \big\}$ is finite,
\end{itemize}
    \item[FV2)] For any $h, \bar h \in \RR$, $F_{h, \bar h} \neq 0$ implies $h - \bar h \in \ZZ$,
    \item[FV3)] For any $A \in F$, $Y(A,\ul{z}) \vac \in F[[z,\bar z]]$ and $\lim_{z \to 0} Y(A, \ul{z}) \vac = A_{(-1,-1)} \vac = A$,
    \item[FV4)] $Y(\vac, \ul{z}) = \id_F \in \End(F)$,
    \item[FV5)] For any $A, B, C \in F$ and $u \in F^\vee \coloneqq \bigoplus_{h, \bar h \in \RR} F_{h, \bar h}^\ast$ with $F_{h, \bar h}^\ast$ the dual of $F_{h, \bar h}$, the formal power series
\begin{equation*}
u\big( Y(A, \ul{z_1}) Y(B, \ul{z_2}) C \big), \qquad
u\big( Y(B, \ul{z_2}) Y(A, \ul{z_1}) C \big), \qquad
u\big( Y\big( Y(A, \ul{z_1 - z_2})B, \ul{z_2} \big) C \big)
\end{equation*}
are the expansions of the same single-valued real analytic function on
\begin{equation*}
Y_2 \coloneqq \{ (z_1, z_2) \in \CC^2 \,|\, z_1 \neq 0, z_2 \neq 0, z_1 \neq z_2 \}
\end{equation*}
in the respective regions $\{ (z_1, z_2) \in Y_2 \,|\, |z_1| > |z_2| \}$, $\{ (z_1, z_2) \in Y_2 \,|\, |z_2| > |z_1| \}$ and $\{ (z_1, z_2) \in Y_2 \,|\, |z_2| > |z_1 - z_2| \}$ of $Y_2$,
    \item[FV6)] For any $h,h',\bar h, \bar h', r, s \in \RR$, $(F_{h, \bar h})_{(r,s)} F_{h', \bar h'} \subset F_{h + h' - r - 1, \bar h + \bar h' - s - 1}$.
\end{itemize}

Recall the formal vertex operator $Y(\--, \ul{\zeta})$ defined in \eqref{vertex operator Y} using the geometric realisation \eqref{cal Y map defined} of the vector space $\hVV^{\fl, \alpha}$ in terms of the prefactorisation algebra $\U\L^\Sigma_\alpha$.

\begin{theorem} \label{thm: F is a full VOA}
In the Kac-Moody and Virasoro cases, the tuple $\big( \hVV^{\fl, \alpha}, Y(\--, \ul{\zeta}), \vac \big)$ is a full vertex algebra.
\begin{remark}
In the $\beta\gamma$ system case our proof of axiom (FV5) does not apply since we make explicit use of the isomorphism \eqref{iso for sphere} from Lemma \ref{lem: iso correlation}.
\end{remark}
\begin{proof}[Proof of Theorem \ref{thm: F is a full VOA}]
Recall from \S\ref{sec: hom vertex modes} that we consider there is a natural $\ZZ^2$-grading on the vector space $\hVV^{\fl, \alpha}$ such that $A \in (\hVV^{\fl, \alpha})_{h, \bar h}$ if $A \in \hVV^{\fl, \alpha}$ is a homogeneous state with chiral and anti-chiral conformal dimensions $\Delta_A = h$ and $\bar\Delta_A = \bar h$. Axiom (FV2) is then immediate.

Part (a) of axiom (FV1) follows from the `Moreover' part of Lemma \ref{lem: field vertex modes} and part (b) then also follows since in our case we have $A_{(n, \bar n)} B = 0$ unless $(n, \bar n) \in \ZZ^2$.

Axioms (FV3) and (FV4) follow from the second and first statements in Lemma \ref{lem: vacuum axiom}.

To see axiom (FV5), let $p, q \in \Sigma \setminus \{ o', o \}$ be distinct points. Let $u_1 \coloneqq u(q)$ and $u_2 \coloneqq u(p)$. Let $W, A, B, C \in \hVV^{\fl, \alpha}$ and consider the expression
\begin{equation*}
\mu(u_1, u_2) \coloneqq \Big\langle \Phi^{4, (u^{-1}, u, u, u)}_U\big( (o',q,p,o), (W, A, B, C) \big) \Big\rangle \in \CC
\end{equation*}
defined by using the map \eqref{moving points n coord} for $n=4$ and the tuple of coordinates $\bm \xi = (u^{-1}, u, u, u)$, and also the isomorphism \eqref{iso for sphere} from Lemma \ref{lem: iso correlation}.
By Proposition \ref{prop: vertex operator general} and the definition of the bilinear form in \eqref{bilinear form Fg}, this has the three expansions
\begin{equation*}
\big( W, Y( A, \ul{u_1}) Y( B, \ul{u_2}) C \big), \qquad
\big( W,Y( B, \ul{u_2}) Y( A, \ul{u_1}) C \big), \qquad
\big( W, Y \big( Y( A, \ul{u_1 - u_2}) B, \ul{u_2} \big) C \big)
\end{equation*}
in the respective regions of $\Conf_3(U)$ determined by $|u_1| > |u_2|$, $|u_2| > |u_1|$ and $|u_2| > |u_1 - u_2|$.

Finally, Lemma \ref{lem: chiral weights} implies that for any homogeneous states $A, B \in \hVV^{\fl, \alpha}$ and any $n, \bar n \in \ZZ$, the state $A_{(n,\bar n)} B \in \hVV^{\fl, \alpha}$ is homogeneous with chiral conformal dimension $\Delta_A + \Delta_B - n - 1$ and anti-chiral conformal dimension $\bar\Delta_A + \bar\Delta_B - \bar n - 1$, from which axiom (FV6) follows.
\end{proof}
\end{theorem}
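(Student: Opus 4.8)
The plan is to verify the six axioms (FV1)--(FV6) for the tuple $\big( \hVV^{\fl, \alpha}, Y(\--, \ul{\zeta}), \vac \big)$ by assembling results already established in the excerpt. The $\RR^2$-grading required in the definition is supplied by the $\ZZ^2$-grading on $\hVV^{\fl, \alpha}$ constructed in \S\ref{sec: hom vertex modes}, where a monomial state $A$ of the form \eqref{gen state Vkx} with definite chiral and anti-chiral conformal dimensions $\Delta_A, \bar\Delta_A$ is placed in bidegree $(\Delta_A, \bar\Delta_A)$; axiom (FV2) is then automatic since these weights lie in $\ZZ$ so that $h - \bar h \in \ZZ$ is trivially satisfied.

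Next I would dispatch the easy axioms. For (FV1)(a) I invoke the `Moreover' part of Lemma \ref{lem: field vertex modes}, which gives $A_{(n,\bar n)} B = 0$ once $n$ or $\bar n$ is large enough. For (FV1)(b) I note that in our setting all the modes $A_{(n, \bar n)}$ that can act nontrivially have $(n, \bar n) \in \ZZ^2$, so fixing $r + s \geq H$ together with the support bound from part (a) leaves only finitely many pairs. Axioms (FV3) and (FV4) are exactly the two statements of Lemma \ref{lem: vacuum axiom}: the creation property $Y(A, \ul{\zeta}) \vac = e^{\zeta D} e^{\bar\zeta \bar D} A$ (so the $\zeta \to 0$ limit is $A_{(-1,-1)} \vac = A$) and the vacuum normalisation $Y(\vac, \ul{\zeta}) = \id$. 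Axiom (FV6) follows from Lemma \ref{lem: chiral weights}: if $A, B$ are homogeneous then $A_{(n,\bar n)} B$ is homogeneous of chiral weight $\Delta_A + \Delta_B - n - 1$ and anti-chiral weight $\bar\Delta_A + \bar\Delta_B - \bar n - 1$, which is precisely the required containment.

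The substantive axiom is (FV5), the associativity/locality statement expressing the three iterated products as expansions of a single real analytic function on $Y_2$. Here I would use the prefactorisation structure directly: fix the antipodal points $o, o'$ and a state $W \in \hVV^{\fl, \alpha}$ prepared at $o'$, three further states $A, B, C$ prepared at distinct moving points $q, p, o \in \Sigma \setminus \{ o, o' \}$, and form the function $\mu(u_1, u_2) \coloneqq \big\langle \Phi^{4, (u^{-1}, u, u, u)}_\Sigma\big( (o', q, p, o), (W, A, B, C) \big) \big\rangle$ using the map \eqref{moving points n coord} with $n=4$ and the isomorphism $\langle \cdot \rangle : \U\L^\Sigma_\alpha(\Sigma) \SimTo \CC$ from Lemma \ref{lem: iso correlation}. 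Since this single element of $\U\L^\Sigma_\alpha(\Sigma)$ is independent of the relative sizes of $|u_1|, |u_2|, |u_1 - u_2|$ (the factorisation product does not see any ordering until one chooses nested neighbourhoods), applying associativity \eqref{PFA commutativity} in the three different orders --- first multiplying $A$ into $B$, first multiplying $B$ into $A$, first fusing $A$ and $B$ near each other --- yields, via Proposition \ref{prop: vertex operator general} and the definition \eqref{bilinear form Fg} of the bilinear form, exactly the three pairings $\big( W, Y(A, \ul{u_1}) Y(B, \ul{u_2}) C \big)$, $\big( W, Y(B, \ul{u_2}) Y(A, \ul{u_1}) C \big)$ and $\big( W, Y\big( Y(A, \ul{u_1 - u_2}) B, \ul{u_2} \big) C \big)$ in the respective domains $|u_1| > |u_2|$, $|u_2| > |u_1|$, $|u_2| > |u_1 - u_2|$.

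The main obstacle is the real-analyticity half of (FV5): asserting that $\mu(u_1, u_2)$ is not merely a consistent collection of formal expansions but genuinely a single-valued real analytic function on $Y_2$. This requires knowing that the factorisation products vary analytically in the insertion points $q, p$, i.e. that $\Phi^{2, \xi}$ and its multi-point generalisations are analytic as maps out of configuration space, together with the observation that the dependence on the anti-holomorphic coordinates enters only through $\bar u_1, \bar u_2$ with integer exponents so the result is polynomial in $|u_i|$-type factors rather than carrying branch cuts (which is why the $\RR$-power subtleties in the general definition of a full vertex algebra collapse in our case, as already flagged after Proposition \ref{prop: inv bilinear form}). Once analyticity of the prefactorisation products in the moving-point data is in hand --- which I would extract from the explicit Dolbeault-theoretic formulae of \S\ref{sec: coh of gD}, the boundedness estimate in the proof of Lemma \ref{lem: omega check germ}, and the fact that $\langle \cdot \rangle$ is a fixed linear isomorphism --- the remaining identifications of the three expansions are routine applications of Proposition \ref{prop: vertex operator general} together with the expansions $\xi_t(q)^m = (\xi_t(p) + \xi_p(q))^m$ and $\xi_p(q)^n = (\xi_t(q) - \xi_t(p))^n$ already used in the proof of Proposition \ref{prop: Borcherds}, so I would present (FV5) in that style and relegate the analyticity bookkeeping to a short lemma or to the proof of Lemma \ref{lem: iso correlation}'s neighbourhood.
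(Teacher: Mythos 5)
Your proposal is correct and follows essentially the same route as the paper: the grading from \S\ref{sec: hom vertex modes} for (FV2), Lemma \ref{lem: field vertex modes} for (FV1), Lemma \ref{lem: vacuum axiom} for (FV3)--(FV4), Lemma \ref{lem: chiral weights} for (FV6), and for (FV5) the same four-point correlation $\mu(u_1,u_2)$ built from \eqref{moving points n coord}, the isomorphism of Lemma \ref{lem: iso correlation}, associativity \eqref{PFA commutativity}, Proposition \ref{prop: vertex operator general} and the bilinear form \eqref{bilinear form Fg}. The analyticity bookkeeping you flag as the main obstacle is likewise left implicit in the paper's own proof, so your treatment matches it in substance.
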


\subsection{Reality conditions} \label{sec: unitarity}

In this subsection we construct an anti-linear involution on $\hVV^{\fl,\alpha}$ and describe its action on the vertex operator products. We then combine it with the invariant bilinear form \eqref{bilinear form Fg} to define a Hermitian sesquilinear form $\langle \cdot, \cdot \rangle$ on $\hVV^{\fl,\alpha}$.

We let $\tau : \Dz \to \Dz$ be an orientation reversing involution of $\Dz$ with respect to which the coordinate $u$ used at the end of \S\ref{sec: chiral states} has the property that
\begin{equation} \label{u reality condition}
\overline{u \circ \tau} = u^{-1}.
\end{equation}
A concrete example will be described at the start of \S\ref{sec: Fourier modes} below, where $\tau$ will correspond to the Lorentzian involution $\tau_L : \Dz \to \Dz$ from \S\ref{sec: intro}.

\subsubsection{Anti-linear involution on \texorpdfstring{$\hVV^{\fl,\alpha}$}{F}}

Recall from \S\ref{sec: g Sigma reality} that we have an anti-linear involution $\tau : \fl \SimTo \fl$. We extend it to an anti-linear involution of the untwisted Kac-Moody algebras $\hg$ and $\hbg$ defined in \S\ref{sec: Vec structure} by letting it act by complex conjugation on the coefficients of the Laurent polynomials in the second tensor factor, i.e. $x \, t^n \mapsto \bar x \, t^n$ and $x \, \bar t^n \mapsto \bar x \, \bar t^n$ for any $x \in \CC$ and $n \in \ZZ_{>0}$, where $\bar x$ denotes the complex conjugate of $x$, and fixing the central extensions ${\ms k} \mapsto {\ms k}$ and $\bar {\ms k} \mapsto \bar{\ms k}$. In turn, this induces an anti-linear involution $\hat \tau : \hVV^{\fl,\alpha} \SimTo \hVV^{\fl,\alpha}$ on the full affine vertex algebra defined by the Leibniz rule, namely it is given on a monomial state $A \in \hVV^{\fl,\alpha}$ as in \eqref{gen state Vkx} by
\begin{equation} \label{hat tau on Fg}
\hat \tau A \coloneqq (\tau \a^r)_{(-m_r)} \ldots (\tau \a^1)_{(-m_1)} \overline{(\tau \b^{\bar r})}_{(-n_{\bar r})} \ldots \overline{(\tau \b^1)}_{(-n_1)} \vac
\end{equation}
and extended by anti-linearity to all of $\hVV^{\fl,\alpha}$.

Recall also the anti-linear isomorphism of prefactorisation algebras $\hat\tau : \U\L^\Sigma_\alpha \SimTo \tau^\ast \U\L^\Sigma_\alpha$ from Proposition \ref{prop: equiv PFac}. It induces an anti-linear isomorphism $\hat\tau : (\U\L^\Sigma_\alpha)_p \SimTo (\tau^\ast \U\L^\Sigma_\alpha)_p$, i.e. a natural isomorphism of diagrams $\Top(\Sigma)_p \to \Vec_\CC$ of shape the category $\Top(\Sigma)_p$ defined in \S\ref{sec: Vec structure}, whose components are all anti-linear maps. By the universal property of limits we obtain a unique anti-linear isomorphism
\begin{equation} \label{hat tau limit at p}
\hat \tau : \lim (\U\L^\Sigma_\alpha)_p \overset{\cong}\longrightarrow \lim (\tau^\ast \U\L^\Sigma_\alpha)_p
\end{equation}
such that for every neighbourhood $U \in \Top(\Sigma)_p$ of $p \in \Sigma$ we have the commutative diagram
\begin{equation} \label{hat tau limit at p diag}
\begin{tikzcd}[column sep=18mm]
\lim (\U\L^\Sigma_\alpha)_p \arrow[r, "m_{p, U}"] \arrow[d, "\hat\tau"', dashed] & \U\L^\Sigma_\alpha(U) \arrow[d, "\hat\tau"] \\
\lim (\tau^\ast \U\L^\Sigma_\alpha)_p \arrow[r, "m_{\tau(p), \tau(U)}"'] & \U\L^\Sigma_\alpha \big(\tau(U) \big)
\end{tikzcd}
\end{equation}
We denote by $m_{\tau(p), \tau(U)}$ the canonical linear map $\lim (\tau^\ast \U\L^\Sigma_\alpha)_p \to \U\L^\Sigma_\alpha(\tau(U))$ in the bottom of this diagram since, for any inclusion of open subsets $U \subset V \subset \Sigma$, the factorisation product $\U\L^\Sigma_\alpha(\tau(U)) \to \U\L^\Sigma_\alpha(\tau(V))$ of the prefactorisation algebra $\tau^\ast \U\L^\Sigma_\alpha$ is given by $m_{\tau(U), \tau(V)}$ in terms of the factorisation products of the prefactorisation algebra $\U\L^\Sigma_\alpha$, in other words
\begin{equation*}
m_{U, V}^{\tau^\ast \U\L^\Sigma_\alpha} = m_{\tau(U), \tau(V)}^{\U\L^\Sigma_\alpha}.
\end{equation*}

Since $\tau : \Sigma_\pm \to \Sigma_\pm$ is an anti-holomorphic map, given any local holomorphic coordinate $\xi : U \to \CC$ on an open subset $U \subset \Sigma$, the coordinate $\hat\tau \xi = \overline{\xi \circ \tau} : \tau(U) \to \CC$ is also holomorphic.

\begin{lemma} \label{lem: tau Au}
For any $A \in \hVV^{\fl,\alpha}$ and a local coordinate $\xi$ in a neighbourood of $p\in \Sigma$ we have $\hat \tau \big( A^\xi_p \big) = (\hat \tau A)^{\hat\tau \xi}_{\tau(p)}$. In particular, \eqref{hat tau limit at p} induces an anti-linear isomorphism $\hat \tau : \hV^{\fl,\alpha}_p \SimTo \hV^{\fl,\alpha}_{\tau(p)}$.
\begin{proof}
By linearity we can assume that $A \in \hVV^{\fl,\alpha}$ is a monomial state and we will only consider the chiral case $A = \a^r_{(-m_r)} \ldots \a^1_{(-m_1)} \vac$ for some $\a^i \in \fl$ and $m_i \in \ZZ_{\geq 1}$ with $i \in \{ 1, \ldots, r \}$ and $r \in \ZZ_{\geq 0}$, since the treatment of the anti-chiral part is completely analogous.
For any neighbourhood $U$ of the point $p$ equipped with the local coordinate $\xi : U \to \CC$ we then have
\begin{align*}
&m_{\tau(p), \tau(U)} \big( \hat \tau \big( A^\xi_p \big) \big) = \hat \tau ( A^\xi_U )
= \Bigg[ \prod_{i=1}^r s\Big( \tau \big( \a^i_{\xi_p} \big) \otimes \hat\tau \Big( \lceil \xi_p^{-m_i} \rceil^{U_{i-1}}_{U_i} \Big) \Big) \Bigg]_{\tau(U)}\\
&\qquad\qquad = \Bigg[ \prod_{i=1}^r s\Big( (\tau \a)^i_{(\hat\tau \xi)_{\tau(p)}} \otimes \Big\lceil (\hat\tau \xi)_{\tau(p)}^{-m_i} \Big\rceil^{\tau(U_{i-1})}_{\tau(U_i)} \Big) \Bigg]_{\tau(U)}
= (\hat\tau A)^{\hat \tau \xi}_{\tau(U)} = m_{\tau(p), \tau(U)} \big( (\hat \tau A)^{\hat\tau \xi}_{\tau(p)} \big)
\end{align*}
where in the first step we have used the commutativity of the diagram \eqref{hat tau limit at p diag}. The second step is by definition of the anti-linear isomorphism $\hat \tau : \U\L^\Sigma_\alpha \SimTo \tau^\ast \U\L^\Sigma_\alpha$ in Proposition \ref{prop: equiv PFac}. In the third step we have used the fact that $\hat\tau : \Omega^{0, \chain}_c \SimTo \tau^\ast \Omega^{0, \chain}_c$ is an anti-linear isomorphism of cosheaves of commutative \dg{} algebras, which in particular commutes with $\bar\partial$, and
\begin{equation*}
\hat\tau(\xi_p^{-m_i}) = \big( \hat\tau \xi - (\hat\tau \xi)(\tau(p)) \big)^{-m_i} = \big( (\hat\tau \xi)_{\tau(p)} \big)^{-m_i}.
\end{equation*}
We have also used the fact that $\hat\tau \big( \rho^{U_{i-1}}_{U_i} \big) \in \Omega^{0,0}_c(\tau(U_i))^1_{\tau(U_{i-1})}$. In the second last step we have used the definition of $\hat\tau : \hVV^{\fl,\alpha} \SimTo \hVV^{\fl,\alpha}$ in \eqref{hat tau on Fg}.
\end{proof}
\end{lemma}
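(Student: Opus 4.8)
\textbf{Proof proposal for Lemma \ref{lem: tau Au}.}

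The plan is to reduce the claimed identity $\hat\tau(A^\xi_p) = (\hat\tau A)^{\hat\tau\xi}_{\tau(p)}$ in $\lim(\tau^\ast\U\L^\Sigma_\alpha)_p \cong \hV^{\fl,\alpha}_{\tau(p)}$ to an equality at the level of observables over a fixed neighbourhood $U$ of $p$, exploiting the universal property of the limit. Concretely, by the commutativity of the diagram \eqref{hat tau limit at p diag} and the fact that the maps $m_{\tau(p),\tau(U)}$ are jointly monic over all neighbourhoods $U\in\Top(\Sigma)_p$, it suffices to prove $m_{\tau(p),\tau(U)}(\hat\tau(A^\xi_p)) = m_{\tau(p),\tau(U)}((\hat\tau A)^{\hat\tau\xi}_{\tau(p)})$ for every such $U$ equipped with a local holomorphic coordinate $\xi : U \to \CC$. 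The left-hand side equals $\hat\tau(A^{\xi}_U)$ by \eqref{hat tau limit at p diag}, and the right-hand side equals $(\hat\tau A)^{\hat\tau\xi}_{\tau(U)}$ by definition of the maps $m_{\tau(p),\tau(U)}$; so the whole lemma comes down to the single identity $\hat\tau(A^\xi_U) = (\hat\tau A)^{\hat\tau\xi}_{\tau(U)}$ in $\U\L^\Sigma_\alpha(\tau(U))$.

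Next I would unwind both sides using the explicit cocycle representatives. By linearity it is enough to treat a monomial state, and by the symmetry of the argument one may restrict to a chiral monomial $A = \a^r_{(-m_r)}\ldots\a^1_{(-m_1)}\vac$, the anti-chiral factors being handled identically. On the left, $A^\xi_U$ is given by the formula \eqref{cal Y map defined b}, a $0^{\rm th}$-cohomology class of a $\Sym$-product of terms $s(\a^i_{\xi_p}\otimes\lceil\xi_p^{-m_i}\rceil^{U_{i-1}}_{U_i})$; applying $\hat\tau$ and using that $\hat\tau$ acts on $\L^\Dz_\alpha$ fibrewise by $\sigma\otimes\omega\mapsto\widetilde\tau\sigma\otimes\overline{\tau^\ast\omega}$ (Assumption in \S\ref{sec: equivariance UL}) together with the chain-level formula for $\hat\tau$ on $\bCE_\chain$ from Proposition \ref{prop: equiv PFac}, I obtain a $\Sym$-product of terms $s(\tau(\a^i_{\xi_p})\otimes\hat\tau(\lceil\xi_p^{-m_i}\rceil^{U_{i-1}}_{U_i}))$ over $\tau(U)$. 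The key computational facts to then invoke are: (i) under the local trivialisations, $\tau(\a^i_{\xi_p}) = (\tau\a^i)_{(\hat\tau\xi)_{\tau(p)}}$, which is exactly the compatibility of the fibrewise map with the coordinate-dependent identifications $(\cdot)_\xi$ set up in \S\ref{sec: general setup} and \S\ref{sec: equivariance UL}; (ii) $\hat\tau(\xi_p^{-m_i}) = ((\hat\tau\xi)_{\tau(p)})^{-m_i}$, since $\hat\tau\xi - (\hat\tau\xi)(\tau(p)) = \overline{\xi\circ\tau} - \overline{\xi(p)} = \overline{(\xi_p)\circ\tau}$ and complex conjugation sends powers to powers; (iii) $\hat\tau(\rho^{U_{i-1}}_{U_i}) \in \Omega^{0,0}_c(\tau(U_i))^1_{\tau(U_{i-1})}$, so the bump-function data transports correctly and the nested sequence $p_+\in U_0\Subset\cdots\Subset U_r$ maps to a nested sequence $\tau(p)\in\tau(U_0)\Subset\cdots\Subset\tau(U_r)$; and (iv) $\hat\tau$ commutes with $\bar\partial$ (this is \eqref{tau bardel commute}), so that $\hat\tau(\lceil\xi_p^{-m_i}\rceil^{U_{i-1}}_{U_i}) = \lceil((\hat\tau\xi)_{\tau(p)})^{-m_i}\rceil^{\tau(U_{i-1})}_{\tau(U_i)}$ up to $\bar\partial$-exact terms, which drop out upon taking the cohomology class. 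Combining these gives precisely the defining expression \eqref{cal Y map defined b} for $(\hat\tau A)^{\hat\tau\xi}_{\tau(U)}$, where $\hat\tau A$ is read off from \eqref{hat tau on Fg}. The final sentence, that \eqref{hat tau limit at p} restricts to an anti-linear isomorphism $\hV^{\fl,\alpha}_p\SimTo\hV^{\fl,\alpha}_{\tau(p)}$, is then immediate: the identity just proved shows $\hat\tau$ maps the image of $(\cdot)^\xi_p$ onto the image of $(\cdot)^{\hat\tau\xi}_{\tau(p)}$, and the latter image is $\hV^{\fl,\alpha}_{\tau(p)}$ by the coordinate-independence established in Proposition \ref{prop: indep loc coord}.

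I expect the main obstacle to be bookkeeping rather than anything conceptual: one must be careful that the fibrewise anti-linear map $\widetilde\tau$, the coordinate identifications $(\cdot)_\xi : \fl\SimTo\fl_\xi$, and the conjugation $\hat\tau\xi = \overline{\xi\circ\tau}$ interact exactly so that $\tau(\a_{\xi_p}) = (\tau\a)_{(\hat\tau\xi)_{\tau(p)}}$ — this is where the induced anti-linear involution $\tau:\fl\to\fl$ of \S\ref{sec: equivariance UL} is being used, and it should be checked against each of the three examples if one wants to be fully explicit, though the general argument via the commuting square \eqref{anti-linear auto L} and the local trivialisation suffices. A secondary point of care is that the $\bar\partial$-exact ambiguity in $\lceil\cdot\rceil^{U_{i-1}}_{U_i}$ must be shown to be preserved by $\hat\tau$ so that passing to cohomology is legitimate; this follows from $\hat\tau\circ\bar\partial = \bar\partial\circ\hat\tau$, i.e. from \eqref{tau bardel commute}.
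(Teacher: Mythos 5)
Your proposal is correct and follows essentially the same route as the paper: reduce via the commuting square \eqref{hat tau limit at p diag} to the identity $\hat\tau(A^\xi_U) = (\hat\tau A)^{\hat\tau\xi}_{\tau(U)}$ in $\U\L^\Sigma_\alpha(\tau(U))$, then verify it on chiral monomial representatives using the fibrewise action of $\widetilde\tau$, the identity $\hat\tau(\xi_p^{-m_i}) = ((\hat\tau\xi)_{\tau(p)})^{-m_i}$, the transport of bump functions, and compatibility with $\bar\partial$. Your explicit appeal to the joint monicity of the limit cone maps $m_{\tau(p),\tau(U)}$ and to Proposition \ref{prop: indep loc coord} for the final isomorphism statement only makes explicit steps the paper leaves implicit.
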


\begin{proposition}
For any $A, B \in \hVV^{\fl,\alpha}$ and $n, \bar n \in \ZZ$, we have
\begin{equation*}
\hat\tau \big( A_{(n, \bar n)} B \big) = (\hat\tau A)_{(n, \bar n)} \hat\tau B.
\end{equation*}
In other words, $\hat\tau : \hVV^{\fl,\alpha} \SimTo \hVV^{\fl,\alpha}$ is an anti-linear \emph{automorphism} of the full vertex algebra $\hVV^{\fl,\alpha}$.
\begin{proof}
Let $p, q \in U$ be two points in an open subset $U \subset \Sigma$ equipped with a local coordinate $\xi : U \to \CC$. Using Proposition \ref{prop: vertex operator general} and the definition \eqref{vertex algebra modes} of the vertex modes, we have
\begin{align*}
\hat\tau \big( m_{(q, p), U} (A^\xi_q \otimes B^\xi_p) \big)
&= \hat\tau \Big( \mathcal Y\big( A^\xi_q, \underline{\xi_p(q)} \big) B^\xi_p \Big)
= \hat\tau \bigg(\sum_{n, \bar n \in \ZZ} A^{\, \xi_p}_{(n, \bar n)} B^\xi_p \, \xi_p(q)^{-n-1} \bar \xi_p(q)^{-\bar n-1} \bigg)\\
&= \sum_{n, \bar n \in \ZZ} \hat\tau \big( A^{\, \xi_p}_{(n, \bar n)} B^\xi_p \big) \, (\hat\tau \xi)_{\tau(p)}(\tau(q))^{-n-1} (\overline{\hat\tau \xi})_{\tau(p)}(\tau(q))^{-\bar n-1}\\
&= \sum_{n, \bar n \in \ZZ} \hat\tau \Big( \big( A_{(n, \bar n)} B \big)^\xi_p \Big) \, (\hat\tau \xi)_{\tau(p)}(\tau(q))^{-n-1} (\overline{\hat\tau \xi})_{\tau(p)}(\tau(q))^{-\bar n-1}\\
&= \sum_{n, \bar n \in \ZZ} \Big( \hat\tau \big( A_{(n, \bar n)} B \big) \Big)^{\hat\tau \xi}_{\tau(p)} \, (\hat\tau \xi)_{\tau(p)}(\tau(q))^{-n-1} (\overline{\hat\tau \xi})_{\tau(p)}(\tau(q))^{-\bar n-1}
\end{align*}
where in the third step we used $\overline{\xi_p(q)} = \overline{\xi(q)} - \overline{\xi(p)} = (\hat\tau \xi)(\tau(q)) - (\hat\tau \xi)(\tau(p)) = (\hat\tau \xi)_{\tau(p)}(\tau(q))$, in the fourth step we used Lemma \ref{lem: field vertex modes} and in the last step we used Lemma \ref{lem: tau Au}.
On the other hand, we can also compute this as follows
\begin{align*}
\hat\tau \big( m_{(q, p), U} (A^\xi_q \otimes B^\xi_p) \big)
&= m_{(\tau(q), \tau(p)), \tau(U)} \Big((\hat\tau A)^{\hat\tau \xi}_{\tau(q)} \otimes (\hat\tau B)^{\hat\tau \xi}_{\tau(p)}\Big)\\
&= \mathcal Y \Big( (\hat\tau A)^{\hat\tau \xi}_{\tau(q)}, \underline{(\hat\tau \xi)_{\tau(p)} (\tau(q))} \Big) (\hat\tau B)^{\hat\tau \xi}_{\tau(p)}\\
&= \sum_{n, \bar n \in \ZZ} (\hat\tau A)^{\, (\hat\tau \xi)_{\tau(p)}}_{(n, \bar n)} (\hat\tau B)^{\hat\tau \xi}_{\tau(p)} \, (\hat\tau \xi)_{\tau(p)}(\tau(q))^{-n-1} (\overline{\hat\tau \xi})_{\tau(p)}(\tau(q))^{-\bar n-1}\\
&= \sum_{n, \bar n \in \ZZ} \Big( (\hat\tau A)_{(n, \bar n)} \hat\tau B \Big)^{\hat\tau \xi}_{\tau(p)} \, (\hat\tau \xi)_{\tau(p)}(\tau(q))^{-n-1} (\overline{\hat\tau \xi})_{\tau(p)}(\tau(q))^{-\bar n-1}
\end{align*}
where in the first step we used Proposition \ref{prop: equiv PFac} and Lemma \ref{lem: tau Au}, in the second step Proposition \ref{prop: vertex operator general}, in the third step the definition \eqref{vertex algebra modes} and in the final step we used Lemma \ref{lem: field vertex modes}.
\end{proof}
\end{proposition}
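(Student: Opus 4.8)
The statement to prove is that $\hat\tau\big( A_{(n,\bar n)} B \big) = (\hat\tau A)_{(n,\bar n)} \hat\tau B$ for all $A, B \in \hVV^{\fl,\alpha}$ and $n, \bar n \in \ZZ$, equivalently that $\hat\tau$ is an anti-linear automorphism of the full vertex algebra. My plan is to reduce everything to the geometric realisation and the $\ZZ_2$-equivariance of the prefactorisation algebra established in Proposition~\ref{prop: equiv PFac}, together with the compatibility of $\hat\tau$ with state preparation proved in Lemma~\ref{lem: tau Au}. The key bookkeeping identity is $\hat\tau(\xi_p(q)) = \overline{\xi(q)} - \overline{\xi(p)} = (\hat\tau\xi)_{\tau(p)}(\tau(q))$, which says that applying $\tau$ to the two insertion points and passing to the conjugate coordinate is exactly undone by complex conjugation of the coordinate difference entering the mode expansion \eqref{vertex algebra modes}.

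The proof will compute $\hat\tau\big( m_{(q,p),U}(A^\xi_q \otimes B^\xi_p) \big)$ in two ways for an open $U \subset \Sigma$ with local coordinate $\xi$ and two distinct points $p, q \in U$. On one side, I would start from $m_{(q,p),U}(A^\xi_q \otimes B^\xi_p) = \mathcal Y\big(A^\xi_q, \underline{\xi_p(q)}\big) B^\xi_p$ by Proposition~\ref{prop: vertex operator general}, expand it in vertex modes via \eqref{vertex algebra modes}, apply the anti-linear map $\hat\tau$ term by term (so all scalar powers of $\xi_p(q), \bar\xi_p(q)$ get conjugated, becoming powers of $(\hat\tau\xi)_{\tau(p)}(\tau(q))$ and its conjugate), use Lemma~\ref{lem: field vertex modes} to write $A^{\xi_p}_{(n,\bar n)} B^\xi_p = (A_{(n,\bar n)} B)^\xi_p$, and then Lemma~\ref{lem: tau Au} to move $\hat\tau$ past the preparation map, yielding $\sum_{n,\bar n} \big(\hat\tau(A_{(n,\bar n)} B)\big)^{\hat\tau\xi}_{\tau(p)} \cdot (\hat\tau\xi)_{\tau(p)}(\tau(q))^{-n-1}(\overline{\hat\tau\xi})_{\tau(p)}(\tau(q))^{-\bar n-1}$. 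On the other side, I would use Proposition~\ref{prop: equiv PFac} to identify $\hat\tau\big(m_{(q,p),U}(A^\xi_q \otimes B^\xi_p)\big)$ with $m_{(\tau(q),\tau(p)),\tau(U)}\big((\hat\tau A)^{\hat\tau\xi}_{\tau(q)} \otimes (\hat\tau B)^{\hat\tau\xi}_{\tau(p)}\big)$ — again invoking Lemma~\ref{lem: tau Au} on each factor — then apply Proposition~\ref{prop: vertex operator general} and \eqref{vertex algebra modes} and Lemma~\ref{lem: field vertex modes} once more to get $\sum_{n,\bar n} \big((\hat\tau A)_{(n,\bar n)} \hat\tau B\big)^{\hat\tau\xi}_{\tau(p)} \cdot (\hat\tau\xi)_{\tau(p)}(\tau(q))^{-n-1}(\overline{\hat\tau\xi})_{\tau(p)}(\tau(q))^{-\bar n-1}$. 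Comparing the two expansions coefficient by coefficient in the independent monomials $(\hat\tau\xi)_{\tau(p)}(\tau(q))^{-n-1}(\overline{\hat\tau\xi})_{\tau(p)}(\tau(q))^{-\bar n-1}$, and using the injectivity of the preparation map $(\cdot)^{\hat\tau\xi}_{\tau(p)}$ from Proposition~\ref{prop: Y map injective}, gives the claimed identity for each $(n,\bar n)$. By linearity over $\hVV^{\fl,\alpha}$ (with $\hat\tau$ anti-linear but the coefficient comparison insensitive to that) it suffices to treat monomial states, and as in Lemma~\ref{lem: tau Au} one may reduce to a chiral monomial since the anti-chiral case is analogous.

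The one point that needs a little care — and I expect it to be the only genuine obstacle, though a mild one — is justifying the termwise application of the anti-linear $\hat\tau$ to the (doubly infinite, as a formal sum) expansion \eqref{vertex algebra modes}: one must check that for fixed $A, B$ only finitely many $A^{\xi_p}_{(n,\bar n)} B^\xi_p$ are nonzero in each fixed total-weight range, so that the equality of formal expansions genuinely reduces to equality of coefficients. This is exactly what Lemma~\ref{lem: field vertex modes} supplies (the ``Moreover'' part: $A_{(n,\bar n)} B = 0$ for $n$ or $\bar n$ sufficiently large), so after invoking it the manipulation is legitimate, and there is no convergence issue since after preparing at $\tau(p)$ we are comparing coefficients in $\U\L^\Sigma_\alpha(U)$, i.e. purely algebraically. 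Everything else is a routine unwinding of the definitions in \S\ref{sec: g Sigma reality} and \S\ref{sec: VA products}, and no new estimate or construction is required.
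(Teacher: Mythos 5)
Your proposal is correct and follows essentially the same route as the paper's own proof: computing $\hat\tau\big(m_{(q,p),U}(A^\xi_q\otimes B^\xi_p)\big)$ in two ways via Proposition \ref{prop: equiv PFac}, Lemma \ref{lem: tau Au}, Proposition \ref{prop: vertex operator general}, the mode expansion \eqref{vertex algebra modes} and Lemma \ref{lem: field vertex modes}, then comparing coefficients. Your extra remarks on justifying the termwise application of $\hat\tau$ and invoking the injectivity of the preparation map are sound refinements of points the paper leaves implicit.
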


\subsubsection{Hermitian sesquilinear form on \texorpdfstring{$\hVV^{\fl,\alpha}$}{F}}

Using the bilinear map \eqref{bilinear form Fg} and the anti-linear involution $\hat\tau : \hVV^{\fl,\alpha} \SimTo \hVV^{\fl,\alpha}$ we introduce the sesquilinear form
\begin{equation} \label{Hermitian form}
\langle \cdot, \cdot \rangle \coloneqq \big( \hat\tau(\cdot), \cdot \big) : \hVV^{\fl,\alpha} \otimes \hVV^{\fl,\alpha} \longrightarrow \CC, \qquad
\langle B, C \rangle \coloneqq \Big\langle m_{(o', o), \Sigma}\big( ( \hat\tau B)^{u^\tmo}_{o'} \otimes C^u_o \big) \Big\rangle
\end{equation}
which is anti-linear in the first argument and linear in the second. The vertex algebra $\hVV^{\fl,\alpha}$ equipped with the anti-linear involution $\hat\tau : \hVV^{\fl,\alpha} \SimTo \hVV^{\fl,\alpha}$ is said to be \emph{unitary} if the sesquilinear form \eqref{Hermitian form} is positive definite \cite{Moriwaki:2020dlj, Adamo:2024etu}.

\begin{lemma} \label{lem: iso correlation reality}
The isomorphism $\langle \cdot \rangle : \U\L^\Sigma_\alpha(\Sigma) \SimTo \CC$ from Lemma \ref{lem: iso correlation}, in the Kac-Moody and Virasoro cases, is such that for any $\mathcal A \in \U\L^\Sigma_\alpha(\Sigma)$ we have $\overline{\langle \mathcal A \rangle} = \langle \hat\tau \mathcal A \rangle$.
\begin{proof}
Recall that the isomorphism from Lemma \ref{lem: iso correlation} is the inverse of $\CC \SimTo \U\L^\Sigma_\alpha(\Sigma)$, $a \mapsto [a]_\Sigma$ which is clearly $\ZZ_2$-equivariant, with the action of $\t \in \ZZ_2$ given by complex conjugation on $\CC$ and by the anti-linear isomorphism $\hat\tau : \U\L^\Sigma_\alpha(\Sigma) \SimTo \U\L^\Sigma_\alpha(\Sigma)$ from Proposition \ref{prop: equiv PFac}.
\end{proof}
\end{lemma}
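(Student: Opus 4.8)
The plan is to unwind the construction of the isomorphism $\langle\cdot\rangle$ in Lemma~\ref{lem: iso correlation} and observe that it is assembled from data which is manifestly compatible with the $\ZZ_2$-action supplied by Proposition~\ref{prop: equiv PFac}. Recall from the proof of Lemma~\ref{lem: iso correlation} that, in both the Kac-Moody and Virasoro cases, $\langle\cdot\rangle$ is defined as the inverse of the linear isomorphism
\begin{equation*}
\CC \overset{\cong}\longrightarrow \U\L^\Sigma_\alpha(\Sigma), \qquad a \longmapsto [a]_\Sigma,
\end{equation*}
where $[a]_\Sigma$ is the $0^{\rm th}$ cohomology class of the constant element $a \in \CC \subset \bCE_0\big(\L^\Sigma_\alpha(\Sigma)\big)$, so that $[a]_\Sigma = a\,[1]_\Sigma$ with $[1]_\Sigma$ the distinguished pointing of $\U\L^\Sigma_\alpha(\Sigma)$ coming from the unique $0$-operation $\bm\emptyset \to \Sigma$.

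First I would show that the map $a \mapsto [a]_\Sigma$ intertwines complex conjugation on $\CC$ with the anti-linear isomorphism $\hat\tau : \U\L^\Sigma_\alpha(\Sigma) \SimTo \U\L^\Sigma_\alpha(\tau(\Sigma)) = \U\L^\Sigma_\alpha(\Sigma)$ of Proposition~\ref{prop: equiv PFac}. Since $\hat\tau$ is a morphism of (pre)factorisation algebras it is compatible with the pointings, hence $\hat\tau([1]_\Sigma) = [1]_\Sigma$; combined with the anti-linearity of $\hat\tau$ this yields $\hat\tau([a]_\Sigma) = \bar a\,[1]_\Sigma = [\bar a]_\Sigma$ for every $a \in \CC$. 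Alternatively one can trace through the chain of quasi-isomorphisms in the proof of Lemma~\ref{lem: iso correlation}: the class $[a]_\Sigma$ is the image of $a \in \CC \hookrightarrow \Sym\big((\g\oplus\g)[1]\big)$ (respectively $\Sym(\CC^6[1])$) in $\Sym$-degree $0$, and $\hat\tau$ acts on $\bCE_\chain\big(\L^\Sigma_\alpha(\Sigma)\big)$ compatibly with this quasi-isomorphism since it fixes $\kent$ and acts on the lowest symmetric power by complex conjugation of the scalar.

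The remaining step is purely formal: given $\mathcal A \in \U\L^\Sigma_\alpha(\Sigma)$, write $\langle \mathcal A\rangle = a$, so $\mathcal A = [a]_\Sigma$ by definition of $\langle\cdot\rangle$; then $\hat\tau\mathcal A = [\bar a]_\Sigma$ by the previous paragraph, whence $\langle\hat\tau\mathcal A\rangle = \bar a = \overline{\langle\mathcal A\rangle}$. The only point requiring genuine care — and hence the (minor) obstacle — is verifying that $\hat\tau$ really does fix the pointing $[1]_\Sigma$; this follows because the unit object of $\Vec_\CC^\otimes$ is $\CC$, the structure morphism $\CC \to \U\L^\Sigma_\alpha(\Sigma)$ is part of the prefactorisation algebra data, and an isomorphism of prefactorisation algebras carries this morphism to itself, the anti-linearity of $\hat\tau$ being invisible on the one-dimensional source.
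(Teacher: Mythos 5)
Your proposal is correct and follows essentially the same route as the paper: the paper's proof simply notes that $a \mapsto [a]_\Sigma$ is $\ZZ_2$-equivariant for complex conjugation on $\CC$ and $\hat\tau$ on $\U\L^\Sigma_\alpha(\Sigma)$, which is exactly the equivariance you verify (your check that $\hat\tau([1]_\Sigma)=[1]_\Sigma$, hence $\hat\tau([a]_\Sigma)=[\bar a]_\Sigma$, just spells out what the paper calls "clear"). No gaps.
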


\begin{proposition}
The sesquilinear form \eqref{Hermitian form} is Hermitian, in the Kac-Moody and Virasoro cases.
\begin{proof}
Let $B, C \in \hVV^{\fl,\alpha}$. Using Lemma \ref{lem: tau Au} with $p = o$ and $\xi = u$, recalling that $\tau(o) = o'$ and $\hat\tau u = u^{-1}$ by \eqref{u reality condition}, we can rewrite \eqref{Hermitian form} as
$\langle B, C \rangle = \big\langle m_{(o', o), \Sigma}\big( \hat\tau( B^u_o ) \otimes C^u_o \big) \big\rangle$.
Then
\begin{align*}
\overline{\langle B, C \rangle} = \Big\langle \hat\tau \Big( m_{(o', o), \Sigma}\big( \hat\tau (B^u_o) \otimes C^u_o \big) \Big) \Big\rangle = \Big\langle m_{(o, o'), \Sigma}\big( B^u_o \otimes \hat\tau(C^u_o) \big) \Big\rangle = \langle C, B \rangle
\end{align*}
where the first step is by definition \eqref{Hermitian form} and Lemma \ref{lem: iso correlation reality}, the second step is by Proposition \ref{prop: equiv PFac} and the last step is again by definition \eqref{Hermitian form}.
\end{proof}
\end{proposition}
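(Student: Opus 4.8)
The plan is to show that the sesquilinear form \eqref{Hermitian form} satisfies $\overline{\langle B, C \rangle} = \langle C, B \rangle$ for all $B, C \in \hVV^{\fl,\alpha}$, which is exactly the statement that it is Hermitian. The strategy is to trace the definition through the two key auxiliary results already established: Lemma \ref{lem: tau Au}, which tells us how the anti-linear involution $\hat\tau$ interacts with the preparation maps $(\cdot)^\xi_p$, and Lemma \ref{lem: iso correlation reality}, which says $\overline{\langle \mathcal A \rangle} = \langle \hat\tau \mathcal A \rangle$ for the correlation isomorphism on $S^2$.

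The first step is to rewrite \eqref{Hermitian form} in a more symmetric form. By definition $\langle B, C \rangle = \big\langle m_{(o', o), \Sigma}\big( (\hat\tau B)^{u^\tmo}_{o'} \otimes C^u_o \big) \big\rangle$. Now I apply Lemma \ref{lem: tau Au} with $p = o$ and local coordinate $\xi = u$: using that $\tau(o) = o'$ and that $\hat\tau u = \overline{u \circ \tau} = u^{-1}$ by \eqref{u reality condition}, we get $\hat\tau(B^u_o) = (\hat\tau B)^{u^\tmo}_{o'}$. Hence the form can be written as $\langle B, C \rangle = \big\langle m_{(o', o), \Sigma}\big( \hat\tau(B^u_o) \otimes C^u_o \big) \big\rangle$, which now involves the state $B^u_o$ prepared in a single coordinate.

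The second step is the computation of the complex conjugate. Applying Lemma \ref{lem: iso correlation reality} to the element $\mathcal A = m_{(o', o), \Sigma}\big( \hat\tau(B^u_o) \otimes C^u_o \big)$, we obtain $\overline{\langle B, C \rangle} = \big\langle \hat\tau\big( m_{(o', o), \Sigma}\big( \hat\tau(B^u_o) \otimes C^u_o \big) \big) \big\rangle$. Since $\hat\tau : \U\L^\Sigma_\alpha \SimTo \tau^\ast \U\L^\Sigma_\alpha$ is an anti-linear isomorphism of prefactorisation algebras by Proposition \ref{prop: equiv PFac}, it intertwines the factorisation products, sending $m_{(o', o), \Sigma}$ to $m_{(\tau(o'), \tau(o)), \Sigma} = m_{(o, o'), \Sigma}$; applying it to the two tensor factors and using $\hat\tau \circ \hat\tau = \id$ on the first (so $\hat\tau(\hat\tau(B^u_o)) = B^u_o$) and Lemma \ref{lem: tau Au} again on the second (so $\hat\tau(C^u_o) = (\hat\tau C)^{u^\tmo}_{o'}$), we arrive at $\overline{\langle B, C \rangle} = \big\langle m_{(o, o'), \Sigma}\big( B^u_o \otimes (\hat\tau C)^{u^\tmo}_{o'} \big) \big\rangle$. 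Comparing with the original definition \eqref{Hermitian form} applied to the pair $(C, B)$ — noting that the factorisation product is symmetric under reordering its inputs, so the order of the two tensor factors $B^u_o$ and $(\hat\tau C)^{u^\tmo}_{o'}$ is immaterial — this last expression is precisely $\langle C, B \rangle$, completing the proof.

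I do not expect any serious obstacle here: the proof is a short bookkeeping argument that chains together Lemma \ref{lem: tau Au}, Lemma \ref{lem: iso correlation reality}, and the $\ZZ_2$-equivariance of the factorisation products from Proposition \ref{prop: equiv PFac}, with the reality condition \eqref{u reality condition} on the coordinate $u$ playing the role of ensuring that $\hat\tau$ swaps the two insertion points $o$ and $o'$ correctly. The only point requiring a modicum of care is making sure that the coordinate $u^{-1}$ used at the point $o'$ in the definition \eqref{Hermitian form} really does match up under $\hat\tau$ with the coordinate $u$ used at $o$, which is exactly what \eqref{u reality condition} guarantees; and that the symmetry of the factorisation product $m_{(o,o'),\Sigma}$ under exchanging its two arguments is legitimate, which follows from the $S_n$-equivariance built into the definition of a prefactorisation algebra in \S\ref{sec: twisted PFE}.
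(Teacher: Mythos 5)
Your proposal is correct and follows essentially the same route as the paper's own proof: rewrite the form via Lemma \ref{lem: tau Au} so that $B$ appears as $\hat\tau(B^u_o)$, apply Lemma \ref{lem: iso correlation reality} to the conjugate, and push $\hat\tau$ through the factorisation product using Proposition \ref{prop: equiv PFac} before comparing with the definition applied to $(C,B)$. The extra details you spell out (involutivity of $\hat\tau$, symmetry of the factorisation product in its inputs) are implicit in the paper's argument and entirely consistent with it.
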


In the remainder of this section we focus only on the Kac-Moody case since the statement of Proposition \eqref{prop: pairing Kac property} relates to a similar formula, in the chiral setting, for a Hermitian form on a Kac-Moody algebra module, see for instance \cite[\S9.4]{KacBook} or \cite[(11.5.1)]{KacBook}.

Recall from \S\ref{sec: Vec structure} that any state $A \in \hVV^{\fl,\alpha}$ can be written as $A = g \vac$ for a unique element $g \in U(\hg_- \oplus \hbg_-)$. The negative of the anti-linear involution $\tau : \g \SimTo \g$ defines an anti-linear \emph{anti}-involution $-\tau : \g \SimTo \g$. We extend this to an anti-linear anti-involution $(\cdot)^\dag : \hg \oplus \hbg \SimTo \hg \oplus \hbg$ of $\hg \oplus \hbg$ defined by (cf. \cite[\S\S2.7, 7.6]{KacBook} and \cite[\S4.2]{DongLin} in the chiral case, where $\omega_0$ denotes the compact anti-linear involution)
\begin{equation} \label{check tau def}
\big( \X_{(-m)} + \bar{\Y}_{(-n)} + a \, {\ms k} + b \, \bar{\ms k} \big)^\dag \coloneqq (- \tau \X)_{(m)} + (\overline{- \tau \Y})_{(n)} + \bar a \, {\ms k} + \bar b \, \bar{\ms k}
\end{equation}
for any $\X, \Y \in \g$, $m, n \in \ZZ$ and $a, b \in \CC$.
We then extend this to an anti-linear anti-involution of  $U(\hg_- \oplus \hbg_-)$ by letting $(gg')^\dag = g'^\dag g^\dag$ for any $g, g' \in U(\hg_- \oplus \hbg_-)$, cf. \cite[\S11.5]{KacBook}.

\begin{proposition} \label{prop: pairing Kac property}
For any $g, g' \in U(\hg \oplus \hbg)$ we have $\big\langle g \vac, g' \vac \big\rangle = \big\langle m_{o, \Sigma}\big( g^\dag g' \vac^u_o \big) \big\rangle$.
\begin{proof}
By anti-linearity in $g \in U(\hg \oplus \hbg)$, it is enough to consider monomial $g$.
Let $A = g \vac \in \hVV^{\fl,\alpha}$ be the monomial state in \eqref{gen state Vkx} so that $g = \X^r_{(-m_r)} \ldots \X^1_{(-m_1)} \bar \Y^r_{(-n_{\bar r})} \ldots \bar \Y^1_{(-n_1)} \in U(\hg \oplus \hbg)$, and let $C = g' \vac \in \hVV^{\fl,\alpha}$ be aribtrary.

Now by definition of \eqref{Hermitian form} we have $\big\langle g \vac, g' \vac \big\rangle = ( \hat\tau A, C )$. To compute the right hand side, note that for any open subset $o' \in V' \subset \Sigma$ we have
\begin{equation} \label{hat A computing}
(\hat\tau A)^{u^\tmo}_{V'} = \Bigg[ \prod_{i=1}^r s\Big( \tau \X^i \otimes \lceil u^{m_i} \rceil^{U_{i-1}}_{U_i} \Big) \prod_{j=1}^{\bar r} s\Big( \tau \Y^j \otimes \lceil \bar u^{n_j} \rceil^{V_{j-1}}_{V_j} \Big) \Bigg]_{V'},
\end{equation}
where recall that $o'_+ \in U_0 \Subset \ldots \Subset U_r \subset V'_+ \subset \Sigma_+$ and $o'_- \in V_0 \Subset \ldots \Subset V_{\bar r} \subset V'_- \subset \Sigma_-$ is a choice of nested sequence of open subsets. In simplifying the above expression we have used the fact that $u^{-1}(o') = 0$ so that the shifted coordinate $u^{-1}_{o'}$ is the same as the unshifted one, namely $u^{-1}_{o'} = u^{-1}$. Note that each factor in the above cohomology now has a positive power of $u$ since $(u^{-1})^{-m_i} = u^{m_i}$ and similarly for the anti-chiral part.

For $i \in \{ 0, \ldots, r \}$ we define the open subset $U'_i \coloneqq (\overline{U_i})^c$ as the complement in $\Sigma$ of the closure of $U_i$ and similarly for each $j \in \{ 0, \ldots, \bar r \}$ we define $V'_j \coloneqq (\overline{V_j})^c$. These are open subsets of some open neighbourhoods $W_\pm \subset \Sigma_\pm \setminus \{ o'_\pm \}$ of $o_\pm \in \Sigma_\pm$. In fact, they form nested sequences of open subsets $o_+ \in U'_r \Subset \ldots \Subset U'_0 \subset W_+ \subset \Sigma_+$ and $o_- \in V'_{\bar r} \Subset \ldots \Subset V'_0 \subset W_- \subset \Sigma_-$. We can then introduce smooth bump functions
\begin{equation*}
\rho_{U'_{i-1}}^{U'_i} \coloneqq 1 - \rho^{U_{i-1}}_{U_i} \in \Omega^{0,0}_c(U'_{i-1})^1_{U'_i}, \qquad
\rho_{V'_{j-1}}^{V'_j} \coloneqq 1 - \rho^{V_{j-1}}_{V_j} \in \Omega^{0,0}_c(V'_{j-1})^1_{V'_j}
\end{equation*}
for each $i \in \{ 1, \ldots, r \}$ and $j \in \{ 1, \ldots, \bar r \}$.
We may now rewrite \eqref{hat A computing} as
\begin{equation} \label{hat A computing 2}
(\hat\tau A)^{u^\tmo}_{V'} = \Bigg[ \prod_{i=1}^r s\Big( \! - \tau \X^i \otimes \lceil u^{m_i} \rceil_{U'_{i-1}}^{U'_i} \Big) \prod_{j=1}^{\bar r} s\Big( \! - \tau \Y^j \otimes \lceil \bar u^{n_j} \rceil_{V'_{j-1}}^{V'_j} \Big) \Bigg]_{V'}
\end{equation}
where we note, crucially, that an extra minus sign appeared in each term through the replacement of smooth bump functions in \eqref{up arrow def U V} since
\begin{equation*}
\bar\partial \rho^{U_{i-1}}_{U_i} = - \bar\partial \rho_{U'_{i-1}}^{U'_i}, \qquad
\bar\partial \rho^{V_{j-1}}_{V_j} = - \bar\partial \rho_{V'_{j-1}}^{V'_j}.
\end{equation*}
Recalling the definition of the vertex modes in \eqref{X bar X vertex modes}, the cohomology class \eqref{hat A computing 2} represents the product of vertex modes $(- \tau \X^i)^u_{(m_i)}$ and $(\overline{- \tau \Y^j})^u_{(n_j)}$ but in the reverse order compared to \eqref{hat A computing} due to the reversal of the nested sequences of open subsets $U'_i$ and $V'_j$ compared to $U_i$ and $V_j$. By associativity \eqref{PFA commutativity} of the factorisation product and definition \eqref{check tau def} we now have
\begin{align*}
m_{(o', o), \Sigma} \big( (\hat\tau A)^{u^\tmo}_{o'} \otimes C^u_o \big) &= m_{(V', V), \Sigma} \big( (\hat\tau A)^{u^\tmo}_{V'} \otimes C^u_V \big)\\
&= m_{(Y, V), \Sigma} \big( (g^\dag)^u_Y \otimes C^u_V \big) = \big( g^\dag C \big)^u_{\Sigma}
\end{align*}
where in the second step $Y \subset W \setminus \overline{V}$ is an annulus shaped open subset which encircles the open subset $V \ni o$. The result follows by applying the isomorphism $\langle \cdot \rangle$ from Lemma \ref{lem: iso correlation}.
\end{proof}
\end{proposition}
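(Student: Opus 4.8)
The plan is to reduce to monomial $g$ by anti-linearity, express both sides geometrically in terms of the prefactorisation algebra $\U\L^\Sigma_\alpha$, and identify them using associativity of the factorisation product together with a careful bump-function bookkeeping. So I would take $g = \X^r_{(-m_r)} \cdots \X^1_{(-m_1)} \bar\Y^{\bar r}_{(-n_{\bar r})} \cdots \bar\Y^1_{(-n_1)} \in U(\hg \oplus \hbg)$, set $A \coloneqq g \vac$ and $C \coloneqq g' \vac$, and unwind the definition \eqref{Hermitian form} so that the left-hand side becomes $\big\langle m_{(o', o), \Sigma}\big( (\hat\tau A)^{u^\tmo}_{o'} \otimes C^u_o \big) \big\rangle$.

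Next I would compute $(\hat\tau A)^{u^\tmo}_{V'}$ for a neighbourhood $V' \ni o'$. Using the definition \eqref{hat tau on Fg} of $\hat\tau$ on states and the explicit realisation \eqref{cal Y map defined b} — and the fact that $u^\tmo(o') = 0$, so the shifted coordinate $u^\tmo_{o'}$ coincides with $u^\tmo$ and $(u^\tmo)^{-m_i} = u^{m_i}$ — this is a cohomology class of a $\Sym$-product of terms $s\big( (\tau\a^i)_{u^\tmo} \otimes \lceil u^{m_i} \rceil^{U_{i-1}}_{U_i} \big)$ and $s\big( (\tau\b^j)_{\bar u^\tmo} \otimes \lceil \bar u^{n_j} \rceil^{V_{j-1}}_{V_j} \big)$ over a nested sequence of opens around $o'_\pm$.

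The key step is to convert these representatives, supported away from $o'$, into representatives supported near the antipodal point $o$. Replacing each $U_i$ by the complement of $\overline{U_i}$ in $\Sigma$ produces a disc-shaped open around $o_+$ and reverses the nesting, so that $o_+ \in U'_r \Subset \cdots \Subset U'_0$; replacing each bump function $\rho^{U_{i-1}}_{U_i}$ by $1 - \rho^{U_{i-1}}_{U_i}$ yields an extra sign since $\bar\partial \rho^{U'_i}_{U'_{i-1}} = - \bar\partial \rho^{U_{i-1}}_{U_i}$, and likewise on the anti-holomorphic side. After this replacement, matching \eqref{up arrow def U V} and the definition \eqref{X bar X vertex modes} of vertex modes, $(\hat\tau A)^{u^\tmo}_{V'}$ becomes exactly the product, in the reverse order, of the vertex modes $(-\tau\X^i)^u_{(m_i)}$ and $(\overline{-\tau\Y^j})^u_{(n_j)}$ on annuli around $o$. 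Comparing with the definition \eqref{check tau def} of $(\cdot)^\dag$ on the generators of $\hg \oplus \hbg$, and with the anti-involution rule $(g g')^\dag = g'^\dag g^\dag$, this is precisely $(g^\dag)^u_Y$ for an annulus-shaped $Y$ around $o$.

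Finally I would invoke associativity \eqref{PFA commutativity} of the factorisation product to collapse $m_{(o', o), \Sigma}\big( (\hat\tau A)^{u^\tmo}_{o'} \otimes C^u_o \big) = m_{(Y, V), \Sigma}\big( (g^\dag)^u_Y \otimes C^u_V \big) = (g^\dag C)^u_\Sigma = (g^\dag g' \vac)^u_\Sigma$, and then apply the isomorphism $\langle \cdot \rangle$ from Lemma \ref{lem: iso correlation}. I expect the main obstacle to be the third paragraph: getting the sign right and checking that the complementary bump functions on the reversed nested annuli around $o$ reproduce the vertex modes of $-\tau\X^i$ and $\overline{-\tau\Y^j}$ in exactly the reversed order dictated by $(\cdot)^\dag$ — this is where the content of the statement lies, and it requires matching the conventions in \eqref{up arrow def U V}, \eqref{X bar X vertex modes} and \eqref{check tau def} precisely.
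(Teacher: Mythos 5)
Your proposal is correct and follows essentially the same route as the paper's proof: reduction to monomial $g$, the explicit representative for $(\hat\tau A)^{u^\tmo}_{V'}$ using $u^\tmo_{o'} = u^\tmo$, the replacement of bump functions by $1-\rho$ on complementary nested opens around $o$ (producing the sign and the order reversal matching $(\cdot)^\dag$), and the final associativity plus Lemma \ref{lem: iso correlation}. No gaps to report.
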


\section{Operator formalism for \texorpdfstring{$\hVV^{\fl,\alpha}$}{Vg}} \label{sec: Fourier modes}

In this section we use the prefactorisation algebra $\U\L^\Sigma_\alpha\in \PFac(\Sigma, \Vec_\CC^{\otimes})$ from \S\ref{sec: twisted PFE} to describe the operator formalism \cite[\S6]{CFTbook} for $\hVV^{\fl,\alpha}$. That is, we describe quantum operators associated with states in $\hVV^{\fl,\alpha}$ on the cylinder $\Sigma^\prime \coloneqq \RR \times S^1 \cong \CC / 2 \pi \ZZ$, where $\RR$ corresponds to the time direction and $S^1 = \RR / 2 \pi \ZZ$ to the compactified space direction.
It will be convenient to extend $\Sigma^\prime$ to a $2$-sphere $\Sigma = S^2$ by adding the points at `infinity' on both ends of the cylinder.
Recall that in \eqref{Sigma 2-sphere} we had specialised to the real $2$-dimensional manifold $\Sigma = S^2$ and we will thus keep doing so throughout this section.
We make the identification $\Sigma^\prime = \Sigma \setminus \{ o, o' \}$, where $o$ and $o'$ are the points added to the `bottom' and `top' of the infinite cylinder, respectively.

\subsection{Fourier modes and quantum operators} \label{sec: Fourier modes def}

\subsubsection{From the cylinder to the plane}

On a simply connected open subset $U \subset \Sigma^\prime$ we define a local holomorphic coordinate $\vartheta : U \to \CC$ by choosing a representative of the coset $\CC/ 2 \pi \ZZ$ continuously over $U$. For later convenience, we choose the orientation on $\Sigma^\prime_+$ such that $\vartheta \to \ii \infty$ as we approach the bottom of the cylinder, i.e. the point $o \in \Sigma$.
This coordinate satisfies
\begin{equation} \label{vartheta property}
\bar\vartheta = \vartheta \circ \tau,
\end{equation}
where $\tau : \Sigma^\prime \to \Sigma^\prime$ denotes the orientation reversing involution given by complex conjugation on $\Sigma^\prime = \CC / 2 \pi \ZZ$, extended to an orientation reversing involution $\tau : \Sigma \to \Sigma$ as $\tau(o) = o'$. We also define $\tau : \Dz \to \Dz$ by letting it act as $\tau : \Sigma \to \Sigma$ on each copy $\Sigma_\pm$ of $\Sigma$. This orientation reversing involution corresponds to the Lorentzian involution $\tau_L : \Dz \to \Dz$ from \S\ref{sec: intro}.

It is crucial to note that while the coordinate $\vartheta : U \to \CC$ on any simply connected open subset $U \subset \Sigma$ is defined only up to an additive multiple of $2 \pi$, the shifted local coordinate $\vartheta_p \coloneqq \vartheta - \vartheta(p) : U \to \CC$, introduced in \eqref{shifted coords def}, is unambiguously defined. This will mean that we can use the geometric realisation linear maps \eqref{cal Y map defined} or \eqref{Y at p} to prepare states in $\hVV^{\fl,\alpha}$ at any $p \in \Sigma^\prime$ using the local coordinate $\vartheta$.

However, on any open subset $W \subset \Sigma^\prime$ wrapping around the cylinder, a continuous representative $\vartheta : W \to \CC$ of the coset $\CC/2 \pi \ZZ$ over $W$ is necessarily multi-valued, and hence does not provide a well defined coordinate on $W$. Moreover, $\vartheta$ is not defined at $o, o' \in \Sigma$. We thus introduce two other holomorphic coordinates
\begin{equation} \label{u uInv charts}
u : \Sigma \setminus \{ o' \} \longrightarrow \CC, \qquad
u^{-1} : \Sigma \setminus \{ o \} \longrightarrow \CC
\end{equation}
related to the above local coordinate $\vartheta : U \to \CC$ on a simply connected open subset $U \subset \Sigma^\prime$ by the holomorphic changes of coordinate $u = e^{\ii \vartheta}$ and $u^{-1} = e^{- \ii \vartheta}$. Note, in particular, that the multi-valuedness of the local coordinate $\vartheta$ drops out from the local coordinates \eqref{u uInv charts}. These coordinates satisfy all the properties of the charts with the same name from \S\ref{sec: chiral states} and \S\ref{sec: unitarity}. For instance, $u(o) = 0$ and $u^{-1}(o') = 0$, and the property \eqref{u reality condition} is a consequence of \eqref{vartheta property}.

We summarise the above setup in the following picture
\begin{equation*}
\raisebox{-17mm}
{\begin{tikzpicture}
\def\R{1}
\def\x{5.5}
\def\y{-1.5}
  \shade[ball color = lightgray, opacity = 0.5] (0,0,0) circle (1);
  \tdplotsetrotatedcoords{0}{0}{120};
 
  \draw[thick, dashed, tdplot_rotated_coords] (1,0,0) arc (0:180:1);
  \draw[thick, tdplot_rotated_coords] (1,0,0) arc (0:-180:1) node[right=1mm]{\tiny $S^1$} node[black, below left=6mm and -3mm]{\tiny $\Sigma$};

  \draw (0,0) circle (1);
  \draw[tdplot_rotated_coords, fill] (0,0,1) node[above=.7mm]{\tiny $o'$} circle (0.5pt);
  \draw[tdplot_rotated_coords, fill, black!90] (0,0,-1) node[black, below=.7mm]{\tiny $o$} circle (0.5pt);

  \draw[blue, thick, -stealth, yshift=2mm] (1.5,0) -- node[blue, above=-.5mm]{\scriptsize $\ii \vartheta = \log u$} (4,0);
  \draw[blue, thick, stealth-, yshift=-2mm] (1.5,0) -- node[blue, below=-.5mm]{\scriptsize $u = e^{\ii \vartheta}$} (4,0);

  \fill[left color   = gray!10!black,
        right color  = gray!10!black,
        middle color = gray!10,
        shading      = axis,
        opacity      = 0.15]
    (\x+\R,\y+.5*\R) -- (\x+\R,\y+2.5*\R)  arc (360:180:\R cm and 0.3cm)
          -- (\x-\R,\y+.5*\R) arc (180:360:\R cm and 0.3cm);
  \fill[top color    = gray!90!,
        bottom color = gray!2,
        middle color = gray!30,
        shading      = axis,
        opacity      = 0.03]
    (\x,\y+2.5*\R) circle (\R cm and 0.3cm);
  \draw (\x-\R,\y+2.5*\R) -- (\x-\R,\y+.5*\R) arc (180:360:\R cm and 0.3cm)
               -- (\x+\R,\y+2.5*\R) ++ (-\R,0) circle (\R cm and 0.3cm);
  \draw[dashed] (\x-\R,\y+.5*\R) arc (180:0:\R cm and 0.3cm);

  \draw[thick, dashed] (\x-\R,\y+1.5*\R) arc (180:0:\R cm and 0.3cm);
  \draw[thick] (\x-\R,\y+1.5*\R) arc (180:360:\R cm and 0.3cm) node[below left=1.5mm and .5mm]{\tiny $S^1$};

  \draw (\x,\y+.2*\R) node[below=1mm]{\tiny $\Sigma^\prime$};
  
  \draw[thick, -stealth] (\x+\R+.2,-.1*\R) -- (\x+\R+.2, -.9*\R) node[right]{\tiny $\vartheta \to \ii \infty$};
\end{tikzpicture}
}
\end{equation*}

\begin{proposition} \label{prop: Fourier series}
Let $A \in \hVV^{\fl,\alpha}$, $q \in \Sigma^\prime$ and let $W \ni q$ be an annulus shaped open subset around $o \in \Sigma$ in the coordinate $u$. The expansion of $m_{q, W}(A^{\ii \vartheta}_q)$ in the coordinate $u(q)$, namely
\begin{center}
\raisebox{12mm}{
\begin{minipage}[b]{0.65\linewidth}
\begin{align*}
A\big( \underline{\vartheta(q)} \big) \coloneqq \mathcal Y\big( A^{\ii \vartheta}_q, \underline{u(q)} \big) = \sum_{n, \bar n \in \ZZ} A_{n, \bar n} \, e^{-\ii n \vartheta(q)} e^{\ii \bar n \bar \vartheta(q)}, \qquad\qquad
\end{align*}
\end{minipage}
}
\begin{minipage}[b]{0.25\linewidth}
\begin{tikzpicture}
\def\R{1}
  \fill[left color   = gray!10!black,
        right color  = gray!10!black,
        middle color = gray!10,
        shading      = axis,
        opacity      = 0.15]
    (\R,.5*\R) -- (\R,2.5*\R)  arc (360:180:\R cm and 0.3cm)
          -- (-\R,.5*\R) arc (180:360:\R cm and 0.3cm);
  \fill[top color    = gray!90!,
        bottom color = gray!2,
        middle color = gray!30,
        shading      = axis,
        opacity      = 0.03]
    (0,2.5*\R) circle (\R cm and 0.3cm);
  \draw (-\R,2.5*\R) -- (-\R,.5*\R) arc (180:360:\R cm and 0.3cm)
               -- (\R,2.5*\R) ++ (-\R,0) circle (\R cm and 0.3cm);
  \draw[dashed] (-\R,.5*\R) arc (180:0:\R cm and 0.3cm);
  \fill[blue!90,
        opacity      = 0.15]
    (\R,1.08*\R) -- (\R,1.92*\R)  arc (0:180:\R cm and 0.3cm)
          -- (-\R,1.08*\R) arc (180:0:\R cm and 0.3cm);
  \fill[blue!90,
        opacity      = 0.3]
    (\R,1.08*\R) -- (\R,1.92*\R)  arc (360:180:\R cm and 0.3cm)
          -- (-\R,1.08*\R) arc (180:360:\R cm and 0.3cm);
  \draw[thick, dashed, blue] (-\R,1.08*\R) arc (180:0:\R cm and 0.3cm) node[above right, blue]{\tiny $W$};
  \draw[thick, blue] (-\R,1.08*\R) arc (180:360:\R cm and 0.3cm);
  \draw[thick, dashed, blue] (-\R,1.92*\R) arc (180:0:\R cm and 0.3cm);
  \draw[thick, blue] (-\R,1.92*\R) arc (180:360:\R cm and 0.3cm);

  \filldraw[thick] (-.5*\R,1.24*\R) node[below=-.7mm]{\tiny $q$}
circle (0.02*\R);
  \draw (0, .2*\R) node[below=1mm]{\tiny $\Sigma^\prime$};
\end{tikzpicture}
\end{minipage}
\end{center}
is a homogeneous vertex operator at $o \in \Sigma$ in the coordinate $u$. That is, its Fourier coefficients $A_{n, \bar n}$ are the homogeneous vertex modes $\wt A^{\; u}_{[n, \bar n]}$ of some state $\wt A \in \hVV^{\fl,\alpha}$ at $o \in \Sigma$. We will refer to this as the \emph{quantum operator} on $\Sigma^\prime$ associated with the state $A \in \hVV^{\fl,\alpha}$.
\begin{proof}
Let $A \in \hVV^{\fl, \alpha}$ be any state of definite (anti-)chiral conformal dimensions $\Delta_A$ and $\bar\Delta_A$. Using Theorem \ref{thm: change of coordinate} to change from the local coordinate $u$ to $\ii \vartheta = \log u$ we have
\begin{equation} \label{A theta to u}
A^{\ii \vartheta}_q = \R^{u \to \ii \vartheta}_q A^u_q = \big( R^{u \to \ii \vartheta} A \big)^u_q = u(p)^{\Delta_A} \bar u(p)^{\bar\Delta_A} \Bigg( \exp \bigg( \!\! - \sum_{k \geq 1} \big( b_k L_{(k)} + \bar b_k \bar L_{(k)} \big) \bigg) A \Bigg)^u_q
\end{equation}
where in the last step we used the explicit expression for the operator $R^{u \to \ii \vartheta}$ from \eqref{R eta xi def} and the facts that $b_0 = u(p)^{-1}$, $L_{(0)} A = \Delta_A A$ and $\bar L_{(0)} A = \bar \Delta_A A$.

Now notice that $L_{(k)}$ and $\bar L_{(k)}$ lower the (anti-)chiral conformal dimension of a state by $k$ since $[L_{(0)}, L_{(k)}] = - k L_{(k)}$ by \eqref{Witt algebra} and similarly for $\bar L_{(k)}$. Also observe that $b_k$ is proportional to $u(p)^{-k}$ for all $k \geq 1$ since
\begin{align*}
u(p)^{-1} \exp \bigg( \sum_{k \geq 1} b_k u_p^{k+1} \partial_{u_p} \bigg) u_p &= \ii \vartheta_p = \log\big( u_p + u(p) \big) - \log\big( u(p) \big)\\
&= u(p)^{-1} \bigg( u_p + \sum_{k \geq 1} \frac{(-1)^k u_p^{k+1}}{(k+1) \, u(p)^k} \bigg),
\end{align*}
where the first equality is by definition of the $b_k$ for $k \geq 1$ in \eqref{b coeff def} and using the fact that $b_0 = u(p)^{-1}$.
We then immediately deduce from \eqref{A theta to u} that we can write $A^{\ii \vartheta}_q$ as a finite sum
\begin{equation*}
A^{\ii \vartheta}_q = \sum_i u(p)^{\Delta_{B_i}} \bar u(p)^{\bar \Delta_{B_i}} (B_i)^u_q
\end{equation*}
for some states $B_i \in \hVV^{\fl, \alpha}$ of definite (anti-)chiral conformal dimensions $\Delta_{B_i}$ and $\bar\Delta_{B_i}$. Then
\begin{equation*}
\mathcal Y\big( A^{\ii \vartheta}_q, \underline{u(q)} \big) = \sum_i u(p)^{\Delta_{B_i}} \bar u(p)^{\bar \Delta_{B_i}} \mathcal Y\big( (B_i)^u_q, \underline{u(q)} \big) = \mathcal X\big( \wt A^u_q, \underline{u(q)} \big)
\end{equation*}
with $\wt A \coloneqq \sum_i B_i$, where in the last equality we used the definition of the homogeneous vertex operator in \eqref{homogeneous vertex op}. The result now follows.
\end{proof}
\end{proposition}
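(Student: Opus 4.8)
The plan is to start from the change of coordinate formula of Theorem \ref{thm: change of coordinate} applied to the two local coordinates $u$ and $\ii\vartheta = \log u$ near the point $q$, and show that the resulting state $A^{\ii\vartheta}_q$, while not homogeneous in the coordinate $u$, is nevertheless a \emph{finite} linear combination of states that are homogeneous in $u$, with coefficients that are monomials in $u(q)$ and $\bar u(q)$. Once this is established, expanding the vertex operator $\mathcal Y(A^{\ii\vartheta}_q, \underline{u(q)})$ and using the explicit form \eqref{homogeneous vertex op} of the homogeneous vertex operator gives exactly the claimed Fourier series.

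First I would reduce to the case where $A \in \hVV^{\fl,\alpha}$ has definite chiral and anti-chiral conformal dimensions $\Delta_A$ and $\bar\Delta_A$, since $\hVV^{\fl,\alpha}$ is $\ZZ^2$-graded (see \S\ref{sec: hom vertex modes}) and everything is linear. Next I would compute the change of coordinate $\varrho = \varrho^{u \to \ii\vartheta}$ which is simply $\varrho(t) = \log t$ near $q$, so that the shifted version satisfies $\ii\vartheta_q = \log(u_q + u(q)) - \log u(q)$; expanding, the coefficients $b_k$ appearing in \eqref{b coeff def} are all proportional to $u(q)^{-k}$ (and $b_0 = u(q)^{-1}$), and similarly for the anti-chiral $\bar b_k \propto \bar u(q)^{-k}$. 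Plugging into the definition \eqref{R eta xi def} of $R^{u \to \ii\vartheta}$ and using $L_{(0)}A = \Delta_A A$, $\bar L_{(0)} A = \bar\Delta_A A$, this yields
\begin{equation*}
A^{\ii\vartheta}_q = u(q)^{\Delta_A} \bar u(q)^{\bar\Delta_A} \Bigg( \exp\bigg( \!\! - \sum_{k \geq 1} \big( b_k L_{(k)} + \bar b_k \bar L_{(k)} \big) \bigg) A \Bigg)^u_q.
\end{equation*}
The key observation is then that $L_{(k)}$ (resp. $\bar L_{(k)}$) lowers the chiral (resp. anti-chiral) conformal dimension by $k$, by the Witt algebra relation $[L_{(0)}, L_{(k)}] = -k L_{(k)}$ from \eqref{Witt algebra}, and that each $L_{(k)}$ for $k \geq 1$ acts locally nilpotently by \eqref{locally nilpotent}; combined with the proportionality $b_k \propto u(q)^{-k}$, the exponential truncates to a finite sum, and each surviving term produces a state of strictly lower conformal dimension accompanied by a matching power of $u(q)$ (and $\bar u(q)$) that restores homogeneity of the accompanying coefficient. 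Thus $A^{\ii\vartheta}_q = \sum_i u(q)^{\Delta_{B_i}} \bar u(q)^{\bar\Delta_{B_i}} (B_i)^u_q$ for finitely many homogeneous states $B_i$.

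Finally I would apply $\mathcal Y(\cdot, \underline{u(q)})$ to both sides, use linearity, and recognize $\sum_i u(q)^{\Delta_{B_i}} \bar u(q)^{\bar\Delta_{B_i}} \mathcal Y((B_i)^u_q, \underline{u(q)})$ as the homogeneous vertex operator $\mathcal X(\wt A^u_q, \underline{u(q)})$ of $\wt A \coloneqq \sum_i B_i$ at $o$ in the coordinate $u$, by the very definition \eqref{homogeneous vertex op}; its Fourier coefficients in $e^{-\ii n\vartheta(q)} e^{\ii\bar n\bar\vartheta(q)} = u(q)^{-n} \bar u(q)^{-\bar n}$ are the homogeneous vertex modes $\wt A^{\; u}_{[n,\bar n]}$. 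The main technical point to be careful about is the truncation argument: one must verify precisely that the combination of ``$L_{(k)}$ lowers dimension by $k$'' with ``$b_k$ carries a factor $u(q)^{-k}$'' makes each monomial in the expanded exponential carry exactly the power of $u(q)$ needed to land on a homogeneous term, and that only finitely many such monomials are nonzero; this is where local nilpotency and the bounded-below grading are both essential.
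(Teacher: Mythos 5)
Your proposal is correct and follows essentially the same route as the paper's proof: change coordinates via Theorem \ref{thm: change of coordinate}, observe $b_k \propto u(q)^{-k}$ and that $L_{(k)}$, $\bar L_{(k)}$ lower conformal dimension by $k$, truncate the exponential to a finite sum of homogeneous states with matching monomial coefficients, and identify the result with the homogeneous vertex operator \eqref{homogeneous vertex op}. Your explicit appeal to local nilpotency \eqref{locally nilpotent} for the truncation, and your consistent use of $q$ where the paper's proof writes $p$, are only minor refinements of the same argument.
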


\subsubsection{(Anti-)chiral states}

Recall from \S\ref{sec: anti-chiral states} that the chiral vertex operators \eqref{X as X+ X-} and its anti-chiral analogue for any $\a \in \fl$ correspond to the states $\a_{(-1)}\vac^\xi_q$ and $\bar \a_{(-1)} \vac^\xi_q$, respectively, see \eqref{X as vertex operator}.
Similarly, we introduce the quantum operators associated to any $\a \in \fl$ as
%= \sum_{n, \bar n \in \ZZ} A_{n, \bar n} \, e^{-\ii n \vartheta(q)} e^{\ii \bar n \bar \vartheta(q)}
\begin{align*}
\a\big( \vartheta(q) \big) &= \mathcal Y\big( \a_{(-1)} \vac^{\ii \vartheta}_q, u(q) \big) = \sum_{n \in \ZZ} \a_n e^{- \ii n \vartheta(q)},\\
\bar\a \big( \bar\vartheta(q) \big) &= \mathcal Y\big( \bar\a_{(-1)} \vac^{\ii \vartheta}_q, \bar u(q) \big) = \sum_{n \in \ZZ} \bar\a_n e^{\ii n \bar\vartheta(q)}.
\end{align*}
In other words, comparing with the general mode expansion from \eqref{prop: Fourier series}, the Fourier modes $\a_n$ and $\bar \a_n$ are defined as, cf. \eqref{modes lvl 1 state},
\begin{equation*}
\a_n \coloneqq \big( \a_{(-1)} \vac \big)_{n, 0}, \qquad
\bar \a_n \coloneqq \big( \bar \a_{(-1)} \vac \big)_{0, n}.
\end{equation*}
We can describe these Fourier modes more explicitly in each of the three examples from \S\ref{sec: main examples}.

\paragraph{Kac-Moody:}
We note using \eqref{Lk comm with modes KM} that for any $\X \in \g$ we have
\begin{equation} \label{Ln on X state}
L_{(n)} \X_{(-1)} \vac = \X_{(-1)} \vac \delta_{n,0}, \qquad \bar L_{(n)} \bar \X_{(-1)} \vac = \bar \X_{(-1)} \vac \delta_{n,0}    
\end{equation}
for all $n \geq 0$. Using this we may apply the formula \eqref{A theta to u} for the change coordinates $u \mapsto \ii \vartheta$ to the states $\X_{(-1)} \vac$ and $\bar \X_{(-1)} \vac$ to obtain
\begin{equation*}
\X_{(-1)} \vac^{\ii \vartheta}_p = u(p) \, \X_{(-1)} \vac^u_p, \qquad
\bar \X_{(-1)} \vac^{\ii \vartheta}_p = \bar u(p) \, \bar \X_{(-1)} \vac^u_p.
\end{equation*}
It then follows from \eqref{hom vertex modes def} that the quantum operators from Proposition \ref{prop: Fourier series}, associated to the (anti-)chiral states $\X_{(-1)}\vac$ and $\bar \X_{(-1)} \vac$, are respectively given by
\begin{equation} \label{quantum fields X bar X}
\X\big( \vartheta(p) \big) = \sum_{n \in \ZZ} \X_n e^{-\ii n \vartheta(p)}, \qquad
\bar \X\big( \bar \vartheta(p) \big) = \sum_{n \in \ZZ} \bar \X_n e^{\ii n \bar\vartheta(p)},
\end{equation}
where the Fourier modes are related to the homogeneous vertex modes in the coordinate $u$ as
\begin{equation*}
\X_n = \X^{\; u}_{[n]}, \qquad
\bar \X_n = \bar \X^{\; u}_{[n]}
\end{equation*}
for every $n \in \ZZ$.

\paragraph{Virasoro:}
Using the formula \eqref{A theta to u} for changing coordinates $u \mapsto \ii \vartheta$, we find
\begin{equation} \label{Omega theta to u}
\Omega^{\ii \vartheta}_p = u(p)^2 \Bigg( \exp \bigg( \!\! - \sum_{k \geq 1} \big( b_k L_{(k)} + \bar b_k \bar L_{(k)} \big) \bigg) \Omega \Bigg)^u_p
= u(p)^2 \Omega^u_p - \frac{c}{24} \vac^u_p,
\end{equation}
and similarly for the anti-chiral state $\bar\Omega$.
To see the last step note that by the defining property \eqref{Omega positive prod a} of a conformal state $\Omega$ we have
$L_{(k)} \Omega = \frac{c}{2} \vac \delta_{k,2}$ for all $k \geq 1$ and also $b_2 = \frac{1}{12} u(p)^{-2}$. Expanding both sides of \eqref{Omega theta to u} in $u(p)$ we obtain
\begin{equation*}
\sum_{n \in \ZZ} L_n e^{- \ii n \vartheta(p)} = \sum_{n \in \ZZ} \Omega^u_{[n]} e^{- \ii n \vartheta(p)} - \frac{c}{24}
\end{equation*}
where on the left hand side we used the definition of the Fourier coefficients in Proposition \ref{prop: Fourier series}, which are conventionally denoted as
$L_n \coloneqq \Omega_n$ and $\bar L_n \coloneqq \bar \Omega_n$,
and on the right hand side, for the first term we used the definition of the homogeneous vertex $n^{\rm th}$-modes, see \eqref{hom vertex modes def}, in the coordinate $u$ on the plane and for the second term the fact that $\mathcal Y\big( \vac^u_p, u(p) \big)$ is the identity operator. The above is the well known relation between the stress energy tensor on the cylinder and on the plane, see e.g. \cite[(5.138)]{CFTbook}. In particular, we have
\begin{equation*}
L_n = \Omega^u_{[n]} - \frac{c}{24} \delta_{n,0}, \qquad
\bar L_n = \bar \Omega^u_{[n]} - \frac{c}{24} \delta_{n,0}
\end{equation*}
for every $n \in \ZZ$.

\paragraph{$\bm \beta \bm \gamma$ system:}
Using \eqref{Lk comm with modes beta gamma} we find
\begin{alignat*}{2}
L_{(n)} \beta_{(-1)} \vac &= \beta_{(-1)} \vac \delta_{n,0}, &\qquad
\bar L_{(n)} \bar \beta_{(-1)} \vac &= \bar \beta_{(-1)} \vac \delta_{n,0},\\
L_{(n)} \gamma_{(-1)} \vac &= 0, &\qquad
\bar L_{(n)} \bar \gamma_{(-1)} \vac &= 0
\end{alignat*}
and therefore apply the formula \eqref{A theta to u} for the change coordinates $u \mapsto \ii \vartheta$ to the states $\beta_{(-1)} \vac$, $\bar \beta_{(-1)} \vac$, $\gamma_{(-1)} \vac$ and $\bar \gamma_{(-1)} \vac$ we find
\begin{alignat*}{2}
\beta_{(-1)} \vac^{\ii \vartheta}_p &= u(p) \, \beta_{(-1)} \vac^u_p, &\qquad
\bar \beta_{(-1)} \vac^{\ii \vartheta}_p &= \bar u(p) \, \bar \beta_{(-1)} \vac^u_p\\
\gamma_{(-1)} \vac^{\ii \vartheta}_p &= \gamma_{(-1)} \vac^u_p, &\qquad
\bar \gamma_{(-1)} \vac^{\ii \vartheta}_p &= \bar \gamma_{(-1)} \vac^u_p.
\end{alignat*}
It then follows from \eqref{hom vertex modes def} that the quantum operators from Proposition \ref{prop: Fourier series} associated to these (anti-)chiral states are given by
\begin{subequations} \label{quantum fields beta gamma}
\begin{alignat}{2}
\beta\big( \vartheta(p) \big) &= \sum_{n \in \ZZ} \beta_n e^{-\ii n \vartheta(p)}, &\qquad
\bar \beta\big( \bar \vartheta(p) \big) &= \sum_{n \in \ZZ} \bar \beta_n e^{\ii n \bar\vartheta(p)},\\
\gamma\big( \vartheta(p) \big) &= \sum_{n \in \ZZ} \gamma_n e^{-\ii n \vartheta(p)}, &\qquad
\bar \gamma\big( \bar \vartheta(p) \big) &= \sum_{n \in \ZZ} \bar \gamma_n e^{\ii n \bar\vartheta(p)}
\end{alignat}
\end{subequations}
where the Fourier modes are related to the homogeneous vertex modes in the coordinate $u$ as
\begin{equation*}
\beta_n = \beta^{\; u}_{[n]}, \qquad
\bar \beta_n = \bar \beta^{\; u}_{[n]}, \qquad
\gamma_n = \gamma^{\; u}_{[n]}, \qquad
\bar \gamma_n = \bar \gamma^{\; u}_{[n]}
\end{equation*}
for every $n \in \ZZ$.

\medskip

More generally, if $A \in \hVV^{\fl,\alpha}$ is a chiral monomial state then its associated quantum operator from Proposition \ref{prop: Fourier series} is holomorphic in this case so that
\begin{equation*}
A\big( \vartheta(q) \big) \coloneqq \mathcal Y\big( A^{\ii \vartheta}_q, u(q) \big) = \sum_{n \in \ZZ} A_n \, e^{-\ii n \vartheta(q)},
\end{equation*}
i.e. $A$ has Fourier modes $A_{n, \bar n} = A_n \delta_{\bar n, 0}$. Moreover, since $m_{q, W} (A^{\ii \vartheta}_q)$ depends holomorphically on $\vartheta(q)$ by Proposition \ref{prop: Y deriv}, we can extract the Fourier modes using a contour integral
\begin{equation} \label{Fourier modes for chiral b}
A_n = \frac{1}{2 \pi} \int_{\gamma} m_{q, W} (A^{\ii \vartheta}_q) e^{\ii n \vartheta(q)} \d\vartheta(q)
\end{equation}
along any contour $\gamma$ wrapping once around the cylinder $\Sigma^\prime$.
Likewise, if $A \in \hVV^{\fl,\alpha}$ is anti-chiral then the quantum operator from Proposition \ref{prop: Fourier series} is anti-holomorphic and we have
\begin{equation*}
A\big( \bar \vartheta(q) \big) \coloneqq \mathcal Y\big( A^{\ii \vartheta}_q, \bar u(q) \big) = \sum_{n \in \ZZ} A_n \, e^{\ii n \bar \vartheta(q)},
\end{equation*}
i.e. $A$ has Fourier modes $A_{n, \bar n} = A_{\bar n} \delta_{n, 0}$. Since in this case $m_{q, W} (A^{\ii \vartheta}_q)$ depends holomorphically on $\bar \vartheta(q)$ by Proposition \ref{prop: Y deriv}, we can extract these Fourier modes using a similar contour integral
\begin{equation*}
A_n = \frac{1}{2 \pi} \int_{\gamma} m_{q, W} (A^{\ii \vartheta}_q) e^{- \ii n \bar \vartheta(q)} \d\bar \vartheta(q).
\end{equation*}
As in the case of vertex modes discussed in \S\ref{sec: VA products}, we use the same notation for Fourier modes of chiral and anti-chiral states. This should not lead to confusion for the same reason as before, namely the type of Fourier mode in question is determined by the chirality of the state.

The vacuum state $\vac \in \hVV^{\fl,\alpha}$ is both chiral and anti-chiral, with Fourier modes given by $\vac_{n, \bar n} = \delta_{n, 0} \delta_{\bar n, 0}$ for $n, \bar n \in \ZZ$. Here we are omitting the factor of $[1]_W \in \U\L^\Sigma_\alpha(W)$ for brevity since it corresponds to the identity operator.

\subsubsection{Translation operators}

Recall the endomorphisms $D, \bar D : \hVV^{\fl,\alpha} \to \hVV^{\fl,\alpha}$ in \eqref{D bar D def} and Proposition \ref{prop: Y deriv}. The following is the analogue of Lemma \ref{lem: derivative vertex} for Fourier modes.

\begin{lemma} \label{lem: derivative Fourier}
For any $A \in \hVV^{\fl,\alpha}$ and $n, \bar n \in \ZZ$ we have
\begin{equation*}
(D A)_{n,\bar n} = - n A_{n,\bar n}, \qquad
(\bar D A)_{n,\bar n} = - \bar n A_{n,\bar n}.
\end{equation*}
\begin{proof}
This is very similar to the proof of Lemma \ref{lem: derivative vertex}. We have
\begin{align*}
\mathcal Y\big( (D A)^{\ii \vartheta}_q, \underline{u(q)} \big) &= m_{q,W}\big( (D A)^{\ii \vartheta}_q \big) = m_{q,W}( -\ii \partial_{\vartheta(q)} A^{\ii \vartheta}_q ) = -\ii \partial_{\vartheta(q)} \mathcal Y\big( A^{\ii \vartheta}_q, \underline{u(q)} \big),\\
\mathcal Y\big( (\bar D A)^{\ii \vartheta}_q, \underline{u(q)} \big) &= m_{q,W}\big( (\bar D A)^{\ii \vartheta}_q \big) = m_{q,W}( \ii \partial_{\bar \vartheta(q)} A^{\ii \vartheta}_q ) = \ii \partial_{\bar \vartheta(q)} \mathcal Y\big( A^{\ii \vartheta}_q, \underline{u(q)} \big)
\end{align*}
where the first and last steps in both cases use Proposition \ref{prop: vertex operator general} and the second steps are by Proposition \ref{prop: Y deriv}. The result follows from the definition of Fourier modes in Proposition \ref{prop: Fourier series}.
\end{proof}
\end{lemma}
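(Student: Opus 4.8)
\textbf{Proof proposal for Lemma \ref{lem: derivative Fourier}.}

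The plan is to mimic the proof of Lemma \ref{lem: derivative vertex} verbatim, with the only new ingredient being a careful tracking of the factors of $\ii$ that arise because the local coordinate here is $\ii\vartheta$ rather than $\vartheta$ itself, i.e.\ $u = e^{\ii\vartheta}$. First I would recall that by Proposition \ref{prop: Fourier series}, for any state $A \in \hVV^{\fl,\alpha}$ the quantum operator is defined as the expansion of $m_{q,W}(A^{\ii\vartheta}_q)$ in the variable $u(q) = e^{\ii\vartheta(q)}$, with Fourier coefficients $A_{n,\bar n}$ appearing as the coefficients of $e^{-\ii n \vartheta(q)} e^{\ii \bar n \bar\vartheta(q)}$. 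Thus extracting the relation between the Fourier modes of $DA$ (resp.\ $\bar D A$) and those of $A$ is a matter of relating $m_{q,W}\big((DA)^{\ii\vartheta}_q\big)$ and $m_{q,W}\big((\bar D A)^{\ii\vartheta}_q\big)$ to derivatives of $m_{q,W}(A^{\ii\vartheta}_q)$.

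The key step is the chain of equalities displayed in the proof: by Proposition \ref{prop: vertex operator general} the leftmost and rightmost terms are identified as vertex operators, and the middle equality is where Proposition \ref{prop: Y deriv} enters. The latter gives $(DA)^{\ii\vartheta}_q = \partial_{(\ii\vartheta)(q)} A^{\ii\vartheta}_q = -\ii\, \partial_{\vartheta(q)} A^{\ii\vartheta}_q$, since differentiating with respect to the coordinate $\ii\vartheta$ at $q$ brings down a factor $-\ii$ relative to differentiating with respect to $\vartheta$; likewise $(\bar D A)^{\ii\vartheta}_q = \partial_{\overline{(\ii\vartheta)}(q)} A^{\ii\vartheta}_q = \ii\, \partial_{\bar\vartheta(q)} A^{\ii\vartheta}_q$, using $\overline{\ii\vartheta} = -\ii\bar\vartheta$. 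Then one observes $\partial_{\vartheta(q)} e^{-\ii n \vartheta(q)} = -\ii n\, e^{-\ii n\vartheta(q)}$, so that $-\ii\partial_{\vartheta(q)}$ acting on the mode expansion multiplies the coefficient of $e^{-\ii n\vartheta(q)} e^{\ii\bar n\bar\vartheta(q)}$ by $-n$, and similarly $\ii\partial_{\bar\vartheta(q)}$ multiplies it by $-\bar n$. Comparing Fourier coefficients on both sides of the two displayed identities then yields $(DA)_{n,\bar n} = -n\, A_{n,\bar n}$ and $(\bar D A)_{n,\bar n} = -\bar n\, A_{n,\bar n}$.

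I do not expect any serious obstacle here: the lemma is essentially the ``cylinder version'' of Lemma \ref{lem: derivative vertex}, and everything needed has already been assembled (Propositions \ref{prop: vertex operator general}, \ref{prop: Y deriv}, \ref{prop: Fourier series}). The one point that requires a moment's care --- and the only place the proof genuinely differs from that of Lemma \ref{lem: derivative vertex} --- is bookkeeping the factors $\pm\ii$ coming from the change of coordinate $u = e^{\ii\vartheta}$ and from complex conjugation $\overline{\ii\vartheta} = -\ii\bar\vartheta$; once these are handled, the two signs $-n$ and $-\bar n$ come out correctly and match the statement. The proof is therefore short, and I would present it exactly in the two-line form displayed immediately after the lemma statement.
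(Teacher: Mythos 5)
Your proposal is correct and follows essentially the same route as the paper: identify $m_{q,W}\big((DA)^{\ii\vartheta}_q\big)$ with the vertex operator via Proposition \ref{prop: vertex operator general}, use Proposition \ref{prop: Y deriv} with the local coordinate $\ii\vartheta$ to convert $D$ and $\bar D$ into $-\ii\partial_{\vartheta(q)}$ and $\ii\partial_{\bar\vartheta(q)}$, and compare Fourier coefficients as in Proposition \ref{prop: Fourier series}. Your bookkeeping of the $\pm\ii$ factors and the resulting signs $-n$, $-\bar n$ matches the paper's argument exactly.
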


\subsection{Borcherds type identities} \label{sec: Borcherds Fourier}

In Proposition \ref{prop: Borcherds} we showed that the vertex modes of states in $\hVV^{\fl,\alpha}$ satisfy the Borcherds type identities when one of the states involved is (anti-)chiral. In Theorem \ref{thm: Fourier Borcherds} below we derive similar identities for Fourier modes.

Recall from \S\ref{sec: Vec structure} that the relative ordering of the (anti-)chiral vertex modes in a state in $\hVV^{\fl,\alpha}$ prepared at a point $p \in \Sigma^\circ$ in a local coordinate $\xi$ is encoded in terms of the prefactorisation algebra $\U\L^\Sigma_\alpha$ by associating each factor with disjoint concentric annuli shaped open subsets around $p$. In the case of Fourier modes, the annuli shaped open subsets around $o \in \Sigma$ in the local coordinate $u$ correspond to open strips around the cylinder $\Sigma^\prime$, as depicted below. In particular, the centre of these annuli is $o \in \Sigma$ so the product of Fourier modes is from left to right if their corresponding open strips are ordered from top to bottom along $\Sigma'$. In pictures,
\begin{center}
\begin{tikzpicture}
\def\R{1}
  \fill[left color   = gray!10!black,
        right color  = gray!10!black,
        middle color = gray!10,
        shading      = axis,
        opacity      = 0.15]
    (\R,.5*\R) -- (\R,2.5*\R)  arc (360:180:\R cm and 0.3cm)
          -- (-\R,.5*\R) arc (180:360:\R cm and 0.3cm);
  \fill[top color    = gray!90!,
        bottom color = gray!2,
        middle color = gray!30,
        shading      = axis,
        opacity      = 0.03]
    (0,2.5*\R) circle (\R cm and 0.3cm);
  \draw (-\R,2.5*\R) -- (-\R,.5*\R) arc (180:360:\R cm and 0.3cm)
               -- (\R,2.5*\R) ++ (-\R,0) circle (\R cm and 0.3cm);
  \draw[dashed] (-\R,.5*\R) arc (180:0:\R cm and 0.3cm);

  \fill[red!90,
        opacity      = 0.15]
    (\R,\R) -- (\R,1.3*\R)  arc (0:180:\R cm and 0.3cm)
          -- (-\R,\R) arc (180:0:\R cm and 0.3cm);
  \fill[red!90,
        opacity      = 0.3]
    (\R,\R) -- (\R,1.3*\R)  arc (360:180:\R cm and 0.3cm)
          -- (-\R,\R) arc (180:360:\R cm and 0.3cm);
  \draw[thick, dashed, red] (-\R,\R) node[above left=-1.5mm and 0mm, red]{$A_{m, \bar m}$} arc (180:0:\R cm and 0.3cm);
  \draw[thick, red] (-\R,\R) arc (180:360:\R cm and 0.3cm);
  \draw[thick, dashed, red] (-\R,1.3*\R) arc (180:0:\R cm and 0.3cm);
  \draw[thick, red] (-\R,1.3*\R) arc (180:360:\R cm and 0.3cm);

  \fill[blue!90,
        opacity      = 0.15]
    (\R,1.6*\R) -- (\R,1.9*\R)  arc (0:180:\R cm and 0.3cm)
          -- (-\R,1.6*\R) arc (180:0:\R cm and 0.3cm);
  \fill[blue!90,
        opacity      = 0.3]
    (\R,1.6*\R) -- (\R,1.9*\R)  arc (360:180:\R cm and 0.3cm)
          -- (-\R,1.6*\R) arc (180:360:\R cm and 0.3cm);
  \draw[thick, dashed, blue] (-\R,1.6*\R) node[above left=-1.5mm and 0mm, blue]{$B_{n, \bar n}$} arc (180:0:\R cm and 0.3cm);
  \draw[thick, blue] (-\R,1.6*\R) arc (180:360:\R cm and 0.3cm);
  \draw[thick, dashed, blue] (-\R,1.9*\R) arc (180:0:\R cm and 0.3cm);
  \draw[thick, blue] (-\R,1.9*\R) arc (180:360:\R cm and 0.3cm);
\end{tikzpicture}
\raisebox{12mm}{$\quad\longrightarrow \quad$}
\begin{tikzpicture}
\def\R{1}
  \fill[left color   = gray!10!black,
        right color  = gray!10!black,
        middle color = gray!10,
        shading      = axis,
        opacity      = 0.15]
    (\R,.5*\R) -- (\R,2.5*\R)  arc (360:180:\R cm and 0.3cm)
          -- (-\R,.5*\R) arc (180:360:\R cm and 0.3cm);
  \fill[top color    = gray!90!,
        bottom color = gray!2,
        middle color = gray!30,
        shading      = axis,
        opacity      = 0.03]
    (0,2.5*\R) circle (\R cm and 0.3cm);
  \draw (-\R,2.5*\R) -- (-\R,.5*\R) arc (180:360:\R cm and 0.3cm)
               -- (\R,2.5*\R) ++ (-\R,0) circle (\R cm and 0.3cm);
  \draw[dashed] (-\R,.5*\R) arc (180:0:\R cm and 0.3cm);

  \fill[green!70!black,
        opacity      = 0.15]
    (\R,.8*\R) -- (\R,2.1*\R)  arc (0:180:\R cm and 0.3cm)
          -- (-\R,.8*\R) arc (180:0:\R cm and 0.3cm);
  \fill[green!70!black,
        opacity      = 0.3]
    (\R,.8*\R) -- (\R,2.1*\R)  arc (360:180:\R cm and 0.3cm)
          -- (-\R,.8*\R) arc (180:360:\R cm and 0.3cm);
  \draw[thick, dashed, green!70!black] (-\R,.8*\R) arc (180:0:\R cm and 0.3cm) node[above right=4mm and 0mm, green!70!black]{$B_{n, \bar n} A_{m, \bar m}$};
  \draw[thick, green!70!black] (-\R,.8*\R) arc (180:360:\R cm and 0.3cm);
  \draw[thick, dashed, green!70!black] (-\R,2.1*\R) arc (180:0:\R cm and 0.3cm);
  \draw[thick, green!70!black] (-\R,2.1*\R) arc (180:360:\R cm and 0.3cm);
\end{tikzpicture}
\end{center}

The N{\o}rlund polynomials $\Ber^{(a)}_n(x)$, for any $n \in \ZZ_{\geq 0}$ and $a \in \ZZ$, are defined by
\begin{equation} \label{Norlund def}
\frac{e^{x t}}{(e^t - 1)^a} = \sum_{n \geq 0} \frac{\Ber^{(a)}_n(x)}{n!} t^{n-a}.
\end{equation}
They generalise the Bernoulli polynomials $\Ber_n(x)$ for $n \in \ZZ_{\geq 0}$ which correspond to the special case $a = 1$, i.e. $\Ber_n(x) = \Ber^{(1)}_n(x)$. The values $\Ber_n(1)$ are the Bernoulli numbers.
An alternative definition of these numbers that we shall also use is $\Ber_n(0)$, which differ from the first definition only in its first term: $\Ber_1(0) = - \frac 12$ and $\Ber_1(1) = \frac 12$ while $\Ber_n(0) = \Ber_n(1)$ for all $n \geq 2$.

In Theorem \ref{thm: Fourier Borcherds} and Proposition \ref{prop: Fourier com} below we use the convention that $0^0 = 1$.

\begin{theorem} \label{thm: Fourier Borcherds}
Let $A, B \in \hVV^{\fl,\alpha}$ and $k, m, n, \bar n \in \ZZ$. We have the following identities:
\begin{itemize}
  \item[(i)] If $A$ is chiral then
\begin{align*}
&\sum_{j \geq 0} (-1)^j \binom{k}{j} A_{m-j} B_{n+j, \bar n} - \sum_{j \geq 0} (-1)^{k+j} \binom{k}{j} B_{n+k-j, \bar n} A_{m-k+j}\\
&\qquad\qquad\qquad = \sum_{j, r \geq 0} \frac{\Ber^{(-k)}_j(-k) m^r}{j! r!} \big( A_{(j+k+r)} B \big)_{m+n, \bar n}.
\end{align*}
  \item[(ii)] If $A$ is anti-chiral then
\begin{align*}
&\sum_{j \geq 0} (-1)^j \binom{k}{j} A_{m-j} B_{n, \bar n+j} - \sum_{j \geq 0} (-1)^{k+j} \binom{k}{j} B_{n, \bar n+k-j} A_{m-k+j}\\
&\qquad\qquad\qquad = \sum_{j, r \geq 0} \frac{\Ber^{(-k)}_j(-k) m^r}{j! r!} \big( A_{(j+k+r)} B \big)_{n, m+\bar n}.
\end{align*}
\end{itemize}
\begin{proof}
Suppose first that $\bar D A = 0$. Let $W \subset \Sigma^\prime$ be an open subset in $\Top(\Sigma)$ wrapping around $\Sigma^\prime$.
Recall the map in \eqref{moving points n} and define $F_{A, B} : \Conf_2(W) \to \U\L^\Sigma_\alpha(W)$ by
\begin{align*}
F_{A,B}(q,p) &\coloneqq \Phi^{2, \ii \vartheta}_W\big( (q, p), (A, B) \big) u_p(q)^k u(p)^n u(q)^{m-k} \bar u(p)^{\bar n}\\
&= m_{(q,p), W} \big( A^{\ii \vartheta}_q \otimes B^{\ii \vartheta}_p \big) u_p(q)^k u(p)^n u(q)^{m-k} \bar u(p)^{\bar n}.
\end{align*}
This depends holomorphically on $q \in W$ by Proposition \ref{prop: Y deriv}. By the standard deformation of contour argument
\begin{center}
\begin{tikzpicture}
\def\R{1}
  \fill[left color   = gray!10!black,
        right color  = gray!10!black,
        middle color = gray!10,
        shading      = axis,
        opacity      = 0.15]
    (\R,.5*\R) -- (\R,2.5*\R)  arc (360:180:\R cm and 0.3cm)
          -- (-\R,.5*\R) arc (180:360:\R cm and 0.3cm);
  \fill[top color    = gray!90!,
        bottom color = gray!2,
        middle color = gray!30,
        shading      = axis,
        opacity      = 0.03]
    (0,2.5*\R) circle (\R cm and 0.3cm);
  \draw (-\R,2.5*\R) -- (-\R,.5*\R) arc (180:360:\R cm and 0.3cm)
               -- (\R,2.5*\R) ++ (-\R,0) circle (\R cm and 0.3cm);
  \draw[dashed] (-\R,.5*\R) arc (180:0:\R cm and 0.3cm);
  \draw[thick, red, dashed] (-\R,1.8*\R) arc (180:0:\R cm and 0.3cm);
  \draw[thick, red] (-\R,1.8*\R) arc (180:360:\R cm and 0.3cm) node[above right=-2mm and 0mm, red]{$\hat \gamma$};
  \draw[->,>=stealth, red, thick] (0,1.5*\R) -- (0.01,1.5*\R);
  \filldraw[thick] (.4*\R,1.22*\R) node[below=-.7mm]{\tiny $p$} circle (0.02*\R);
\end{tikzpicture}
\; \raisebox{12mm}{\large $-$} \qquad
\begin{tikzpicture}
\def\R{1}
  \fill[left color   = gray!10!black,
        right color  = gray!10!black,
        middle color = gray!10,
        shading      = axis,
        opacity      = 0.15]
    (\R,.5*\R) -- (\R,2.5*\R)  arc (360:180:\R cm and 0.3cm)
          -- (-\R,.5*\R) arc (180:360:\R cm and 0.3cm);
  \fill[top color    = gray!90!,
        bottom color = gray!2,
        middle color = gray!30,
        shading      = axis,
        opacity      = 0.03]
    (0,2.5*\R) circle (\R cm and 0.3cm);
  \draw (-\R,2.5*\R) -- (-\R,.5*\R) arc (180:360:\R cm and 0.3cm)
               -- (\R,2.5*\R) ++ (-\R,0) circle (\R cm and 0.3cm);
  \draw[dashed] (-\R,.5*\R) arc (180:0:\R cm and 0.3cm);
  \draw[thick, red, dashed] (-\R,1.2*\R) arc (180:0:\R cm and 0.3cm);
  \draw[thick, red] (-\R,1.2*\R) arc (180:360:\R cm and 0.3cm) node[below right=-2mm and 0mm, red]{$\check \gamma$};
  \draw[->,>=stealth, red, thick] (0,.9*\R) -- (0.01,.9*\R);
  \filldraw[thick] (.4*\R,1.22*\R) node[below=-.7mm]{\tiny $p$} circle (0.02*\R);
\end{tikzpicture}
\; \raisebox{12mm}{\large $=$} \qquad
\begin{tikzpicture}
\def\R{1}
  \fill[left color   = gray!10!black,
        right color  = gray!10!black,
        middle color = gray!10,
        shading      = axis,
        opacity      = 0.15]
    (\R,.5*\R) -- (\R,2.5*\R)  arc (360:180:\R cm and 0.3cm)
          -- (-\R,.5*\R) arc (180:360:\R cm and 0.3cm);
  \fill[top color    = gray!90!,
        bottom color = gray!2,
        middle color = gray!30,
        shading      = axis,
        opacity      = 0.03]
    (0,2.5*\R) circle (\R cm and 0.3cm);
  \draw (-\R,2.5*\R) -- (-\R,.5*\R) arc (180:360:\R cm and 0.3cm)
               -- (\R,2.5*\R) ++ (-\R,0) circle (\R cm and 0.3cm);
  \draw[dashed] (-\R,.5*\R) arc (180:0:\R cm and 0.3cm);
  \draw[thick, red] (.4*\R,1.22*\R) ellipse (.25cm and .35cm);
  \draw[stealth-, red, thick] (.6,1.04) -- (.61,1.07);
  \filldraw[thick] (.4*\R,1.22*\R) node[below=-.7mm]{\tiny $p$} node[below right=1.5mm, red]{$c_p$} circle (0.02*\R);
\end{tikzpicture}
\end{center}
we then have the identity
\begin{equation} \label{Fourier contour arg}
\frac{1}{2 \pi} \bigg( \int_{\hat \gamma} - \int_{\check \gamma} \bigg) F_{A,B}(q,p) \d\vartheta(q) = \frac{1}{2 \pi \ii} \int_{c_p} F_{A,B}(q,p) \d\big( \ii \vartheta(q) \big),
\end{equation}
where $\hat \gamma$ and $\check \gamma$ are contours wrapping the cylinder which pass, respectively, above and below the point $p$ and $c_p$ is a small \emph{clockwise} oriented contour around $p$.

On the right hand side of \eqref{Fourier contour arg}, since $q \in c_p$ is close to $p$ we have the expansion
\begin{equation*}
F_{A,B}(q, p) = \sum_{j, r \geq 0} \frac{\Ber^{(-k)}_j(-k) m^r}{j! r!} \mathcal Y\Big( \mathcal Y\big( A^{\ii \vartheta}_q, \ii \vartheta_p(q) \big) B^{\ii \vartheta}_p, \underline{u(p)} \Big) \big( \ii \vartheta_p(q) \big)^{j+r+k} u(p)^{m+n} \bar u(p)^{\bar n},
\end{equation*}
where we used the definition \eqref{Norlund def} of the N{\o}rlund polynomials. We can perform the integral on the right hand side of \eqref{Fourier contour arg} using \eqref{chiral mode}. For this, note that the orientation of the integration variable $\ii \vartheta(q)$, see the choice of orientation on $\Sigma^\prime_+$ we made at the start of \S\ref{sec: Fourier modes def}, matches the clockwise orientation of the contour $c_p$. By definition of Fourier modes in Proposition \ref{prop: Fourier series}, the right hand side of the desired identity then corresponds to the coefficient of $u(p)^0 \bar u(p)^0$.

In the first integral on the left hand side of \eqref{Fourier contour arg}, by the relative positioning of $q \in \hat\gamma$ and the point $p$ as depicted in the above picture, we have $|u(p)| < |u(q)|$ and hence the expansion
\begin{equation*}
F_{A,B}(q, p) = \sum_{j \geq 0} (-1)^j \binom{k}{j} \mathcal Y\big( A^{\ii \vartheta}_q, u(q) \big) \mathcal Y\big( B^{\ii \vartheta}_p, \underline{u(p)} \big) u(q)^{m-j} u(p)^{n+j} \bar u(p)^{\bar n}.
\end{equation*}
Performing the integral over $\hat\gamma$ using the property \eqref{Fourier modes for chiral b} and using the definition of the Fourier modes in Proposition \ref{prop: Fourier series}, the first term on the left hand side of the desired identity is given again by the coefficient of $u(p)^0 \bar u(p)^0$.

Likewise, in the second integral on the left hand side of \eqref{Fourier contour arg} we have $q \in \check\gamma$ which lies below $p$ so that $|u(p)| > |u(q)|$ and hence we have the expansion
\begin{align*}
F_{A,B}(q, p) &= \sum_{j \geq 0} (-1)^{k+j} \binom{k}{j} \mathcal Y\big( B^{\ii \vartheta}_p, \underline{u(p)} \big) \mathcal Y\big( A^{\ii \vartheta}_q, u(q) \big) u(q)^{m-k+j} u(p)^{n+k-j} \bar u(p)^{\bar n}.
\end{align*}
Upon integrating over $\check\gamma$ and extracting the coefficient of $u(p)^0 \bar u(p)^0$ this then gives the second term on the left hand side of the desired identity, using again \eqref{Fourier modes for chiral b} and Proposition \ref{prop: Fourier series}.

The proof in the anti-chiral case $D A = 0$ is very similar.
\end{proof}
\end{theorem}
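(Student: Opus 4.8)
The plan is to mimic the proof of Proposition \ref{prop: Borcherds}, replacing the planar coordinate $\xi$ by the cylindrical coordinate $\ii\vartheta$ on $\Sigma^\prime$ and the Cauchy-type contours around points by contours wrapping the cylinder, together with a small contour $c_p$ around the inserted point. Assume first that $A$ is chiral, i.e. $\bar D A = 0$; the anti-chiral case $(ii)$ is handled by the same argument with the roles of the holomorphic and anti-holomorphic variables exchanged, using Proposition \ref{prop: Y deriv} to see that the relevant quantities are anti-holomorphic. Fix an open subset $W \subset \Sigma^\prime$ in $\Top(\Sigma)$ wrapping around $\Sigma^\prime$, take two distinct points $q, p \in W$, and on $\Conf_2(W)$ define
\begin{equation*}
F_{A,B}(q,p) \coloneqq m_{(q,p), W}\big( A^{\ii\vartheta}_q \otimes B^{\ii\vartheta}_p \big)\, u_p(q)^k u(p)^n u(q)^{m-k} \bar u(p)^{\bar n},
\end{equation*}
which depends holomorphically on $q$ by Proposition \ref{prop: Y deriv} since $A$ is chiral. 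The first step is to justify the contour-deformation identity \eqref{Fourier contour arg} relating the integrals of $F_{A,B}(\cdot,p)\,\d\vartheta(q)$ over a contour $\hat\gamma$ passing above $p$, over a contour $\check\gamma$ passing below $p$, and over a small clockwise contour $c_p$ encircling $p$; this is the cylinder analogue of the standard ``deformation of contour'' picture used in \eqref{contour identity Borcherds}, and holds because the integrand is holomorphic in $q$ away from $p$ while the contribution at $o,o'$ is absent since $p, q \in \Sigma^\prime$.

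Next I would expand $F_{A,B}$ in each of the three regions. Near $p$ (on $c_p$): using Proposition \ref{prop: vertex operator general} for the associativity/OPE expansion of $m_{(q,p),W}(A^{\ii\vartheta}_q \otimes B^{\ii\vartheta}_p)$ as $\mathcal Y\big( \mathcal Y(A^{\ii\vartheta}_q, \ii\vartheta_p(q)) B^{\ii\vartheta}_p, \underline{u(p)}\big)$, and expanding the factor $u_p(q)^k u(q)^{m-k} = \big(e^{\ii\vartheta(p)}(e^{\ii\vartheta_p(q)}-1)\big)^k \big(e^{\ii\vartheta(p)}e^{\ii\vartheta_p(q)}\big)^{m-k}$ using the Nørlund generating function \eqref{Norlund def} to extract the powers of $\ii\vartheta_p(q)$, and the elementary power series $e^{m\,\ii\vartheta_p(q)} = \sum_r (m\,\ii\vartheta_p(q))^r/r!$; the small-contour integral is evaluated by the chiral-mode formula \eqref{chiral mode}, noting that the orientation convention for $\ii\vartheta(q)$ fixed at the start of \S\ref{sec: Fourier modes def} matches the clockwise orientation of $c_p$. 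Above $p$ (on $\hat\gamma$): there $|u(p)| < |u(q)|$, so expand $u_p(q)^k = (u(q)-u(p))^k$ in small $u(p)$, giving $\sum_j (-1)^j\binom kj u(q)^{m-j} u(p)^{n+j} \bar u(p)^{\bar n}$ times $\mathcal Y(A^{\ii\vartheta}_q, u(q))\mathcal Y(B^{\ii\vartheta}_p, \underline{u(p)})$, and integrate over $\hat\gamma$ using \eqref{Fourier modes for chiral b}. Below $p$ (on $\check\gamma$): there $|u(p)| > |u(q)|$, so write $u_p(q)^k = (-1)^k(u(p)-u(q))^k$ and expand in small $u(q)$, producing $\sum_j (-1)^{k+j}\binom kj$ times the oppositely ordered product; integrate over $\check\gamma$ again with \eqref{Fourier modes for chiral b}. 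Finally, extracting the coefficient of $u(p)^0\bar u(p)^0$ in \eqref{Fourier contour arg} and using the definition of Fourier modes in Proposition \ref{prop: Fourier series} yields the claimed identity.

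The main obstacle I expect is purely bookkeeping: correctly combining the two elementary expansions near $p$ — the Nørlund series coming from $(e^{\ii\vartheta_p(q)}-1)^k$ and the exponential series from $e^{m\,\ii\vartheta_p(q)}$ — into the single double sum $\sum_{j,r\geq 0}\frac{\Ber^{(-k)}_j(-k)\,m^r}{j!\,r!}(A_{(j+k+r)}B)_{m+n,\bar n}$, keeping track of the shift by $k$ in the order of the mode $A_{(\cdot)}B$ and of the factors $e^{\ii\vartheta(p)(\cdots)} = u(p)^{(\cdots)}$ so that the overall power of $u(p)$ that must vanish is indeed $0$. One must also be careful that all three expansions converge in their respective regions as honest (formal or genuine) Laurent/power series so that the contour integrals pick out well-defined coefficients, and that the infinitely many terms collapse to finite sums upon applying the modes to any fixed state via Lemma \ref{lem: field vertex modes} and the identities \eqref{commutator check} of Proposition \ref{prop: Y map def}, exactly as in Lemma \ref{lem: field vertex modes} and the proof of Proposition \ref{prop: Borcherds}. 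The anti-chiral case $(ii)$ then follows verbatim after swapping $u \leftrightarrow \bar u$, $\vartheta \leftrightarrow \bar\vartheta$, $n \leftrightarrow \bar n$, and $m \leftrightarrow \bar m$ in the bars, using the anti-chiral mode formula \eqref{anti-chiral mode} in place of \eqref{chiral mode}.
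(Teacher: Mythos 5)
Your proposal is correct and follows essentially the same route as the paper's own proof: the same function $F_{A,B}$, the same cylinder contour deformation between $\hat\gamma$, $\check\gamma$ and the small clockwise contour $c_p$, the same three regional expansions (Nørlund series times the exponential series near $p$, binomial expansions above and below $p$), and the same extraction of the coefficient of $u(p)^0\bar u(p)^0$ via \eqref{chiral mode} and \eqref{Fourier modes for chiral b}. The only trivial slip is the mention of swapping "$m \leftrightarrow \bar m$" in the anti-chiral case (the statement has no $\bar m$; one simply lets $m$ sit in the anti-holomorphic slot), which does not affect the argument.
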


As with vertex modes, cf. Corollary \ref{cor: com nord vertex}, the `Borcherds type' identities for Fourier modes lead to commutator and normal ordering formulae for Fourier modes of composite states.

\begin{proposition} \label{prop: Fourier com}
Let $A, B \in \hVV^{\fl,\alpha}$ and $m, n, \bar n \in \ZZ$. We have the following identities:
\begin{itemize}
  \item[(i)] If $A$ is chiral then
\begin{align*}
[ A_m, B_{n, \bar n}] &= \sum_{r \geq 0} \frac{m^r}{r!} \big( A_{(r)} B \big)_{m+n, \bar n},\\
\big( A_{(-1)} B \big)_{n, \bar n} &= \sum_{j \leq 0} A_j B_{n-j, \bar n}
+ \sum_{j > 0} B_{n-j, \bar n} A_j + \sum_{r \geq 0} \frac{\zeta(-r)}{r!} \big( A_{(r)} B \big)_{n, \bar n}.
\end{align*}
  \item[(ii)] If $A$ is anti-chiral then
\begin{align*}
[ A_m, B_{n, \bar n} ] &= \sum_{r \geq 0} \frac{m^r}{r!} \big( A_{(r)} B \big)_{n, m+\bar n},\\
\big( A_{(-1)} B \big)_{n, \bar n} &= \sum_{j \leq 0} A_j B_{n, \bar n-j} + \sum_{j > 0} B_{n, \bar n-j} A_j + \sum_{r \geq 0} \frac{\zeta(-r)}{r!} \big( A_{(r)} B \big)_{n, \bar n}.
\end{align*}
\end{itemize}
\begin{proof}
Taking $k=0$ in the identities from Theorem \ref{thm: Fourier Borcherds} $(i)$ and $(ii)$ we obtain the first desired relations in $(i)$ and $(ii)$, respectively. Taking instead $m=0$ and $k=-1$ in the identities from Theorem \ref{thm: Fourier Borcherds}$(i)$ and $(ii)$ gives the second desired relations in $(i)$ and $(ii)$ upon noting the fact that $\zeta(-r) = - \frac{\Ber_{r+1}(1)}{r+1}$ for every $r \in \ZZ_{\geq 0}$.
\end{proof}
\end{proposition}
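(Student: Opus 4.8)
\textbf{Proof proposal for Proposition \ref{prop: Fourier com}.}

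The plan is to derive both the commutator relations and the normal ordering relations by specialising the ``Borcherds type'' identities for Fourier modes established in Theorem \ref{thm: Fourier Borcherds}, choosing the free integer parameter $k$ appropriately in each case. This exactly parallels the derivation of Corollary \ref{cor: com nord vertex} from Proposition \ref{prop: Borcherds}, and so the proof will be short once the right substitutions are identified. Throughout I treat the chiral case $(i)$; the anti-chiral case $(ii)$ follows by the same substitutions applied to Theorem \ref{thm: Fourier Borcherds}$(ii)$, with the roles of the mode labels $n$ and $\bar n$ interchanged in the obvious way.

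First I would establish the commutator formula by setting $k=0$ in Theorem \ref{thm: Fourier Borcherds}$(i)$. On the left hand side, the binomial coefficients $\binom{0}{j}$ collapse each sum to its $j=0$ term, leaving precisely $A_m B_{n, \bar n} - B_{n, \bar n} A_m = [A_m, B_{n, \bar n}]$. On the right hand side, the N{\o}rlund prefactor becomes $\Ber^{(0)}_j(0)$; since $e^{0 \cdot t}/(e^t-1)^0 = 1$, the generating function \eqref{Norlund def} forces $\Ber^{(0)}_j(0) = \delta_{j,0}$, so only the $j=0$ term survives and we are left with $\sum_{r \geq 0} \frac{m^r}{r!} \big( A_{(r)} B \big)_{m+n, \bar n}$, as claimed. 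Next, for the normal ordering formula I would set $m=0$ and $k=-1$ in Theorem \ref{thm: Fourier Borcherds}$(i)$. With $m=0$ the only surviving power $m^r$ is $r=0$ (recalling the convention $0^0=1$), so the right hand side reduces to a single sum $\sum_{j \geq 0} \frac{\Ber^{(1)}_j(-1)}{j!} \big( A_{(j-1)} B \big)_{n, \bar n}$. I would then reindex $r = j-1$; the $j=0$ term contributes $\big( A_{(-1)} B \big)_{n, \bar n}$ (using $\Ber^{(1)}_0(x)=1$), which when moved to the left hand side gives the object we want to compute, and the remaining terms $j = r+1 \geq 1$ are packaged using the identity $\zeta(-r) = -\Ber_{r+1}(1)/(r+1)$ relating the Riemann zeta values at non-positive integers to Bernoulli numbers. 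Meanwhile the left hand side of Theorem \ref{thm: Fourier Borcherds}$(i)$ with $k=-1$, after carefully expanding $\binom{-1}{j} = (-1)^j$, reorganises into the expressions $\sum_{j \leq 0} A_j B_{n-j, \bar n} + \sum_{j > 0} B_{n-j, \bar n} A_j$ by a standard shift of summation variable.

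The main obstacle I anticipate is purely bookkeeping: getting the signs and the summation ranges right when specialising $k=-1$, since binomial coefficients with negative upper argument and the split of the doubly-infinite sum into ``$j \leq 0$'' and ``$j > 0$'' pieces must be tracked carefully, and one must also confirm that the resulting sums converge (i.e. act as well-defined operators on $\hV^{\fl,\alpha}_p$), which follows from the finiteness statement in Lemma \ref{lem: field vertex modes} exactly as in the proof of Theorem \ref{thm: Fourier Borcherds}. The only genuinely non-formal input is the Bernoulli/zeta identity $\zeta(-r) = -\Ber_{r+1}(1)/(r+1)$, which is classical (it is the standard evaluation of $\zeta$ at non-positive integers) and can be invoked directly; I would simply cite it. No new geometric or prefactorisation-algebraic input is needed beyond Theorem \ref{thm: Fourier Borcherds} itself.
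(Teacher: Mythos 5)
Your proposal is correct and follows exactly the paper's own proof: set $k=0$ in Theorem \ref{thm: Fourier Borcherds} to get the commutator formula, and set $m=0$, $k=-1$ together with $\zeta(-r) = -\Ber_{r+1}(1)/(r+1)$ to get the normal ordering formula. One trivial transcription slip: with $k=-1$ the Nørlund factor $\Ber^{(-k)}_j(-k)$ becomes $\Ber^{(1)}_j(1)$, not $\Ber^{(1)}_j(-1)$, which is what your subsequent use of $\Ber_{r+1}(1)$ in the zeta identity in fact assumes.
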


\begin{remark} \label{rem: normal ordering zeta}
The combination of the commutator and normal ordering formulae for Fourier modes in $\hVV^{\fl,\alpha}$ from Proposition \ref{prop: Fourier com} have the following interesting heuristic interpretation.

Replacing $n$ in the first relation of Proposition \ref{prop: Fourier com}$(i)$ by $n - m$, we can rewrite it as
\begin{equation*}
A_m B_{n-m, \bar n} = B_{n-m, \bar n} A_m + \sum_{r \geq 0} \frac{m^r}{r!} \big( A_{(r)} B \big)_{n, \bar n}.
\end{equation*}
Taking the formal sum of the latter over $m \in \ZZ_{> 0}$ and formally ``replacing'' the infinite sum $\sum_{m > 0} m^r$ by the $\zeta$-value $\zeta(-r)$, we see that the right hand side coincides exactly with the last two terms on the right hand side of the second relation in Proposition \ref{prop: Fourier com}$(i)$. Therefore, by virtue of the first relation in Proposition \ref{prop: Fourier com}$(i)$, the right hand side of the second relation in Proposition \ref{prop: Fourier com}$(i)$ can be interpreted as a $\zeta$-function regularisation of the formal infinite sum
$\sum_{m \in \ZZ} A_m B_{n-m, \bar n}$. In other words, we have the formal equality
\begin{equation*}
\big( A_{(-1)} B \big)_{n, \bar n} \;\; \text{``}\!=\!\text{''}\;\; \sum_{m \in \ZZ} A_m B_{n-m, \bar n}
\end{equation*}
where the non-sensical infinite sum on the right hand side, which corresponds to the Fourier $(n, \bar n)^{\rm th}$-mode of the naive product $A\big( \vartheta(q) \big) B\big( \ul{\vartheta(q)} \big)$, is given meaning by bringing all its terms into normal ordered form and using $\zeta$-function regularisation on the resulting divergent sum. The same reasoning applies to the relations in Proposition \ref{prop: Fourier com}$(ii)$.
\end{remark}

We have the following immediate applications of Proposition \ref{prop: Fourier com}.

\begin{corollary} \label{cor: Fourier com L and X}
The Fourier modes of the (anti-)conformal states satisfy
\begin{alignat*}{2}
\big[ L_m, L_n \big] &= (m-n) L_{m+n} + \frac{m^3}{12} c \, \delta_{m+n,0}, &\qquad
\big[ \bar L_m, \bar L_n \big] &= (m-n) \bar L_{m+n} + \frac{m^3}{12} c \, \delta_{m+n,0}
\end{alignat*}
and $[L_m, \bar L_n] = 0$ for $m, n \in \ZZ$.
In the Kac-Moody case we have the non-trivial commutators
\begin{alignat*}{2}
[L_m, \X_n] &= - n \, \X_{m+n}, &\qquad
\big[ \X_m, \Y_n \big] &= [\X, \Y]_{m+n} + m \, \kappa(\X, \Y) \delta_{m+n, 0},\\
[\bar L_m, \bar \X_n] &= - n \, \bar \X_{m+n}, &\qquad
\big[ \bar \X_m, \bar \Y_n \big] &= \overline{[\X, \Y]}_{m+n} + m \, \kappa(\X, \Y) \delta_{m+n, 0}
\end{alignat*}
for any $\X, \Y \in \g$ and $m, n \in \ZZ$. In the $\beta\gamma$ system case we have
\begin{alignat*}{3}
[L_m, \beta_n] &= - n \, \beta_{m+n}, &\qquad
[L_m, \gamma_n] &= - (m+n) \, \gamma_{m+n}, &\qquad
\big[ \beta_m, \gamma_n \big] &= \delta_{m+n, 0},\\
[\bar L_m, \bar \beta_n] &= - n \, \bar \beta_{m+n}, &\qquad
[\bar L_m, \bar \gamma_n] &= - (m+n) \, \bar \gamma_{m+n}, &\qquad
\big[ \bar \beta_m, \bar \gamma_n \big] &= \delta_{m+n, 0}
\end{alignat*}
\begin{proof}
These all follow immediately from Proposition \ref{prop: Fourier com}. For the first set of equations we use in particular the defining property \eqref{Omega positive prod a} of the (anti-)conformal states and Lemma \ref{lem: derivative Fourier}. For the second set of equations we use the identities \eqref{positive modes lvl 1} and the fact that
\begin{equation*}
\Omega_{(n)} \X_{(-1)} \vac = D(\X_{(-1)} \vac) \delta_{n,0} + \X_{(-1)} \vac \delta_{n,1}
\end{equation*}
for all $n \geq 0$, and similarly for the anti-chiral version, cf. \eqref{Ln on X state}. For the third set of equations we use the identities \eqref{beta gamma positive modes lvl 1} and the fact that
\begin{align*}
\Omega_{(n)} \beta_{(-1)} \vac &= D(\beta_{(-1)} \vac) \delta_{n,0} + \beta_{(-1)} \vac \delta_{n,1},\\
\Omega_{(n)} \gamma_{(-1)} \vac &= D(\gamma_{(-1)} \vac) \delta_{n,0}
\end{align*}
for all $n \geq 0$, and similarly for the anti-chiral versions.
\end{proof}
\end{corollary}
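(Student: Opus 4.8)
The plan is to obtain every relation by a direct application of the commutator formulae in Proposition~\ref{prop: Fourier com}. These apply because in each bracket at least one of the two states involved — one of $\Omega$, $\bar\Omega$, $\X_{(-1)}\vac$, $\bar\X_{(-1)}\vac$, $\beta_{(-1)}\vac$, $\bar\beta_{(-1)}\vac$, $\gamma_{(-1)}\vac$, $\bar\gamma_{(-1)}\vac$ — is chiral or anti-chiral. Since all these states are homogeneous, a chiral state $A$ has $A_{n,\bar n}=A_n\delta_{\bar n,0}$ and an anti-chiral one has $A_{n,\bar n}=A_{\bar n}\delta_{n,0}$; this is what lets us pass between the abbreviated chiral Fourier-mode notation ($L_n=\Omega_n$, $\X_n$, $\beta_n$, etc.) and the general modes $A_{n,\bar n}$ appearing in Proposition~\ref{prop: Fourier com}. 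In each case the right-hand side of the commutator formula is a sum over $r\geq0$ of Fourier modes of the positive vertex-mode products $A_{(r)}B$, and the whole computation reduces to identifying these products and then evaluating their Fourier modes using Lemma~\ref{lem: derivative Fourier} for the $D$-terms and $\vac_{n,\bar n}=\delta_{n,0}\delta_{\bar n,0}$ for the vacuum.

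I would begin with the Virasoro relations. For $[L_m,L_n]=[\Omega_m,\Omega_n]$ I apply the first identity of Proposition~\ref{prop: Fourier com}(i) with $A=B=\Omega$ and $\bar n=0$, substituting the products $\Omega_{(r)}\Omega$ read off from the defining relation \eqref{Omega positive prod a}: only $r=0,1,3$ contribute, giving $(D\Omega)_{m+n}$, $2\,\Omega_{m+n}$ and $\tfrac{c}{2}\vac_{m+n}$. Using Lemma~\ref{lem: derivative Fourier} to evaluate $(D\Omega)_{m+n}=-(m+n)\Omega_{m+n}$ and $\vac_{m+n}=\delta_{m+n,0}$, these collapse to $(m-n)L_{m+n}+\tfrac{m^3}{12}c\,\delta_{m+n,0}$. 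The anti-chiral relation is identical using Proposition~\ref{prop: Fourier com}(ii) and \eqref{bOmega positive prod a}, while $[L_m,\bar L_n]=0$ follows because $\Omega$ is chiral and $\bar\Omega$ anti-chiral, so $\Omega_{(r)}\bar\Omega=0$ for all $r\geq0$ (as recalled in \S\ref{sec: conformal states}), making the right-hand side of Proposition~\ref{prop: Fourier com}(i) vanish.

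The Kac-Moody and $\beta\gamma$ relations are handled in the same way. For $[L_m,\X_n]$ I take $A=\Omega$, $B=\X_{(-1)}\vac$; since $\Delta_{\X_{(-1)}\vac}=\Delta_\X=1$, the relations \eqref{Omega positive prod b} together with \eqref{Lk comm with modes KM} (cf.\ \eqref{Ln on X state}) give $\Omega_{(0)}(\X_{(-1)}\vac)=D(\X_{(-1)}\vac)$, $\Omega_{(1)}(\X_{(-1)}\vac)=\X_{(-1)}\vac$ and $\Omega_{(r)}(\X_{(-1)}\vac)=0$ for $r\geq2$, whence $[L_m,\X_n]=-n\,\X_{m+n}$. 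For $[\X_m,\Y_n]$ I take $A=\X_{(-1)}\vac$, $B=\Y_{(-1)}\vac$ and substitute the products $(\X_{(-1)}\vac)_{(r)}(\Y_{(-1)}\vac)$ from \eqref{positive modes lvl 1}, whose $r=0$ and $r=1$ terms produce $[\X,\Y]_{m+n}$ and $m\,\kappa(\X,\Y)\delta_{m+n,0}$. For the $\beta\gamma$ system I use \eqref{Omega positive prod b} with \eqref{Lk comm with modes beta gamma} (noting $\Delta_\beta=1$ and $\Delta_\gamma=0$, so the $r=1$ term is absent in the $\gamma$ case) to get $\Omega_{(r)}(\beta_{(-1)}\vac)$ and $\Omega_{(r)}(\gamma_{(-1)}\vac)$, and \eqref{beta gamma positive modes lvl 1} to get $(\beta_{(-1)}\vac)_{(r)}(\gamma_{(-1)}\vac)=\vac\,\delta_{r,0}$, yielding $[L_m,\beta_n]=-n\,\beta_{m+n}$, $[L_m,\gamma_n]=-(m+n)\gamma_{m+n}$ and $[\beta_m,\gamma_n]=\delta_{m+n,0}$. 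The anti-chiral versions throughout follow identically from Proposition~\ref{prop: Fourier com}(ii) and the barred analogues of all these inputs.

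There is no genuine obstacle here; the computation is a sequence of substitutions into Proposition~\ref{prop: Fourier com}. The only point requiring care is the bookkeeping of which positive vertex-mode products $A_{(r)}B$ are nonzero and their precise values — this hinges on correctly reading off the conformal dimensions $\Delta_A$ that enter \eqref{Omega positive prod b}/\eqref{bOmega positive prod b} and on using the vanishing of $A_{(r)}B$ for $r$ above a bounded value, so that each sum over $r$ in Proposition~\ref{prop: Fourier com} is finite — together with consistently translating between the vertex-mode and Fourier-mode conventions via Lemma~\ref{lem: derivative Fourier}.
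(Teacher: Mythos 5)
Your proposal is correct and follows essentially the same route as the paper: every bracket is obtained by substituting the relevant positive vertex-mode products (from \eqref{Omega positive prod a}, \eqref{Omega positive prod b}, \eqref{positive modes lvl 1}, \eqref{beta gamma positive modes lvl 1} and their anti-chiral analogues) into the commutator formulae of Proposition \ref{prop: Fourier com}, and then evaluating the resulting Fourier modes via Lemma \ref{lem: derivative Fourier} and $\vac_{n,\bar n}=\delta_{n,0}\delta_{\bar n,0}$. Your write-up is simply a more explicit version of the paper's own (very terse) proof, including the correct handling of $[L_m,\bar L_n]=0$ and of the $\Delta_\gamma=0$ case.
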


\subsection{Reality conditions} \label{sec: reality conditions}

In this section we discuss reality conditions on quantum operators defined in Proposition \ref{prop: Fourier series}. Specifically, we introduce a natural notion of adjoint operator, defined by the action of the anti-linear isomorphism $\hat\tau$ of $\U\L^\Sigma_\alpha$ from Proposition \ref{prop: equiv PFac} and show in Proposition \ref{prop: adjoint operator} that it matches the usual adjoint with respect to the Hermitian sesquilinear form introduced in \S\ref{sec: unitarity}.

\begin{lemma} \label{lem: Y equiv}
For any $A \in \hVV^{\fl,\alpha}$ and $p \in \Sigma^\prime$ we have $\hat \tau \big( A^{\ii \vartheta}_p \big) = (-1)^{\Delta_A + \bar \Delta_A} (\hat \tau A)^{\ii \vartheta}_{\tau(p)}$.
\begin{proof}
Using Lemma \ref{lem: tau Au} and the fact that $\hat\tau(\ii \vartheta) = - \ii \vartheta$ which follows from \eqref{vartheta property}, we deduce $\hat \tau \big( A^{\ii \vartheta}_p \big) = (\hat \tau A)^{- \ii \vartheta}_{\tau(p)}$. By the explicit expression \eqref{cal Y map defined b} for $A^{- \ii \vartheta}_{\tau(p)}$, using the fact that
\begin{align*}
(\tau \a)^i_{-\ii \vartheta_{\tau(p)}} \otimes \big\lceil (-\ii \vartheta_{\tau(p)})^{-m_i} \big\rceil^{U_{i-1}}_{U_i} &= (-1)^{m_i - \Delta_\a + 1} (\tau \a)^i_{\ii \vartheta_{\tau(p)}} \otimes \big\lceil (\ii \vartheta_{\tau(p)})^{-m_i} \big\rceil^{U_{i-1}}_{U_i}, \\
(\tau \b)^j_{\ii \bar\vartheta_{\tau(p)}} \otimes \big\lceil (\ii \bar \vartheta_{\tau(p)})^{-n_j} \big\rceil^{V_{j-1}}_{V_j} &= (-1)^{n_j - \Delta_\b + 1} (\tau \b)^j_{-\ii \bar\vartheta_{\tau(p)}} \otimes \big\lceil (- \ii \bar \vartheta_{\tau(p)})^{-n_j} \big\rceil^{V_{j-1}}_{V_j}
\end{align*}
for $i \in \{1, \ldots, r\}$ and $j \in \{1, \ldots, \bar r \}$, the result now follows.
\end{proof}
\end{lemma}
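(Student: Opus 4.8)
The plan is to reduce everything to Lemma~\ref{lem: tau Au}, which already provides the coordinate-dependent statement $\hat\tau(A^\xi_p) = (\hat\tau A)^{\hat\tau\xi}_{\tau(p)}$, and then to carefully track how the two conjugated coordinates $-\ii\vartheta$ and $+\ii\vartheta$ are related on $\Sigma^\prime$. By linearity it suffices to treat a monomial state $A$ of the form \eqref{gen state Vkx}; the chiral and anti-chiral tensor factors are handled identically up to conjugating the coordinate, so one only needs to follow the bookkeeping on one factor carefully. First I would note that the local coordinate $\ii\vartheta$ on a simply connected $U\subset\Sigma^\prime$ satisfies $\hat\tau(\ii\vartheta) = \overline{\ii\vartheta\circ\tau} = -\ii\,\overline{\vartheta\circ\tau} = -\ii\bar\vartheta \circ \tau^2$; using \eqref{vartheta property}, i.e. $\bar\vartheta = \vartheta\circ\tau$, and that $\tau$ is an involution, this gives $\hat\tau(\ii\vartheta) = -\ii\vartheta$. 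Hence Lemma~\ref{lem: tau Au} applied with $\xi = \ii\vartheta$ yields immediately $\hat\tau(A^{\ii\vartheta}_p) = (\hat\tau A)^{-\ii\vartheta}_{\tau(p)}$.

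The remaining task is then to compare $(\hat\tau A)^{-\ii\vartheta}_{\tau(p)}$ with $(\hat\tau A)^{\ii\vartheta}_{\tau(p)}$, i.e. to implement the change of shifted coordinate $-\ii\vartheta_{\tau(p)} = -(\ii\vartheta_{\tau(p)})$, which is simply multiplication by $-1$. Using the explicit formula \eqref{cal Y map defined b} for a state prepared at a point, each chiral factor of $(\hat\tau A)^{-\ii\vartheta}_{\tau(p)}$ is of the form $(\tau\a)^i_{-\ii\vartheta_{\tau(p)}}\otimes \lceil(-\ii\vartheta_{\tau(p)})^{-m_i}\rceil^{U_{i-1}}_{U_i}$, and I would rewrite this in terms of the coordinate $\ii\vartheta_{\tau(p)}$. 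The power $(-\ii\vartheta_{\tau(p)})^{-m_i} = (-1)^{-m_i}(\ii\vartheta_{\tau(p)})^{-m_i}$ contributes a sign $(-1)^{m_i}$ (since $(-1)^{-m_i}=(-1)^{m_i}$), the bump function factor $\lceil\cdot\rceil$ is unaffected by the sign change of the coordinate in the sense that $\rho^{U_{i-1}}_{U_i}$ is still a valid bump function, and crucially the rescaling $(\cdot)_{-\ii\vartheta_{\tau(p)}}\to(\cdot)_{\ii\vartheta_{\tau(p)}}$ on the vector space $\fl$-factor produces a further sign governed by the conformal dimension: by the homogeneity/conformal-dimension convention of \S\ref{sec: hom vertex modes}, rescaling the coordinate by $\lambda=-1$ rescales the basis vector $\a^i$ by $\lambda^{-\Delta_{\a^i}+1}=(-1)^{-\Delta_{\a^i}+1}=(-1)^{\Delta_{\a^i}+1}$. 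The net sign on the $i$-th chiral factor is therefore $(-1)^{m_i}(-1)^{\Delta_{\a^i}+1}=(-1)^{m_i+\Delta_{\a^i}-1}$, and similarly $(-1)^{n_j+\Delta_{\b^j}-1}$ on the $j$-th anti-chiral factor, exactly as displayed in the two intermediate identities stated in the Lemma.

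Multiplying these signs over all $r$ chiral and $\bar r$ anti-chiral factors gives a total sign $(-1)^{\sum_i(m_i+\Delta_{\a^i}-1)+\sum_j(n_j+\Delta_{\b^j}-1)} = (-1)^{\Delta_A+\bar\Delta_A}$, using the definitions $\Delta_A=\sum_{i=1}^r(\Delta_{\a^i}+m_i-1)$ and $\bar\Delta_A=\sum_{j=1}^{\bar r}(\Delta_{\b^j}+n_j-1)$ from \S\ref{sec: hom vertex modes}. Hence $(\hat\tau A)^{-\ii\vartheta}_{\tau(p)} = (-1)^{\Delta_A+\bar\Delta_A}(\hat\tau A)^{\ii\vartheta}_{\tau(p)}$, which combined with the first displayed equality finishes the proof; one then extends to general $A$ by linearity, and observes that $\hat\tau$ preserves conformal dimensions so the sign is well defined on the graded pieces. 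The only point requiring genuine care — and thus the main potential obstacle — is the sign arising from the $\fl$-rescaling: one must verify that the change-of-coordinate isomorphism $(\cdot)_{\eta}\to(\cdot)_{\xi}$ of \S\ref{sec: general setup}, specialised to $\xi=-\eta$, really acts as $(-1)^{-\Delta_\a+1}$ on each basis element, consistently across the three examples (Kac--Moody: $\Delta_\X=1$, so sign $=1$; Virasoro: $\Delta_\Omega=2$, so sign $=-1$, matching $\Omega\mapsto-\Omega$ under $\partial_\xi\mapsto-\partial_\xi$; $\beta\gamma$: $\Delta_\beta=1$ giving $+1$ for $\beta\mapsto\beta$ under $1\mapsto1$, and $\Delta_\gamma=0$ giving $(-1)$ for $\d\xi\mapsto-\d\xi$). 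This case-check is routine given the data assembled in \S\ref{sec: main examples} and \S\ref{sec: hom vertex modes}.
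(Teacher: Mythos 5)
Your proposal is correct and follows essentially the same route as the paper: apply Lemma \ref{lem: tau Au} with $\hat\tau(\ii\vartheta)=-\ii\vartheta$ (from \eqref{vartheta property}), then flip the sign of the shifted coordinate in \eqref{cal Y map defined b}, picking up $(-1)^{m_i+\Delta_{\a^i}-1}$ per chiral factor and $(-1)^{n_j+\Delta_{\b^j}-1}$ per anti-chiral factor, which sum to $(-1)^{\Delta_A+\bar\Delta_A}$. Your explicit decomposition of each sign into the $(-1)^{m_i}$ from the coordinate power and the $(-1)^{\Delta_{\a^i}+1}$ from the fibre rescaling $\a_{-\xi}=(-1)^{\Delta_\a-1}\a_\xi$, verified in the three examples, is exactly what the paper leaves implicit.
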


Given any $\mathcal O \in \U\L^\Sigma_\alpha(W)$ on an annulus shaped open subset $W \subset \Sigma^\prime$ around $o \in \Sigma$ in the coordinate $u$,
we say that it acts on $\hV^{\fl,\alpha}_o$ if it induces an endomorphism
\begin{equation*}
\mathcal O : \hV^{\fl,\alpha}_o \longrightarrow \hV^{\fl,\alpha}_o, \qquad
B_o^u \longmapsto (\mathcal O \, B)^u_o
\end{equation*}
where the state $\mathcal O \, B \in \hVV^{\fl,\alpha}$ is defined by the factorisation product $m_{(W, o), U} ( \mathcal O \otimes B^u_o) = ( \mathcal O \, B )^u_U$ into a larger open subset $U \subset \Sigma$ of $o \in \Sigma$. We then obtain an endomorphism $\mathcal O : \hVV^{\fl,\alpha} \to \hVV^{\fl,\alpha}$. For instance, the vertex modes $A^u_{(n,\bar n)}$ for $n ,\bar n \in \ZZ$ of any state $A \in \hVV^{\fl,\alpha}$ act on $\hV^{\fl,\alpha}_o$ by Lemma \ref{lem: field vertex modes}.
We define the adjoint of $\mathcal O \in \U\L^\Sigma_\alpha(W)$ as
\begin{equation} \label{adjoint operator}
\mathcal O^\dag \coloneqq \hat\tau(\mathcal O) \in \U\L^\Sigma_\alpha\big( \tau(W) \big),
\end{equation}
where we can also view $\tau(W) \subset \Sigma^\prime$ as an annulus shaped open subset around $o \in \Sigma$ in the coordinate $u$. In particular, $\mathcal O^\dag$ then also acts on $\hV^{\fl,\alpha}_o$.

Recall the quantum operator $A\big( \underline{\vartheta(p)} \big) \in \U\L^\Sigma_\alpha(W)$ associated to a state $A \in \hVV^{\fl,\alpha}$, as defined in Proposition \ref{prop: Fourier series}. Its adjoint is given by the following.

\begin{proposition} \label{prop: Fourier equiv}
For a monomial state $A \in \hVV^{\fl,\alpha}$ we have
\begin{equation} \label{adjoint quantum operator}
A\big( \underline{\vartheta(p)} \big)^\dag = (-1)^{\Delta_A + \bar\Delta_A} (\hat\tau A)\big( \underline{\bar \vartheta(p)} \big) \in \U\L^\Sigma_\alpha\big( \tau(W) \big),
\end{equation}
or equivalently, for any $n, \bar n \in \ZZ$ we have $A_{n, \bar n}^\dag = (-1)^{\Delta_A + \bar\Delta_A} (\hat \tau A)_{- n, - \bar n}$.
\begin{proof}
By definition \eqref{adjoint operator} of the adjoint we have
\begin{align*}
A\big( \underline{\vartheta(p)} \big)^\dag &= \hat\tau \Big( A\big( \underline{\vartheta(p)} \big) \Big) = \hat\tau \Big( \mathcal Y\big( A^{\ii \vartheta}_p, \underline{u(p)} \big) \Big) = \hat\tau \big( m_{p, W}( A^{\ii \vartheta}_p ) \big)\\
&= m_{\tau(p), \tau(W)} \big( \hat\tau A^{\ii \vartheta}_p \big) = (-1)^{\Delta_A + \bar \Delta_A} m_{\tau(p), \tau(W)} \big( (\hat \tau A)^{\ii \vartheta}_{\tau(p)} \big)\\
&= (-1)^{\Delta_A + \bar \Delta_A} \mathcal Y\big( (\hat\tau A)^{\ii \vartheta}_{\tau(p)}, \underline{u(\tau(p))} \big)
= (-1)^{\Delta_A + \bar\Delta_A} (\hat \tau A)\big( \underline{\vartheta(\tau(q))} \big)
\end{align*}
where the fourth step we uses Proposition \ref{prop: equiv PFac} and the fifth step is by Lemma \ref{lem: Y equiv}. The result \eqref{adjoint quantum operator} now follows by using the property \eqref{vartheta property}. Using the definition of the Fourier modes in Proposition \ref{prop: Fourier series} to rewrite both sides of the last expression we find
\begin{equation*}
\sum_{n, \bar n \in \ZZ} A^\dag_{n, \bar n} \, e^{\ii n \bar \vartheta(q)} e^{-\ii \bar n \vartheta(q)} = (-1)^{\Delta_A + \bar\Delta_A} \sum_{n, \bar n \in \ZZ} (\hat\tau A)_{n, \bar n} \, e^{-\ii n \vartheta(\tau(q))} e^{\ii \bar n \bar \vartheta(\tau(q))},
\end{equation*}
where on the left hand side we applied the anti-linear map $(\cdot)^\dag$ defined in \eqref{adjoint operator} to each term of the Fourier series from Proposition \ref{prop: Fourier series}. The final result about the Fourier modes now follows from comparing both sides of the above using the property \eqref{vartheta property}.
\end{proof}
\end{proposition}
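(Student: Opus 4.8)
The plan is to derive Proposition~\ref{prop: Fourier equiv} directly from the already-established equivariance statements, keeping everything at the level of prefactorisation algebra manipulations rather than explicit Fourier computations. First I would unwind the definition of the adjoint: by \eqref{adjoint operator}, $A\big(\underline{\vartheta(p)}\big)^\dag = \hat\tau\big( A\big(\underline{\vartheta(p)}\big) \big)$, and by Proposition~\ref{prop: Fourier series} together with Proposition~\ref{prop: vertex operator general} this equals $\hat\tau\big( m_{p,W}(A^{\ii\vartheta}_p) \big)$. The point is that the anti-linear isomorphism of prefactorisation algebras from Proposition~\ref{prop: equiv PFac} intertwines the factorisation product $m_{p,W}$ on $\U\L^\Sigma_\alpha$ with the factorisation product $m_{\tau(p),\tau(W)}$ on $\tau^\ast\U\L^\Sigma_\alpha$ (this is exactly the content of the commuting square \eqref{hat tau limit at p diag} and the identification $m^{\tau^\ast\U\L^\Sigma_\alpha}_{U,V} = m^{\U\L^\Sigma_\alpha}_{\tau(U),\tau(V)}$ established just before Lemma~\ref{lem: tau Au}). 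So $\hat\tau\big( m_{p,W}(A^{\ii\vartheta}_p) \big) = m_{\tau(p),\tau(W)}\big( \hat\tau(A^{\ii\vartheta}_p) \big)$.

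The second ingredient is Lemma~\ref{lem: Y equiv}, which tells us precisely how $\hat\tau$ acts on a state prepared in the coordinate $\ii\vartheta$: $\hat\tau(A^{\ii\vartheta}_p) = (-1)^{\Delta_A + \bar\Delta_A}(\hat\tau A)^{\ii\vartheta}_{\tau(p)}$ for a monomial state $A$ of definite bigrade. Substituting this in and applying Proposition~\ref{prop: vertex operator general} again (now at the point $\tau(p)$ with annulus $\tau(W)$ around $o$, using $\tau(o)=o'$ and the remark after Lemma~\ref{lem: iso correlation} that $\tau(W)$ is again an annulus shaped open subset around $o$), we get $m_{\tau(p),\tau(W)}\big( (\hat\tau A)^{\ii\vartheta}_{\tau(p)} \big) = \mathcal Y\big( (\hat\tau A)^{\ii\vartheta}_{\tau(p)}, \underline{u(\tau(p))} \big) = (\hat\tau A)\big(\underline{\vartheta(\tau(p))}\big)$, the quantum operator associated to $\hat\tau A$ evaluated at $\tau(p)$. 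Finally, the property \eqref{vartheta property}, namely $\bar\vartheta = \vartheta\circ\tau$, lets us rewrite $\vartheta(\tau(p)) = \bar\vartheta(p)$ (and correspondingly $\bar\vartheta(\tau(p)) = \vartheta(p)$), which yields the displayed formula \eqref{adjoint quantum operator}.

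For the equivalent statement about Fourier modes, I would expand both sides of \eqref{adjoint quantum operator} in terms of the mode series from Proposition~\ref{prop: Fourier series}. On the left, $A\big(\underline{\vartheta(p)}\big)^\dag = \sum_{n,\bar n} A^\dag_{n,\bar n}\, e^{-\ii n\vartheta(\tau(p))} e^{\ii\bar n\bar\vartheta(\tau(p))}$ after applying the anti-linear map $(\cdot)^\dag$ termwise; the subtlety here is that, since $(\cdot)^\dag$ is anti-linear, it acts by complex conjugation on the $c$-number coefficients $e^{\mp\ii n\vartheta(p)}$, and the conjugate of $e^{-\ii n\vartheta(q)}$ is $e^{\ii n\bar\vartheta(q)}$ precisely by \eqref{vartheta property}—this is what produces the index flip $(n,\bar n)\mapsto(-n,-\bar n)$. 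Matching powers of $e^{\pm\ii\vartheta(p)}$ on the two sides and again using $\vartheta(\tau(p))=\bar\vartheta(p)$ gives $A^\dag_{n,\bar n} = (-1)^{\Delta_A+\bar\Delta_A}(\hat\tau A)_{-n,-\bar n}$.

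The main obstacle, such as it is, is bookkeeping rather than mathematics: one has to be scrupulously careful about where the anti-linearity of $\hat\tau$ and of $(\cdot)^\dag$ acts on the exponential coefficients, and to keep track consistently of the two coordinates $\vartheta$ and $\bar\vartheta$ swapping under $\tau$ via \eqref{vartheta property}. Once Lemma~\ref{lem: Y equiv} and Proposition~\ref{prop: equiv PFac} are invoked, no new analysis is required—the proof is a short chain of substitutions, and the only place an error could creep in is a sign or a conjugation in the final mode comparison.
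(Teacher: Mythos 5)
Your proposal is correct and follows essentially the same route as the paper's own proof: unwind the adjoint via \eqref{adjoint operator}, use the equivariance of Proposition \ref{prop: equiv PFac} to move $\hat\tau$ past the factorisation product, invoke Lemma \ref{lem: Y equiv} for the sign $(-1)^{\Delta_A+\bar\Delta_A}$, and finish with \eqref{vartheta property} and a termwise comparison of the Fourier series, including the observation that the anti-linearity of $(\cdot)^\dag$ conjugates the exponential coefficients and produces the index flip $(n,\bar n)\mapsto(-n,-\bar n)$. No gaps; the argument matches the paper step for step.
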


The above notation $(\cdot)^\dag$ for the adjoint of Fourier modes is consistent with the anti-linear anti-involution $(\cdot)^\dag$ on the modes of $\hg \oplus \hbg$ introduced in the Kac-Moody case in \eqref{check tau def}. 

Indeed, recall from after \eqref{quantum fields X bar X} that the Fourier modes $\X_n$ and $\bar \X_n$ coincide with the vertex modes \eqref{X bar X vertex modes} at $o \in \Sigma$ in the $u$ coordinate for all $n \in \ZZ$, which in turn have the same action as modes in $\hg$ and $\hbg$, respectively, when acting on states prepared at $o \in \Sigma$ in the $u$ coordinate by \eqref{endo Xn bar Xn}. Also, the commutation relations of the Fourier modes in Corollary \ref{cor: Fourier com L and X} coincide with the Lie algebra relations \eqref{KM algebra vertex modes def} in $\hg \oplus \hbg$, with the central elements ${\ms k}$ and $\bar{\ms k}$ both replaced by $1$.
Now the definition \eqref{check tau def} implies that $( \X_n )^\dag = - (\tau \X)_{-n}$ and $( \bar{\Y}_{\bar n} )^\dag = - (\overline{\tau \Y})_{-\bar n}$, which are equivalent to the statement of Proposition \ref{prop: Fourier equiv} for the states $\X_{(-1)} \vac$ and $\bar Y_{(-1)} \vac$.
Furthermore, the property from Proposition \ref{prop: pairing Kac property} of the Hermitian sesquilinear form on $\hVV^{\fl,\alpha} \cong U(\hg \oplus \hbg) \vac$ defined in \eqref{Hermitian form} follows from the next two propositions.

\begin{proposition} \label{prop: adjoint operator}
Suppose $\mathcal O \in \U\L^\Sigma_\alpha(W)$, for an annulus shaped open subset $W \subset \Sigma^\prime$ around $o \in \Sigma$ in the coordinate $u$, acts on $\hV^{\fl,\alpha}_o$. Then for any $B, C \in \hVV^{\fl,\alpha}$ we have
\begin{equation*}
\langle B, \mathcal O \, C \rangle = \langle \mathcal O^\dag B, C \rangle.
\end{equation*}
In particular, for all states $A, B, C \in \hVV^{\fl,\alpha}$ we have
$\big\langle B, A(\underline{\zeta}) C \big\rangle = \big\langle A(\underline{\zeta})^\dag B, C \big\rangle$ or, equivalently, we have $\langle B, A_{n, \bar n} C \rangle = \langle A^\dag_{n, \bar n} B, C \rangle$ for any $n, \bar n \in \ZZ$.
\begin{proof}
This is very similar to the proof of Proposition \ref{prop: inv bilinear form}. For any $B, C \in \hVV^{\fl,\alpha}$, we compute
\begin{equation*}
\Big\langle m_{(o', W, o), \Sigma}\big( (\hat\tau B)^{u^\tmo}_{o'} \otimes \mathcal O \otimes C^u_o \big) \Big\rangle \in \CC
\end{equation*}
in two different ways using the associativity \eqref{PFA commutativity} of the factorisation product.

On the one hand, by definition of the action of $\mathcal O$ on $\hV^{\fl,\alpha}_o$ we have
\begin{equation*}
\Big\langle m_{(o', U), \Sigma}\Big( (\hat\tau B)^{u^\tmo}_{o'} \otimes m_{(W, o), U}\big( \mathcal O \otimes C^u_o \big) \Big) \Big\rangle = \Big\langle m_{(o', U), \Sigma}\Big( (\hat\tau B)^{u^\tmo}_{o'} \otimes ( \mathcal O \, C )^u_U \Big) \Big\rangle
\end{equation*}
for some open subset $o \in U \subset \Sigma \setminus \{ o' \}$. By definition \eqref{Hermitian form} of the Hermitian sesquilinear form on $\hVV^{\fl,\alpha}$, the above is simply $\langle B, \mathcal O \, C \rangle$. 

On the other hand, since $\mathcal O = \hat\tau (\mathcal O^\dag)$ we can compute the same factorisation product as
\begin{equation*}
\Big\langle m_{(o', W, o), \Sigma}\big( \hat\tau (B^u_o) \otimes \hat\tau ( \mathcal O^\dag ) \otimes C^u_o \big) \Big\rangle = \Big\langle m_{(U', o), \Sigma}\Big( \big( \hat\tau ( \mathcal O^\dag B) \big)^{u^\tmo}_{U'} \otimes C^u_o \Big) \Big\rangle
\end{equation*}
where in the second expression we introduced an open $o' \in U' \subset \Sigma \setminus \{ o \}$. By definition \eqref{Hermitian form} of the Hermitian sesquilinear form on $\hVV^{\fl,\alpha}$, the above is just $\langle \mathcal O^\dag B, C \rangle$, hence the result.
\end{proof}
\end{proposition}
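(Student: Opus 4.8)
The plan is to mimic the proof of Proposition \ref{prop: inv bilinear form}, working with three insertion points $o', o \in \Sigma$ together with an intermediate annulus shaped open subset $W$. Given $B, C \in \hVV^{\fl,\alpha}$ and the operator $\mathcal O \in \U\L^\Sigma_\alpha(W)$, I would form the single factorisation product
\begin{equation*}
\Big\langle m_{(o', W, o), \Sigma}\big( (\hat\tau B)^{u^\tmo}_{o'} \otimes \mathcal O \otimes C^u_o \big) \Big\rangle \in \CC,
\end{equation*}
where $\langle \cdot \rangle$ is the isomorphism from Lemma \ref{lem: iso correlation} (here is where we use that we are in the Kac-Moody or Virasoro case), and then evaluate it in two ways using the associativity \eqref{PFA commutativity} of the factorisation product.

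First I would group $W$ with the insertion at $o$: by the very definition of the action of $\mathcal O$ on $\hV^{\fl,\alpha}_o$, namely $m_{(W, o), U}(\mathcal O \otimes C^u_o) = (\mathcal O\, C)^u_U$ for some open $o \in U \subset \Sigma \setminus \{o'\}$ disjoint from $o'$, the above becomes $\big\langle m_{(o', U), \Sigma}\big( (\hat\tau B)^{u^\tmo}_{o'} \otimes (\mathcal O\, C)^u_U \big) \big\rangle$, which by definition \eqref{Hermitian form} of the Hermitian sesquilinear form is precisely $\langle B, \mathcal O\, C \rangle$. Second I would instead use $\mathcal O = \hat\tau(\mathcal O^\dag)$ from \eqref{adjoint operator} and group $W$ with the insertion at $o'$: since $\hat\tau$ is an anti-linear isomorphism of prefactorisation algebras by Proposition \ref{prop: equiv PFac}, and since $(\hat\tau B)^{u^\tmo}_{o'} = \hat\tau(B^u_o)$ by Lemma \ref{lem: tau Au} together with $\tau(o) = o'$ and $\hat\tau u = u^{-1}$ from \eqref{u reality condition}, the product $m_{(o', W), U'}\big( \hat\tau(B^u_o) \otimes \hat\tau(\mathcal O^\dag) \big)$ equals $\hat\tau\big( m_{(o, \tau(W)), \tau(U')}(B^u_o \otimes \mathcal O^\dag) \big) = \big( \hat\tau(\mathcal O^\dag B) \big)^{u^\tmo}_{U'}$ for some open $o' \in U' \subset \Sigma \setminus \{o\}$; feeding this into the remaining factorisation product with $C^u_o$ and applying definition \eqref{Hermitian form} once more yields $\langle \mathcal O^\dag B, C \rangle$. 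Equating the two evaluations gives $\langle B, \mathcal O\, C \rangle = \langle \mathcal O^\dag B, C \rangle$.

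For the `in particular' statement I would apply this with $\mathcal O = A\big( \underline{\vartheta(p)} \big) \in \U\L^\Sigma_\alpha(W)$, the quantum operator of Proposition \ref{prop: Fourier series}, which acts on $\hV^{\fl,\alpha}_o$ by (the $u$-coordinate version of) Lemma \ref{lem: field vertex modes}; then $\mathcal O^\dag = A\big( \underline{\vartheta(p)} \big)^\dag$ by \eqref{adjoint operator}, giving $\big\langle B, A(\underline{\zeta}) C \big\rangle = \big\langle A(\underline{\zeta})^\dag B, C \big\rangle$ after renaming $u(p) \to \zeta$. Extracting the coefficient of a fixed power $\zeta^{-n-1} \bar\zeta^{-\bar n-1}$ on both sides — using the mode expansions from Proposition \ref{prop: Fourier series} on the right and noting that $(\cdot)^\dag$ is anti-linear, so complex-conjugate powers are matched — yields $\langle B, A_{n,\bar n} C \rangle = \langle A^\dag_{n, \bar n} B, C \rangle$ for all $n, \bar n \in \ZZ$, with $A^\dag_{n,\bar n}$ as computed in Proposition \ref{prop: Fourier equiv}.

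The only genuinely delicate point, rather than a routine bookkeeping matter, is the second evaluation: one must check that $W$ can be chosen simultaneously to be an annulus around $o$ (so that $\mathcal O$ acts on $\hV^{\fl,\alpha}_o$ in the first computation) and to lie inside an open $U' \ni o'$ disjoint from $o$ with $\tau(W) \subset U'$ also an annulus around $o$ (so that the second computation makes sense), and that the three opens $o' \in U' \subset \Sigma \setminus \{o\}$, $W$, and $\{o\}$ are pairwise configured as required by $\Top(\Sigma)^\sqcup$. Since $\tau$ exchanges small annuli around $o'$ with small annuli around $o$ and fixes neither, shrinking $W$ appropriately (and using that $o, o'$ are antipodal) makes all these compatibility conditions hold; this is the place where the geometry of the $\ZZ_2$-action really enters. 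Everything else is a direct application of associativity of the factorisation product, the equivariance of $\hat\tau$, and the definitions already in place.
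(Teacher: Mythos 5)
Your argument is correct and essentially identical to the paper's proof: the same three-point factorisation product $\big\langle m_{(o', W, o), \Sigma}\big( (\hat\tau B)^{u^\tmo}_{o'} \otimes \mathcal O \otimes C^u_o \big) \big\rangle$ evaluated via the two groupings, using associativity \eqref{PFA commutativity}, the $\ZZ_2$-equivariance of Proposition \ref{prop: equiv PFac}, Lemma \ref{lem: tau Au} and the definitions \eqref{Hermitian form}, \eqref{adjoint operator}, with the mode statement then extracted as you describe. The only caveat concerns your closing paragraph: no shrinking of $W$ is needed (nor available, since $\mathcal O$ is given on the fixed annulus $W$) — for any annulus $W$ around $o$ in the coordinate $u$ one simply realises the states at $o$ and $o'$ on small discs inside and outside $W$ and takes $U$, $U'$ to be correspondingly large discs around $o$ and $o'$, and since $\tau$ sends this configuration to another admissible one (it maps $R^- < |u| < R^+$ to $1/R^+ < |u| < 1/R^-$), the compatibility you single out as delicate holds automatically.
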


\begin{proposition}
For any $A, B \in \hVV^{\fl,\alpha}$ and $m, \bar m, n, \bar n \in \ZZ$ we have
\begin{equation*}
(A_{m, \bar m} B_{n, \bar n})^\dag = B_{n, \bar n}^\dag A_{m, \bar m}^\dag.
\end{equation*}
In other words, taking the adjoint \eqref{adjoint operator} is an anti-linear \emph{anti}-involution on Fourier modes.
\begin{proof}
Recall that the product of Fourier modes, described explicitly at the start of \S\ref{sec: Borcherds Fourier}, is induced by the factorisation product of $\U\L^\Sigma_\alpha$. By Proposition \ref{prop: equiv PFac}, for any inclusion $U \sqcup V \subset W$ of disjoint annuli shaped open subsets around $o \in \Sigma$ in the $u$ coordinate into another larger such annuli shaped open subset $W \subset \Sigma^\prime$, we have the commutative diagram
\begin{equation*}
\begin{tikzcd}[column sep=30mm]
\U\L^\Sigma_\alpha(U) \otimes \U\L^\Sigma_\alpha(V) \arrow[r, "m_{(U,V), W}"] \arrow[d, "\hat\tau \,\otimes\, \hat\tau"'] & \U\L^\Sigma_\alpha(W) \arrow[d, "\hat\tau"] \\
\U\L^\Sigma_\alpha \big(\tau(U) \big) \otimes \U\L^\Sigma_\alpha \big(\tau(V) \big) \arrow[r, "m_{(\tau(U), \tau(V)), \tau(W)}"'] & \U\L^\Sigma_\alpha \big(\tau(W) \big)
\end{tikzcd}
\end{equation*}
The result now follows since $\tau$ reverses the ordering along the cylinder $\Sigma^\prime$ of annuli shaped open subsets around $o \in \Sigma$ in the $u$ coordinate.
\end{proof}
\end{proposition}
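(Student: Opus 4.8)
The plan is to prove that taking adjoints reverses the order of Fourier modes, i.e. $(A_{m,\bar m} B_{n,\bar n})^\dag = B_{n,\bar n}^\dag A_{m,\bar m}^\dag$, by tracing the statement back to the functoriality of the anti-linear isomorphism $\hat\tau$ established in Proposition \ref{prop: equiv PFac}. First I would recall, as set up at the beginning of \S\ref{sec: Borcherds Fourier}, that a product of Fourier modes acting on $\hV^{\fl,\alpha}_o$ is realised geometrically through a factorisation product $m_{(U,V),W}$ of $\U\L^\Sigma_\alpha$, where $U$ and $V$ are disjoint annulus shaped open subsets around $o \in \Sigma$ in the $u$ coordinate, ordered along the cylinder $\Sigma^\prime$ so that the upper annulus carries the leftmost operator. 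Concretely, if $A_{m,\bar m}$ and $B_{n,\bar n}$ are represented by elements $\mathcal O_A \in \U\L^\Sigma_\alpha(U)$ and $\mathcal O_B \in \U\L^\Sigma_\alpha(V)$ with $V$ below $U$, then the composite $A_{m,\bar m} B_{n,\bar n}$ is represented by $m_{(U,V),W}(\mathcal O_A \otimes \mathcal O_B) \in \U\L^\Sigma_\alpha(W)$, and its adjoint is $\hat\tau$ applied to this element.

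Next I would invoke Proposition \ref{prop: equiv PFac}, which says $\hat\tau : \U\L^\Sigma_\alpha \SimTo \tau^\ast \U\L^\Sigma_\alpha$ is an (anti-linear) isomorphism of prefactorisation algebras. Unwinding what this means at the level of a single binary factorisation product yields exactly the commutative square
\begin{equation*}
\begin{tikzcd}[column sep=28mm]
\U\L^\Sigma_\alpha(U) \otimes \U\L^\Sigma_\alpha(V) \arrow[r, "m_{(U,V),W}"] \arrow[d, "\hat\tau \,\otimes\, \hat\tau"'] & \U\L^\Sigma_\alpha(W) \arrow[d, "\hat\tau"]\\
\U\L^\Sigma_\alpha\big(\tau(U)\big) \otimes \U\L^\Sigma_\alpha\big(\tau(V)\big) \arrow[r, "m_{(\tau(U),\tau(V)),\tau(W)}"'] & \U\L^\Sigma_\alpha\big(\tau(W)\big)
\end{tikzcd}
\end{equation*}
so that $\hat\tau\big( m_{(U,V),W}(\mathcal O_A \otimes \mathcal O_B) \big) = m_{(\tau(U),\tau(V)),\tau(W)}\big( \hat\tau(\mathcal O_A) \otimes \hat\tau(\mathcal O_B) \big)$. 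By definition \eqref{adjoint operator}, $\hat\tau(\mathcal O_A) = \mathcal O_A^\dag$ represents $A_{m,\bar m}^\dag$ and $\hat\tau(\mathcal O_B) = \mathcal O_B^\dag$ represents $B_{n,\bar n}^\dag$, each living on the reflected annulus. The remaining point is that $\tau$, being the orientation reversing involution from \S\ref{sec: Fourier modes def} acting by complex conjugation on $\Sigma^\prime = \CC/2\pi\ZZ$ and swapping $o \leftrightarrow o'$, reverses the top-to-bottom ordering of annuli shaped open subsets around $o \in \Sigma$: if $V$ was below $U$ on the cylinder then $\tau(V)$ is above $\tau(U)$. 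Hence the right-hand factorisation product $m_{(\tau(U),\tau(V)),\tau(W)}(\mathcal O_B^\dag \otimes \mathcal O_A^\dag)$ — read with the reversed geometric ordering — is precisely the composite $B_{n,\bar n}^\dag A_{m,\bar m}^\dag$, which gives the claim.

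I do not anticipate a genuine obstacle here; this is a short functoriality argument and its proof in the excerpt already sketches exactly these steps. The only mild subtlety worth being careful about is the bookkeeping of the left-to-right operator ordering versus the top-to-bottom geometric ordering of annuli, together with the fact that $\tau$ exchanges the two poles $o$ and $o'$ — one should check that after applying $\tau$ one can still regard $\tau(W)$ as an annulus around $o$ in the $u$ coordinate, which is exactly the observation made after \eqref{adjoint operator} using the reality property \eqref{u reality condition} (equivalently \eqref{vartheta property}). Everything else is a direct transcription of the commutative diagram above, with anti-linearity of $\hat\tau$ on each tensor factor causing no trouble since we are only composing two modes.
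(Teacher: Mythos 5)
Your argument is correct and follows essentially the same route as the paper's proof: apply the naturality of $\hat\tau$ from Proposition \ref{prop: equiv PFac} to a single binary factorisation product to get the commutative square, then conclude from the fact that $\tau$ reverses the top-to-bottom ordering of annuli shaped open subsets around $o$ along the cylinder. Your extra remark that $\tau(W)$ may still be regarded as an annulus around $o$ in the $u$ coordinate (via \eqref{u reality condition}) is exactly the observation the paper makes right after \eqref{adjoint operator}, so nothing is missing.
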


\appendix

\section{Twisted prefactorisation envelopes} \label{sec: twisted PF env}

\subsection{Unital \dg{} Lie algebras} \label{sec: udgLie}

Following \cite[Example 2.9]{Bruinsma:2018knq}, let $\uLie$ be the \dg{} operad generated by an element $b \in \uLie(2)$ in degree $0$ and an element $u \in \uLie(0)$ in degree $1$, which we represent graphically as trees
\begin{equation*}
b \; \longleftrightarrow \; \raisebox{-3mm}{\begin{tikzpicture}[scale=.5pt]
\draw (.293,.293) -- (1,1);
\draw (1.707,.293) -- (1,1);
\draw (1,1) -- (1,1.9);
\filldraw[black] (1,1) circle (3pt);
\end{tikzpicture}}, \qquad
u \; \longleftrightarrow \; \raisebox{0mm}{\begin{tikzpicture}[scale=.5pt]
\draw (0,1) -- (0,1.9);
\draw[circle,draw=black, fill=white] (0,1) circle (3pt);
\end{tikzpicture}} \;,
\end{equation*}
such that $\d b = 0$, $\d u=0$ and subject to three relations which we can represent most easily in graphical form as
\begin{equation} \label{uLie relations}
\raisebox{-7mm}{\begin{tikzpicture}[scale=.5pt]
\draw (.293,.293) node[below]{\mbox{\tiny $1$}} -- (1,1);
\draw (1.707,.293) node[below]{\mbox{\tiny $2$}} -- (1,1);
\draw (1,1) -- (1,1.9);
\filldraw[black] (1,1) circle (3pt);
\end{tikzpicture}}
\; + \; \raisebox{-7mm}{\begin{tikzpicture}[scale=.5pt]
\draw (.293,.293) node[below]{\mbox{\tiny $2$}} -- (1,1);
\draw (1.707,.293) node[below]{\mbox{\tiny $1$}} -- (1,1);
\draw (1,1) -- (1,1.9);
\filldraw[black] (1,1) circle (3pt);
\end{tikzpicture}}
\; = 0,  \qquad
\raisebox{-7mm}{\begin{tikzpicture}[scale=.5pt]
\draw (.293,.293) node[below]{\mbox{\tiny $1$}} -- (1,1);
\draw (1.707,.293) node[below]{\mbox{\tiny $3$}} -- (1,1);
\draw (1.354,.646) -- (1,.293) node[below]{\mbox{\tiny $2$}};
\draw (1,1) -- (1,1.9);
\filldraw[black] (1,1) circle (3pt);
\filldraw[black] (1.354,.646) circle (3pt);
\end{tikzpicture}}
\;+\;
\raisebox{-7mm}{\begin{tikzpicture}[scale=.5pt]
\draw (.293,.293) node[below]{\mbox{\tiny $2$}} -- (1,1);
\draw (1.707,.293) node[below]{\mbox{\tiny $1$}} -- (1,1);
\draw (1.354,.646) -- (1,.293) node[below]{\mbox{\tiny $3$}};
\draw (1,1) -- (1,1.9);
\filldraw[black] (1,1) circle (3pt);
\filldraw[black] (1.354,.646) circle (3pt);
\end{tikzpicture}}
\;+\;
\raisebox{-7mm}{\begin{tikzpicture}[scale=.5pt]
\draw (.293,.293) node[below]{\mbox{\tiny $3$}} -- (1,1);
\draw (1.707,.293) node[below]{\mbox{\tiny $2$}} -- (1,1);
\draw (1.354,.646) -- (1,.293) node[below]{\mbox{\tiny $1$}};
\draw (1,1) -- (1,1.9);
\filldraw[black] (1,1) circle (3pt);
\filldraw[black] (1.354,.646) circle (3pt);
\end{tikzpicture}}
\; = 0, \qquad
\raisebox{-3.5mm}{\begin{tikzpicture}[scale=.5pt]
\draw (.293,.293) -- (1,1);
\draw (1.707,.293) -- (1,1);
\draw (1,1) -- (1,1.9);
\filldraw[black] (1,1) circle (3pt);
\draw[circle,draw=black, fill=white] (1.707,.293) circle (3pt);
\end{tikzpicture}}
\; = \; 0,
\end{equation}
where the numbers below the trees indicate input permutations. We shall refer to an algebra in $\dgVec_\CC$ over the \dg{} operad $\uLie$ as a \emph{unital \dg{} Lie algebra}. We let $\udgLie_\CC \coloneqq \Alg_{\uLie}(\dgVec_\CC)$ denote the category of unital \dg{} Lie algebras. Explicitly, a unital \dg{} Lie algebra $L$ is described by a multifunctor $\uLie \to \dgVec_\CC^{\otimes}$. More explicitly, such a multifunctor singles out an object $L \in \dgVec_\CC$ as the image of the single object of $\uLie$ and closed linear maps $[\cdot, \cdot] : L \otimes L \to L$ and $\eta : \CC \to L$, of degrees $0$ and $1$ respectively, as the images of $b \in \uLie(2)$ and $u \in \uLie(0)$. In particular, the first two relations in \eqref{uLie relations} make $L$ into a \dg{} Lie algebra and the last relation says that the image of the \emph{unit} $\eta : \CC \to L$, which is closed in $L$, is also central in $L$.

\subsubsection{Monoidal structure}

For any unital \dg{} Lie algebras $L, L' \in \udgLie_\CC$, with respective units $\eta : \CC \to L$ and $\eta' : \CC \to L'$, we define the unital \dg{} Lie algebra
\begin{equation*}
L \boplus L' \coloneqq (L \oplus L') \big/ \im(\eta - \eta')
\end{equation*}
where the quotient is by the image of the linear map $\eta - \eta' : \CC \to L \oplus L'$, which is a \dg{} Lie ideal of $L \oplus L'$. The unit in $L \boplus L'$ is the map induced by $\eta : \CC \to L \oplus L'$, or equivalently by $\eta' : \CC \to L \oplus L'$. The identity object for the monoidal product $\boplus$ on $\udgLie_\CC$ is the trivial unital \dg{} Lie algebra $\CC[-1]$.

We will use the following universal property of the direct sum of unital \dg{} Lie algebras.

\begin{lemma} \label{lem: direct sum Lie}
Let $f_i : L_i \to L$ for $i \in I$ be any collection of morphisms of unital \dg{} Lie algebras, with indexing set $I$, such that $[\im f_i, \im f_j] = 0$ for every $i \neq j \in I$. There exists a unique morphism of unital \dg{} Lie algebras $\overline{\bigoplus}_{j \in I} L_j \to L$ such that the diagram
\begin{equation*}
\begin{tikzcd}
& L & \\
L_i \arrow[ur, "f_i"] \arrow[r, "\iota_i"'] &\overline{\bigoplus}_{j \in I} L_j \arrow[u, "\exists!"', dashed]
\end{tikzcd}
\end{equation*}
is commutative for each $i \in I$, where $\iota_i : L_i \to \overline{\bigoplus}_{j \in I} L_j$ is the canonical embedding.

Moreover, suppose $f'_i : L'_i \to L'$ for $i \in I$ is another collection of morphisms of unital \dg{} Lie algebras such that $[\im f'_i, \im f'_j] = 0$ for every $i \neq j \in I$ and suppose we are given morphisms of unital \dg{} Lie algebras $\phi : L \to L'$ and $\phi_i : L_i \to L'_i$ for each $i \in I$ such that $\phi \circ f_i = f'_i \circ \phi_i$. Then there exists a unique morphism of unital \dg{} Lie algebras $\overline{\bigoplus}_{j \in I} \phi_j : \overline{\bigoplus}_{j \in I} L_j \to \overline{\bigoplus}_{j \in I} L'_j$ making the following diagram
\begin{equation*}
\begin{tikzcd}
& L \arrow[rd, "\phi"] &\\
L_i \arrow[rd, "\phi_i"'] \arrow[r, "\iota_i"'] \arrow[ru, "f_i"] & \overline{\bigoplus}_{j=1}^n L_j \arrow[u] \arrow[rd, dashed, "\exists!" very near end] & L'\\
& L'_i \arrow[r, "\iota'_i"'] \arrow[ru, "f'_i", very near start, crossing over] & \overline{\bigoplus}_{j=1}^n L'_j \arrow[u]
\end{tikzcd}
\end{equation*}
commute for each $i \in I$, where the vertical morphisms are defined as above.
\begin{proof}
We first define $h : \bigoplus_{j \in I} L_j \to L$ by $(\ms x_i)_{i \in I} \mapsto \sum_{i \in I} f_i(\ms x_i)$. This is well defined since the sum over $i \in I$ is finite by virtue of $\ms x_i$ being zero for all but finitely many $i \in I$. And since $f_i$ are morphisms of unital \dg{} Lie algebras we have $f_i \circ \eta_i = \eta$ where $\eta_i$ denotes the unit in each $L_i$ and $\eta$ the unity in $L$. It follows that $h$ factors through a map $\bar h : \overline{\bigoplus}_{j \in I} L_j \to L$, defined by $[(\ms x_i)_{i \in I}] \mapsto \sum_{i \in I} f_i(\ms x_i)$ where $[(\ms x_i)_{i \in I}] \in \overline{\bigoplus}_{j \in I} L_j$ is the class of $(\ms x_i)_{i \in I} \in \bigoplus_{j \in I} L_j$. In other words, $\bar h\big( [(\ms x_i)_{i \in I}] \big) = h\big( (\ms x_i)_{i \in I} \big)$.
But we also have
\begin{align*}
\big[ h ((\ms x_i)_{i \in I}), h((\ms y_i)_{i \in I}) \big] &= \sum_{i, j \in I} [f_i(\ms x_i), f_j(\ms y_j)] = \sum_{i \in I} [f_i(\ms x_i), f_i(\ms y_i)]\\
&= \sum_{i \in I} f_i([\ms x_i, \ms y_i]) = h\big( ([\ms x_i, \ms y_i])_{i \in I} \big) = h\big( [(\ms x_i)_{i\in I}, (\ms y_i)_{i\in I}] \big)
\end{align*}
where the second step is by the assumption that $[\im f_i, \im f_j] = 0$ for $i \neq j \in I$. The third step follows since each $f_i$ is a morphism of unital \dg{} Lie algebras and the last step uses the \dg{} Lie algebra structure on $\overline{\bigoplus}_{i \in I} L_i$. So $h$ is a morphism of \dg{} Lie algebras and hence $\bar h$ is a morphism of unital \dg{} Lie algebras. By construction, the latter is unique such that $h \circ \iota_i = f_i$.

\medskip

Let us consider now the second claim. Since $\im (\iota'_i \circ \phi_i) \subset \im \iota'_i$ for each $i \in I$ it follows that $[\im (\iota'_i \circ \phi_i), \im (\iota'_j \circ \phi_j)] = 0$ for each $i \neq j \in I$. By the first part of the lemma applied to the morphisms of unital \dg{} Lie algebras $\iota'_i \circ \phi_i : L_i \to \overline{\bigoplus}_{j=1}^n L'_j$ we thus have a unique morphism of unital \dg{} Lie algebras $\overline{\bigoplus}_{j \in I} \phi_j : \overline{\bigoplus}_{j \in I} L_j \to \overline{\bigoplus}_{j \in I} L'_j$ which makes the bottom square of the second diagram in the statement commute. It remains to show that it also makes the square on the right of the diagram commute, i.e. that $\bar h' \circ \big( \overline{\bigoplus}_{j \in I} \phi_j \big) = \phi \circ \bar h$ where $\bar h$ and $\bar h'$ are the two vertical morphisms defined as above.

Now $\phi \circ f_i = f'_i \circ \phi_i$ and $[\im f'_i, \im f'_j] = 0$ for every $i \neq j \in I$ from which it follows that $[\im (\phi \circ f_i), \im (\phi \circ f_j)] = 0$ for every $i \neq j \in I$. Therefore, applying the first part of the lemma to the collection of morphisms of unital \dg{} Lie algebras $\phi \circ f_i : L_i \to L'$ we obtain a unique mophism of unital \dg{} Lie algebras $g : \overline{\bigoplus}_{j=1}^n L_j \to L'$ such that $g \circ \iota_i = \phi \circ f_i$ for each $i \in I$. Yet we have
\begin{equation*}
\bar h' \circ \bigg( \overline{\bigoplus}_{j \in I} \phi_j \bigg) \circ \iota_i = \bar h' \circ \iota'_i \circ \phi_i = f'_i \circ \phi_i = \phi \circ f_i = \phi \circ \bar h \circ \iota_i
\end{equation*}
from which we deduce that $\bar h' \circ \big( \overline{\bigoplus}_{j \in I} \phi_j \big) = g = \phi \circ \bar h$ by uniqueness of $g$, as required.
\end{proof}
\end{lemma}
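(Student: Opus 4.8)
The plan is to prove Lemma \ref{lem: direct sum Lie} in two stages, mirroring its two-part statement: first the universal property of $\overline{\bigoplus}_{j\in I} L_j$ as a kind of ``restricted coproduct'' in $\udgLie_\CC$ under the commuting-images hypothesis, then the functoriality statement built on top of it. Throughout I would work at the level of underlying \dg{} vector spaces, where $\bigoplus_{j\in I} L_j$ is the usual direct sum and $\overline{\bigoplus}_{j\in I} L_j$ is its quotient by the image of $\bigoplus_{j\in I}\eta_j - $ ``all units identified'', i.e. the \dg{} Lie ideal generated by the differences $\eta_i(1)-\eta_j(1)$; the canonical embedding $\iota_i$ is the composite $L_i\hookrightarrow \bigoplus_{j\in I} L_j \twoheadrightarrow \overline{\bigoplus}_{j\in I} L_j$.

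For the first part I would: (1) define the candidate map on the un-quotiented direct sum by $h\big((\ms x_i)_{i\in I}\big) \coloneqq \sum_{i\in I} f_i(\ms x_i)$, noting the sum is finite since almost all $\ms x_i$ vanish; (2) check $h$ descends to $\bar h : \overline{\bigoplus}_{j\in I} L_j \to L$, which follows because each $f_i$ is a morphism of unital \dg{} Lie algebras so $f_i\circ\eta_i = \eta$, hence $h$ kills every $\eta_i(1)-\eta_j(1)$; (3) verify $\bar h$ is a morphism of \dg{} Lie algebras, where the one nontrivial point is compatibility with brackets: expanding $\big[h((\ms x_i)),h((\ms y_i))\big] = \sum_{i,j} [f_i(\ms x_i),f_j(\ms y_j)]$, the cross terms $i\neq j$ vanish by the hypothesis $[\im f_i,\im f_j]=0$, and the diagonal terms assemble via the \dg{} Lie structure on the direct sum into $h$ applied to the componentwise bracket; compatibility with differentials and the unit is immediate since $h$ is a sum of morphisms; (4) uniqueness: any morphism $\psi$ with $\psi\circ\iota_i = f_i$ agrees with $\bar h$ on each $\im\iota_i$, and these images generate $\overline{\bigoplus}_{j\in I} L_j$ (every class is represented by a finite sum $\sum_i \iota_i(\ms x_i)$), so $\psi=\bar h$.

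For the second part I would apply the first part to the collection of morphisms $\iota'_i\circ\phi_i : L_i \to \overline{\bigoplus}_{j\in I} L'_j$, observing that $[\im(\iota'_i\circ\phi_i),\im(\iota'_j\circ\phi_j)]=0$ for $i\neq j$ since both images sit inside $\im\iota'_i$, $\im\iota'_j$ respectively and the latter commute; this yields the unique morphism $\overline{\bigoplus}_{j\in I}\phi_j$ making the bottom square of the diagram commute. The remaining task is the commutativity of the right-hand square, i.e. $\bar h'\circ\big(\overline{\bigoplus}_{j\in I}\phi_j\big) = \phi\circ\bar h$. Here I would invoke the uniqueness clause again: from $\phi\circ f_i = f'_i\circ\phi_i$ and $[\im f'_i,\im f'_j]=0$ one gets $[\im(\phi\circ f_i),\im(\phi\circ f_j)]=0$, so the first part produces a unique $g:\overline{\bigoplus}_{j\in I} L_j \to L'$ with $g\circ\iota_i = \phi\circ f_i$; then a short diagram chase shows both $\bar h'\circ\big(\overline{\bigoplus}_{j\in I}\phi_j\big)$ and $\phi\circ\bar h$ satisfy this defining equation (for the former, $\bar h'\circ\big(\overline{\bigoplus}_{j\in I}\phi_j\big)\circ\iota_i = \bar h'\circ\iota'_i\circ\phi_i = f'_i\circ\phi_i = \phi\circ f_i$; for the latter, $\phi\circ\bar h\circ\iota_i = \phi\circ f_i$ directly), whence they coincide.

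I do not expect a serious obstacle here — the statement is essentially a bookkeeping exercise once the right candidate maps are written down. The only place requiring genuine care is the bracket-compatibility computation in step (3): one must be scrupulous that the cross terms really do vanish (this is exactly where the commuting-images hypothesis is used and cannot be dropped) and that the differentials are tracked correctly through the quotient, since in the \dg{} setting the unit has degree $1$ and the ideal $\im(\eta_i-\eta_j)$ is concentrated in degree $1$ but is closed, so the quotient inherits a well-defined differential. Everything else is formal nonsense with the universal property and can be dispatched by the uniqueness arguments indicated above.
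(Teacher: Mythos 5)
Your proposal is correct and follows essentially the same route as the paper's own proof: define $h$ on the plain direct sum as the sum of the $f_i$, descend to $\bar h$ using $f_i\circ\eta_i=\eta$, verify the bracket compatibility via the vanishing of cross terms, and settle the second part by applying the first part to $\iota'_i\circ\phi_i$ and to $\phi\circ f_i$ and invoking uniqueness. Your uniqueness argument (the images of the $\iota_i$ generate the quotient) is slightly more explicit than the paper's, but the content is the same.
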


\subsubsection{Chevalley-Eilenberg functor for \texorpdfstring{$\udgLie_\CC$}{udgLie}} \label{sec: CE for udgLie}

The homological Chevalley-Eilenberg functor is the symmetric monoidal functor
\begin{align} \label{CE functor}
\CE_\chain : (\dgLie_\CC, \oplus) &\longrightarrow (\dgVec_\CC, \otimes) \notag\\
L &\longmapsto \CE_\chain(L) \coloneqq \Big( \big( \Sym (L[1]) \big)^\chain, \d_{\CE} \Big),
\end{align}
where $(\Sym V)^\chain \coloneqq \bigoplus_{n \geq 0} \Sym^n V$ is the free commutative graded algebra on a graded vector space $V$ and $\Sym^n V$ is its component of word length $n$. Here $L[1] \coloneqq \CC[1] \otimes L \in \dgVec_\CC$ denotes the suspension of the \dg{} vector space $L$ where $\CC[1]$ is the \dg{} vector space with $\CC$ placed in degree $-1$. For any element $v \in V$ of a \dg{} vector space $V$ we let $sv \coloneqq 1 \otimes v \in V[1]$ denote its suspension, and we use the notation $s^{-1} v \coloneqq 1 \otimes v \in V[-1] \coloneqq \CC[-1] \otimes V$ to denote its inverse suspension.
The product in $(\Sym V)^\chain$ is denoted by concatenation.
The differential on $\CE_\chain(L)$ is $\d_\CE \coloneqq \d_{L[1]} + \d_{[\cdot,\cdot]}$ where $\d_{L[1]} : L[1] \to L[1]$ is induced by the differential $\d_L : L \to L$ of the \dg{} Lie algebra $L$ and $\d_{[\cdot,\cdot]} : ( \Sym (L[1]) )^\chain \to ( \Sym (L[1]) )^\chain$ is the unique coderivation of the graded coalgebra $( \Sym (L[1]) )^\chain$ extending the degree $1$ map $\Sym^2 (L[1]) \to L[1]$, where $\Sym^2 (L[1])$ is the graded symmetric tensor square, induced by the Lie bracket $[\cdot, \cdot] : L \otimes L \to L$, see for instance \cite[Lemma 22.2]{RationalHomotopy}.
Explicitly, $\Sym^2 (L[1]) \to L[1]$ is given by $s\ms x \, s\ms y \mapsto (-1)^{|\ms x|+1} s[\ms x, \ms y]$ for any homogeneous $\ms x, \ms y \in L$. This is well defined since $s\ms x \, s\ms y = (-1)^{|\ms x| |\ms y| + |\ms x| + |\ms y| + 1} s\ms y \, s\ms x$ is sent to $(-1)^{|\ms x| |\ms y| + |\ms x| + |\ms y| + 1} (-1)^{|\ms y|+1} s[\ms y, \ms x] = (-1)^{|\ms x|+1} s[\ms x, \ms y]$ using the graded \emph{skew}-symmetry of the Lie bracket, namely $[\ms y, \ms x] = - (-1)^{|\ms x| |\ms y|} [\ms x, \ms y]$. Note that $\d_{L[1]} \d_{[\cdot,\cdot]} = - \d_{[\cdot,\cdot]} \d_{L[1]}$ and $\d_{[\cdot,\cdot]}^2 = 0$.

We shall need a variant of the functor \eqref{CE functor} for unital \dg{} Lie algebras defined in the next proposition. Let $I^i_L \coloneqq \{ \mathcal A \, s(\eta(1)) - \mathcal A \,|\, \mathcal A \in \CE_i(L) \}$ for every $i \in \ZZ$. This is a subspace of $\CE_i(L)$ since $\eta(1)$ is of degree $1$. Moreover, since $\eta(1)$ is a cocycle, i.e. $\d_L \eta(1) = 0$, and is central in $L$, it follows that $I^\chain_L$ is a \dg{} vector subspace of $\CE_\chain(L)$.

\begin{proposition} \label{prop: bCE functor}
We have a symmetric monoidal functor
\begin{align} \label{bar CE functor}
\bCE_\chain : (\udgLie_\CC, \boplus) &\longrightarrow (\dgVec_\CC, \otimes) \notag\\
L &\longmapsto \bCE_\chain(L) \coloneqq \CE_\chain(L) \big/ I^\chain_L
\end{align}
which preserves quasi-isomorphisms.
\begin{proof}
For any morphism of unital \dg{} Lie algebras $f : L \to L'$, the morphism of \dg{} vector spaces $\CE_\chain(f) : \CE_\chain(L) \to \CE_\chain(L')$ is given in degree $i \in \ZZ$ by $\CE_i(f) = \big( \Sym f[1] \big)^i$, where the morphism of \dg{} vector spaces $f[1] : L[1] \to L'[1]$ is the suspension of $f : L \to L'$. Since $f(\eta(1)) = \eta'(1)$ it follows that $\CE_\chain(f)(I^\chain_L) \subset I^\chain_{L'}$ and therefore $\CE_\chain(f)$ induces a morphism of \dg{} vector spaces $\bCE_\chain(f) : \bCE_\chain(L) \to \bCE_\chain(L')$.

Since the functor \eqref{CE functor} preserves quasi-isomorphisms, if $f : L \qSimTo L'$ is a quasi-isomorphism then so is $\CE_\chain(f) : \CE_\chain(L) \qSimTo \CE_\chain(L')$. Now the morphism $\bCE_\chain(f) : \bCE_\chain(L) \to \bCE_\chain(L')$ is a retract of the latter since we have a commutative diagram
\begin{equation*}
\begin{tikzcd}
\bCE_\chain(L) \arrow[r, "i_L"] \arrow[d, "\bCE_\chain(f)"'] & \CE_\chain(L) \arrow[r, "q_L"] \arrow[d, "\CE_\chain(f)"] & \bCE_\chain(L) \arrow[d, "\bCE_\chain(f)"]\\
\bCE_\chain(L') \arrow[r, "i_{L'}"'] & \CE_\chain(L') \arrow[r, "q_{L'}"'] & \bCE_\chain(L')
\end{tikzcd}
\end{equation*}
where $q_L : \CE_\chain(L) \to \bCE_\chain(L)$ is the canonical map and $i_L : \bCE_\chain(L) \to \CE_\chain(L)$ is given by taking the representative with no factors of $s(\eta(1))$ so that we clearly have $q_L \circ i_L = \id_{\bCE_\chain(L)}$ and similarly $q_{L'} \circ i_{L'} = \id_{\bCE_\chain(L')}$.
Hence $\bCE_\chain(f) : \bCE_\chain(L) \qSimTo \bCE_\chain(L')$ is a quasi-isomorphism.

It remains to show that the functor $\bCE_\chain$ is symmetric monoidal. In particular, we must show that given any unital \dg{} Lie algebras $L, L' \in \udgLie_\CC$ we have a canonical isomorphism of \dg{} vector spaces
\begin{equation*}
\bCE_\chain(L \boplus L') \cong \bCE_\chain(L) \otimes \bCE_\chain(L').
\end{equation*}
To see this, let $J^i \coloneqq \{ \mathcal A \, s(\eta(1)) - \mathcal A \, s(\eta'(1)) \,|\, \mathcal A \in \CE_i(L \oplus L') \}$ for every $i \in \ZZ$. Since $\eta(1)$ and $\eta'(1)$, thought of as elements in $L \oplus L'$, are both central cocycles of degree $1$, it follows that $J^\chain$ is a \dg{} vector subspace of $\CE_\chain(L \oplus L')$. Moreover, we have a canonical isomorphism of \dg{} vector spaces $\CE_\chain(L \boplus L') \cong \CE_\chain(L \oplus L') / J^\chain$. Also, introducing the \dg{} vector subspace $K^\chain$ of $\CE_\chain(L \oplus L')$ with components $K^i \coloneqq \{ \mathcal A \, s(\eta(1)) - \mathcal A + \mathcal A' \, s(\eta'(1)) - \mathcal A' \,|\, \mathcal A, \mathcal A' \in \CE_i(L \oplus L') \}$ for every $i \in \ZZ$, of which $J^\chain$ is an obvious \dg{} vector subspace, we have a canonical isomorphism $I^\chain_{L \boplus L'} \cong K^\chain / J^\chain$. We therefore have
\begin{align*}
&\bCE_\chain(L \boplus L') = \CE_\chain(L \boplus L') / I^\chain_{L \boplus L'} \cong \big( \CE_\chain(L \oplus L') / J^\chain \big) \big/ \big( K^\chain / J^\chain \big) \cong \CE_\chain(L \oplus L') / K^\chain\\
&\quad \cong \big( \CE_\chain(L) \otimes \CE_\chain(L') \big) / \big( I^\chain_L \otimes \CE_\chain(L') + \CE_\chain(L) \otimes I^\chain_{L'} \big) = \bCE_\chain(L) \otimes \bCE_\chain(L'),
\end{align*}
where in the first isomorphism we made use of the two isomorphisms stated above. The next isomorphism is by the third isomorphism theorem and the last isomorphism uses the fact that the homological Chevalley-Eilenberg functor $\CE_\chain$ is symmetric monoidal.

Also, the result of applying the functor $\bCE_\chain$ to the identity object $\CC[-1]$ of the symmetric monoidal product $\boplus$ on $\udgLie_\CC$ is isomorphic to $\CC[0]$, i.e. the identity object of the symmetric monoidal product $\otimes$ on $\dgVec_\CC$. Hence $\bCE_\chain$ is a symmetric monoidal functor.
\end{proof}
\end{proposition}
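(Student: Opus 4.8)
\textbf{Proof proposal for Proposition \ref{prop: bCE functor}.}

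The plan is to verify the three claims in the stated order: (1) $\bCE_\chain$ is well defined on morphisms, (2) it preserves quasi-isomorphisms, and (3) it is symmetric monoidal. First I would pin down functoriality. Given $f : L \to L'$ in $\udgLie_\CC$, the ordinary homological Chevalley-Eilenberg functor produces $\CE_\chain(f) = \bigl(\Sym f[1]\bigr)$, a morphism of \dg{} vector spaces. The only thing to check is that it descends to the quotients: since $f$ is a morphism of unital \dg{} Lie algebras we have $f(\eta(1)) = \eta'(1)$, hence for any $\mathcal A \in \CE_i(L)$ the generator $\mathcal A\, s(\eta(1)) - \mathcal A$ of $I^i_L$ is sent to $\CE_i(f)(\mathcal A)\, s(\eta'(1)) - \CE_i(f)(\mathcal A) \in I^i_{L'}$, so $\CE_\chain(f)(I^\chain_L) \subset I^\chain_{L'}$ and $\bCE_\chain(f)$ is the induced map. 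Functoriality in $f$ and preservation of identities is then immediate from the corresponding facts for $\CE_\chain$.

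Next, preservation of quasi-isomorphisms. The clean way is the retract argument already hinted at in the excerpt: introduce the canonical projection $q_L : \CE_\chain(L) \to \bCE_\chain(L)$ and the section $i_L : \bCE_\chain(L) \to \CE_\chain(L)$ sending a class to its unique representative with no factor of $s(\eta(1))$. One checks $q_L i_L = \id$, and naturality of $q$ and $i$ in $L$ gives the commuting diagram exhibiting $\bCE_\chain(f)$ as a retract of $\CE_\chain(f)$. Since $\CE_\chain$ is known to preserve quasi-isomorphisms (being the standard CE functor composed with $\Sym$, which preserves quasi-isomorphisms over $\CC$), and a retract of a quasi-isomorphism is a quasi-isomorphism, we are done. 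The one subtlety here is that $i_L$ is genuinely a chain map: this uses that $\eta(1)$ is a central \emph{cocycle}, so neither $\d_{L[1]}$ nor $\d_{[\cdot,\cdot]}$ can generate or destroy a factor of $s(\eta(1))$ modulo $I^\chain_L$; I would spell this out briefly.

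The main obstacle is the symmetric monoidal structure, i.e.\ the natural isomorphism $\bCE_\chain(L \boplus L') \cong \bCE_\chain(L) \otimes \bCE_\chain(L')$. The strategy is to reduce everything to the known fact that $\CE_\chain$ is symmetric monoidal with respect to $\oplus$ and $\otimes$, by a sequence of quotient identifications. Writing $J^\chain$ for the \dg{} subspace of $\CE_\chain(L \oplus L')$ spanned by $\mathcal A\, s(\eta(1)) - \mathcal A\, s(\eta'(1))$ and $K^\chain$ for the one spanned by $\mathcal A\, s(\eta(1)) - \mathcal A + \mathcal A'\, s(\eta'(1)) - \mathcal A'$, I would establish three canonical isomorphisms of \dg{} vector spaces: $\CE_\chain(L \boplus L') \cong \CE_\chain(L \oplus L')/J^\chain$ (from the definition of $\boplus$ as the quotient of $L \oplus L'$ by $\im(\eta - \eta')$, noting that $\Sym$ of a quotient by a \dg{} ideal is the quotient of the $\Sym$), $I^\chain_{L \boplus L'} \cong K^\chain/J^\chain$, and then chain these through the third isomorphism theorem to get $\bCE_\chain(L \boplus L') \cong \CE_\chain(L \oplus L')/K^\chain$. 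Finally, applying the symmetric monoidality of $\CE_\chain$ identifies $\CE_\chain(L \oplus L') \cong \CE_\chain(L) \otimes \CE_\chain(L')$, under which $K^\chain$ corresponds exactly to $I^\chain_L \otimes \CE_\chain(L') + \CE_\chain(L) \otimes I^\chain_{L'}$, yielding the desired $\bCE_\chain(L) \otimes \bCE_\chain(L')$. The delicate bookkeeping is checking that $K^\chain$ really does match that sum of tensor ideals (a generator $\mathcal A\, s(\eta(1)) - \mathcal A + \mathcal A'\, s(\eta'(1)) - \mathcal A'$ splits as an element of $I^\chain_L \otimes \CE_\chain(L')$ plus one of $\CE_\chain(L) \otimes I^\chain_{L'}$ after expanding $\mathcal A, \mathcal A'$ in the tensor factorization, using that $s(\eta(1))$ and $s(\eta'(1))$ land in the respective tensor factors), and that all these subspaces are closed under $\d_\CE$ — again a consequence of $\eta(1), \eta'(1)$ being central cocycles. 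I would also note at the end that $\bCE_\chain$ sends the monoidal unit $\CC[-1]$ of $(\udgLie_\CC, \boplus)$ to $\CC[0]$, the unit of $(\dgVec_\CC, \otimes)$, completing the verification that $\bCE_\chain$ is symmetric monoidal.
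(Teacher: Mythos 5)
Your proposal follows the paper's proof essentially verbatim: the same descent argument (using $f(\eta(1))=\eta'(1)$) for functoriality, the same retract diagram with $i_L$ and $q_L$ for preservation of quasi-isomorphisms, and the same $J^\bullet$/$K^\bullet$ subspaces with the third isomorphism theorem, followed by the unit check, for symmetric monoidality. The only divergence is your side remark that the differential cannot create factors of $s(\eta(1))$ from an $s(\eta(1))$-free representative — in general it can, e.g.\ through the $2$-cocycle term in the bracket, so $i_L$ commutes with the differentials only modulo $I^\bullet_L$ — but the paper's own proof uses the identical retract diagram without addressing this point, so your argument is at the same level of rigour as the paper's.
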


It will be useful to also introduce, mainly for the purpose of proving part of Proposition \ref{prop: Y map injective}, the category of \emph{pointed} \dg{} vector spaces, denoted $\udgVec_\CC$, whose objects are vector spaces $V$ equipped with a degree $1$ map $\eta : \CC \to V$ and whose morphisms preserve this structure. We also introduce the symmetric monoidal functor
\begin{align} \label{bSym functor def}
\bSym_\chain : (\udgVec_\CC, \boplus) &\longrightarrow (\dgVec_\CC, \otimes) \notag\\
V &\longmapsto \bSym_\chain(V) \coloneqq \big( \Sym (V[1]) \big)^\chain \big/ J^\chain_V
\end{align}
where the differential on $\big( \Sym (V[1]) \big)^\chain$ is the one induced from $\d_{V[1]} : V[1] \to V[1]$ and we set $J^i_V \coloneqq \{ \A \, s(\eta(1)) - \A \,|\, \A \in (\Sym (V[1]))^i \}$ for each $i \in \ZZ$, which is a subpsace of $(\Sym (V[1]))^i$ since $\eta(1)$ has degree $1$. Moreover, since $\eta(1)$ is a cocycle, namely $\d_V \eta(1) = 0$, these form a \dg{} vector subspace $J^\chain_V$ of $\big( \Sym (V[1]) \big)^\chain$. The proof that \eqref{bSym functor def} is symmetric monoidal is completely analogous to that of Proposition \ref{prop: bCE functor}.

\subsection{Unital local Lie algebras} \label{sec: uLocLie}

Let $\L$ be a precosheaf of unital \dg{} Lie algebras on a manifold $D$, with extension morphisms denoted by $\ext_{U, V} : \L(U) \to \L(V)$ for any inclusion of subsets $U \subset V$ in $D$. We say that $\L$ is a \emph{unital local Lie algebra} on $D$ if for any finite collection $\{ U_i \}_{i=1}^n$ of disjoint open subsets $U_i \subset V$ of an open subset $V \subset D$ we have
\begin{equation} \label{LocLie condition}
\big[ \im \big( \! \ext_{U_i, V} \! \big), \im \big(\! \ext_{U_j, V} \! \big) \big] = 0,
\end{equation}
for every $i \neq j$. We denote by $\uLocLie_\CC(D)$ the category of unital local Lie algebras on $D$, where a morphism of unital local Lie algebras is defined as a morphism of the underlying precosheaves of unital \dg{} Lie algebras.

\begin{proposition} \label{prop: constructing prefac 1}
We have a canonical functor
\begin{equation*}
\uLocLie_\CC(D) \longrightarrow \PFac(D, \udgLie_\CC^{\boplus}).
\end{equation*}
More explicitly, any $\L \in \uLocLie_\CC(D)$ defines an object in $\PFac(D, \udgLie_\CC^{\boplus})$ which by a slight abuse of notation we also denote by $\L$. Moreover, every morphism $\L \to \L'$ of $\uLocLie_\CC(D)$ induces a morphism of $\PFac(D, \udgLie_\CC^{\boplus})$ which we also denote $\L \to \L'$.
\begin{proof}
Let $\L \in \uLocLie_\CC(D)$. Given any open subsets $\sqcup_{i=1}^n U_i \subset V$, by Lemma \ref{lem: direct sum Lie} we obtain a unique morphism of \dg{} Lie algebras $m^\L_{(U_i), V} : \overline{\bigoplus}_{j=1}^n \L(U_j) \to \L(V)$ such that the diagram
\begin{equation*}
\begin{tikzcd}
& \L(V) & \\
\L(U_i) \arrow[ur, "\ext^\L_{U_i, V}"] \arrow[r] & \overline{\bigoplus}_{j=1}^n \L(U_j) \arrow[u, "\exists! m^\L_{(U_i), V}"', dashed]
\end{tikzcd}
\end{equation*}
commutes. The composition property of these morphisms, required in order for $\L$ to define a prefactorisation algebra, then follows from their uniqueness property.
Explicitly, we have the following commutative diagram
\begin{equation*}
\begin{tikzcd}
& & \L(W) & \\
& \L(V_i) \arrow[ur, "\ext^\L_{V_i, W}"] \arrow[r] & \overline{\bigoplus}_{i=1}^n \L(V_i) \arrow[u, "\exists! m^\L_{(V_i), W}"', dashed]\\
\L(U_{ij}) \arrow[ur, "\ext^\L_{U_{ij}, V_i}"] \arrow[r] & \overline{\bigoplus}_{j=1}^{m_i} \L(U_{ij}) \arrow[u, "\exists! m^\L_{(U_{ij}), V_i}"', dashed] \arrow[r] & \overline{\bigoplus}_{i=1}^n \overline{\bigoplus}_{j=1}^{m_i} \L(U_{ij}) \arrow[u, "\overline{\bigoplus}_{i=1}^n m^\L_{(U_{ij}), V_i}"', dashed] \arrow[uu, "\exists! m^\L_{(U_{ij}), W}"', bend right=90, dashed]
\end{tikzcd}
\end{equation*}
where the two small commutative triangles are given by the universal property in Lemma \ref{lem: direct sum Lie}. The square on the bottom right of the diagram is commutative by definition of the morphism of unital \dg{} Lie algebras $\overline{\bigoplus}_{i=1}^n m^\L_{(U_{ij}), V_i}$, in the second part of Lemma \ref{lem: direct sum Lie}. The extra morphism from the bottom right to the top right is given by the universal property from Lemma \ref{lem: direct sum Lie} applied to the big outside triangle. Its uniqueness implies the required composition property. It follows that $\L$ defines an element of $\PFac(D, \udgLie_\CC^{\boplus})$, as required.

\medskip

Let $\phi : \L \to \L'$ be a morphism of $\uLocLie_\CC(D)$. By using the second part of Lemma \ref{lem: direct sum Lie} we have a morphism $\overline{\bigoplus}_{j=1}^n \phi_{U_j} : \overline{\bigoplus}_{j=1}^n \L(U_j) \to \overline{\bigoplus}_{j=1}^n \L'(U_j)$ making the following diagram
\begin{equation*}
\begin{tikzcd}
& \L(V) \arrow[rd, "\phi_V"] &\\
\L(U_i) \arrow[rd, "\phi_{U_i}"'] \arrow[r] \arrow[ru] & \overline{\bigoplus}_{j=1}^n \L(U_j) \arrow[u, "m^\L_{(U_i), V}"'] \arrow[rd, dashed, "\exists!" very near end] & \L'(V)\\
& \L'(U_i) \arrow[r] \arrow[ru, crossing over] & \overline{\bigoplus}_{j=1}^n \L'(U_j) \arrow[u, "m^{\L'}_{(U_i), V}"']
\end{tikzcd}
\end{equation*}
commutative. In particular, the commutativity of the right hand square is equivalent to the statement that $\phi : \L \to \L'$ is a morphism of $\PFac(D, \udgLie_\CC^{\boplus})$, as required.
\end{proof}
\end{proposition}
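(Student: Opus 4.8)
The plan is to construct the functor directly from the universal property of the direct sum $\overline{\bigoplus}$ of unital \dg{} Lie algebras established in Lemma~\ref{lem: direct sum Lie}. Recall that an object of $\PFac(D, \udgLie_\CC^{\boplus})$ consists of an object $\L(U) \in \udgLie_\CC$ for each open subset $U$ of $D$ (in the relevant class), together with a factorisation product $m^\L_{(U_i),V} : \overline{\bigoplus}_{i=1}^n \L(U_i) \to \L(V)$ for every inclusion of pairwise disjoint opens $\sqcup_{i=1}^n U_i \subset V$ (the $0$-operation being a morphism $\CC[-1] \to \L(V)$ making $\L(V)$ pointed), subject to the associativity square \eqref{PFA commutativity}. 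Given $\L \in \uLocLie_\CC(D)$, I would take the object assignment to be the underlying precosheaf, and the $0$-operation to be the unit $\eta : \CC[-1] \to \L(V)$ of $\L(V)$ as a unital \dg{} Lie algebra.

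First I would define the factorisation products. For $\sqcup_{i=1}^n U_i \subset V$, the extension morphisms $\ext^\L_{U_i,V} : \L(U_i) \to \L(V)$ are morphisms of unital \dg{} Lie algebras whose images pairwise commute by the defining condition \eqref{LocLie condition} of a unital local Lie algebra; the first part of Lemma~\ref{lem: direct sum Lie} then produces the unique morphism $m^\L_{(U_i),V} : \overline{\bigoplus}_{i=1}^n \L(U_i) \to \L(V)$ with $m^\L_{(U_i),V} \circ \iota_i = \ext^\L_{U_i,V}$ for each $i$. For a single inclusion with $n \geq 3$ one proceeds recursively, as in \S\ref{sec: twisted PFE}.

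Next I would verify \eqref{PFA commutativity}. Fix $\sqcup_{j=1}^{m_i} U_{ij} \subset V_i$ for each $i$ and $\sqcup_{i=1}^n V_i \subset W$. The second (functoriality) clause of Lemma~\ref{lem: direct sum Lie}, applied to the morphisms $m^\L_{(U_{ij}),V_i}$, yields a well-defined morphism $\overline{\bigoplus}_{i=1}^n m^\L_{(U_{ij}),V_i}$. Then both $m^\L_{(V_i),W} \circ \big(\overline{\bigoplus}_{i} m^\L_{(U_{ij}),V_i}\big)$ and $m^\L_{(U_{ij}),W}$ are morphisms out of $\overline{\bigoplus}_{i}\overline{\bigoplus}_{j}\L(U_{ij})$ which, precomposed with the canonical embedding of each summand $\L(U_{ij})$, both return $\ext^\L_{U_{ij},W}$ — using the precosheaf identity $\ext^\L_{U_{ij},W} = \ext^\L_{V_i,W}\circ\ext^\L_{U_{ij},V_i}$, and the fact that the images of $\ext^\L_{U_{ij},W}$ over distinct $(i,j)$ pairwise commute since the $U_{ij}$ are pairwise disjoint opens in $W$. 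By the uniqueness clause of Lemma~\ref{lem: direct sum Lie} applied to the cone over all the $\ext^\L_{U_{ij},W}$, the two composites agree, so $\L$ defines an object of $\PFac(D, \udgLie_\CC^{\boplus})$. On a morphism $\phi : \L \to \L'$ of $\uLocLie_\CC(D)$, naturality of $\phi$ intertwines its components $\phi_{U_i}$ with the extension morphisms, so the second part of Lemma~\ref{lem: direct sum Lie} produces $\overline{\bigoplus}_i \phi_{U_i}$ together with $\phi_V \circ m^\L_{(U_i),V} = m^{\L'}_{(U_i),V}\circ\big(\overline{\bigoplus}_i\phi_{U_i}\big)$, which is exactly the statement that $\{\phi_U\}$ is a morphism of $\udgLie_\CC^{\boplus}$-valued prefactorisation algebras; compatibility with identities and composition follows once more from uniqueness in Lemma~\ref{lem: direct sum Lie}.

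The main obstacle I anticipate is purely organisational: keeping track of the iterated direct sum $\overline{\bigoplus}_i\overline{\bigoplus}_j$ in the associativity diagram, which requires invoking that $\boplus$ is a (symmetric) monoidal product on $\udgLie_\CC$ with identity $\CC[-1]$, so that the two bracketings of the total direct sum over all $U_{ij}$ are canonically identified, and then being careful about which cone the universal property of Lemma~\ref{lem: direct sum Lie} is applied to at each stage. Beyond this, every assertion reduces to one of the two clauses of that lemma.
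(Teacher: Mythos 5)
Your proposal is correct and takes essentially the same route as the paper: the factorisation products come from the universal property of $\overline{\bigoplus}$ in Lemma \ref{lem: direct sum Lie} applied to the extension morphisms (whose images pairwise commute by \eqref{LocLie condition}), the associativity square \eqref{PFA commutativity} follows by restricting both composites to each summand and invoking uniqueness, and morphisms are handled by the second clause of that lemma, exactly as in the paper's proof. The only minor remark is that since Lemma \ref{lem: direct sum Lie} already treats an arbitrary finite indexing set, your first construction covers all $n$ at once and the recursive step you mention for $n \geq 3$ is unnecessary.
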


Viewing any $\L \in \uLocLie_\CC(D)$ as a multifunctor $\Top(D)^\sqcup \to \udgLie_\CC^{\boplus}$, by Proposition \ref{prop: constructing prefac 1}, we can consider its post-composition with the symmetric monoidal functor \eqref{bar CE functor} from Proposition \ref{prop: bCE functor} to obtain a prefactorisation algebra $\bCE_\chain \, \L \in \PFac(D, \dgVec_\CC^{\otimes})$. Post-composing the latter with the lax monoidal $0^{\rm th}$ cohomology functor $H^0 : \dgVec_\CC \to \Vec_\CC$ then yields the \emph{twisted prefactorisation envelope} of the unital local Lie algebra $\L$, denoted
\begin{equation} \label{PFA envelope}
\U \L \coloneqq H^0 \, \bCE_\chain \, \L \in \PFac(D, \Vec_\CC^{\otimes}).
\end{equation}

\end{document}